\author{Noam {Zilberstein}}{Cornell University}{noamz@cs.cornell.edu}{https://orcid.org/0000-0001-6388-063X}{}
\author{Daniele {Gorla}}{``Sapienza'' Universit\`a di Roma}{gorla@di.uniroma1.it}{https://orcid.org/0000-0001-8859-9844}{}
\author{Alexandra {Silva}}{Cornell University and University College London}{alexandra.silva@cornell.edu}{https://orcid.org/0000-0001-5014-9784}{}
\authorrunning{N. Zilberstein, D. Gorla, and A. Silva} 
\keywords{Denotational Semantics, Pomsets, Concurrency, Convex Powerset} 
\def\BibTeX{{\rm B\kern-.05em{\sc i\kern-.025em b}\kern-.08em
    T\kern-.1667em\lower.7ex\hbox{E}\kern-.125emX}}
\def\extended{0}
\newcommand{\lpof}{\textsc{lpof}}
\newcommand{\dcpo}{\textsc{dcpo}}
\newcommand{\guardarrows}{
  \ar@<.05mm>@[NavyBlue]^{\Tarr}[ul]
  \ar@<-.05mm>@[NavyBlue][ul]
  \ar@<-.05mm>@[Maroon]_{\Farr}[ur]  
  \ar@<.05mm>@[Maroon][ur]
}
\begin{document}

\title{Denotational Semantics for Probabilistic and Concurrent Programs}

\maketitle

\begin{abstract}
We develop a denotational model for {\em probabilistic and concurrent imperative programs}, a class of programs with standard control flow via conditionals and while-loops, as well as probabilistic actions and parallel composition. Whereas semantics for concurrent or randomized programs in isolation is well studied, their combination has not been thoroughly explored and presents unique challenges. The crux of the problem is that interactions between control flow, probabilistic actions, and concurrent execution cannot be captured by straightforward generalizations of prior work on pomsets and convex languages, prominent models for those effects, individually.  Our model has good domain theoretic properties, important for semantics of unbounded loops. We also prove two adequacy theorems, showing that the model subsumes typical powerdomain semantics for concurrency and convex powerdomain semantics for probabilistic nondeterminism.
\end{abstract}

\section{Introduction}
\label{sec:intro}


From simple imperative languages, to functional ones, to probabilistic and quantum paradigms, the development of programming languages is often accompanied by a study of mathematical objects capturing the semantics of syntactic constructs. Program semantics underpins many static analysis and verification techniques, and also how a language can be extended.


Semantics can be presented in different styles---denotational, operational, and axiomatic---and each approach offers different insights.
 Denotational semantics is often chosen for expressivity and extensibility. For example, adding recursion raises questions about the underlying \emph{domain}, motivating the development of domain theory to provide mathematical representations of iterated computations \cite{scott1970outline,scott1971mathematical}.
Other constructs necessitate more involved domains; two prime examples are probabilistic and concurrent programs. 

Probabilistic programs have been widely studied---going back to the seminal works of Kozen \cite{psem} and McIver and Morgan \cite{mciver2005abstraction}---and their semantic domains require extra algebraic structure to capture the \emph{distribution} of outcomes generated by operations such as sampling and coin flips. This added convex structure brings additional complexity to questions of language extensions and program analysis, but the last decade has brought success stories on taming the complexity through program logics and predicate transformer calculi \cite{mciver2005abstraction,wpe,ellora,den_hartog1999verifying,zilberstein2025demonic}.

Concurrent programs have prominent applications in distributed systems and have long been a target of program analysis, as concurrency bugs are hard to detect and prevent. There have been many approaches to concurrent semantics, some arising from process algebra research and some from more practical hardware considerations (\eg memory models). In terms of denotational semantics, {\em pomsets} \cite{gischer1988equational,pratt1976semantical,brookes2004semantics,grabowski1981partial} provide a prominent model, expressing causality between actions and parallel branching as a partial order.



Semantics of concurrent and probabilistic programs, separately, are subtle and complex. Unsurprisingly, their combination introduces additional challenges. In this paper, we develop a denotational model for programs that mix probabilistic and concurrent constructs.
We work with a simple concurrent imperative programming language, shown below.
\begin{equation}\label{fig:syntax}
  \cmd \ni C \Coloneqq \skp \mid C_1;C_2 \mid \; C_1 \parop C_2 \; \mid \iftf b{C_1}{C_2} \mid \whl bC \mid a
\end{equation}
Here, the basic actions $a\in\act$ can perform probabilistic operations such as random sampling.
This is not only interesting from a theoretical point of view; for decades, randomization has been used to enhance the capabilities of concurrent programs \cite{rabin1980n-process,rabin1982choice,morris1978counting,hart1983termination}. For example, Dijkstra's famous \emph{Dining Philosophers Problem}---a distributed synchronization scheme---has no deterministic solution \cite{dijkstra1971hierarchical}, but it has a simple randomized one \cite{lehmann1981advantages}.  

Despite the importance of mixing randomization and concurrency, semantic models for probabilistic concurrent programs with unbounded looping are not yet well understood. Such models are necessary to analyze behaviors of many concurrent randomized protocols---including the Dining Philosophers---which avoid deadlock with probability 1, but admit an infinite trace whose probability converges to 0.
In this paper, we develop such a model, drawing insights from prior work on concurrent and probabilistic semantics, but solving challenges unique to their combination.
In a nutshell, the contributions of this paper are:



\begin{enumerate}[leftmargin=*]

\item We introduce \emph{pomsets with formulae}, and prove that they have well behaved domain-theoretic properties (\Cref{sec:semantics}), including an \emph{extension lemma} \cite{meyer1989pomset}, for extending monotone operations on finite pomsets to continuous operations on infinite ones. 

\item We define guarded, sequential, and parallel composition operations on pomsets with formulae (\Cref{sec:operations}) and use those operations to define the denotational semantics for the concurrent imperative language (\ref{fig:syntax}), with uninterpreted actions (\Cref{sec:pomset-semantics}).

\item We define a linearization operation (via the extension lemma), parametric on the computational domain, to convert the pomset semantics into a state transformer (\Cref{sec:linearization}).

\item We prove two adequacy theorems showing that our model captures established models for probabilistic nondeterminism (\Cref{thm:convex}) and pure concurrency (\Cref{thm:powdom}).

\end{enumerate}
We discuss related work in \Cref{sec:discussion}; omitted proofs and details are provided in \cite{zilberstein2025denotational}.

\section{An Overview of Probabilistic Concurrency Semantics}
\label{sec:convex-powerset}

In this section, we introduce the requirements for interpreting programs that are both probabilistic and concurrent. To illustrate the subtleties, consider the following program:
\begin{equation}\label{eq:program}
  C \triangleq x \samp\flip{\tfrac12} ; (y \coloneqq 0 \parop y \coloneqq 1) ; \iftf y\skp{x \coloneqq x+2}
\end{equation}
After executing $C$, the value of $x$ can be any integer between 0 and 3, and that value depends both on random sampling and nondeterminism (from concurrent scheduling). The parity of $x$ is fixed after the coin flip (returning 0 or 1 with probability $\frac 1 2$), but then the scheduler can influence its final value by choosing in which order to run the updates to $y$.
Importantly though, $x$ will be even with probability exactly $\frac12$, regardless of the behavior of the scheduler.

Any semantic domain for this program must encompass probability distributions of program states, while at the same time accounting for nondeterminism introduced from parallel branching. One might attempt to give semantics to this program using a naive composition of distributions and powerset (as monads) but, as it is well known from prior work, such combination can easily lead to non-compositional semantics \cite{varacca_winskel_2006,zwart2019no,zwart2020non,parlant2020monad}.
While there are several domains for this combination of effects, we use the \emph{convex powerset}, which is quite well-studied in that it is a monad \cite{jacobs2008coalgebraic} for sequential composition; it is a directed complete partial order (\dcpo) \cite{tix2009semantic,tix2000convex,keimel2017mixed,tix1999continuous} for finding fixed points of iterated computations; and has well behaved equational laws \cite{mislove2000nondeterminism,mislove2004axioms,bonchi2021presenting}.

We now give a basic account of the convex powerset; for more details on the \dcpo\ structure and monad operations, refer to \Cref{app:convex} and \cite[\S3]{zilberstein2025demonic}.
A discrete probability distribution $\mu \in \D(X) = X \to [0,1]$ is a countable map from $X$ to probabilities such that $\sum_{x\in X}\mu(x) = 1$. Convex combinations are defined for distributions $(\mu \oplus_p \nu)(x) = p\cdot \mu(x) + (1-p) \cdot \nu(x)$ and sets of distributions $S \oplus_p T \triangleq \{ \mu \oplus_p \nu \mid \mu \in S, \nu\in T\}$.
A set $S \subseteq \D(X)$ is \emph{convex} if $S = S \oplus_p S$ for all $p\in [0,1]$, \ie $S$ contains all convex combinations of its elements. 
Our computational domain---\emph{the convex powerset}---consists of nonempty convex sets of distributions\footnote{The \dcpo\ structure relies of a few more properties, which we report in \Cref{app:convex}.}.
\[
  \C(X) \triangleq \set{ S \subseteq \D(X \cup \{\bot\}) \;\;\middle|\;\; \text{$S$ is nonempty, convex, \textellipsis}}
\]
This computational domain allows us give semantics to the coin flip actions in program (\ref{eq:program}) $\de{a}_\act \colon \st \to \C(\st)$ where $\st \triangleq \mathsk{var} \to \mathsk{val}$ is the set of variable valuations. The semantics is given by a set containing a single distribution, where $x$ is set to 1 with probability $p$ and it is set to 0 with probability $1-p$. Being a \emph{set of distributions}, $\C(\st)$ can also represent nondeterministic choice, which we denote by $\nd$. Operationally, $S\nd T$ is not simply a choice between $S$ and $T$, but rather it is a choice of a \emph{distribution} over $S$ and $T$, corresponding to a scheduler that can use biased coin flips to make decisions \cite{varacca2002powerdomain}.
\begin{align*}
  \de{x\samp \flip{p}}_\act(s) &\triangleq \set{
    \left[\footnotesize
    \begin{array}{lll}
      s[x \coloneqq 1] & \mapsto & p
      \\
      s[x \coloneqq 0] & \mapsto & 1-p
    \end{array}
    \right]
  }
  &
    S \nd T &\triangleq \bigcup_{p\in[0,1]} S\oplus_p T
\end{align*}
Our goal is to develop a pomset model to compositionally reason about concurrency in a manner that is compatible with the convex powerset. Existing pomset models do not track control flow, and instead rely on the following equation, stating that a command followed by an if-statement can be decomposed into two traces, with the split lifted to the top:
\begin{equation}\label{eq:problem}
  C ; \iftf b{C_1}{C_2} \ \ \equiv\ \  (C; \assume b ; C_1) \nd (C; \assume{\lnot b} ; C_2)
\end{equation}
It is well known that this equation is invalid in probabilistic contexts, especially $\C$ \cite{mislove2000nondeterminism}. 
More concretely, prior denotational models for concurrency $\de{-}_\powdom \colon \cmd \to \mathcal{P}(\pom)$ (shown in \Cref{fig:pomlang-semantics}) map programs to \emph{pomset languages}---sets of pomsets---where each $\Alpha \in \de{C}_\powdom$ corresponds to a particular resolution of the tests in the program. For program (\ref{eq:program}), this approach yields the semantics on the left of \Cref{fig:pomset-example}, where the problematic \Cref{eq:problem} is implicitly applied. Arrows $a_1 \to a_2$ indicate causality, \ie $a_1$ must occur before $a_2$; when there is no arrow between two actions, they can be interleaved in any order.
There is a significant problem in $\de{C}_\powdom$; as we will soon see, there is no straightforward way to piece together the entire distribution of outcomes in the program.

\begin{figure}
\[
  \de{C}_\powdom = \set{
  \vcenter{\vbox{\footnotesize\xymatrix@R=6pt@C=-15pt{
  &\fork
  \\
  & \assume{y=1} \ar[u]
  \\
  y \coloneqq 0 \ar[ur]
  &&
  y\coloneqq 1 \ar[ul]
  \\
  &\fork \ar[ul] \ar[ur]
  \\
  & x \samp\flip{\frac12} \ar[u]
}}}
\;\;,\;\;
  \vcenter{\vbox{\footnotesize\xymatrix@R=6pt@C=-15pt{
  &x \coloneqq x+2
  \\
  & \assume{y\neq 1} \ar[u]
  \\
  y \coloneqq 0 \ar[ur]
  &&
  y\coloneqq 1 \ar[ul]
  \\
  &\fork \ar[ul] \ar[ur]
  \\
  & x \samp\flip{\frac12} \ar[u]
}}}
}
\qquad
  \de{C} =
  \vcenter{\vbox{\footnotesize\xymatrix@R=6pt@C=-15pt{
  ~~\fork~~ && x \coloneqq x+2
  \\
  & ?(y=1) \guardarrows
  \\
  y \coloneqq 0 \ar[ur]
  &&
  y\coloneqq 1 \ar[ul]
  \\
  &\fork \ar[ul] \ar[ur]
  \\
  & x \samp\flip{\frac12} \ar[u]
}}}
\]
%
\caption{Pomset language model (left) vs pomset with formulae model (right) for program (\ref{eq:program}) (we use a $\fork$ with 0 branches to represent $\skp$).}
\label{fig:pomset-example}
\end{figure}


Crucially, our pomset structure tracks the results of control-flow tests (which can rely on randomization). This is done symbolically by associating a Boolean formula to each node of the pomset, recording the resolution of tests needed to reach that point. We call these new structures {\em pomsets with formulae}, which capture both parallel composition and guarded branching\footnote{Formulae are more expressive than conflict relations, \eg in event structures: conflict can only exclude execution of branches, whereas formulae record the entire outcome of tests that lead to every node.}. As such, this structure encodes many traces, avoiding the need for a set of traces and simultaneously remaining compositional in the presence of probabilistic actions.

The semantics in our new model $\de{-}\colon \cmd\to\pom$ for program (\ref{eq:program}) is shown on the right of \Cref{fig:pomset-example}. Compared to $\de{C}_\powdom$, our new model has merged the set of structures into a single structure, where the two opposing tests are replaced by a single node labelled $?(y=1)$. The outgoing edges are labelled $\mathsf{T}$ and $\mathsf{F}$, indicating that they can only be followed if the test passes or fails, respectively.
Now, we may wish to know the possible probabilities for each outcome resulting from running this program. To this end, in \Cref{sec:linearization} we develop a \emph{linearization} procedure $\lin \colon \pom \to \st \to \C(\st)$ for interpreting a pomset as a state transformer.
Linearizing this structure, we get the following set of distributions:
\[
 \lin(\de{C})(s) = \set{
   \left[\begin{array}{lll}
     s[x \coloneqq 0, y\coloneqq 0] & \mapsto p
     \\
     s[x \coloneqq 1, y\coloneqq 0] & \mapsto q
     \\
     s[x \coloneqq 2, y\coloneqq 1] & \mapsto \frac12 - p
     \\
     s[x \coloneqq 3, y\coloneqq 1] & \mapsto \frac12 - q
   \end{array}\right]
   \middle| \;\; 
   0 \le p,q \le \frac12
 }
\]
Linearization provides new insights about the program, which were not obvious in the pomset structure. The scheduler can make $y$ equal to 0 or 1 with any probability, which matches our operational understanding of the program: the scheduler can choose to execute the commands in either the order $y \coloneqq 0;y\coloneqq 1$ or $y\coloneqq1;y\coloneqq 0$ (or some convex combination thereof). However, regardless of the scheduler's choice of $p$ and $q$, $x$ is even (or odd) with exactly probability $\frac12$. This formal semantics matches the intuition for the program, since the parity of $x$ is fixed by the initial coin flip.

We now contrast the semantics above with pomset language semantics to show that the program $C$ cannot be meaningfully interpreted using prior techniques.
Recall that in $\de{C}_\powdom$ each pomset corresponds to a particular resolution of the test, \ie $y=1$ and $y\neq 1$. So, in the first pomset we will always have $x \le 1$ whereas in the second pomset $x \ge 2$. Let us now try to obtain a meaningful state transformer semantics from this program by linearizing each structure and then merging the results. To do so, we use the following semantics for \code{assume}, where failed tests evaluate to a special $\unk$ symbol, which halts the program execution:
\[
  \de{\assume b}_\act(s) = \left\{
    \begin{array}{cl}
      \eta(s) & \text{if}~ \de{b}_\test(s) = 1
      \\
      \eta(\unk) & \text{if}~ \de{b}_\test(s) = 0
    \end{array}
  \right.
\]
Now, letting $\Alpha_1$ and $\Alpha_2$ be the left and right pomsets in $\de{C}_\powdom$, we apply linearization to both structures to obtain:
\begin{align*}
\lin(\Alpha_1)(s) &= \set{
   \left[\begin{array}{cll}
     s[x \coloneqq 0, y\coloneqq 0] & \mapsto& p_1
     \\
     s[x \coloneqq 1, y\coloneqq 0] & \mapsto& q_1
     \\
     \unk & \mapsto & 1- p_1 - q_1
   \end{array}\right]
   \middle| \;\; 
   0 \le p_1,q_1\le \frac12
 }
 \\
\lin(\Alpha_2)(s) &= \set{
   \left[\begin{array}{cll}
     s[x \coloneqq 2, y\coloneqq 1] & \mapsto& p_2
     \\
     s[x \coloneqq 3, y\coloneqq 1] & \mapsto& q_2
     \\
     \unk & \mapsto & 1- p_2 - q_2
   \end{array}\right]
   \middle| \;\; 
   0 \le p_2,q_2\le \frac12
 }
\end{align*}
Already, we can begin to see a problem; in $\lin(\de{C})(s)$ the scheduler picks two probabilities, but here the scheduler picks four probabilities, giving it a higher degree of freedom to influence the outcome of the program. We make this formal by attempting to define an operation $\bowtie$ to merge the two semantics. As a first attempt, we generate a new set of distributions by summing the non-$\unk$ probability mass in every pair of distributions from the two sets such that the mass of $\unk$ in one is equal to the non-$\unk$ mass in the other:
\[
S \bowtie T \triangleq \set{
  \mu \bowtie \nu
  \mid
  \mu \in S, \nu \in T,
  \mu(\unk) = 1- \nu(\unk)
}
\quad
(\mu\bowtie\nu)(s) \triangleq \left\{
  \begin{array}{ll}
    \mu(s) + \nu(s) & \text{if}~ s \neq \unk
    \\
    0 & \text{if}~ s = \unk
  \end{array}
\right.
\]
Indeed, this gives us $\lin(\de{C})(s) \subseteq \lin(\Alpha_1)(s) \bowtie \lin(\Alpha_2)(s)$: take any $\mu \in \lin(\de{C})(s)$, which is generated by fixing some probabilities $0 \le p,q\le\frac12$. Picking $p_1 = p$, $q_1 = q$, $p_2 = \frac12-p$, and $q_2 = \frac12-q$, we see that $\mu \in \lin(\Alpha_1)(s) \bowtie \lin(\Alpha_2)(s)$.
However, $\lin(\Alpha_1)(s) \bowtie \lin(\Alpha_2)(s)$ contains extra distributions that are not in $\lin(\de{C})(s)$, and are not correct outcomes of the program. For example, letting $p_1 = p_2 = \frac12$ and $q_1 = q_2=0$, we get:
\[
  \left[\begin{array}{lll}
     s[x \coloneqq 0, y\coloneqq 0] & \mapsto \frac12
     \\
     s[x \coloneqq 1, y\coloneqq 0] & \mapsto 0
     \\
     s[x \coloneqq 2, y\coloneqq 1] & \mapsto \frac12
     \\
     s[x \coloneqq 3, y\coloneqq 1] & \mapsto 0
   \end{array}\right]
   \in \lin(\Alpha_1)(s) \bowtie \lin(\Alpha_2)(s)
\]
But this distribution cannot be an outcome of running the program, since $x$ is even with probability 1, deviating from the expected operational behavior!

So a different implementation of $\bowtie$ is needed, which is more discerning about the compatibility of pairs of distributions from the two sets. But as it stands, there is no information in $\lin(\Alpha_1)(s)$ and $\lin(\Alpha_2)(s)$ to indicate compatibility.
While we do not claim that it is impossible to record this information during linearization, our investigation suggests that it would not be straightforward. A proper linearization of the pomset language $\de{C}_\powdom$ would essentially amount to a lockstep execution of the two pomsets, so that both branches can be taken with the appropriate probabilities. That lockstep execution corresponds to a straightforward linearization of the single pomset with formulae $\de{C}$, where each test already has information about both branches. We therefore conclude that pomsets with formulae are the correct semantic structure for this domain.
In the remainder of the paper, we develop the formal details of our pomset with formulae model and associated linearization procedure.


\section{Semantic Structures}\label{sec:semantics}

In this section, we define a particular kind of labelled partial order, which will form the core of our semantic model. We call this structure a {\em labeled partial order with formulae} (\lpof), as it includes a special labelling, assigning a Boolean formula to each node. We recall basic definitions from domain theory in \Cref{app:ord}; for a complete treatment refer to \cite{abramsky1995domain}.

\subsection{Labelled Partial Orders with Formulae}

Let $\Nodes$ be a countable universe of nodes, that will be denoted by $x, y, z, \ldots$, whereas subsets of $\Nodes$ will be denoted by $N, X, Y, \ldots$.
We start by defining the class of Boolean formulae using nodes as the variables:
\[
  \Form \ni \psi \Coloneqq \tru \mid \fls \mid \psi_1 \land \psi_2 \mid \psi_1\vee\psi_2 \mid \lnot \psi \mid x 
\]
Formulae are interpreted by valuations $v \colon \Nodes \to \mathbb{B}$, where $\mathbb{B} = \{0, 1\}$; 
the satisfaction relation, written $v \vDash \psi$, is defined in the standard way (see \Cref{def:formulae} in \Cref{app:ord}).
The variables of a formula $\psi$, written $\free(\psi)$, are those nodes that appear in $\psi$. We write $\sat(\psi)$ iff there exists a valuation $v$ such that $v\vDash \psi$; $\psi \Rightarrow \psi'$ iff $v\vDash \psi'$, for every $v\vDash \psi$; and $\psi \Leftrightarrow\psi'$ iff $\psi\Rightarrow \psi'$ and $\psi'\Rightarrow\psi$.
In what follows, given a strict poset $\tuple{X, <}$ and any $x \in X$, we let
$\predplus x \triangleq \{ y \in X  \mid  y < x \}$,
$\succplus x \triangleq \{ y \in X  \mid  x < y \}$,
$\succ(x) \triangleq \{ y \in X\ \mid\ x < y\ \wedge \not\exists z. (x < z < y) \}$, and
$\min \triangleq \{ x \in X \mid \predplus x = \emptyset \}$.
Further, the level of $x$, written $\lev(x)$, is the length of the longest path from a minimal element to $x$. See \Cref{app:ord} for more details.

\begin{definition}[\lpof]\label{def:lpof}
  Let $\tuple{L, \leq}$ be a pointed, finitely preceded  \dcpo\ with bottom element $\bot$. A labelled partial order with formulae (\lpof) over $L$ is a 4-tuple $\alpha = \tuple{N, <, \lambda, \varphi}$ where:
  \begin{enumerate}[leftmargin=*]
 \item  $N \subseteq \Nodes$ is a countable set of nodes; 
\item $\tuple{N,\mathord{<}}$ is a (strict) poset such that:
  \begin{enumerate}
    \item it is finitely preceded, that is $|\predplus x| < \infty$, for every $x \in N$; 
    \item every level has a finite number of nodes, \ie $|\lev^{-1}(n)| < \infty$ for all $n \in \mathbb N$; and 
    \item it is single-rooted, that is: $|\min| = 1$.
  \end{enumerate}
\item $\lambda \colon N \to L$ is a labelling function such that $\succ(x) = \emptyset$ whenever $\lambda(x) = \bot$.
\item $\varphi \colon N \to \Form$ is a formula function satisfying:
  \begin{enumerate}
    \item $\sat(\varphi(x))$ and $\free(\varphi(x)) \subseteq \predplus{x}$, for all $x \in N$;
    \item $\varphi(y) \Rightarrow \varphi(x)$, for all $x < y$.
  \end{enumerate}
  \end{enumerate}
We denote by $\lpo(L)$ the set of all \lpof s over $L$.
\end{definition}

For some \lpof\ $\alpha = \tuple{N,<,\lambda,\varphi}$, we use $N_\alpha$, $<_\alpha$, $\lambda_\alpha$, and $\varphi_\alpha$ to refer to its constituent parts; similarly, 
we annotate all functions with the \lpof\ 
they refer to, \eg $\predplus x_\alpha$, $\succ_\alpha(x)$, $\min_\alpha$, etc. We now explain the conditions in \Cref{def:lpof}. 

As is typical in pomset semantics, the partial order $\tuple{N,<}$ denotes \emph{causality}: $x < y$ iff $x$ must be scheduled before $y$. If $x \not< y$ and $y \not< x$, then $x$ and $y$ execute concurrently, and can be interleaved in any order.
With this in mind, conditions (2a) and (2b) are standard; in particular,
since a node represents an action of a program, having finitely many predecessors ensures that every action may happen in a finite amount of time.
Moreover, requiring that every level has a finite number of nodes is a weakening of the usual finite branching property: indeed, we allow a node to have infinitely many successors, but they cannot be all at the same level.
We defer the discussion on Condition (2c) to \Cref{rem:singlerooted} later on.
 
Concerning Condition (3), labels correspond to \emph{Boolean tests} and \emph{actions}, performed during a program execution (\eg \Cref{fig:pomset-example}).
Unlike standard pomset models, actions can be probabilistic, and our model is consistent with known sequential probabilistic semantics (\Cref{thm:convex}).
For the moment, we leave labels unspecified, only assuming the presence of $\bot$, used to denote a nonterminating computation, which is fundamental in approximating the fixed point semantics of while-loops. Since $\bot$ denotes nontermination, nodes labelled with $\bot$ cannot be a predecessor of any node, as the successors of $\bot$ would never be executed.

Condition (4) is about formulae, which are crucial for modeling conditional constructs (if-then-else and while-loops), such as
$
\iftf {b_1} {a_1}{\iftf {b_2} {a_2}{a_3}}
$,
where $b_1$ and $b_2$ are tests and $a_1,a_2,a_3$ are actions.
This program corresponds to the \lpof\ below, where $\lambda(x_\ell) = \ell$ for all $\ell \in \{b_1, b_2, a_1, a_2, a_3\}$,
$w {\color{NavyBlue}\xrightarrow{{\color{black}\Tarr}}} w_1$ means that $\varphi(w_1) \Leftrightarrow \varphi(w) \land w$ and $w {\color{Maroon}\xrightarrow{{\color{black}\Farr}}} w_2$ means that $\varphi(w_2) \Leftrightarrow \varphi(w)\land\lnot w$ (\ie  $w$ is labeled with a test, whose outcome leads to $w_1$ or $w_2$, depending on the Boolean outcome depicted on the colored arc):
\begin{equation}\rm
\label{ex:lpoFormulae}
\vcenter{\vbox{\xymatrix@R=3pt@C=5pt{
\scalebox{0.8}{$\varphi(x_{a_2}) = \neg x_{b_1} \wedge x_{b_2}$}
&& x_{a_2} && x_{a_3}
&
\scalebox{0.8}{$\varphi(x_{a_3}) = \neg x_{b_1} \wedge \neg x_{b_2}$}
\\
\scalebox{0.8}{$\varphi(x_{a_1}) = x_{b_1}$}
&
x_{a_1} && x_{b_2}\guardarrows 
&&
\scalebox{0.8}{$\varphi(x_{b_2}) = \neg x_{b_1}$}
\\
\scalebox{0.8}{$\varphi(x_{b_1}) = \tru$}
&
& x_{b_1} \guardarrows 
&
}}}
\end{equation}
Hence, every node $x$ labelled with a Boolean test yields a (binary) branch and its (two) successors have the same formula as $x$, with an extra conjunct that is either $x$ or $\neg x$, according to the outcome of the test.
With this in mind, \Cref{def:lpof}(4) is quite intuitive:
\begin{itemize}[leftmargin=*]
\item Requiring $\sat(\varphi(x))$ amounts to expressing the fact that every node $x$ is reachable, \ie there exists a truth assignment $v$ such that $v \models \varphi(x)$. The truth assignment tells, for every node labeled with a Boolean test, whether that test passes or not. In our example, $x_{a_3}$ is reachable when both the tests $b_1$ and $b_2$ fail; this is exactly what satisfiability of $\varphi(x_{a_3})$ entails (viz., that both $b_1$ and $b_2$ must be false to satisfy $\neg x_{b_1} \wedge \neg x_{b_2}$).
\item This intuition justifies the other two requirements of \Cref{def:lpof}(4): reachability of a node can only depend on the values of the tests that precede it (and so $\varphi(x)$ can only use nodes that precede $x$) and, since along a path we shall only add conjuncts, the formula of a higher-level node is stronger than the formulae of all of its predecessors.
\end{itemize}

Next, we define an order on \lpof s, which we will use for the construction of fixed points to give semantics to unbounded loops. The intuition is that $\alpha \lelpo \beta$ iff
$\beta$ has \emph{more behaviors} than $\alpha$, meaning that $\beta$ can be obtained by expanding $\bot$ nodes in $\alpha$ into larger structures (each  unfolding of a while-loop is obtained in this way, as shown in Section \ref{sec:pomset-semantics}).
Since \lpof s can only be expanded from $\bot$ nodes, the label set $L$ must be pointed (\ie $\bot\in L$); however, the order on $L$ need not be flat: labels themselves can also become larger when passing from $\alpha$ to $\beta$ (\eg if actions are nondeterministic assignments from a given set, this corresponds to enlarging the set of possible choices). This can also be useful to model \emph{invariant sensitive execution}, introduced in Probabilistic Concurrent Outcome Logic \cite{zilberstein2024probabilistic}.
Notationally,
$\Bot_\alpha$ denotes the set of nodes labelled with $\bot$, \ie $\Bot_\alpha \triangleq  \{x \in N_\alpha \mid \lambda_\alpha(x) = \bot\}$.

\begin{definition}[Ordering on \lpof s]
\label{def:lelpo}
We let $\alpha \lelpo \beta$ iff:
\begin{enumerate}
\item\label{def:lelpo:N} $N_\alpha$ is a downward-closed subset of $N_\beta$, written $N_\alpha \dwclosed N_\beta$;
\medskip
\item \label{def:lelpo:<} $\mathord{<_\alpha} = \mathord{<_\beta} \cap (N_\alpha \times N_\alpha)$;
\medskip
\item $\forall x\in N_\alpha$:
\begin{enumerate*}
\item\label{def:lelpo:lambda} $\lambda_\alpha(x) \le \lambda_\beta(x)$;
\item $\varphi_\alpha(x) = \varphi_\beta(x)$;
\item\label{def:lelpo:succ} $\succ_\alpha(x) = \succ_\beta(x) \setminus \succplus{\Bot_\alpha}_\beta$.
\end{enumerate*}
\end{enumerate}
\end{definition}

\begin{example}\rm
Consider the following three \lpof s:
\[
\alpha =
\vcenter{\vbox{\xymatrix@R=5pt@C=6pt{
&
\\
y_1(\bot) && y_2
\\
& x \ar[ul]\ar[ur]
}}}
\qquad\qquad
\beta = 
\vcenter{\vbox{\xymatrix@R=5pt@C=6pt{
&& z
\\
y_1(\bot) && y_2 \ar[u]
\\
& x \ar[ul]\ar[ur]
}}}
\qquad\qquad
\gamma =
\vcenter{\vbox{\xymatrix@R=5pt@C=6pt{
& z
\\
y_1\ar[ur] && y_2 \ar[ul]
\\
& x \ar[ul]\ar[ur]
}}}
\]
where node $y_1$ is labelled with $\bot$ in $\alpha$ and $\beta$, whereas all other nodes have non-$\bot$ labels.
Following the intuition given above, both $\alpha$ and $\beta$ are attempts to specify that the computation in $\gamma$ is truncated at $y_1$.
However, only $\alpha \lelpo \gamma$; this is because, if $\bot$ represents a diverging computation, then $z$ will never have all the causes it needs to be executed and so it must disappear. Indeed, $\beta \nlelpo \gamma$ because, even though $N_\beta \dwclosed N_\gamma$, condition (3c) of \Cref{def:lelpo} is violated: $z \in \succ_\beta(y_2)$ but $z \not\in \succ_\gamma(y_2) \setminus \succplus{\Bot_\beta}_\gamma$, since $y_1 \in \Bot_\beta$ and $y_1 <_\gamma z$.
\end{example}

Finally, in order to find fixed points, we will need to have a directed complete partial order (\dcpo) structure and a notion of continuity for functions, which we now define. 

\begin{definition}[\dcpo]
Given a poset $\tuple{X,\leq}$, a subset $D \subseteq X$ is called {\em directed} iff it is not empty and, for every two elements $x_1,x_2 \in D$, there exists $x \in D$ such that $x_1, x_2 \leq x$.

$\tuple{X,\leq}$ is a {\em directed complete partial order} (\dcpo) iff, for every directed set  $D$, $\sup D$ exists; furthermore, $\tuple{X,\leq}$ is called {\em pointed} if there exists an element $\bot \in X$ such that $\bot \le x$ for all $x\in X$.
\end{definition}

\begin{definition}[Scott Continuity]
Given two \dcpo s $\tuple{X, \le_X}$ and $\tuple{Y, \le_Y}$, a function $f \colon X \to Y$ is \emph{Scott Continuous} if it is monotone: $f(x) \le_Y f(x')$ for all $x \le_X x'$; and preserves suprema of directed sets: $\sup_{x\in D} f(x) = f(\sup D)$ for all $D \subseteq X$ directed.
\end{definition}

\begin{restatable}{lemma}{lposup}
\label{lem:lpo-sup}
For any directed set $D \subseteq \lpo(L)$, $\sup D = \tuple{N, <, \lambda, \varphi}$, where: 
  \begin{align*}
    N &\triangleq \bigcup_{\beta\in D} N_{\beta}
    &
    \mathord{<} &\triangleq \bigcup_{\beta\in D} \mathord{<_{\beta}}
    &
    \lambda(x) &\triangleq \sup_{\beta\in D\,:\,x \in N_{\beta}} \lambda_{\beta}(x)
    &
    \varphi(x) & \triangleq \psi_x
  \end{align*}
  and $\psi_x = \varphi_\beta(x)$, for all $\beta \in D$ such that $x\in N_\beta$.
\end{restatable}
In \Cref{lem:lpo-sup},  $\varphi$ is well defined since, for every $x \in \Nodes$ and $\beta,\beta' \in D$ that contain $x$, it must be that $\varphi_\beta(x) = \varphi_{\beta'}(x)$. Indeed, since $D$ is directed, there must exist a $\gamma \in D$ such that $\beta,\beta' \lelpo \gamma$ and, by \Cref{def:lelpo}(3b), we have that $\varphi_\beta(x) = \varphi_\gamma(x) = \varphi_{\beta'}(x)$.

Notice that $\lpo(L)$ is {\em not} pointed, since the node at the root can vary; hence, there is no single \lpof\ that is smaller (w.r.t. $\lelpo$) than all other \lpof s. This situation will change when moving to pomsets \Arefp{lem:pomdcpo}, which abstract away from the specific nodes.

\subsection{Pomsets with Formulae}

Although \lpof s express causality and branching behavior, they are not ideal semantic structures because information about node identifiers is extraneous. We instead work with \emph{partially ordered multisets} (pomsets), which abstract away from the specific names used.

\begin{definition}[\lpof\ Isomorphism]
\label{def:equivLPO}
Let $\alpha$ and $\beta$ be \lpof s and $f \colon N_\alpha \to N_\beta$ be a bijection between their nodes. Define $f(\alpha) = \tuple{ N, <, \lambda, \varphi }$ as
\begin{align*}
  N &\triangleq \{ f(x) \mid x\in N_\alpha \}
  &
  \mathord{<} &\triangleq \{ (f(x), f(y)) \mid x <_\alpha y \}
  &
  \lambda &\triangleq \lambda_\alpha \circ f^{-1}
  &
  \varphi &\triangleq f \circ \varphi \circ f^{-1}
\end{align*}
where $f(\psi)$ syntactically renames the variables of $\psi$ in the obvious way;
$\alpha$ and $\beta$ are isomorphic, written $\alpha \equiv \beta$, iff there is a bijection $f \colon N_\alpha \to N_\beta$ such that $f(\alpha) = \beta$.
\end{definition}

\begin{definition}[Pomsets with Formulae]
Let $[\alpha] \triangleq \{ \beta \in \lpo(L) \mid \alpha\equiv\beta \}$ be the isomorphism class of $\alpha$.
A pomset with formulae (or, simply, pomset) $\Alpha \in \pom(L)$ is an isomorphism class of \lpof s:
$
  \pom(L) \triangleq \{ [\alpha] \mid \alpha \in \lpo(L) \}
$.
Let $\pom_\fin(L)$ be the set of pomsets with finitely many nodes. Finally, let $\Alpha \lepom \Beta$ iff $\forall \alpha \in \Alpha.\, \exists\beta\in \Beta.\ \alpha \sqsubseteq_\lpo \beta$.
\end{definition}



The semantics of programs depends on a variety of pomset composition operations (sequential, guarded, and parallel composition). We will also later \emph{linearize} pomsets into state transformer functions. Some of these operations cannot be defined inductively on infinite structures, so we need ways to extend functions on finite structures to infinite ones. We first show that infinite pomsets correspond to the suprema of their finite approximations.

%

\begin{restatable}[Finite Approximations]{lemma}{finapproxx}\label{lem:fin-approx}
For any $\Alpha \in \pom(L)$, $\Alpha = \sup \finapprox\Alpha$, where
$\finapprox\Alpha \triangleq \{ \Beta \in \pom_\fin(L) \mid  \Beta\lepom \Alpha \}$ is the set of finite approximations of $\Alpha$. 
\end{restatable}

Finite approximations coincide with the approximation order $\ll$ of \cite{abramsky1995domain} instantiated to pomsets
(see \Aref{sec:approx} for more details); thus, from now on, we let
$\Alpha \ll \Beta$ denote $\Alpha \in \finapprox\Beta$.
We are now ready to show that every operation $f$ on finite pomsets can be extended to infinite ones by defining the operation on $\Alpha$ as the sup of the images through $f$ of all the approximations of $\Alpha$. This will be useful in the remainder of the paper.


\begin{restatable}[Extension]{lemma}{extension}\label{lem:extension}
Let $f\colon \pom_\fin(L)^n \to T$ be a monotone function on the \dcpo\ $\tuple{T, \le}$. Then $f^\ast \colon \pom(L)^n \to T$, shown below, is well-defined and Scott continuous:
\[
  f^*(\Alpha_1, \ldots, \Alpha_n) \triangleq \sup_{\Alpha'_1 \ll \Alpha_1} \cdots \sup_{\Alpha'_n \ll \Alpha_n} f(\Alpha'_1, \ldots, \Alpha'_n)
\]
\end{restatable}

%

\section{Pomset Operations}
\label{sec:operations}

In this section, we define pomset operations that mirror the program syntax $C\in\cmd$ from (\ref{fig:syntax}).
We first define the singleton \lpof\  $\singleton{\ell}_x$ on node $x$ with label $\ell \in L$; this generalizes to $\singleton\ell$ on pomsets and will be used to give the semantics to $\skp$ and atomic actions:
\begin{align*}
  \singleton{\ell}_x & \triangleq \tuple{ \{x\}, \emptyset, [x \mapsto \ell], [x \mapsto \tru]} \in \lpo(L)
  &
  \singleton\ell &\triangleq \{ \singleton{\ell}_x \mid x \in \Nodes \} \in \pom(L)
\end{align*}

\subsection{Guarded Choice}

We now define guard,
recording the causality between a test and the two branches of computation defined on it.
Assuming that $N_\alpha \cap N_\beta = \emptyset$ and $x \not\in N_\alpha \cup N_\beta$, we let
$\guard(x,\ell,\alpha,\beta) = \tuple{ N, <, \lambda,\varphi }$ where 
we overload $\land$ s.t. $\psi\land\tru=\tru\land\psi=\psi$, and:
\begin{align*}
  N &\triangleq \{x\} \cup N_{\alpha} \cup N_{\beta}
  &
  \mathord{<} &\triangleq \mathord{<_{\alpha}} \cup \mathord{<_{\beta}} \cup \left ( \{x\} \times (N_{\alpha} \cup N_{\beta}) \right )
  \\
  \lambda(y) &\triangleq \left\{
    \begin{array}{ll}
      \ell & \text{if} ~y = x
      \\
      \lambda_{\alpha}(y)  & \text{if}~ y \in N_{\alpha}
      \\
      \lambda_{\beta}(y)  & \text{if}~ y \in N_{\beta}
    \end{array}
  \right.
  &
  \varphi(y) &\triangleq \left\{
    \begin{array}{ll}
      \tru & \text{if} ~y = x
      \\
      \varphi_{\alpha}(y)\land x  & \text{if}~ y \in N_{\alpha}
      \\
      \varphi_{\beta}(y)\land \lnot x  & \text{if}~ y \in N_{\beta}
    \end{array}
  \right.
\end{align*}
So, $\guard(x, b, \alpha, \beta)$ joins $\alpha$ and $\beta$ with a new root node $x$, whose label is $b$, and additionally, the formulae in $\alpha$ and $\beta$ are updated to require that $b$ passes or fails, respectively.
As an example, 
$
\guard(x_{b_1},b_1,\singleton{a_1}_{x_{a_1}},
   \guard(x_{b_2},b_2, \singleton{a_2}_{x_{a_2}},\singleton{a_3}_{x_{a_3}})
)
$
produces the \lpof\ depicted in (\ref{ex:lpoFormulae}).
For finite pomsets, we define the $\guard$ operation as follows:
\[
  \guard(\ell,\Alpha,\Beta) \triangleq \left\{ \guard(x,\ell,\alpha,\beta) \ 
      \middle| \ 
      \alpha \in\Alpha,\ \beta\in\Beta, N_\alpha \cap N_\beta = \emptyset, x \notin N_\alpha\cup N_\beta 
  \right\}
\]
Since $\guard$ is monotone \Arefp{lem:monGuard}, we can extend it to infinite pomsets:
\[
  \guard(\ell,\Alpha, \Beta) \triangleq \sup_{\Alpha' \ll \Alpha} \sup_{\Beta' \ll \Beta} \guard(\ell,\Alpha', \Beta')
\]

\subsection{Sequential Composition}
\label{sec:seq}

The sequential composition operation $\Alpha\fatsemi\Beta$ is meant to enforce that all actions in $\Alpha$ must occur before any of those in $\Beta$. This causality dependency interacts with guarded branching, as we will now see, which makes the formal definition of $\fatsemi$ more challenging. Consider the following \lpof s (where we denote with $x$ the singleton $\singleton \ell _x$ when the label is not relevant):
\begin{equation}\label{eq:seq-ex1}\small
\mathsf{a.}\
x\fatsemi y=
\vcenter{\vbox{
\xymatrix@R=8pt@C=0pt{
y \\ x\ar[u]
}}}
\qquad
\mathsf{b.}\
\left(
\vcenter{\vbox{
\xymatrix@R=6pt@C=0pt{
y_1 && y_2
\\
& x \ar[ul]\ar[ur]
}}}
\right) \fatsemi z =
\vcenter{\vbox{
\xymatrix@R=6pt@C=0pt{
& z
\\
y_1 \ar[ur] && y_2 \ar[ul]
\\
& x \ar[ul]\ar[ur]
}}}
\qquad
\mathsf{c.}\
\left(
\vcenter{\vbox{
\xymatrix@R=6pt@C=0pt{
y_1 && y_2
\\
& x \guardarrows
}}}
\right) \fatsemi z =
\vcenter{\vbox{
\xymatrix@R=6pt@C=0pt{
z_x && z_{\lnot x}
\\
y_1\ar[u] && y_2 \ar[u]
\\
& x \guardarrows
}}}
\end{equation}
The sequential composition of two singletons (\ref{eq:seq-ex1}a) simply creates a causality between them. When the program has forked into parallel (unguarded) branches (\ref{eq:seq-ex1}b), $\fatsemi$ introduces a diamond structure, as is typical in pomset semantics. However, if the program has a \emph{guarded} branch (\ref{eq:seq-ex1}c), then the following actions must be copied after each branch. Copying is essential to correctly account for loops: in the program $(\whl ba)~;~a'$, if we only created a single node labelled with $a'$, then that node would not be finitely preceded, invalidating \Cref{lem:fin-approx}; see \Cref{fig:trace-sem} and \Aref{app:loops} for a more detailed account. When a structure contains $\bot$ nodes, so that some paths are \emph{stuck}, the behavior of $\fatsemi$ is more complicated. Consider the following \lpof s, where node $y_1$ is labelled with $\bot$ and all other nodes have non-$\bot$ labels:
\begin{equation}\label{eq:seq-ex2}\small
\mathsf{a.}\
\left(
\vcenter{\vbox{
\xymatrix@R=6pt@C=0pt{
y_1(\bot) && y_2
\\
& x \guardarrows
}}}
\right) \fatsemi z =
\vcenter{\vbox{
\xymatrix@R=6pt@C=0pt{
 && z
\\
y_1(\bot) && y_2 \ar[u]
\\
& x \guardarrows
}}}
\qquad\qquad
\mathsf{b.}\
\left(
\vcenter{\vbox{
\xymatrix@R=6pt@C=0pt{
y_1(\bot) && y_2
\\
& x \ar[ul]\ar[ur]
}}}
\right) \fatsemi z =
\vcenter{\vbox{
\xymatrix@R=6pt@C=0pt{
y_1(\bot) && y_2
\\
& x \ar[ul]\ar[ur]
}}}
\end{equation}
When composing the singleton $z$ after a guarded branch with one stuck path (\ref{eq:seq-ex2}a), $z$ is only added to the non-stuck branch. In the program execution, only one branch will be taken, so $z$ can be safely executed as long at the test $x$ is false. However, the same is not true for parallel branching; in (\ref{eq:seq-ex2}b), both $y_1$ and $y_2$ will eventually be scheduled, so $z$ must occur after $y_1$---which is equivalent to not occurring at all---thus it does not appear in the final structure. This behavior ensures monotonicity of $\fatsemi$ \Arefp{lem:monSeq}.

We first define $\fatsemi$ for finite \lpof s, and then extend our definition to infinite ones using \Cref{lem:extension}. In the rest of this section, we assume that $\alpha,\beta\in \lpo_\fin(L)$. 
We start by defining the notion of stuck computations, extensible nodes, and branches.
\[\arraycolsep=2pt
\def\arraystretch{1.5}
\begin{array}{c}
  \stuck_\alpha \triangleq {\bigvee_{x\in \Bot_\alpha}}\varphi_\alpha(x)
  \\
  \extens_\alpha \triangleq\left\{ x \in N_\alpha \mid  \varphi_\alpha(x) \not\Rightarrow \stuck_\alpha \right\}
\end{array}
  \quad
    \br_\alpha \triangleq \set{ \varphi_\alpha(S) \ \middle|\
    \footnotesize
    \begin{array}{l}
      \emptyset \subset S \subseteq \extens_\alpha, \
       \varphi_\alpha(S) \Rightarrow \lnot\stuck_\alpha, \\
        \ \forall T.\ S \subset T\subseteq \extens_\alpha \Rightarrow \lnot\mathsf{sat}(\varphi_\alpha(T))
        \end{array}
    }
\]
The formula $\stuck_\alpha$ indicates which nodes are guaranteed to encounter a $\bot$ later in their execution.
Extensible nodes $x\in\extens_\alpha$ are not stuck, so there is some computation path that they can take without encountering any $\bot$ node. In (\ref{eq:seq-ex1}), all nodes are extensible, whereas in (\ref{eq:seq-ex2}a) only $x$ and $y_2$ are extensible and in (\ref{eq:seq-ex2}b) none are extensible.
The set of branches $\br_\alpha$ contains all of the maximal formulae that can be obtained as the conjunction of formulae of extensible nodes:
there, for a set $S \subseteq N_\alpha$, we define $\varphi_\alpha(S) \triangleq \bigwedge_{x\in S} \varphi_\alpha(x)$.
For example, in (\ref{eq:seq-ex1}a,b), $\br = \{ \tru\}$ since there are no tests; in (\ref{eq:seq-ex1}c), $\br=\{ x, \lnot x\}$ since both outcomes of the test are not stuck; in (\ref{eq:seq-ex2}a), $\br = \{\lnot x\}$ since the program is stuck if $x$ passes; and in (\ref{eq:seq-ex2}b), $\br = \emptyset$, since all paths are stuck.
The set of branches increases monotonically with $\lelpo$. Refer to \Aref{app:seq-example} for more examples of extensibility and branches.
As we saw in (\ref{eq:seq-ex1}c), we will need an isomorphic copy $\beta_\psi \equiv \beta$ for each branch $\psi \in\br_\alpha$. These copies are generated by a function $f \colon \br_\alpha \to [\beta]$ such that the nodes of $f(\psi)$ are disjoint from those of $\alpha$ and of every $f(\psi')$ where $\psi \neq\psi'$. The function $f$ is drawn from the following set:
\[
  \copylpo_{\alpha,\beta} \triangleq \big\{ f \colon \br_\alpha\! \to [\beta]\,
    \ \big|\
         \forall \psi \in \br_\alpha.\, (N_{f(\psi)} \cap N_\alpha = \emptyset\ \wedge
        \ \forall \psi' \neq \psi. N_{f(\psi)} \cap N_{f(\psi')} = \emptyset)
    \big\}
\]
%
Given any $\alpha,\beta\in \lpo_\fin(L)$ and $f\in\copylpo_{\alpha,\beta}$, we define sequential composition as
$
  \alpha \fatsemi_f \beta = \tuple{N, <, \lambda, \varphi}
$
where $\beta_\psi \triangleq f(\psi)$ and the components are defined as follows:
\begin{align*}
  N &\triangleq N_\alpha \cup \bigcup_{\psi \in \br_\alpha} N_{\beta_\psi}
  &
  \mathord{<} &\triangleq \mathord{<_\alpha} \cup \smashoperator{\bigcup_{\psi \in \br_\alpha}} \left( \mathord{<_{\beta_{\psi}}} \cup
      (\{ x\in N_\alpha \mid \psi \Rightarrow \varphi_\alpha(x) \} \times N_{\beta_\psi})
   \right)
  \\
  \lambda(x) &\triangleq \left\{
  \arraycolsep=2pt
    \begin{array}{ll}
      \lambda_\alpha(x) & \text{if} ~ x \in N_\alpha
      \\
      \lambda_{\beta_\psi}(x) & \text{if} ~ x \in N_{\beta_\psi}
    \end{array}
  \right.
  &
  \varphi(x) &\triangleq \left\{
  \arraycolsep=2pt
    \begin{array}{ll}
      \varphi_\alpha(x) & \text{if} ~ x \in N_\alpha
      \\
      \varphi_{\beta_\psi}(x)\land \psi & \text{if} ~ x \in N_{\beta_\psi}
    \end{array}
  \right.
\end{align*}
(where, again, $\psi\land\tru=\tru\land\psi=\psi$).
So, the nodes of $\alpha\fatsemi_f\beta$ are the nodes of $\alpha$, and all the nodes of the isomorphic copies $\beta_\psi$. The new order preserves the causalities in $\alpha$ and in each $\beta_\psi$, and additionally requires that $\beta_\psi$ occurs after all the nodes in the branch $\psi$. All labels are preserved, and formulae in $\beta_\psi$ are updated to also include $\psi$.
Sequential composition for finite pomsets is defined below (left), and is defined for infinite pomsets by extension (right).
\[
\Alpha \fatsemi \Beta \triangleq \{ \alpha \fatsemi_f \beta \mid \alpha \in\Alpha,\ \beta\in\Beta,\ f\in \copylpo_{\alpha,\beta} \}
\qquad
\Alpha \fatsemi \Beta \triangleq \sup_{\Alpha' \ll \Alpha} \sup_{\Beta' \ll \Beta} \Alpha' \fatsemi \Beta'
\]

\begin{remark}\rm
\label{rem:singlerooted}
We can now explain Condition (2c) in \Cref{def:lpof}, requiring single-rootedness, which may seem strange in a framework with concurrency; indeed, the usual semantics of two commands put in parallel is obtained by taking the (disjoint) union of their pomsets (and this yields several possible minimum elements in the resulting pomset). However, having multi-rooted \lpof s would make sequential composition not monotone. For example,
$y \lelpo y\ z\ $ but 
$\ x \fatsemi y \triangleq
\vcenter{\xymatrix@R=5pt@C=10pt{
y
\\
x \ar[u]
}}
$
$\ \nlelpo\!\!$
$
\vcenter{\xymatrix@R=5pt@C=4pt{
y&&z
\\
&x \ar[ul]\ar[ur]
}} \triangleq x \fatsemi (y\ z)
$. This is undesirable and easily fixed with the mild requirement of single-rootedness in \Cref{def:lpof}.
\end{remark}

\subsection{Parallel Composition}
\label{sec:parComp}

We finally define parallel composition. We assume that $L$ contains a special (non-$\bot$) label $\fork$ that denotes thread forking and let
$
\nFork_\alpha \triangleq  N_\alpha \setminus \{x \in \textstyle\min_\alpha \mid \lambda_\alpha(x) = \fork\}
$
denote the set of nodes of $\alpha$ without its $\fork$-minimal elements (if any).
Then, for both finite and infinite \lpof s $\alpha$ and $\beta$, and for any $x \not\in N_\alpha \cup N_\beta$,
we let $\alpha \parallel_x \beta \triangleq \tuple{N, <,\lambda, \varphi  }$, where
\begin{align*}
N & \triangleq \{x\} \cup \nFork_\alpha \cup \nFork_\beta
\\
  < &\triangleq
      (\{x\} \times (\nFork_\alpha \cup \nFork_\beta))
       \cup (\mathord{<_\alpha} \cap (\nFork_\alpha \times \nFork_\alpha))
       \cup (\mathord{<_\beta} \cap (\nFork_\beta \times \nFork_\beta))
\\
 \lambda(y) &\triangleq \left\{
   \begin{array}{ll}
     \fork & \text{if} ~ y = x \\
     \lambda_\alpha(y) & \text{if}~ y \in \nFork_\alpha\\
     \lambda_\beta(y) & \text{if}~ y \in \nFork_\beta
   \end{array}\right.
 \qquad\qquad
 \varphi(y) \triangleq  \left\{
   \begin{array}{ll}
     \tru & \text{if} ~ y = x \\
     \varphi_\alpha(y) & \text{if}~ y \in \nFork_\alpha\\
     \varphi_\beta(y) & \text{if}~ y \in \nFork_\beta
   \end{array}\right.
\end{align*}
Notice that the branching that arises from $\parallel$ is conceptually (and practically) different from the branching that arises from a test: the branching due to $\parallel$ does not exclude any branch, whereas the two branches due to a test are mutually exclusive. This is apparent by the different way in which we handle the formulae associated to the nodes after the branch: they remain the same under parallel composition, whereas they are extended with a new conjunct (specifying the value of the test at the branch) under a guard.
The complication arising in our handling of parallel composition through $\fork$ is that we want one single node labelled with $\fork$, even if we put in parallel many \lpof s. 

For example, by putting in parallel the singleton \lpof s made up, respectively, by nodes $y_1$ and $y_2$ and by nodes $z_1$, $z_2$ and $z_3$
(and where $\fork$ is the label of the root nodes, say $x_1$ and $x_2$ respectively), we obtain the \lpof s $\alpha$, $\beta$, and their parallel composition as follows:
\[
\alpha = 
\vcenter{\vbox{\xymatrix@R=8pt@C=4pt{
y_1 && y_2
\\
& x_1(\fork) \ar[ul]\ar[ur]
}}}
\qquad
\beta = 
\vcenter{\vbox{\xymatrix@R=8pt@C=4pt{
z_1 & z_2 & z_3
\\
& x_2(\fork) \ar[ul] \ar[ur] \ar[u]
}}}
\qquad
\alpha\parallel_x\beta =
\vcenter{\vbox{\xymatrix@R=8pt@C=4pt{
y_1 & y_2 & z_1 & z_2 & z_3
\\
&& x(\fork) \ar[ull] \ar[ul] \ar[ur] \ar[urr] \ar[u]
}}}
\]
where $x_1$ and $x_2$ disappear and the new root is $x$, labelled $\fork$.

As a side note, observe that the way in which we defined parallel composition entails that $\alpha \lelpo \alpha \parallel_x \beta$ if and only if $\alpha = \singleton \bot _x$: the \lpof\ $\alpha$ has just a node $x$ labelled with $\bot$; the second \lpof\ has node $x$ labelled with $\fork$ that precedes a node labelled with $\bot$ and all the nodes of $\beta$. 
Also, a somewhat unusual notion of associativity for parallel of \lpof s holds, viz.
$(\alpha \parallel_x \beta) \parallel_y \gamma = \alpha \parallel_y (\beta \parallel_z \gamma)$;
the expectable notion of associativity is recovered for pomsets, where we abstract from the specific node set\footnote{
We remark that $\fatsemi$ on \lpof s also satisfies a non-straightforward associativity, that however becomes the usual one when passing to pomsets.
}.
Like before, we define parallel composition for (both finite and infinite) pomsets on top of that for \lpof s:
\[
\qquad
  \Alpha \parallel \Beta \triangleq \{ \alpha \parallel_x \beta \mid \alpha \in\Alpha,\ \beta\in\Beta,\ x\not\in N_\alpha \cup N_\beta \}
\]

\begin{remark} \rm
\label{rem:parNotMon}
It is impossible for $\parallel$ to be \emph{both} associative
\emph{and} Scott continuous. If it were continuous, then $f(\Alpha) \triangleq \singleton{\ell} \parallel \Alpha$ would have a least fixed point $\mathsf{lfp}(f) = \singleton{\ell}\parallel(\singleton{\ell} \parallel \cdots)$, and, due to associativity, this pomset is infinitely branching, which invalidates \Cref{lem:fin-approx}.
\end{remark}

\section{Denotational Semantics}
\label{sec:denotational}

We now define denotational semantics for commands $C\in\cmd$ (see (\ref{fig:syntax})) based on $\pom$.
Although actions at this stage are uninterpreted, $\pom$ allows for actions with a variety of different \emph{computational effects}, which we will see in \Cref{sec:linearization}.
Those actions $a\in\act$ are drawn from a \dcpo\ $\tuple{\act, \le_\act}$ and tests $b\in\test$ are drawn from a set $\test$ (with a flat order), which is closed under conjunction, disjunction, and negation. We first define a compositional denotational model for this language in \Cref{sec:pomset-semantics}, and then in \Cref{sec:linearization} we show how to interpret pomsets as state transformers, to learn their behavior on particular inputs.

\subsection{Pomset Semantics}
\label{sec:pomset-semantics}

\begin{figure}
\begin{multicols}{2}
\begin{align*}
  \de{\skp} &\triangleq \singleton\fork \\
  \de{a} &\triangleq \singleton{a} \\ 
  \de{C_1 ; C_2} &\triangleq \de{C_1} \fatsemi \de{C_2} \\
  \de{C_1 \parop C_2} &\triangleq \de{C_1} \parallel \de{C_2} \\
  \de{\iftf b{C_1}{C_2}} &\triangleq \mathsf{guard}(b, \de{C_1}, \de{C_2})\\
  \de{\whl bC} &\triangleq \mathsf{lfp}\left(\Phi_{\tuple{C,b}}\right)
  \end{align*}
\break
\[
  \left(
  \vcenter{\vbox{\xymatrix@R=5pt@C=-1pt{
  \vdots \\
  a \ar[u] && \fork
 \\
 & b \guardarrows
 \\
 &a \ar[u] && \fork
 \\
 && b \guardarrows
}}}\right)
  \!\fatsemi a'
  =  \!
  \vcenter{\vbox{\xymatrix@R=5pt@C=-1pt{
  \vdots && a' \\
  a \ar[u] && \fork\ar[u]
 \\
 & b \guardarrows && a'
 \\
 &a \ar[u] && \fork\ar[u]
 \\
 && b \guardarrows 
}}}
\]
\end{multicols}
\caption{Left: pomset semantics of commands $\de{-} \colon \cmd \to \pom$. Right: semantics of the program $(\whl ba)\; ;\, a'$.}
\label{fig:trace-sem}
\end{figure}

We will now discuss the semantics of programs $C\in\cmd$ as pomsets, which record both the causality between atomic actions and the branching behavior of tests. The label set $\lab \triangleq \act \cup \test \cup \{ \fork, \bot\}$ consists of actions, tests, fork, and bottom nodes, which forms a pointed \dcpo, with $\bot$ as the bottom. Actions are ordered according to $\le_\act$; for any other label $\ell \in \test \cup \{\fork\}$, we only have $\bot \le \ell$ and $\ell \le\ell$. We will also henceforth use $\pom \triangleq \pom(\lab)$ to refer to pomsets over this label set.

The semantic function $\de{-} \colon \cmd \to \pom$, mapping commands to pomsets, is shown on the left of \Cref{fig:trace-sem}. Its definition is straightforward using the pomset operations; in particular, $\skp$ is interpreted as a singleton $\fork$ and
while-loops are interpreted as the least fixed point of the characteristic function $\Phi_{\tuple{C,b}} : \pom \to \pom$ defined as follows:
\[
\Phi_{\tuple{C,b}}(\Alpha) \triangleq \mathsf{guard}(b, \de{C}\fatsemi \Alpha, \de{\skp})
\]
Clearly $\Phi_{\tuple{C,b}}$ is Scott continuous, as it is defined by Scott continuous operations $\guard$ and $\fatsemi$ \Arefptwo{cor:guardCont}{cor:seqCont}. So, by Kleene's fixed point theorem, $\Phi_{\tuple{C,b}}$ has a least fixed point given by:
\[
  \de{\whl bC}
  =
  \mathsf{lfp}\left(\Phi_{\tuple{C,b}}\right)
  =
  \sup_{n\in\mathbb N} \Phi_{\tuple{C,b}}^n\left(\bot_\pom\right)
\]
where $f^0 \triangleq \mathsf{id}$ and $f^{n+1} \triangleq f \circ f^n$.
Considering a loop $\de{\whl ba}$ with a single action $a$ as its body, we can unroll the definition several times to get the following infinite chain:
\[\arraycolsep=4pt\def\arraystretch{1.25}
\begin{array}{cccccccc}
\bot
&\hspace{-1em}\lepom&
\vcenter{\vbox{\xymatrix@R=6pt@C=-2pt{
  \bot \\
  a \ar[u] && \fork
 \\
 & b \guardarrows
}}}
&\lepom&
\vcenter{\vbox{\xymatrix@R=6pt@C=-2pt{
  \bot \\
  a \ar[u] && \fork
 \\
 & b \guardarrows
 \\
 &a \ar[u] && \fork
 \\
 && b \guardarrows
}}}
& \hspace{-1em}\lepom\cdots\lepom
&
  \hspace{-1em}\vcenter{\vbox{\xymatrix@R=6pt@C=-2pt{
  \vdots \\
  ~a~ \ar[u] && \fork
 \\
 & b \guardarrows
 \\
 &a \ar[u] && \fork
 \\
 && b \guardarrows
}}}
\\
\Phi_{\tuple{a, b}}^0(\bot_\pom)
&&
\Phi_{\tuple{a, b}}^1(\bot_\pom)
&&
\Phi_{\tuple{a, b}}^2(\bot_\pom)
&&
\hspace{-1em}\sup_{n\in\mathbb{N}} \Phi_{\tuple{a,b}}^n(\bot_\pom)
\end{array}
\]
The supremum of this chain is clearly an infinite structure with a terminating branch for each $n\in \mathbb N$, and an infinitely ascending spine. The $\bot$ node is pushed to a progressively higher level after each unrolling, so in the supremum it does not appear at all.

On the right of \Cref{fig:trace-sem}, we show $\de{(\whl ba)\ ;\, a'}$, containing a loop with a single action in the body followed by another action.
As discussed in \Cref{sec:seq}, sequentially composing another action $a'$ after a loop produces a copy of $a'$ for each branch, resulting in countably many copies of $a'$. This approach is necessary to make $a'$ finitely preceded.

The semantics obeys a binary branching property, guaranteeing that test nodes branch into exactly two successors, one where the test passes and another where it fails. Further, variables can only be introduced into formulae after a test. From now on, we assume that all pomsets have this property, which is clearly preserved by the operations of \Cref{sec:operations}.

\begin{definition}[Binary Branching]\label{def:binary-branching}
  A pomset $\Alpha\in\pom$ has the \emph{binary branching property} if, for every $\alpha\in\Alpha$ and every $x\in N_\alpha$ such that $\lambda_\alpha(x) \in \test$, $\succ_\alpha(x) = \{ y_1, y_2 \}$ such that:
  \begin{enumerate*}
  \item $\varphi_\alpha(y_1) \Leftrightarrow \varphi_\alpha(x) \land x$;
  \item $\varphi_\alpha(y_2) \Leftrightarrow \varphi_\alpha(x) \land \lnot x$; and
  \item $\pred_\alpha(y_1) = \pred_\alpha(y_2) = \{x \}$.
  \end{enumerate*}
\noindent
Furthermore, if $x$ is not the successor of a test, then $\varphi_\alpha(x) \Leftrightarrow \bigwedge_{y\in\pred_\alpha(x)} \varphi_\alpha(y)$.
\end{definition}

\subsection{Linearization}
\label{sec:linearization}

We now discuss how to produce a state transformer from the pomset semantics of \Cref{sec:pomset-semantics}, which is useful for understanding the input-output behavior of programs resulting from scheduling concurrent threads.
This state transformer is obtained via \emph{linearization}, which consists of considering all interleavings of the threads in order to interpret the semantics of a \emph{complete} program as a function from inputs to collections of outputs. Once the program is linearized, more threads cannot be composed in parallel, as the causality information is gone.
Nevertheless, linearization is useful to understand the program's behavior, and relate the behavior to other semantic models, as shown in \Cref{sec:powerdomains,sec:convex-powerset}.

When performing linearization, we will interpret programs in a computational domain $D$, which must support nondeterminism---in order to model the different ways that threads can be interleaved---but may support additional computational effects too, which we will model using monads. Hence, we assume some familiarity with basic notions of monads and Kleisli categories; for a detailed introduction, we refer to \cite{awodey2006category}. More precisely, we need the following:

\begin{enumerate}[leftmargin=*]

\item A domain $D$ and set of states $\st$, such that $\tuple{D(\st),\sqsubseteq}$ is a pointed \dcpo\ with bottom $\bot_D$.
\item  An associative, commutative, and monotone nondeterminism operator $\mathord{\nd} \colon D(\st)^2 \to D(\st)$.
\item An additive monad in the category $\textbf{\dcpo}$ $\tuple{D, \eta, (-)^\dagger}$, meaning that:
\begin{enumerate}
\item the operations
$
  \eta \colon X\to D(X)
$
and
$
  (-)^\dagger \colon (X \to D(Y)) \to D(X) \to D(Y)
$
obey the monad laws $f^\dagger \circ \eta = f$, $\eta^\dagger = \mathsf{id}$, and $(g^\dagger \circ f)^\dagger = g^\dagger \circ f^\dagger$.
\item Kleisli extension is Scott Continuous (\ie $\sup_{f\in D}\sup_{d\in D'} f^\dagger(d) = (\sup D)^\dagger(\sup D')$),
strict (\ie $f^\dagger(\bot_D) = \bot_D$), and additive (\ie $f^\dagger(d \nd d') = f^\dagger(d) \nd f^\dagger(d')$) \cite{outcome,zilberstein2024outcome}.
\end{enumerate}

\item An interpretation function for actions $\de{-}_\act \colon \act \to \st \to D(\st)$ that is monotone (\ie $\de{a}_\act\!(s) \sqsubseteq \de{a'}_\act\!(s)$ if $a \le_\act a'$) and one for tests $\de{-}_\test \colon \test \to \st \to \mathbb B$.
\end{enumerate}
There are many domains that obey these properties, depending on which \emph{computational effects} are present. These include: the Hoare, Smyth, and Plotkin powerdomains for nondeterministic computation \cite{plotkin1976powerdomain,smyth1978power,abramsky1995domain} (where the latter two additionally deal with nontermination), the convex powerset \cite{morgan1996refinement,jifeng1997probabilistic} and the powerdomain of indexed valuations \cite{varacca2002powerdomain,varacca2003probability,varacca_winskel_2006} (for mixed probabilistic and nondeterministic computation), and other domains for nondeterminism with exceptions \cite{outcome,zilberstein2024outcome}. In \Cref{thm:convex,thm:powdom}, we will show that standard semantics in two of those domains are recovered from our pomset semantics.

\begin{figure*}
\begin{align*}
  \mathsf{next}(\alpha, \psi, S) &\triangleq \{ x \in N_\alpha \setminus S \mid \predplus{x}_\alpha \subseteq S, \psi \Rightarrow \varphi_\alpha(x) \}
  \\
  \linlpo(\alpha,\psi, S)(s) &\triangleq \eta(s) \quad\text{if}~ \mathsf{next}(\alpha, \psi, S) = \emptyset
  \\
  \linlpo(\alpha,\psi, S)(s) &\triangleq
    \smashoperator[l]{\bignd_{x \in \mathsf{next}(\alpha, \psi, S)}}\!\! \left\{
      \def\arraystretch{1.2}
      \arraycolsep=2pt
      \begin{array}{ll}
        \linlpo\left(\alpha,\psi, S \cup\{ x \}\right)^\dagger\left(\de{a}_\act(s)\right) & \text{if} ~ \lambda_\alpha(x) =a\in\act
        \\
        \linlpo(\alpha,\psi \land \sem{x = {\de{b}}_\test(s)}, S \cup\{x\})(s) & \text{if}~ \lambda_\alpha(x)=b\in\test
        \\
        \bot_D & \text{if}~ \lambda_\alpha(x) = \bot
        \\
        \linlpo(\alpha,\psi, S\cup\{x\})(s) & \text{if} ~ \lambda_\alpha(x) = \fork
      \end{array}
    \right.
\end{align*}
\caption{Linearization for finite \lpof s $\linlpo \colon \lpo_\fin(\lab)\times\Form\times\mathcal{P}(\Nodes) \to \st \to D(\st)$.}
\label{fig:linlpo}
\end{figure*}

We will start by defining a linearization operation on finite \lpof s, which is shown in \Cref{fig:linlpo}.
Linearization is defined recursively, until there are no more nodes to schedule. It must be defined on finite structures, otherwise the recursion is not well-founded, but we can extend linearization to infinite structures using the extension lemma.
In $\linlpo(\alpha, \psi, S)$, the set $S\subseteq N_\alpha$ contains all the nodes that have already been processed and $\psi$ is a path condition, indicating the outcomes of the tests associated to the nodes in $S$.

The function $\next(\alpha, \psi, S)$ gives the nodes that are ready to be scheduled, which includes all nodes that obey the path condition $\psi$, and whose predecessors are in $S$. If $\next(\alpha, \psi, S)$ is empty, then $\linlpo(\alpha,\psi,S)$ simply returns the current state $s$ using the monad unit $\eta$. If not, then it nondeterministically selects a next node, where $\bignd_{i \in I} d_i \triangleq d_{i_1} \nd \cdots \nd d_{i_n}$, for any finite index set $I = \{ i_1, \ldots, i_n \}$. If the next node is an action, then it interprets the action using $\de{-}_\act$, and then uses Kleisli composition to compose the result with the linearization of the remaining \lpof. If the next node is a test, then that test is evaluated and the result is added to the path condition, where $\sem{x = 1} \triangleq x$ and $\sem{x=0} \triangleq \lnot x$. If the next node is $\bot$, then the linearization is $\bot_D$, and $\fork$ is treated like a no-op.

We next use $\linlpo$ to define linearization on finite pomsets $\linfin \colon \pom_\fin(\lab) \to \st \to D(\st)$ as 
$\linfin([\alpha]) \triangleq \linlpo(\alpha, \tru, \emptyset)$.
Clearly $\linlpo(\alpha, \tru, \emptyset) = \linlpo(\beta, \tru, \emptyset)$ if $\alpha \equiv \beta$, so $\linfin(\Alpha)$ can be defined for any arbitrary representative \lpof\ $\alpha \in \Alpha$. Finally, linearization is extended to infinite pomsets $\lin\colon \pom \to \st \to D(\st)$ by taking the supremum over all of the finite approximations: 
$\lin(\Alpha) \triangleq \linfin^*(\Alpha)$.
Linearization over finite pomsets is monotone \Arefp{lem:monLin}, therefore the extension is well-defined and Scott continuous.

To ensure that linearization acts as desired, we provide a sanity check lemma to relate the linearized semantics to well-known counterparts. Linearizing $\skp$ gives the monad unit; linearizing a singleton action has the same behavior as interpreting the action; linearizing a sequential composition is equal to Kleisli composition of the individual linearizations;
linearizing an if-statement is equal to linearizing one of the two branches, depending on the truth of the guard; and
the linearization of a while-loop is equal to the least fixed point over a different characteristic function $\Psi_{\tuple{f, b}}$. The function $\Psi_{\tuple{f, b}}$ inherits Scott continuity from the Kleisli composition operator, therefore the fixed point exists.

\begin{restatable}[Linearization]{lemma}{linProps}\label{lem:lin-props}
The following properties hold:
\rm
\begin{mathpar}
  \lin(\de{\skp}) = \eta
  
  \lin(\de{a}) = \de{a}_\act

  \lin(\de{C_1; C_2}) = \lin(\de{C_2})^\dagger \circ \lin(\de{C_1})

  \lin(\de{\iftf b{C_1}{C_2}})(s)
  = \left\{
    \begin{array}{ll}
      \lin(\de{C_1})(s) & \text{if}~ \de{b}_\test(s) = 1
      \\
      \lin(\de{C_2})(s) & \text{if}~ \de{b}_\test(s) = 0
    \end{array}
  \right.

  \lin(\de{\whl bC}) = \mathsf{lfp}\left( \Psi_{\tuple{\lin(\de{C}),b}} \right)
  ~\text{where}~
  \Psi_{\tuple{f, b}}(g)(s) \triangleq \left\{
    \arraycolsep=2pt
    \begin{array}{ll}
      g^\dagger(f(s)) & \text{if}~ \de{b}_\test(s) = 1
      \\
      \eta(s) & \text{if}~ \de{b}_\test(s) = 0
    \end{array}
  \right.
\end{mathpar}
\end{restatable}
\Cref{lem:lin-props} is quite significant; it implies that $\lin(\de{C})$ corresponds to standard monadic semantics in a variety of domains. In particular, it brings us to our first adequacy theorem, showing that our pomset model recovers standard probabilistic semantics $\de{C}_\C \colon \st \to \C(\st)$ (shown in \Cref{fig:convex-powerset-semantics} of \Cref{app:convex}) \cite{jifeng1997probabilistic,zilberstein2025demonic,mciver2005abstraction} based on the convex powerset for a parallel-free fragment of our programming language.

\begin{theorem}\label{thm:convex}
For any program $C\in\cmd$ without parallel composition: $\lin(\de{C}) = \de{C}_\C$.
\end{theorem}
\begin{proof}
By induction on the structure of $C$. The cases follow immediately from \Cref{lem:lin-props}.
\end{proof}

%

\section{Pomset Languages and Purely Nondeterministic Concurrency}
\label{sec:powerdomains}

Powerdomains give semantics to programs that combine nondeterminism with looping or recursion \cite{plotkin1976powerdomain,smyth1978power}. In this section, we create a powerdomain instantiation of our semantics from \Cref{sec:denotational} and prove that the resulting model corresponds to well-known models of non-probabilistic concurrent programs \cite{kappe2020concurrent,laurence2014completeness}.
We work in the \emph{Hoare} or \emph{lower} powerdomain, named as such for its connection to partial correctness and Hoare Logic \cite{hoarelogic,FLOYD67}. Indeed, the Hoare powerdomain identifies the terminating traces of a program, but not that the program \emph{always} terminates. From a concurrency perspective, this corresponds to \emph{safety} properties \cite{lamport1977proving}. For simplicity, we presume that the order over actions $\le_\act$ in this section is flat.


Since typical pomsets do not contain formulae \cite{gischer1988equational,pratt1986modeling,grabowski1981partial}, they cannot encode control flow branching introduced by if-statements and while-loops. As we already saw in \Cref{sec:convex-powerset}, concurrent semantics typically use \emph{pomset languages}---sets of pomsets---where each individual pomset contains only the actions that occur in a single branch of computation \cite{kappe2020concurrent,laurence2014completeness}.

We emulate pomset language semantics using pomsets with formulae. First, we define a label set $\lab_\powdom \triangleq \act_\powdom \cup \test_\powdom \cup \{ \fork, \bot\}$, whose components are as before, but now $\test_\powdom \triangleq \emptyset$, and instead $\act_\powdom \triangleq \act \cup \{ \assume b \mid b \in \test \}$ includes an action $\assume b$ for every test $b\in\test$, which guarantees that a certain condition holds in the current branch.
Pomset languages $\pomlang \triangleq \mathcal{P}(\pom_\fin(\lab_\powdom))$ are sets of finite pomsets over $\lab_\powdom$.

The pomset language semantics $\de{-}_\powdom \colon \cmd \to \pomlang$, which is standard \cite{kappe2020concurrent,laurence2014completeness}, is given in \Cref{fig:pomlang-semantics} of \Cref{app:plsem}. In this model, 
each $\Alpha \in \de{C}_\powdom$ corresponds to a particular sequence of test resolutions. When an $\code{assume}$ fails, the trace is eliminated. When a branch occurs, \ie in an if-statement or while-loop, the set of traces is duplicated to account for both the ``true'' and ``false'' branches, as we saw in \Cref{fig:pomset-example}. Loops are interpreted as the (possibly infinite) union of all finite traces, so infinite pomsets are not needed. 


The Hoare powerdomain consists of sets of states, ordered by subset inclusion $\tuple{\mathcal{P}(\st), \subseteq}$, which is known to be a pointed \dcpo\ with supremum given by union $\cup$ and $\emptyset$ as bottom. Let $\nd \triangleq \cup$ and define the monad operations as $\eta(s) \triangleq \{ s \}$ and $f^\dagger(S) \triangleq \bigcup_{s\in S} f(s)$, which are well known to be Scott continuous, strict, and additive \cite{outcome}.
Accordingly, action evaluation
$\de{-}_{\act_\powdom} \colon \act_\powdom \to \st \to \mathcal P(\st)$
is given below, where $\code{assume}$ actions result in $\{s\}$, if the test is true in state $s$, and $\emptyset$, otherwise. 
We also define a specialized linearization operation $\linpl \colon \pomlang \to \st \to \mathcal{P}(\st)$, which is simply a union over the linearization of all traces.
\begin{align*}
  \de{a}_{\act_\powdom}\!(s) &\triangleq \left\{
  \arraycolsep=2pt
    \begin{array}{ll}
    \de{a}_\act(s) & \text{if} ~ a\in \act
    \\
    \{s\} & \text{if}~ a = \assume b,\ \de{b}_\test\!(s) = 1
    \\
    \emptyset & \text{if}~ a = \assume b,\ \de{b}_\test\!(s) = 0
    \end{array}
  \right.
  &\quad
  \linpl(S)(s) &\triangleq \bigcup_{\Alpha \in S} \linfin(\Alpha)(s)
\end{align*}
Since all formulae are $\tru$, the next nodes to schedule are exactly those whose predecessors have all been processed, \ie $\next(\alpha, \tru, S) = \{ x \in N_\alpha\setminus S \mid \predplus{x}_\alpha \subseteq S \}$. This corresponds to standard interleaving definitions of linearization \cite{pratt1986modeling}.
We now prove that pomset language semantics $\de{-}_\powdom$ (\Cref{fig:pomlang-semantics}) corresponds exactly to the Hoare powerdomain instance of our semantic model from \Cref{sec:denotational}. The proof relies on a translation $\tr \colon \pom \to \pomlang$, which recovers a pomset language from a pomset with formulae, and is defined in \Aref{app:powdom}.

\begin{restatable}[Equivalence of Semantics]{theorem}{powDomChar}
\label{thm:powdom}
The following diagram commutes:
\vspace*{-.2cm}
\[
\xymatrix@R=32pt@C=64pt{
{\cmd} \ar[r]^{\de{-}} \ar[d]_{\de{-}_\powdom}
&
{\pom} \ar[d]^{\lin} \ar[dl]_{\tr}
\\
{\pomlang} \ar[r]_{\linpl}
&
{(\mathcal S \to \mathcal{P}(\mathcal S))}
}
\]
\end{restatable}
The upper commuting triangle does not depend on the Hoare powerdomain, so it is tempting to say that pomset languages give an adequate model in other domains too. However, as we showed in \Cref{sec:convex-powerset}, just because a semantic structure exists does not mean that it conveys the desired meaning.
Indeed, the fact that $\lin = \lin_\powdom \circ \tr$ relies on two particular properties of the Hoare powerdomain. First, if $h(x) = f(x) \nd g(x)$, then $h^\dagger(S) = f^\dagger(S) \nd g^\dagger(S)$, which is invalid in probabilistic domains. In fact, this corresponds to the problematic \Cref{eq:problem} from \Cref{sec:convex-powerset}.
Second, $\lin_\powdom$ is computed as a union over a possibly infinite set (the union is infinite whenever the program contains a while-loop). Unbounded nondeterminism is known to cause problems in many domains---including the Smyth and Plotkin powerdomains \cite{apt1986countable,back1980semantics,s_ondergaard1992non}---thus no infinitary version of $\nd$ exists.

So, although programs in any domain \emph{can} be interpreted as pomset languages, those semantic objects may not give an adequate meaning to the program. Particularly in probabilistic domains---\ie the convex powerset---prior pomset language models \cite{kappe2020concurrent,laurence2014completeness} do not give us a way to find the relative probabilities of the different outcomes after actually running the program.
Our new approach, using a single structure that records both causality and control flow, is more suitable for interpreting the semantics of concurrent programs in new domains, such as probabilistic computation. As we saw in \Cref{sec:convex-powerset} and \Cref{thm:convex}, pomsets with formulae accurately capture the concurrent behavior of these programs, while also allowing us to recover the precise probabilities of each outcome.



\section{Discussion and Related Work}\label{sec:discussion}

We developed a new semantic structure---pomsets with formulae---and studied its domain-theoretic properties, which we used as basis to provide semantics for a probabilistic concurrent imperative language with unbounded loops. Our work builds on a rich history of using partial orders and pomsets in denotational models of concurrent programs \cite{grabowski1981partial,pratt1976semantical,gischer1988equational}, which have been used as a semantic basis for concurrent separation logic \cite{brookes2004semantics}, concurrent Kleene algebras \cite{hoare2011concurrent,kappe2020concurrent,laurence2014completeness,jipsen2016concurrent}, and other applications \cite{kavanagh2018denotational,edixhoven2022branching,jagadeesan2020pomsets}.

While there are links between pomset semantics and operational models of concurrency \cite{brookes2002traces}, pomsets are often preferred for their finer notion of concurrency compared to the interleaving semantics offered by operational models. This is especially useful in the context of weak memory \cite{kavanagh2018denotational,jagadeesan2020pomsets}.
Probabilistic event structures model probabilistic process algebras \cite{katoen1996quantitative,winskel2014probabilistic,varacca2006probabilistic,varacca2007probabilistic}, whereas we model a full imperative language with sequential composition, control flow, and unbounded loops, for which adding formulae was essential.


Concurrency has previously been combined with other computational effects in limited ways. For example, semantics for probabilistic concurrent programs have been defined in both operational \cite{polaris,tassarotti2018verifying} and denotational \cite{zilberstein2024probabilistic} styles, but these approaches are limited to programs with bounded looping constructs.
The inclusion of unbounded looping adds significant complexity in the probabilistic case. Whereas finite-trace models are sufficient for many non-probabilistic scenarios, unbounded looping in probabilistic programs requires infinite traces, as the probability of termination may only become 1 in the limit. A proper theory of infinite traces requires pomsets to be enriched with a \dcpo\ \cite{meyer1988applications,meyer1989pomset} or metric \cite{bakker1990metric} structure; prior work in this area was very informative to our own development, although the inclusion of deterministic branching in our own pomset structure added significant complexity.

An operational model of probabilistic concurrency has been developed based on Markov Decision Processes \cite{fesefeldt2022towards}. This semantics is used to lower bound expected values, but does not straightforwardly extend to more complex domains such as the convex powerset, which give the full set of distributions over outcomes.
In addition, a denotational model has been developed \cite{neves2024adequacy}, where the underlying semantic structure is obtained as the solution to a domain equation. Our construction is more concrete, giving a full account of the causality in the program and thus capturing non-interleaving models of concurrency too.

Branching pomsets were recently introduced to model choices in choreographic programs  \cite{edixhoven2022branching,edixhoven2022realisability,edixhoven2024branching}. While branching pomsets can, in theory, contain infinitely many nodes, the domain theoretic properties needed to approximate infinite structures have not been explored, which was a significant focus of this paper. Indeed, our extension lemma (\Cref{lem:extension}) was necessary for linearization, without which we would not have been able to relate our pomeset model to the known convex powerset semantics  \cite{jifeng1997probabilistic,mciver2005abstraction,zilberstein2025demonic}.

Going forward, it would be interesting to explore the applicability of pomsets with formulae to convex powerdomain constructions other than Smyth \cite{tix2009semantic,tix2000convex,keimel2017mixed,tix1999continuous}, or to other domains for mixing probabilities and nondeterminism including indexed valuations \cite{varacca2003probability,varacca_winskel_2006,varacca2002powerdomain} and multisets of distributions \cite{jacobs2021multisets,kozen2024multisets}. While our linearization procedure is based on an \emph{omniscient} scheduler, we are also interested in exploring more restricted models including oblivious schedulers (which cannot see the outcomes of random sampling), and fair schedulers. This will be challenging, as those models are non-compositional and therefore linearization on infinite pomsets could not be defined via \Cref{lem:extension}. Still, pomsets with formulae are a good basis for studying these questions, as the structure itself makes no assumptions about how scheduling of parallel branches occurs.


\bibliography{refs}

\appendix

\section{Preliminaries and Omitted Definitions}
\label{app:prelim}

\subsection{Order Theory and Domain Theory}
\label{app:ord}

\begin{definition}[Poset]
A partially ordered set, or {\em poset}, is a set equipped with a partial order: that is, a pair $\tuple{X,\le}$ consisting of a set $X$ and an order relation $\mathord{\le} \subseteq X\times X$ which is reflexive, transitive and antisymmetric. When the order is instead irreflexive (\ie $x \not< x$ for all $x\in X$), the poset is called {\em strict} and the order relation is denoted by $<$.
\end{definition}

Given a strict poset $\tuple{X, <}$ and $x \in X$, we define the upward and downward closures:
\begin{align*}
  \succplus x &\triangleq \{ y \in X  \mid  x < y \}
  &
   \predplus x &\triangleq \{ y \in X  \mid  y < x \};
  \end{align*}
and  its set of immediate successors and predecessors as
  \[
   \succ(x) \triangleq \{ y \in \succplus x\ \mid\ \not\exists z. (x < z < y) \} \qquad\qquad
    \pred(x) \triangleq \{ y \in \predplus x\ \mid\ \not\exists z. (y < z < x) \}
      \]
      All the above operations can easily be extended to sets of elements, \eg $\succplus X \triangleq \bigcup_{x\in X} \succplus x$, for any set $X$.
  The set of minimal and maximal nodes of a poset is given by:
        \begin{align*}
  \max \triangleq \{ x \in X \mid \succ(x) = \emptyset \} 
   \qquad
 \min  \triangleq \{ x \in X \mid \pred(x) = \emptyset \}
\end{align*}
To define our new structures we need a few more definitions on posets that will play a role in proving the existence of fixed points, essential for giving a semantics to unbounded loops. First, we recall the notion of downward closed sets.

\begin{definition}[Downward Closure]
Given a poset $\tuple{X, <}$, a set $Y \subseteq X$ is downward closed, written $Y \dwclosed X$, iff for all $y\in Y$ it holds that $\predplus y\, \subseteq Y$.
\end{definition}

Second, we need a notion of a poset being finitely preceded. 

\begin{definition}[Finitely Preceded]
A poset $\tuple{X, <}$ is \emph{finitely preceded} if there are finitely many elements smaller than each element; that is: $|\predplus{x}| < \infty$, for all $x\in X$.
\end{definition}

Third, we define the level of elements in the poset.

\begin{definition}[Level \cite{meyer1989pomset}]
  Given a finitely preceded poset $\tuple{X, <}$, define $\lev \colon X \to \mathbb N$:
  \[
  \lev(x) \triangleq
\sup \left\{ n \ \middle| \!\!
    \begin{array}{l}
    \exists x_0, \ldots, x_n \in X.\ x_0\in\min\ \wedge\  x_n = x \ \wedge\ \forall i < n.\ x_{i+1} \in \succ(x_i) \!\!\!
    \end{array}
    \right\}
  \]
  We also define its inverse $\lev^{-1} \colon \mathbb N \to \mathcal{P}(X)$:
  \[
    \lev^{-1}(n) \triangleq \{ x \in X \mid \lev(x) = n \}
  \]
\end{definition}

\begin{definition}\label{def:formulae}
The satisfaction relation $\mathord{\vDash} \subseteq (\Nodes \to \mathbb B) \times \Form$ for Boolean formulae is given below:
\begin{align*}
  v &\vDash \tru && \text{always}
  \\
  v &\vDash \fls && \text{never}
  \\
  v &\vDash \psi_1 \land\psi_2 && \text{iff} \quad v\vDash \psi_1 ~\text{and}~ v\vDash \psi_2
  \\
  v &\vDash \psi_1 \lor\psi_2 && \text{iff} \quad v\vDash \psi_1 ~\text{or}~ v\vDash \psi_2
  \\
  v &\vDash \lnot \psi && \text{iff} \quad v\not\vDash \psi
  \\
  v &\vDash x && \text{iff} \quad v(x) = 1
\end{align*}
\end{definition}

\subsection{The Convex Powerdomain}
\label{app:convex}

We begin by providing the omitted definitions from \Cref{sec:convex-powerset}, regarding the \dcpo\ structure of the convex powerset. We start by defining the order on probability distributions. For any set $X$, let $X_\bot = X \cup \{\bot\}$ and the order over distributions $\mathord{\sqsubseteq_\D} \subseteq \D(X_\bot) \times \D(X_\bot)$ be defined as follows:
\[
  \mu \sqsubseteq_\D \nu
  \quad\text{iff}\quad
  \forall x\in X.\ \mu(x) \le \nu(x)
\]
Note that this means that $\mu(\bot) \ge \nu(\bot)$ if $\mu \sqsubseteq_\D \nu$, and that the bottom of the order is $\bot_\D = [ \bot \mapsto 1]$. A set of distributions $S \subseteq \D(X_\bot)$ is \emph{upward closed} if, for all $\mu \in S$, it holds that $\mu \sqsubseteq_\D \nu$ implies $\nu \in S$. In addition, $S$ is \emph{Cauchy closed} if it is closed in the product of Euclidean topologies \cite[Definition 5.4.3]{mciver2005abstraction}. Now, the convex powerset is defined as follows:
\[
  \C(X) \triangleq \set{ S \subseteq \D(X_\bot) \;\;\middle|\;\; \text{$S$ is nonempty, convex, upward closed, and Cauchy closed}  }
\]
Convex powersets are ordered via the Smyth order \cite{smyth1978power} $\mathord{\sqsubseteq_\C} \subseteq \C(X) \times \C(X)$, defined below:
\[
  S \sqsubseteq_\C T
  \quad\text{iff}\quad
  \forall \nu \in T.\ \
  \exists \mu \in S.\ \
  \mu \sqsubseteq_\D \nu
\]
Since the sets above are upward closed, the Smyth order collapses to reverse subset inclusion $S \sqsubseteq_\C T$ iff $S \supseteq T$. This makes the \emph{convex powerdomain} $\tuple{\C(X), \sqsubseteq_\C}$ a pointed \dcpo\ with suprema given by set intersection and bottom $\bot_\C = \D(X_\bot)$ being the set of all distributions, for the full proof refer to \cite{zilberstein2025demonic}.

\begin{figure}
\begin{align*}
  \de{\skp}_\C\!(s) &\triangleq \eta(s)
  \\
  \de{C_1; C_2}_\C\!(s) &\triangleq \de{C_2}_\C^\dagger\left(\de{C_1}_\C(s)\right)
  \\
  \de{\iftf b{C_1}{C_2}}_\C\!(s) &\triangleq \left\{
    \begin{array}{ll}
      \de{C_1}_\C\!(s) & \text{if}~ \de{b}_\test(s) = 1
      \\
      \de{C_2}_\C\!(s) & \text{if}~ \de{b}_\test(s) = 0
    \end{array}
  \right.
  \\
  \de{\whl bC}_\C\!(s) &\triangleq \mathsf{lfp}(\Psi_{\tuple{\de{C}_\C, b}})(s)
  \\
  \de{a}_\C\!(s) &\triangleq \de{a}_\act(s)
\end{align*}
\caption{Convex powerset semantics $\de{-}_\C \colon \cmd \to \mathcal{S} \to \C(\mathcal S)$ due to \cite{jifeng1997probabilistic,zilberstein2025demonic,mciver2005abstraction}, 
where $\Psi$ is defined in \Cref{lem:lin-props}.}
\label{fig:convex-powerset-semantics}
\end{figure}

Finally, we give the definitions for the monad operations \cite{jacobs2008coalgebraic}: unit $\eta \colon X \to \C(X)$ and Kleisli extension $(-)^\dagger \colon (X \to \C(Y)) \to \C(X) \to \C(Y)$.
\begin{align*}
  \eta(x) &\triangleq \left[ x \mapsto 1 \right]
  &
  f^\dagger(S) & \triangleq \set{
    \sum_{x\in\supp(\mu)} \mu(x)\cdot \nu_x
    \;\;\middle|\;\;
    \mu \in S, \forall x \in \supp(\mu).\ \nu_x \in f_\bot(x)
  }
\end{align*}
where $(-)_\bot \colon (X \to \C(Y)) \to X_\bot \to \C(Y)$ is defined as follows:
$f_\bot(x) \triangleq f(x)$ if $x\in X$ and $f_\bot(\bot) \triangleq \bot_\C$ otherwise.

As is required in \Cref{sec:linearization}, it is proven in \cite{zilberstein2025demonic} that the Kleisli extension for $\C$ defined above is Scott continuous and additive:
\begin{align*}
  \sup_{f\in D, x\in D'} f^\dagger(S) &= (\sup D)^\dagger(\sup D')
  &
  f^\dagger(S \nd T) &= f^\dagger(S) \nd f^\dagger(T)
\end{align*}
Based on these operations, \Cref{fig:convex-powerset-semantics} shows the convex powerset semantics  for the parallel-free fragment of the language in (\ref{fig:syntax}), $\de{-}_\C \colon \cmd \to \st \to \C(\st)$. This is precisely the same semantics used in prior work for reasoning about programs that are both nondeterministic and probabilistic \cite{jifeng1997probabilistic,mciver2005abstraction,zilberstein2025demonic}. This semantics does not capture parallel computation, which requires the more sophisticated semantic domains that we develop in this paper.
The Scott continuity property above ensures that $\Psi_{\tuple{\de{C}_\C, b}} \colon (\st \to \C(\st)) \to \st \to \C(\st)$ is Scott continuous, and therefore the semantics of loops is well defined.

\subsection{Pomset Languages}
\label{app:plsem}

As we saw in \Cref{sec:powerdomains}, a pomset language is a set of finite pomsets without any tests or formulae. Control flow is determined by $\code{assume}$ actions, which keep or eliminate traces depending on the whether the corresponding test passes or fails.
\[
  \pomlang \triangleq \mathcal{P}(\pom_\fin(\lab_\powdom))
\]
We give the pomset language semantics for a nondeterministic language in \Cref{fig:pomlang-semantics}.

\begin{figure}
\begin{align*}
  \de{\skp}_{\powdom} &\triangleq \{ \singleton\fork \} 
 \\
 \de{a}_\powdom  &\triangleq \{\singleton{a}\}
  \\
  \de{C_1 ; C_2}_{\powdom} &\triangleq \de{C_1}_{\powdom} \fatsemi\de{C_2}_{\powdom}
  \\
  \de{C_1 \parop C_2}_{\powdom} &\triangleq \de{C_1}_{\powdom} \parallel \de{C_2}_{\powdom}
  \\
  \de{\iftf b{C_1}{C_2}}_{\powdom} &\triangleq
      \left(\set{ \singleton{\assume b} } \fatsemi \de{C_1}_\powdom \right) \cup 
      \left(\set{ \singleton{\assume{\lnot b}} } \fatsemi \de{C_2}_\powdom \right)
  \\
  \de{\whl bC}_{\powdom} &\triangleq \mathsf{lfp}\left(\Xi_{\tuple{C,b}}\right)
 \end{align*}
\begin{align*}
\text{where} \quad \Xi_{\tuple{C,b}}(S) &\triangleq
    \left( \set{ \singleton{\assume b}} \fatsemi \de{C}_{\powdom} \fatsemi S \right) \cup
    \left( \{ \singleton{\assume{\lnot b}} \} \fatsemi \de{\skp}_\powdom \right)
\\
S \fatsemi T &\triangleq \{ \Alpha \fatsemi\Beta \mid \Alpha\in S, \Beta \in T\}
\\
S \parallel T &\triangleq \{ \Alpha \parallel\Beta \mid \Alpha\in S, \Beta \in T\}
\end{align*}
\caption{Standard pomset language semantics $\de{-}_{\powdom} \colon \cmd \to \pomlang$.}
\label{fig:pomlang-semantics}
\end{figure}

\ifx\extended\undefined\else

\allowdisplaybreaks
%

\section{Operations on Pomset with Formulae: Examples}

\subsection{Sequential Composition}
\label{app:seq-example}

\begin{figure}[t]
\[\arraycolsep=3pt
\begin{array}{ccccc}
\alpha_1 =\!\!
\vcenter{\vbox{
\xymatrix@R=8pt@C=-6pt{
z_1(\bot) && z_2(\bot) && z_3 && z_4
\\
& y_1 \guardarrows
  &&&& y_2 \guardarrows
\\
&&& x \ar[ull]\ar[urr]
}}}
&\lelpo&
\alpha_2 =\!\!
\vcenter{\vbox{
\xymatrix@R=8pt@C=-6pt{
z_1(\bot) && z_2 && z_3 && z_4
\\
& y_1 \guardarrows
&&&& y_2 \guardarrows
\\
&&& x \ar[ull]\ar[urr]
}}}
&
\lelpo 
&
\alpha_3 =\!\!
\vcenter{\vbox{
\xymatrix@R=8pt@C=-2pt{
z_1 && z_2 && z_3 && z_4
\\
& y_1 \guardarrows
&&&& y_2\guardarrows
\\
&&& x \ar[ull]\ar[urr]
}}}
\\
\stuck_{\alpha_1} = y_1 \lor \lnot y_1 \Leftrightarrow \tru
&
\Leftarrow
&
\stuck_{\alpha_2} = y_1
&
\Leftarrow
&
\stuck_{\alpha_3} = \fls
\\
\extens_{\alpha_1} = \emptyset
&\subseteq&
\extens_{\alpha_2} = N_{\alpha_2} \setminus \{ z_1 \}
&\subseteq&
\extens_{\alpha_3} = N_{\alpha_3}
\\
\br_{\alpha_1} = \emptyset
& \subseteq &
\br_{\alpha_2} = \left\{\arraycolsep=1pt\begin{array}{ccr} \lnot y_1 &\land& y_2, \\ \lnot y_1 &\land& \lnot y_2 \phantom{,} \end{array}\right\}
& \subseteq &
\br_{\alpha_3} = \left\{\arraycolsep=1pt\begin{array}{rcr} y_1 &\land& y_2, \\ y_1 &\land& \lnot y_2, \\ \lnot y_1 &\land& y_2, \\ \lnot y_1 &\land& \lnot y_2\phantom{,} \end{array}\right\}
\end{array}
\]
\begin{align*}
\alpha_1 \fatsemi w
&
=
\vcenter{\vbox{
\xymatrix@R=5pt@C=-3pt{
& z_1(\bot)
&& z_2(\bot)
&& z_3
&& z_4
\\
&& y_1 \guardarrows &&&& y_2 \guardarrows
\\
&&&& x \ar[ull]\ar[urr]
}}}
\qquad
\alpha_2 \fatsemi w
=
\vcenter{\vbox{
\xymatrix@R=5pt@C=-3pt{
&  &&  && w_{\lnot y_1 \land y_2} && w_{\lnot y_1 \land\lnot y_2}
\\\\
& z_1(\bot)
&& z_2 \ar@/_/[uurr]\ar@/_/[uurrrr]
&& z_3 \ar[uu]
&& z_4 \ar[uu]
\\
&& y_1 \guardarrows &&&& y_2 \guardarrows
\\
&&&& x \ar[ull]\ar[urr]
}}}
\\
\alpha_3 \fatsemi w
&
=
\vcenter{\vbox{
\xymatrix@R=5pt@C=0pt{
& w_{y_1 \land y_2} && w_{y_1 \land \lnot y_2} && w_{\lnot y_1 \land y_2} && w_{\lnot y_1 \land\lnot y_2}
\\\\
& z_1\ar[uu]\ar@/_/[uurr]
&& z_2 \ar@/_/[uurr]\ar@/_/[uurrrr]
&& z_3 \ar@/^/[uullll]\ar[uu]
&& z_4 \ar@/^/[uullll]\ar[uu]
\\
&& y_1 \guardarrows &&&& y_2 \guardarrows
\\
&&&& x \ar[ull]\ar[urr]
}}}
\end{align*}
\caption{An example of sequential composition and some of the functions used to define $\fatsemi$.
}
\label{fig:seqcomp}
\end{figure}

In \Cref{fig:seqcomp} we give three \lpof s and, for each of them, we provide the stuck nodes, the extensible ones, the branches, and the result of sequentially composing them with the singleton \lpof\ $w$.
To lighten notation, we assume that $\alpha_1$, $\alpha_2$ and $\alpha_3$ have the same nodes and we only show the $\bot$ labels (associated to $z_1$ and $z_2$ in $\alpha_1$ and just to $z_1$ in $\alpha_2$).
In $\alpha_3$, there are no stuck nodes, since all maximal elements are non-$\bot$; hence, all nodes are extensible and the branches include all the possible Boolean combinations of $y_1$ and $y_2$ (this requires 4 copies of $w$ when building $\alpha_3 \fatsemi w$).
In $\alpha_2$ the only stuck node is $z_1$, arising when $y_1$ is true; all other nodes are extensible but the only branches to be considered are those in which $y_1$ is not true (thus, we only need two copies of $w$ here).
Finally, in $\alpha_1$ all nodes are stuck, since the leftmost branch will always lead to $\bot$ and so the whole \lpof\ cannot be sequentially composed with anything else; for this reason, $\alpha_1$ has no extensible node and no branch (thus, $\alpha_1 \fatsemi w = \alpha_1$).

\subsection{Unrolling Loops}
\label{app:loops}

In this section, we discuss the rationale for the copying behavior of sequential composition by demonstrating the semantic structures that are generated as the result of looping programs. We start with a simple program, which is a loop containing a single action as the body:
\[
  \whl ba
\]
As defined in \Cref{fig:trace-sem}, the semantics of this program is given by the following fixed point:
\[
  \de{\whl ba} = \mathsf{lfp}\left( \Phi_{\tuple{a, b}} \right)
  \qquad\text{where}\qquad
  \Phi_{\tuple{a, b}}(\Alpha) = \guard(b, \singleton{a} \fatsemi \Alpha, \de{\skp})
\]
By the Kleene fixed point theorem, the least fixed point above is given by the supremum over all the finite iterations of $\Phi_{\tuple{a, b}}$ applied to $\bot_\pom$. Unrolling this definition several times, we get the chains shown in \Cref{sec:pomset-semantics}.
Now suppose that we would like to sequentially compose another action after the loops, to obtain the program:
$
  (\whl ba) \;\; ; \;\; a'
$.
According to the definition of sequential composition from \Cref{sec:seq}, the semantics of this program is obtained by composing a copy of $a'$ after each branch of $\de{\whl ba}$. As we just saw, the branches of $\de{\whl ba}$ correspond to all of the terminating traces of the loop, and therefore we make countably many copies of $a'$, as shown in the first option below.
We investigated the possibility of defining sequential composition without such a copying, similar to what was done for finite structures in \cite{zilberstein2024probabilistic}. However, this approach does not translate to the infinite case. To see why not, consider the second option below structure, obtained by composing $a'$ after $\whl ba$, without copying.
\[
  \de{(\whl ba) \;\; ; \;\; a'}
  \;\;=\;\;
  \left(
  \vcenter{\vbox{\xymatrix@R=3pt@C=-2pt{
  \vdots \\
  ~a~ \ar[u] && \fork
 \\
 & b \guardarrows
 \\
 &a \ar[u] && \fork
 \\
 && b \guardarrows
}}}\right)
  \,\fatsemi\ a'
  \;=  
  \vcenter{\vbox{\xymatrix@R=3pt@C=-2pt{
  \vdots && a' \\
  ~a~ \ar[u] && \fork\ar[u]
 \\
 & b \guardarrows && a'
 \\
 &a \ar[u] && \fork\ar[u]
 \\
 && b \guardarrows 
}}}
\;\;\; \textsf{OR} \;\;\;
  \vcenter{\vbox{\xymatrix@R=2pt@C=0pt{
  \vdots&& a'
  \\
  \vdots
    \ar@[gray]@/^1pc/[urr]
    \ar@<1mm>@[gray]@/^.95pc/[urr]
    \ar@<-1mm>@[gray]@/^1.05pc/[urr]
    \ar@<-2mm>@[gray]@/^1.1pc/[urr]
    \ar@<-3mm>@[gray]@/^1.15pc/[urr]
  \\
  a \ar[u] && \fork\ar[uu]
 \\
 & b \guardarrows
 \\
 &a \ar[u] && \fork \ar@/_/[uuuul]
 \\
 && b \guardarrows
}}}
\]
The node labelled as $a'$ is not finitely preceded; it has countably many predecessors, one for each terminating branch of the loop. Relaxing the finitely preceded requirement for \lpof s is not an option; it breaks many of the proofs from \Aref{sec:approx}, preventing us from approximating a pomset using finite structures and invalidating the extension lemma. Without the extension lemma, it is unclear how linearization could be defined, as well-founded recursion relies on the structure being finite.
Scott continuity of sequential composition comes into question as well. Scott continuity of $\fatsemi$ means that the following equation holds:
\[
  \left(\sup_{n\in \mathbb N} \Phi^n_{\tuple{a,b}} ( \bot_\pom ) \right) \fatsemi \singleton{a'}
  \qquad\stackrel{\?}=\qquad
  \sup_{n\in \mathbb N} \left( \Phi^n_{\tuple{a,b}} ( \bot_\pom ) \fatsemi \singleton{a'} \right)
\]
We investigated two possible expansions for the latter expression. The first possibility retained the requirement that nothing can proceed $\bot$; it is shown as the sequence below, ignoring the grey dotted arrows. The second possibility allowed $\bot$ nodes to have successors, and is shown below, where the grey dotted arrows indicate standard causality.
\[\arraycolsep=-1pt
\begin{array}{cccccccc}
\vcenter{\vbox{\xymatrix@R=12pt@C=-2pt{
{\color{gray} a'}
\\
\bot \ar@{.>}@[gray][u]
}}}
&\hspace{-1em}\sqsubseteq_\pom&
\vcenter{\vbox{\xymatrix@R=6pt@C=-2pt{
   && a' \\
  \bot \ar@{.>}@[gray][urr]
  \\
  a \ar[u] && \fork \ar[uu]
 \\
 & b \guardarrows
}}}
&\hspace{-.5em}\not\sqsubseteq_\pom&
\vcenter{\vbox{\xymatrix@R=6pt@C=-2pt{
   && a' \\
   \bot \ar@{.>}@[gray][urr]
   \\
  a \ar[u] && \fork \ar[uu]
 \\
 & b \guardarrows
 \\
 &a \ar[u] && \fork \ar@/_/[uuuul]
 \\
 && b \guardarrows
}}}
&\hspace{-.5em}\not\sqsubseteq_\pom&
\vcenter{\vbox{\xymatrix@R=6pt@C=-2pt{
   && a' \\
   \bot \ar@{.>}@[gray][urr]
   \\
  a \ar[u] && \fork \ar[uu]
 \\
 & b \guardarrows
 \\
 &a \ar[u] && \fork\ar@/_/[uuuul]
 \\
 && b \guardarrows
 \\
 &&a \ar[u] && \fork \ar@/_1.5pc/[uuuuuull]
 \\
 &&& b \guardarrows
}}}
& \not\sqsubseteq_\pom\cdots
\\
\Phi_{\tuple{a, b}}^0(\bot_\pom) \fatsemi \singleton{a'}
&&
\Phi_{\tuple{a, b}}^1(\bot_\pom) \fatsemi \singleton{a'}
&&
\Phi_{\tuple{a, b}}^2(\bot_\pom) \fatsemi \singleton{a'}
&&
\Phi_{\tuple{a, b}}^3(\bot_\pom) \fatsemi \singleton{a'}
\end{array}
\]
The causality from $\bot$ to $a'$ is problematic, since $\bot$ could be expanded into an infinite structure, again causing $a'$ to not be finitely preceded. But the lack of causality from $\bot$ to $a'$ is also problematic, as in the next approximation, $\bot$ will be replaced by the test $b$, which is supposed to be scheduled before $a'$. This complicates monotonicity of linearization, as the allowable interleavings can change as the structure grows.

In any case, the sequence above does not form a chain---at least, not according to our current definition of $\lepom$. The reason why this is not a chain is because the node labelled $a'$ gains more predecessors with each successive structure. We investigated several alternative definitions of $\lepom$, which did allow for this sort of expansion, however each one had significant problems---\eg some were not transitive, and thus not partial orders; some were not \dcpo s; and some caused $\Phi_{\tuple{C,b}}$ not to be monotone, meaning that loops could not be interpreted as fixed points using the Kleene fixed point theorem or the Knaster-Tarski theorem. In the course of our investigation, we found that the copying of structures in sequential composition was the best way to obtain a sound structure, obeying all the necessary laws.

Interestingly, this copying behavior corresponds to the following equational law of Probabilistic Guarded Kleene Algebra with Tests \cite{rozowski2023probabilistic}, where sequencing a command $C$ after a guarded choice is equivalent to including $C$ in both branches:
\[
  (\iftf b{C_1}{C_2}) \;\; ; \;\; C
  \quad\equiv\quad
  \iftf b{C_1;C}{C_2;C}
\]

\section{Pomsets with Tests}

\subsection{Labelled Partial Orders}
\label{app:lpof}

Let  $\nBot_\alpha$ denote the set of nodes not labelled with $\bot$, \ie $\nBot_\alpha \triangleq  \{x \in N_\alpha \mid \lambda_\alpha(x) \neq \bot\}$.

\begin{lemma}
\label{lem:nBot}
Let $\alpha \lelpo \beta$; then, $\nBot_\alpha \subseteq \nBot_\beta$.
\end{lemma}
\begin{proof}
By Def.\,\ref{def:lelpo} (\Cref{def:lelpo:N} and \ref{def:lelpo:lambda}), we have that $\lambda_\alpha(x) \leq \lambda_\beta(x)$, for all $x \in N_\alpha \subseteq N_\beta$.
Hence, if $x \in \nBot_\alpha$, then $\bot < \lambda_\alpha(x) \leq \lambda_\beta(x)$ and so $x \in \nBot_\beta$.
\end{proof}

\begin{restatable}{lemma}{lelpomissing}\label{lem:lelpo-missing}
If $\alpha \lelpo \beta$, then $N_\beta\setminus N_\alpha = \succplus{\Bot_\alpha}_\beta$.
\end{restatable}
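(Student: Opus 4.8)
The plan is to prove the two inclusions of the set equality separately, using throughout that $\alpha$ and $\beta$ are \lpof s, so their underlying posets are finitely preceded---hence well-founded, since an infinite descending chain below a node $y$ would contradict $|\predplus{y}|<\infty$---and single-rooted. Two routine preliminary facts about a finitely preceded poset $\tuple{X,<}$ will be used repeatedly: \textbf{(i)} if $x<y$ then there is an immediate successor $z\in\succ(x)$ with $z\le y$, obtained as a minimal element of the finite nonempty set $\{z\mid x<z\le y\}$ (and, dually, an immediate predecessor of $y$ lying above $x$); and \textbf{(ii)} if $X$ is moreover single-rooted with root $r$, then $r<x$ for every $x\neq r$, since a minimal element of $\predplus{x}$ is minimal in $X$ and hence equals $r$. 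Since $N_\alpha\dwclosed N_\beta$ and $N_\alpha\neq\emptyset$, applying fact~(ii) in $\beta$ to any $a\in N_\alpha$ shows that the root $r$ of $\beta$ lies in $\predplus{a}_\beta\cup\{a\}\subseteq N_\alpha$, and this $r$ is then also the unique root of $\alpha$.

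For the inclusion $\succplus{\Bot_\alpha}_\beta\subseteq N_\beta\setminus N_\alpha$, suppose $b\in\Bot_\alpha$ with $b<_\beta y$ and, towards a contradiction, $y\in N_\alpha$. Then $b,y\in N_\alpha$, so condition~(2) of \Cref{def:lelpo} gives $b<_\alpha y$; by fact~(i) this yields $\succ_\alpha(b)\neq\emptyset$, contradicting condition~(3) of \Cref{def:lpof}, which forces $\succ_\alpha(b)=\emptyset$ because $\lambda_\alpha(b)=\bot$. Hence $y\notin N_\alpha$, and since $\succplus{\Bot_\alpha}_\beta\subseteq N_\beta$ by definition, $y\in N_\beta\setminus N_\alpha$.

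For the converse $N_\beta\setminus N_\alpha\subseteq\succplus{\Bot_\alpha}_\beta$, take $y\in N_\beta\setminus N_\alpha$. Since $r\in N_\alpha$ we have $y\neq r$, so $r<_\beta y$, and iterating fact~(i) we build a chain $r=x_0<_\beta x_1<_\beta\cdots<_\beta x_k=y$ with $x_{i+1}\in\succ_\beta(x_i)$ for every $i$. Let $j$ be the largest index with $x_j\in N_\alpha$; it exists because $x_0=r\in N_\alpha$, and $j<k$ because $x_k=y\notin N_\alpha$. If $x_j\in\Bot_\alpha$, then $x_j<_\beta y$ already witnesses $y\in\succplus{\Bot_\alpha}_\beta$. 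Otherwise $\lambda_\alpha(x_j)\neq\bot$, and condition~(3c) of \Cref{def:lelpo} reads $\succ_\alpha(x_j)=\succ_\beta(x_j)\setminus\succplus{\Bot_\alpha}_\beta$; since $x_{j+1}\in\succ_\beta(x_j)$ but $x_{j+1}\notin N_\alpha\supseteq\succ_\alpha(x_j)$, we must have $x_{j+1}\in\succplus{\Bot_\alpha}_\beta$, i.e.\ $b<_\beta x_{j+1}\le_\beta y$ for some $b\in\Bot_\alpha$, so again $y\in\succplus{\Bot_\alpha}_\beta$.

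The substantive step is this converse inclusion: one has to follow a maximal chain toward the new node $y$, pinpoint where it leaves $N_\alpha$, and argue that the departure is forced by a $\bot$-node, combining condition~(3c) of \Cref{def:lelpo} with the \lpof\ axiom that $\lambda(x)=\bot$ implies $\succ(x)=\emptyset$. This is also where single-rootedness (condition~(2c) of \Cref{def:lpof}) is indispensable: without it $\beta$ could acquire an entire new connected component disjoint from $N_\alpha$, making the statement false. The forward inclusion and the poset preliminaries are straightforward.
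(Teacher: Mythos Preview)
Your proof is correct and follows essentially the same approach as the paper: both prove the two inclusions separately, with the forward inclusion handled by building a chain from the common root to the new node $y$, locating the boundary where the chain leaves $N_\alpha$, and invoking condition~(\ref{def:lelpo:succ}) of \Cref{def:lelpo} to conclude that the first node outside $N_\alpha$ lies in $\succplus{\Bot_\alpha}_\beta$. Your case split on whether the boundary node $x_j$ is itself in $\Bot_\alpha$ is unnecessary (condition~(\ref{def:lelpo:succ}) applies to all $x\in N_\alpha$, and the case $x_j\in\Bot_\alpha$ is subsumed), but harmless; likewise, your more explicit treatment of the root-sharing and chain-existence preliminaries just spells out what the paper takes for granted.
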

\begin{proof}
We show the set inclusion in both ways. 

First, take any $x \in N_\beta \setminus N_\alpha$. Since $x$ is finitely proceeded, there must be a positive integer $n$ and nodes $x_1,\ldots,x_n \in N_\beta$ such that $x_1$ is the root of $\beta$, $x_n = x$, and $x_{i+1} \in \succ_\beta(x_i)$ for all $1 \le i < n$. Since $\alpha \lelpo\beta$, then $x_1$ must also be the root of $\alpha$. Now, let $i$ be smallest index such that $x_i \in N_\alpha$ and $x_{i+1} \not\in N_\alpha$ (note that $1 < i < n$ since $x_1 \in N_\alpha$ but $x = x_n \notin N_\alpha$). From $\alpha \lelpo \beta$, we know that $\succ_{\alpha}(x_i) = \succ_\beta(x_i) \setminus \succplus{\Bot_\alpha}_\beta$; so, $x_{i+1} \in \succplus{\Bot_\alpha}_\beta$ that allows us to conclude $x \in \succplus{\Bot_\alpha}_\beta$, being $x_{i+1} <_\beta x_n = x$.

Now, take $x \in \succplus{\Bot_\alpha}_\beta$, i.e. there exists $z \in \Bot_\alpha$ such that $z <_\beta x$. If $x \in N_\alpha$, being $<_\alpha\ =\ <_\beta \cap\ (N_\alpha \times N_\alpha)$, we would have that $z <_\alpha x$, in contradiction with $z \in \Bot_\alpha$.
\end{proof}

\begin{restatable}{lemma}{lpoposet}\label{lem:lpo-poset}
$\lelpo$ is a partial order.
\end{restatable}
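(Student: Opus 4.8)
The plan is to verify reflexivity, antisymmetry, and transitivity of $\lelpo$ in turn, the first two being essentially routine and the third carrying all the weight. Before starting I would record one elementary structural fact about a single \lpof\ $\alpha$: every node of $\Bot_\alpha$ is $<_\alpha$-maximal, equivalently $\succplus{\Bot_\alpha}_\alpha = \emptyset$. This holds because \Cref{def:lpof}(3) gives $\succ_\alpha(w) = \emptyset$ whenever $\lambda_\alpha(w) = \bot$, and in a finitely preceded poset $\succ_\alpha(w) = \emptyset$ already forces $\succplus{w}_\alpha = \emptyset$ (if $w <_\alpha y$ for some $y$, a $<_\alpha$-minimal element of the finite set $\{z : w <_\alpha z \le_\alpha y\}$ would be an immediate successor of $w$). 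This observation is what makes clause \ref{def:lelpo:succ} of \Cref{def:lelpo} hold in the reflexive case.

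For reflexivity, $\alpha \lelpo \alpha$: clauses \ref{def:lelpo:N}, \ref{def:lelpo:<}, (3b) are immediate, (3a) is reflexivity of $\le$ on $L$, and \ref{def:lelpo:succ} reads $\succ_\alpha(x) = \succ_\alpha(x) \setminus \succplus{\Bot_\alpha}_\alpha$, which holds by the fact above. For antisymmetry, if $\alpha \lelpo \beta$ and $\beta \lelpo \alpha$ then the two downward inclusions $N_\alpha \dwclosed N_\beta$ and $N_\beta \dwclosed N_\alpha$ give $N_\alpha = N_\beta$; clause \ref{def:lelpo:<} then collapses to $<_\alpha = <_\beta$, so in particular $\succ_\alpha = \succ_\beta$; clause \ref{def:lelpo:lambda} in both directions together with antisymmetry of $\le$ on $L$ gives $\lambda_\alpha = \lambda_\beta$; and (3b) gives $\varphi_\alpha = \varphi_\beta$. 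Hence $\alpha = \beta$.

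The real content is transitivity: assume $\alpha \lelpo \beta \lelpo \gamma$. Clauses \ref{def:lelpo:N}, \ref{def:lelpo:<}, \ref{def:lelpo:lambda}, (3b) for $\alpha \lelpo \gamma$ follow by composition: downward-closedness is transitive once one uses $N_\alpha \subseteq N_\beta$ and that $<_\beta$ is the restriction of $<_\gamma$; the restriction clause composes since $(<_\gamma \cap (N_\beta \times N_\beta)) \cap (N_\alpha \times N_\alpha) = {<_\gamma} \cap (N_\alpha \times N_\alpha)$ using $N_\alpha \subseteq N_\beta$; \ref{def:lelpo:lambda} chains by transitivity of $\le$ on $L$; and (3b) chains by equality. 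The clause that requires work is \ref{def:lelpo:succ}, i.e.\ $\succ_\alpha(x) = \succ_\gamma(x) \setminus \succplus{\Bot_\alpha}_\gamma$ for every $x \in N_\alpha$. My strategy is to isolate the set identity
\[
  \succplus{\Bot_\alpha}_\gamma \;=\; \succplus{\Bot_\alpha}_\beta \,\cup\, \succplus{\Bot_\beta}_\gamma ,
\]
after which \ref{def:lelpo:succ} falls out of the short computation $\succ_\gamma(x) \setminus \succplus{\Bot_\alpha}_\gamma = (\succ_\gamma(x) \setminus \succplus{\Bot_\beta}_\gamma) \setminus \succplus{\Bot_\alpha}_\beta = \succ_\beta(x) \setminus \succplus{\Bot_\alpha}_\beta = \succ_\alpha(x)$, where the last two steps are clause \ref{def:lelpo:succ} of $\beta \lelpo \gamma$ (valid since $x \in N_\beta$) and of $\alpha \lelpo \beta$.

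Proving the displayed identity is exactly where \Cref{lem:lelpo-missing} and the hypothesis that $\bot$ is least in $L$ enter. For "$\subseteq$", take $w \in \Bot_\alpha$ with $w <_\gamma y$: if $y \in N_\beta$, then since $w \in N_\alpha \subseteq N_\beta$, clause \ref{def:lelpo:<} for $\beta \lelpo \gamma$ gives $w <_\beta y$, so $y \in \succplus{\Bot_\alpha}_\beta$; if $y \notin N_\beta$, then \Cref{lem:lelpo-missing} (applied to $\beta \lelpo \gamma$) puts $y \in N_\gamma \setminus N_\beta = \succplus{\Bot_\beta}_\gamma$. For "$\supseteq$", note $<_\beta \subseteq <_\gamma$ (clause \ref{def:lelpo:<}), so $\succplus{\Bot_\alpha}_\beta \subseteq \succplus{\Bot_\alpha}_\gamma$; and for $w \in \Bot_\beta$, either $w \in N_\alpha$, whence $\lambda_\alpha(w) \le \lambda_\beta(w) = \bot$ forces $w \in \Bot_\alpha$, or $w \notin N_\alpha$, whence $w \in N_\beta \setminus N_\alpha = \succplus{\Bot_\alpha}_\beta$ by \Cref{lem:lelpo-missing} (applied to $\alpha \lelpo \beta$), so in either case everything $<_\gamma$-above $w$ lies in $\succplus{\Bot_\alpha}_\gamma$. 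The main obstacle is simply disciplined bookkeeping across the three nested posets $<_\alpha \subseteq <_\beta \subseteq <_\gamma$ and their $\Bot$-sets; once the identity is secured, the rest is mechanical.
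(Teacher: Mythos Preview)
Your proof is correct and follows the same overall structure as the paper's: reflexivity and antisymmetry are routine, and transitivity hinges on the identity $\succplus{\Bot_\alpha}_\gamma = \succplus{\Bot_\alpha}_\beta \cup \succplus{\Bot_\beta}_\gamma$ to collapse the two applications of clause~\ref{def:lelpo:succ}. The only notable difference is how that identity is justified: the paper invokes \Cref{lem:lelpo-missing} three times to rewrite each set as a node-set difference (including for the pair $(\alpha,\gamma)$, whose relation $\alpha \lelpo \gamma$ is precisely what is being established), whereas you prove the identity by a direct case split that applies \Cref{lem:lelpo-missing} only to the hypotheses $\alpha \lelpo \beta$ and $\beta \lelpo \gamma$---a cleaner argument that sidesteps any appearance of circularity. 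You also spell out reflexivity (in particular $\succplus{\Bot_\alpha}_\alpha = \emptyset$) where the paper simply declares it trivial.
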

\begin{proof}
Reflexivity is trivial. 
For antisymmetry, we have that $\alpha \sqsubseteq_\lpo \beta$ and $\beta \sqsubseteq_\lpo \alpha$ imply that 
$N_\alpha = N_\beta$; hence:
\[
  \mathord{<_\alpha}
  = \mathord{<_\beta} \cap (N_\alpha \times N_\alpha)
  = \mathord{<_\beta} \cap (N_\beta \times N_\beta)
  = \mathord{<_\beta}
\]
Then, let $x \in N_\alpha (= N_\beta)$. Trivially, $\varphi_\alpha(x) = \varphi_\beta(x)$; furthermore, $\lambda_\alpha(x) \leq \lambda_\beta(x)$ and $\lambda_\beta(x) \leq \lambda_\alpha(x)$, so, $\lambda_\alpha(x) = \lambda_\beta(x)$, by antisimmetry of $\leq$. 

For transitivity, let us assume that $\alpha \sqsubseteq_\lpo \beta$ and $\beta \sqsubseteq_\lpo \gamma$,  and prove that $\alpha \sqsubseteq_\lpo \gamma$. By hypothesis, $N_\alpha \dwclosed N_\beta\dwclosed N_\gamma$, so $N_\alpha \subseteq N_\gamma$; we have to prove that $N_\alpha$ is downwards closed (w.r.t. to $\gamma$).
Take any $x \in N_\alpha$ and $y <_\gamma x$. Since $N_\alpha \subseteq N_\beta$, $x \in N_\beta$ and, being $N_\beta \dwclosed N_\gamma$, then $y \in N_\beta$. Further, since $\mathord{<_\beta} = \mathord{<_\gamma} \cap (N_\beta \times N_\beta)$, then $y <_\beta x$. Therefore $y \in N_\alpha$, since $N_\alpha \dwclosed N_\beta$.

For the second item of Definition \ref{def:lelpo},
\[
\begin{array}{lcl}
<_\alpha &=& 
<_{\beta} \cap\ (N_\alpha \times N_{\alpha})
\\
\quad & = &
(<_\gamma \cap\ (N_\beta \times N_{\beta})) \cap (N_\alpha \times N_{\alpha})
\\
\quad & = &
<_\gamma \cap\ ((N_\beta  \cap N_\alpha ) \times (N_\beta \cap N_\alpha))
\\
&=&
<_\gamma \cap\ (N_\alpha  \times N_\alpha)
\end{array}
\]

Finally, let $x \in N_\alpha$. Trivially, $\varphi_\alpha(x) = \varphi_\beta(x) = \varphi_\gamma(x)$ and $\lambda_\alpha(x) \leq \lambda_\beta(x) \leq \lambda_\gamma(x)$; so, $\lambda_\alpha(x) \leq \lambda_\gamma(x)$, by transitivity of $\leq$.
Moreover,  we have:
\begin{align*}
  \succ_\alpha(x)
  &= \succ_\beta(x) \setminus \succplus{\Bot_\alpha}_\beta
  \\
  &= \left(\succ_\gamma(x) \setminus \succplus{\Bot_\beta}_\gamma\right) \setminus \succplus{\Bot_\alpha}_\beta
  \\
  &= \succ_\gamma(x) \setminus \left( \succplus{\Bot_\beta}_\gamma\cup \succplus{\Bot_\alpha}_\beta \right)
  \\
  &= \succ_\gamma(x) \setminus \succplus{\Bot_\alpha}_\gamma
\end{align*}
where the last equality holds thanks to \Cref{lem:lelpo-missing}: indeed, $\succplus{\Bot_\alpha}_\beta = N_\beta\setminus N_\alpha$, $\succplus{\Bot_\beta}_\gamma = N_\gamma\setminus N_\beta$ and
$\succplus{\Bot_\alpha}_\gamma = N_\gamma\setminus N_\alpha$; we conclude since $(N_\gamma \setminus N_\beta) \cup (N_\beta \setminus N_\alpha) = (N_\gamma \setminus N_\alpha)$.
\end{proof}

\begin{lemma}
\label{lem:succBot}
Let $N_\beta \dwclosed N_\alpha$ and
$<_\beta\ =\ <_\alpha \cap\ (N_\beta \times N_\beta)$. Then:
\begin{enumerate}
\item if $y \in \succ_\alpha(x)$ and $x <_\beta y$, then $y \in \succ_\beta(x)$;
\item if $y \in \succ_\beta(x)$, then $y \in \succ_\alpha(x)$;
\item $\succ_\beta(x) \subseteq \succ_\alpha(x) \setminus \succplus{\Bot_\beta}_\alpha$, for every $x \in N_\beta$.
\end{enumerate}
\end{lemma}
\begin{proof}
For the first claim, if it was $x <_\beta z <_\beta y$, then $x <_\alpha z <_\alpha y$ (being $<_\beta\ \subseteq\ <_\alpha$), in contradiction with $y \in \succ_\alpha(x)$.

For the second claim, $x <_\beta y$ and $\mathord{<_\beta} \subseteq \mathord{<_\alpha}$ imply that $x <_\alpha y$. Now, by contradiction, assume that $x <_\alpha z <_\alpha y$. Since $y \in N_\beta \dwclosed N_\alpha$, then $z \in N_\beta$; hence, being $<_\beta\ =\ <_\alpha \cap\ (N_\beta \times N_\beta)$, we would have that $x <_\beta z <_\beta y$, in contradiction with $y \in \succ_\beta(x)$. So, $y \in \succ_\alpha(x)$.

For the third claim, let $y \in \succ_\beta(x)$; by the second claim, $y \in \succ_\alpha(x)$.
If it was $y \in \succplus{\Bot_\beta}_\alpha$, then there would exist $z \in \Bot_\beta$ such that $z <_\alpha y$; being $<_\beta\ =\ <_\alpha \cap\ (N_\beta \times N_\beta)$, we would have that $z <_\beta y$, i.e. $z \not\in \Bot_\beta$. 
\end{proof}

\begin{lemma}\label{lem:predplus-directed}
If $D \subseteq \lpo(L)$ is a directed set, then $\predplus{x}_\beta = \predplus{x}_{\beta'}$ for any $\beta,\beta' \in D$ such that $x\in N_\beta \cap N_{\beta'}$.
\end{lemma}
\begin{proof}
Take any $\beta,\beta' \in D$ such that $x\in N_\beta$ and $x\in N_{\beta'}$.
Since $D$ is directed, there must be a $\gamma \in D$ such that $\beta \lelpo \gamma$ and $\beta' \lelpo \gamma$. 
Now, take any $y <_\beta x$, clearly this means that $y <_\gamma x$, and since $N_{\beta'} \dwclosed N_\gamma$, then $y \in N_{\beta'}$, and so $y \in \predplus{x}_{\beta'}$, so we have showed that $\predplus{x}_\beta \subseteq \predplus{x}_{\beta'}$. By an equivalent argument, we get that $\predplus{x}_{\beta'} \subseteq \predplus{x}_{\beta}$, therefore $\predplus{x}_\beta = \predplus{x}_{\beta'}$.
\end{proof}

\begin{lemma}\label{lem:directed-level}
For any directed set $D \subseteq \lpo(L)$ and $n \in \mathbb N$, there exists a finite set $X$ such that $\{ x \mid \lev_\beta(x) = n \} \subseteq X$ for all $\beta\in D$.
\end{lemma}
\begin{proof}
The proof is by induction on $n$. Let $n=1$, take any $\beta \in D$, and let $x$ be the root of $\beta$. Since $D$ is directed, then $x$ is the root of every $\beta'\in D$, therefore letting $X = \{x\}$, we have that $\{ y\mid \lev_\beta(y) = 1 \} = X$ for all $\beta\in D$.

Now, suppose that the claim holds for all $k < n$, and let $X_k$ be the finite set such that $\{ x \mid \lev_\beta(x) = k\} \subseteq X_k$ for each $k < n$. Also, let $X = \bigcup_{k<n} X_k$, so $\{ x \mid \lev_\beta(x) < n\} \subseteq X$, which is also finite. This means that only finitely many $\bot$ nodes can appear before level $n$. Now construct $A$ as follows: for each $x \in X$, select a $\beta\in D$ such that $\lambda_\beta(x) \neq \bot$, if such a $\beta$ exists. So $|A| \le |X|$, which is clearly finite. Since $A$ is a finite subset of $D$, which is directed, then there must be some $\gamma \in D$ that is an upper bound of all the elements in $A$. This means that for all $x\in X$, $x \notin \Bot_\beta$ for some $\beta \in D$ iff $x \notin \Bot_\gamma$.

Now let $Y = \{ x \mid \lev_\gamma(x) = n \}$. Take any $\beta \in D$, so there is a $\beta'\in D$ that is an upper bound of $\beta$ and $\gamma$. By construction, $\beta'$ can only extend $\gamma$ after level $n$, and so the elements of $\gamma$ at level $n$ must be the same as the elements of $\beta'$ and since $\beta\lelpo\beta'$, then $ \{ x \mid \lev_{\beta'}(x) = n \} \subseteq Y$.
\end{proof}

\lposup*
\begin{proof}
We need to show that: 
  \begin{enumerate}
  \item\label{lem:lpo-sup:lpo} $\alpha \in \lpo(L)$;
  \item\label{lem:lpo-sup:ub} $\beta \lelpo \alpha$, for all $\beta \in D$; and 
  \item \label{lem:lpo-sup:least} $\alpha\lelpo \gamma$, for all $\gamma$ s.t. $\beta \lelpo \gamma$ (for all $\beta\in D$). 
  \end{enumerate}

\medskip
  We start with property (\ref{lem:lpo-sup:lpo}). First, clearly $\alpha$ is single-rooted, since $D$ is a directed set and therefore every $\beta\in D$ must have the same root. We now show that every element of $N_\alpha$ is finitely proceeded.   
  Take any $x \in N_\alpha$, so $x\in N_\beta$ for some $\beta \in D$.
  By \Cref{lem:predplus-directed}, we know that $\predplus{x}_\beta = \predplus{x}_{\beta'}$ for all $\beta' \in D$, therefore $\predplus{x}_\alpha = \bigcup_{\beta'\in A} \predplus{x}_{\beta'} = \predplus{x}_\beta$, which much be finite.
  
  Next, we show that finitely many elements appear at level $n$. By \Cref{lem:directed-level}, we know that there is a finite set $X$ such that $\{ x \mid \lev_\beta(x) = n \} \subseteq X$ for all $\beta\in D$. Following from \Cref{lem:predplus-directed}, $\lev_\beta(x) = \lev_{\beta'}(x) = \lev_\alpha(x)$ for all $\beta,\beta'\in D$ and $x \in N_\alpha$, so the elements at level $n$ in $\alpha$ are the elements at level $n$ from all $\beta\in D$. This gives us:
  \[
    \{ x \mid \lev_\alpha(x) = n \}
    = \bigcup_{\beta\in D} \{ x \mid \lev_\beta(x) = n \}
    \subseteq X
  \]
  Now, take any $x \in N_\alpha$. If $\lambda_\alpha(x) = \bot$, then $\lambda_\beta(x) = \bot$ for all $\beta\in D$ such that $x\in N_\beta$. This means that $x$ has no successors in any $\beta\in D$, therefore it has no successors in $\alpha$.
  
  If $x <_\alpha y$, then $x <_\beta y$ for some $\beta \in D$. Since $D$ is directed, $\varphi_\beta(x) = \varphi_{\beta'}(x)$ for all $\beta' \in D$ that contain $x$ (and similarly for $y$); therefore, $\varphi_\alpha(y) = \varphi_\beta(y) \Rightarrow \varphi_\beta(x) = \varphi_\alpha(x)$. Finally, take any $x \in N_\alpha$, so $x\in N_\beta$ for some $\beta \in D$. By construction, $\varphi_\alpha(x) = \varphi_\beta(x) = \psi_x$ and, by \Cref{lem:predplus-directed}, $\predplus{x}_\alpha = \predplus{x}_\beta$; so, since $\free(\varphi_\beta(x)) \subseteq \predplus{x}_\beta$, then $\free(\varphi_\alpha(x)) \subseteq \predplus{x}_\alpha$.
  
  \medskip
  We next prove property (\ref{lem:lpo-sup:ub}): we choose any $\beta \in D$ and show all conditions of $\lelpo$. 
  
  First, we prove that $N_\beta \dwclosed N_\alpha$; the inclusion is by construction. Then, take $x\in N_\beta$ and $y<_\alpha x$; by construction, there must exist a $\beta'\in D$ such that $y <_{\beta'} x$. Take $\gamma \in D$ to be an upper bound of $\beta$ and $\beta'$, which must exist since $D$ is a directed set. So $\beta' \lelpo \gamma$ and therefore $y <_\gamma x$.  Since $N_\beta$ is a downward closed subset of $N_\gamma$ (being $\beta \lelpo \gamma$), it must be that $y\in N_\beta$.
  
  For the order, we proceed as follows: 
\begin{align*}             
\mathord{<_\alpha} \cap (N_\beta \times N_\beta) 
&=   \left(\bigcup_{\beta' \in D} \mathord{<_{\beta'}}\right) \cap \left( N_\beta \times N_{\beta} \right)\\
&=   \bigcup_{\beta' \in D}\left( \mathord{<_{\beta'}} \cap \left( N_\beta \times N_{\beta} \right)\right)\\
&\stackrel\ddagger= \mathord{<_\beta} 
             \end{align*}
For the step ($\ddagger$), we prove that, for every $\beta' \in D$ and every $(x,y) \in\ <_{\beta'}\ \cap\ (N_\beta \times N_\beta)$, we have that $(x, y) \in\ <_\beta$. Let $\gamma$ be the upper bound of $\beta$ and $\beta'$; then, $x <_\gamma y$ (being $x <_{\beta'} y$ and $\beta' \lelpo \gamma$). Since $<_\beta\ = \mathord{<_\gamma} \cap (N_\beta\times N_\beta)$ (because $\beta \lelpo \gamma$), we conclude $x<_\beta y$. 

Now, fix $x \in N_\beta$; since $D$ is directed, $\varphi_\alpha(x) = \psi_x = \varphi_\beta(x)$ and, by definition of sup, $\lambda_\beta(x) \leq \lambda_\alpha(x)$. 
Finally, by \Cref{lem:succBot}(3), $\succ_\beta(x) \subseteq \succ_\alpha(x) \setminus \succplus{\Bot_\beta}_\alpha$. We now show the reverse inclusion. Suppose that $y\in \succ_\alpha(x)$ and $y \notin \succplus{\Bot_\beta}_\alpha$; by construction, there must be some $\beta' \in D$ such that $x <_{\beta'} y$. By \Cref{lem:succBot}(1), $y \in \succ_{\beta'}(x)$.
Since $D$ is directed, there is a $\gamma \in D$ that is an upper bound of $\beta$ and $\beta'$.
Since $\beta' \lelpo \gamma$, then $\succ_{\beta'}(x) = \succ_\gamma(x) \setminus \succplus{\Bot_{\beta'}}_\gamma$; so, $y \in \succ_\gamma(x)$. Since $\beta \lelpo \gamma$, then we also have $\succ_{\beta}(x) = \succ_\gamma(x) \setminus \succplus{\Bot_{\beta}}_\gamma$. Since $y \notin \succplus{\Bot_{\beta}}_\alpha$, then $y$ also cannot be in $\succplus{\Bot_{\beta}}_\gamma$ since $\mathord{<_\gamma} \subseteq \mathord{<_\alpha}$ by construction; therefore, it must be that $y\in \succ_{\beta}(x)$.

 \medskip
Let us now move to property (\ref{lem:lpo-sup:least}): we fix $\gamma$ satisfying $\beta \lelpo \gamma$ (for all $\beta\in D$) and show that $\alpha\lelpo \gamma$. 

We know that $N_\beta\subseteq N_\gamma$ for all $\beta\in D$, therefore $N_\alpha = \bigcup_{\beta \in D} N_\beta \subseteq N_\gamma$. We now need to show that $N_\alpha$ is downwards closed. Take any $x \in N_\alpha$ and $y <_\gamma x$. Since $x \in N_\alpha$, then $x \in N_\beta$ for some $\beta \in D$. Since $N_\beta \dwclosed N_\gamma$, then $y \in N_\beta$, and therefore $y \in N_\alpha$.

Second,
\begin{align*}             
\mathord{<_\gamma} \cap (N_\alpha \times N_\alpha) 
&=  \mathord{<_\gamma} \cap \left( \bigcup_{\beta\in D} N_\beta \times  \bigcup_{\beta\in D} N_{\beta} \right)\\
&\stackrel \dagger = \mathord{<_{\gamma}} \cap \left(\bigcup_{\beta\in D}  (N_{\beta} \times N_{\beta})\right ) \\
&= \bigcup_{\beta \in D}  (\mathord{<_{\gamma}} \cap (N_{\beta} \times N_{\beta}))  \\
&= \bigcup_{\beta \in D} \mathord{<_{\beta}} \ =\ \mathord{<_\alpha}
             \end{align*}
where the step marked with ($\dagger$) follows using two set inclusions. The first is $\bigcup_{\beta\in D} (N_{\beta}\times N_{\beta}) \subseteq \bigcup_{\beta\in D} N_{\beta} \times  \bigcup_{\beta\in D} N_{\beta}$, which is immediate by the properties of cartesian product. For the reverse inclusion, take any $(x,y)\in \bigcup_{\beta\in D} N_{\beta} \times  \bigcup_{\beta\in D} N_{\beta}$, which means that there exist $\beta',\beta'' \in D$ such that $x \in N_{\beta'}$ and $y \in N_{\beta''}$.
Since $D$ is a directed set, there exists $\hat\beta \in D$ that is an upper bound of $\beta'$ and $\beta''$; thus, $N_{\beta'} \subseteq N_{\hat\beta}$ and $N_{\beta''} \subseteq N_{\hat\beta}$ and so
$N_{\beta'} \times N_{\beta''} \subseteq N_{\hat\beta} \times N_{\hat\beta}$.
The latter implies that $(x,y) \in  N_{\hat\beta} \times N_{\hat\beta} \subseteq \bigcup_{\beta\in D} (N_{\beta} \times N_{\beta})$, as desired.

To conclude, take any $x\in N_\alpha$; since $\beta \lelpo \gamma$ (for all $\beta$), we know that $\varphi_\beta(x) = \varphi_\gamma(x)$ and $\lambda_\beta(x) \leq \lambda_\gamma(x)$, for all $\beta$ that contain $x$.
This allows us to obtain that $\varphi_\alpha(x) = \varphi_\gamma(x)$ and $\lambda_\alpha(x) \leq \lambda_\gamma(x)$.

Finally, by \Cref{lem:succBot}(3) (with $\alpha$ in place of $\beta$ and $\gamma$ in place of $\alpha$ therein), $\succ_\alpha(x) \subseteq \succ_\gamma(x) \setminus \succplus{\Bot_\alpha}_\gamma$. We now show the reverse inclusion too.
Let $y\in \succ_\gamma(x)$ and $y\notin \succplus{\Bot_\alpha}_\gamma$. Since $y$ is finitely proceeded, we know that $\predplus{y}_\gamma$ is finite, therefore $\predplus{y}_\gamma \cap N_\alpha$ is finite and $\lambda_\alpha(z) \neq \bot$ for each $z \in \predplus{y}_\gamma \cap N_\alpha$ since $y \notin \succplus{\Bot_\alpha}_\gamma$. We now construct a set $A$ as follows: for each $z \in \predplus{y}_\gamma \cap N_\alpha$, select a $\beta_z\in D$ such that $z \in N_{\beta_z}$ and $\lambda_{\beta_z}(z) \neq \bot$, which must exist since $\lambda_\alpha(z) \neq \bot$. Since $A$ is a finite subset of $D$, which is directed, there is an element $\beta \in D$ which is an upper bound of all the elements in $A$; therefore, $\predplus{y}_\gamma \cap N_\alpha \subseteq N_\beta$ and $\lambda_\beta(z) \neq \bot$ for each $z \in \predplus{y}_\gamma \cap N_\alpha$.
Since $\beta \in D$, we also know that $\beta \lelpo \gamma$ and, since $x <_\gamma y$ and $x \in N_\alpha$, then $x \in N_\beta$; hence, $\succ_\beta(x) = \succ_\gamma(x) \setminus \succplus{\Bot_\beta}_\gamma$, so either $y \in \succ_\beta(x)$ or $y \in \succplus{\Bot_\beta}_\gamma$. In the first case, we are done since $\beta \lelpo \alpha$ and therefore $\succ_\beta(x) \subseteq \succ_\alpha(x)$ by \Cref{lem:succBot}(3). The second case is a contradiction, since it implies that there is some $z \in \Bot_\beta$ such that $z <_\gamma y$; hence, $z \in \predplus{y}_\gamma \cap N_\beta \subseteq \predplus{y}_\gamma \cap N_\alpha$, where the last inclusion holds since $\beta \lelpo \alpha$. This implies that $\lambda_\alpha(z) \neq \bot$, i.e.  $z \not\in \Bot_\beta$.
\end{proof}

\begin{corollary}
\label{lem:dcpoLPO}
  $\tuple{ \lpo(L), \lelpo }$ is a \dcpo.
\end{corollary}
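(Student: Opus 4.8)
The plan is to read the corollary off directly from the two lemmas that immediately precede it. By the definition given above, $\tuple{\lpo(L),\lelpo}$ is a \dcpo\ exactly when (i) $\lelpo$ is a partial order on $\lpo(L)$, and (ii) every directed $D \subseteq \lpo(L)$ has a supremum. Claim (i) is precisely \Cref{lem:lpo-poset}, and claim (ii) is precisely \Cref{lem:lpo-sup}, which moreover exhibits $\sup D$ explicitly as the \lpof\ $\alpha$ obtained by unions of node sets and orders and pointwise suprema of labels. So the argument is a one-line appeal to these two results, with no residual work.

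For completeness I would record which facts are being imported from \Cref{lem:lpo-sup} and are therefore not re-proved at this point. The first is that the candidate $\alpha = \tuple{N_\alpha,<_\alpha,\lambda_\alpha,\varphi_\alpha}$ really is a well-formed \lpof\ over $L$, i.e.\ the conditions of \Cref{def:lpof} are stable under the construction: finite precedence and finiteness of levels hold because each node already has all of its predecessors and its level fixed in some $\beta\in D$ (a consequence of \Cref{lem:lelpo-missing} together with the $\bot$-leaf condition of \Cref{def:lpof}(3)); single-rootedness holds because directedness forces every $\beta\in D$ to share the same root; the label condition $\succ_\alpha(x)=\emptyset$ when $\lambda_\alpha(x)=\bot$ follows from \Cref{def:lelpo}(3a) and (3c); and the two $\varphi$-conditions are inherited from the members of $D$, with $\varphi_\alpha$ itself well defined by \Cref{rem:varphi}. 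The second imported fact is that $\alpha$ is the least upper bound of $D$ with respect to $\lelpo$. Both are already asserted by \Cref{lem:lpo-sup}, so I simply cite it.

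The only convention worth flagging is that, following \cite{meyer1989pomset}, ``directed'' is understood here to mean \emph{non-empty} directed --- the construction of \Cref{lem:lpo-sup} produces the empty node set on $D=\emptyset$, which violates single-rootedness --- so the corollary states that $\lpo(L)$ is a \dcpo\ but not (necessarily) a pointed one; this is consistent with \Cref{def:lpof}, which requires pointedness of the label \dcpo\ $L$ but not of $\lpo(L)$ itself. Consequently I do not anticipate any obstacle: the entire mathematical content of the statement has been discharged in \Cref{lem:lpo-poset} and \Cref{lem:lpo-sup}, and the corollary merely packages them under the name \dcpo.
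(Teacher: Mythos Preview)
Your proposal is correct and matches the paper's own proof exactly: the paper simply writes ``Follows immediately from \Cref{lem:lpo-poset,lem:lpo-sup}.'' The additional commentary you provide (what is being imported from \Cref{lem:lpo-sup}, and the non-emptiness convention for directed sets) is accurate and useful context, but the mathematical content is the same one-line appeal to the two preceding lemmas.
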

\begin{proof}
Follows immediately from \Cref{lem:lpo-poset,lem:lpo-sup}.
\end{proof}

\subsection{Truncation of Lpos}
\label{app:lpo-trun}

First, we define some notation for lpos, similar to that of pomsets:
\[
  \lpo_\fin(L) \triangleq \{ \alpha \in \lpo(L) \mid |N_\alpha| < \infty \}
  \qquad\qquad
  \finapprox{\alpha} \triangleq \{ \beta \in \lpo_\fin(L) \mid \beta \lelpo \alpha \}
\]

\begin{definition}[Truncation]
Let $\alpha \in \lpo(L)$ and $n\in\mathbb N$. Then, $\trun \alpha n = \tuple{N,<,\lambda,\varphi}$ where
\begin{itemize}
\item $N \triangleq \{x \in N_\alpha \mid \lev_\alpha(x) \leq n\}$;
\item $<\ \triangleq\ <_\alpha \cap\ (N \times N)$;
\item for all $x \in N$, let $\lambda(x) \triangleq \lambda_\alpha(x)$, if $\lev_\alpha(x) < n$, and $\lambda(x) \triangleq \bot$, otherwise;
\item for all $x \in N$, let $\varphi(x) \triangleq \varphi_\alpha(x)$.
\end{itemize}
\end{definition}


Essentially, the truncation of $\alpha$ with $n$ consists in keeping $\alpha$ unchanged for all nodes $x\in N_\alpha$ such that $\lev_\alpha(x) < n$ and changing the labels of all nodes $x \in \lev^{-1}(n)$ to $\bot$. 
For any $\alpha \in \lpo(L)$, $\trun{\alpha}0$ is a singleton \lpof\ with $\bot$ at the root.

For a set of \lpof s $S \subseteq \lpo(L)$, we also let $\trun Sn = \{ \trun\alpha{n} \mid \alpha \in S \}$. Note that since a pomset is a set of \lpof s, this definition applies to pomsets too.


\begin{restatable}{lemma}{nodeInApprox}\label{lem:node-in-approx}
For any $\alpha \in \lpo(L)$ and $n > 0$, it holds that $\trun{\alpha}n \in \finapprox{\alpha}$.
\end{restatable}
\begin{proof}
To lighten notation, let $\beta = \trun \alpha n$.
First, note that $N_\beta$ is finite, since all nodes occur at level at most $n$, and there can only be finitely many nodes up to that level.

Now, we prove that $\beta \in \lpo(L)$.
First, $\beta$ is single rooted, is finitely proceeded and has finitely many nodes at every finite level, because it inherits these properties from $\alpha$. For the same reason, the conditions on $\varphi_{\beta}$ are satisfied.
Finally, for any $x\in N_\beta$ such that $\lambda_{\beta}(x) = \bot$, we consider two cases:
\begin{enumerate}
\item $\lev_\alpha(x) < n$: by construction, $\lambda_{\beta}(x) = \lambda_\alpha(x)$ and so $\succ_{\alpha}(x) = \emptyset$. By the definition of $<_{\beta}$, we have that $\succ_{\beta}(x) \subseteq \succ_\alpha(x)$ and so also $\succ_{\beta}(x) = \emptyset$.

\item $\lev_\alpha(x) = n$: obviously $x$ has no successors, since everything above level $n$ has been deleted.
\end{enumerate}

Finally, we show that $\beta \lelpo \alpha$.
First, we show that $N_\beta \dwclosed N_\alpha$. Clearly $N_\beta \subseteq N_\alpha$ by construction. Now, take any $y \in N_\beta$ and $z <_\alpha y$. Since $\lev_\alpha(y) \le n$, then $\lev_\alpha(z) < n$ and therefore $z \in N_\beta$.
By construction, we also have that $\mathord{<_\beta} = \mathord{<_\alpha} \cap (N_\beta\times N_\beta)$, $\lambda_\beta(y) \le \lambda_\alpha(y)$, and $\varphi_\beta(y) = \varphi_\alpha(y)$ for all $y \in N_\beta$. It remains only to show the condition on successors. Take any $y \in N_\beta$. By \Cref{lem:succBot}(3), $\succ_\beta(y) \subseteq \succ_\alpha(y) \setminus \succplus{\Bot_\beta}_\alpha$; we now show the reverse inclusion.
Suppose that $z\in \succ_\alpha(y)$ and $z\notin \succplus{\Bot_\beta}_\alpha$. Since $y\in N_\beta$, then $\lev_\alpha(z) \le n+1$. However, it cannot be that $\lev_\alpha(z) = n+1$, since that would imply that there exists $w \in \pred_\alpha(z)$ such that $\lev_\alpha(w) = n$, but then $\lambda_\beta(w) = \bot$, which is a contradiction. So, $\lev_\alpha(z) \le n$, which means that $z \in N_\beta$.
\end{proof}

An immediate consequence of \Cref{lem:node-in-approx} is that $\trun\alpha{n} \in \lpo(L)$ and $\trun\alpha{n} \lelpo \alpha$.

\begin{lemma}[Monotonicity of truncation]\label{lem:trunc-mono}
If $\alpha \lelpo \beta$, then $\trun \alpha n \lelpo \trun \beta n$, for every $n \in \mathbb N$.
\end{lemma}
\begin{proof}
By \Cref{lem:predplus-directed}, we have that $\lev_\alpha(x) = \lev_\beta(x)$, for every $x \in N_\alpha \subseteq N_\beta$; hence, 
$N_{\trun \alpha n} \subseteq N_{\trun \beta n}$.
For downward closure, let $x \in N_{\trun \alpha n}$ and $y <_{\trun \beta n} x$; thus, $ y \in \predplus{x}_{\trun \beta n}$ and again by \Cref{lem:predplus-directed} we easily conclude.

For the order relation, we have that
\begin{align*}
<_{\trun \beta n} \cap\ (N_{\trun \alpha n} \times N_{\trun \alpha n})
& = (<_{ \beta} \cap\ (N_{\trun \beta n} \times N_{\trun \beta n})) \cap\ (N_{\trun \alpha n} \times N_{\trun \alpha n})
\\
& =\ <_{ \beta} \cap\ ((N_{\trun \beta n} \times N_{\trun \beta n}) \cap (N_{\trun \alpha n} \times N_{\trun \alpha n}))
\\
& =\ <_{ \beta} \cap\ (N_{\trun \alpha n} \times N_{\trun \alpha n})
\\
& =\ <_{ \beta} \cap\ ((N_{\alpha} \times N_{ \alpha}) \cap (N_{\trun \alpha n} \times N_{\trun \alpha n}))
\\
& = (<_{ \beta} \cap\ (N_{\alpha} \times N_{ \alpha})) \cap (N_{\trun \alpha n} \times N_{\trun \alpha n})
\\
& =\ <_{ \alpha} \cap\ (N_{\trun \alpha n} \times N_{\trun \alpha n}) =\ <_{\trun \alpha n}
\end{align*}
where all steps follow by definition of truncation, associativity of $\cap$, $\alpha \lelpo \beta$, $N_{\trun \alpha n} \subseteq N_{\trun \beta n}$ and $N_{\trun \alpha n} \subseteq N_\alpha$.

Then, let $x \in N_{\trun \alpha n}$. By definition and $\alpha \lelpo \beta$, we have that $\lambda_{\trun \alpha n}(x) = \lambda_\alpha(x) \leq \lambda_\beta(x) = \lambda_{\trun \beta n}(x)$ and $\varphi_{\trun \alpha n}(x) = \varphi_\alpha(x) = \varphi_\beta(x) = \varphi_{\trun \beta n}(x)$. By \Cref{lem:succBot}(3), $\succ_{\trun \alpha n}(x) \subseteq \succ_{\trun \beta n}(x) \setminus \succplus{\Bot_{\trun \alpha n}}_{\trun \beta n}$;
let us prove the reverse inclusion. So, let $y \in \succ_{\trun \beta n}(x)$; this means that $\lev_\beta(x) < n$ and $\lev_\beta(y) \leq n$.
By assumption, $x \in N_\alpha$.
If also $y \in N_\alpha$, then $y \in \succ_{\trun \alpha n}(x)$, because  $\lev_\beta(y) = \lev_\alpha(y)$ and by \Cref{lem:succBot}(1).
If $y \notin N_\alpha$, then $y \in \succplus{\Bot_\alpha}_\beta$, because $\alpha\lelpo\beta$; this means that there exists $z \in \Bot_\alpha$ such that $z <_\beta y$. Hence, $\lev_\alpha(z) = \lev_\beta(z) <  \lev_\beta(y) = \lev_\alpha(y) \leq n$; thus, $z \in N_{\trun \alpha n} \subseteq N_{\trun \beta n}$ and so $z <_{\trun \beta n} y$. By definition of truncation, $\lambda_{\trun \alpha n}(z) = \lambda_\alpha(z) = \bot$; hence, we can conclude that $y \in \succplus{\Bot_{\trun \alpha n}}_{\trun \beta n}$.
\end{proof}

We extend the truncation to a set of lpos in the obvious way: $\trun D n \triangleq \{\trun \beta n \mid \beta \in D\}$.

\begin{lemma}[Scott continuity of truncation]\label{lem:trunc-continuous}
If $D$ is a directed set of \lpof s, then $\sup\trun D n$ exists and it is $\trun {\sup D} n$, for every $n \in \mathbb N$.
\end{lemma}
\begin{proof}
By \Cref{lem:lpo-sup}, $\sup D = \alpha$, where: 
  \begin{align*}
    N_\alpha &\triangleq \bigcup_{\beta\in D} N_{\beta}
    &
    \mathord{<}_\alpha &\triangleq \bigcup_{\beta\in D} \mathord{<_{\beta}}
    &
    \lambda_\alpha(x) &\triangleq \sup_{\beta\in D\,:\,x \in N_{\beta}} \lambda_{\beta}(x)
    &
    \varphi_\alpha(x) & \triangleq \psi_x
  \end{align*}
We now want to prove that $\gamma$, given by
  \begin{align*}
    N_\gamma &\triangleq \bigcup_{\beta \in D} N_{\trun \beta n}
    &
    \mathord{<}_\gamma &\triangleq \bigcup_{\beta\in D} \mathord{<_{\trun \beta n}}
    &
    \lambda_\gamma(x) &\triangleq \sup_{\beta\in D\,:\,x \in N_{\trun \beta n}} \lambda_{\trun \beta n}(x)
    &
    \varphi_\gamma(x) & \triangleq \psi_x
  \end{align*}
is the sup of $\trun D n$ and that it coincides with $\trun \alpha n$.

The fact that $\gamma = \sup\trun D n$ follows by \Cref{lem:lpo-sup}, once we prove that $\trun D n$ is directed; so, take $\beta_1,\beta_2 \in \trun D n$ and prove that they admit an upper bound in $\trun D n$. By definition, there exist $\beta_1',\beta_2' \in D$ such that $\beta_1 = \trun {\beta_1'} n$ and $\beta_2 = \trun {\beta_2'} n$. Since $D$ is directed, there exist a $\beta \in D$ such that $\beta_1',\beta_2' \lelpo \beta$. By \Cref{lem:trunc-mono}, $\beta_1,\beta_2 \lelpo \trun \beta n$ and, by definition, $\trun \beta n \in \trun D n$.

We are left with proving that $\gamma = \trun \alpha n$.
First, $x \in N_\gamma$ iff there exists $\beta \in D$ such that $x \in N_{\trun \beta n}$, i.e. $x \in N_\beta$ and $\lev_\beta(x) = n$; this happens iff $x \in \bigcup_{\beta \in D} N_\beta$ and $\lev_\alpha(x) = n$, being $\lev_\alpha(x) = \lev_\beta(x)$ (since $\beta \lelpo \alpha$), i.e. $x \in N_{\trun \alpha n}$.
Then,
\begin{align*}
<_{\trun \alpha n}
& =\ <_\alpha \cap\ (N_{\trun \alpha n} \times N_{\trun \alpha n})
\\
& = \left(\bigcup_{\beta \in D} <_\beta\right)\ \cap\ (N_{\trun \alpha n} \times N_{\trun \alpha n})
\\
& = \bigcup_{\beta \in D}  (<_\beta\ \cap\ (N_{\trun \alpha n} \times N_{\trun \alpha n}))
\\
& = \bigcup_{\beta \in D}  ((<_\alpha \cap\ (N_\beta \times N_\beta))\ \cap\ (N_{\trun \alpha n} \times N_{\trun \alpha n}))
\\
& = \bigcup_{\beta \in D}  (<_\alpha \cap\ ((N_\beta \times N_\beta)\ \cap\ (N_{\trun \alpha n} \times N_{\trun \alpha n}))
\\
& \stackrel \dagger = \bigcup_{\beta \in D}  (<_\alpha \cap\ (N_{\trun \beta n} \times N_{\trun \beta n}))
\\
& \stackrel \ddagger = \bigcup_{\beta \in D}  <_{\trun \beta n} =\ <_\gamma
\end{align*}
Step ($\dagger$) holds because $N_{\trun \alpha n} = N_\gamma = \bigcup_{\beta' \in D} N_{\trun {\beta'} n}$ and so 
$N_{\trun \alpha n} \cap N_\beta = (\bigcup_{\beta' \in D} N_{\trun {\beta'} n}) \cap N_\beta = \bigcup_{\beta' \in D} (N_{\trun {\beta'} n} \cap N_\beta) = N_{\trun {\beta} n}$, once we prove that $N_{\trun {\beta'} n} \cap N_\beta \subseteq N_{\trun {\beta} n}$, for every $\beta' \in D$ (indeed, obviously $N_{\trun {\beta} n} \subseteq \bigcup_{\beta' \in D} (N_{\trun {\beta'} n} \cap N_\beta)$). Fix $\beta' \neq \beta$ and $x \in N_{\trun {\beta'} n} \cap N_\beta$;
this means that $\lev{\beta'}(x) \leq n$, and so $\lev_\beta(x) = \lev_\alpha(x) = \lev{\beta'}(x) \leq n$, since $\beta,\beta' \lelpo \alpha$. This allows us to conclude that $x \in N_{\trun {\beta} n}$, as desired.
Step ($\ddagger$) is proved by double inclusion. 
Trivially, $x <_{\trun \beta n} y$ means that $x <_\beta y$ and $\lev_\beta(x),\lev_\beta(y) \leq n$; hence, $x,y \in N_{\trun \beta n}$ and, since $\beta \lelpo \alpha$, we also have $x <_\alpha y$, as desired.
Conversely, if $x <_\alpha y$ and $x,y \in N_{\trun \beta n}$, then $x,y \in N_{\beta}$ 
and, since $\beta \lelpo \alpha$, also $x <_\beta y$; thus, by definition, $x <_{\trun \beta n} y$.

Finally, fix $x \in N_\gamma = N_{\trun \alpha n}$. By definition, $\varphi_\gamma(x) = \varphi_{\trun \alpha n}(x)$.
For the labeling, first observe that $\lev_\gamma(x) = \lev_{\trun \beta n}(x) = \lev_\alpha(x)$, for every $\beta \in D$ such that $x \in N_\beta$.
If $\lev_\gamma(x) < n$, then by construction $\lambda_\gamma(x) = \sup_{\beta\in D\, :\, x \in N_{\trun \beta n}} \lambda_{\trun \beta n}(x) = \sup_{\beta\in D\, :\, x \in N_{ \beta}} \lambda_\beta(x) = \lambda_{\alpha}(x)= \lambda_{\trun \alpha n}(x)$.
Otherwise, $x \in N_{\trun \beta n}$ and  $\lambda_{\trun \beta n}(x) = \bot$, for all $\beta$ such that $x \in N_\beta$;
hence, $\lambda_\gamma(x) = \sup_{\beta\in D\, :\, x \in N_{\trun \beta n}} \lambda_{\trun \beta n}(x) = \bot = \lambda_{\trun \alpha n}(x)$.
\end{proof}

\section{Pomsets}

The following lemma provides alternative characterizations of the pomset order that will be convenient in some proofs and definitions later on.

\begin{restatable}{lemma}{altLePom}
\label{lem:altLePom}
The following three statements are equivalent:
\begin{enumerate}
\item $\Alpha \lepom \Beta$;
\item $\exists \alpha \in \Alpha,\beta\in \Beta.\ \alpha \sqsubseteq_\lpo \beta$; and
\item $\forall \beta\in\Beta.\ \exists\alpha\in\Alpha.\ \alpha \lelpo\beta$
\end{enumerate}
\end{restatable}
\begin{proof}
Clearly (1) implies (2) since $\Alpha$ is nonempty.

We now show that (2) implies (3), assume that there exist $\alpha \in \Alpha$ and $\beta\in\Beta$ such that $\alpha \sqsubseteq_\lpo \beta$. Now, take any $\beta'\in\Beta$, so $\beta \equiv\beta'$ and therefore there exists a bijection $f \colon N_\beta \to N_{\beta'}$ that satisfies the properties in \Cref{def:equivLPO}. Since $\alpha\lelpo\beta$, then $N_\alpha \subseteq N_\beta$, and so we can construct a bijection $g$ such that $g(x) = f(x)$ for all $x\in N_\alpha$. Let $\alpha' \equiv \alpha$ be the lpo generated via the bijection $g$, so clearly $\alpha' \lelpo \beta'$ since both are generated by the same bijection.

Now, we show that (3) implies (1), assume that forall $\beta\in \Beta$ there exists an $\alpha\in\Alpha$ such that $\alpha \lelpo\beta$.
Then, take any $\alpha' \in \Alpha$; by definition, there exists a bijection $f : N_\alpha \rightarrow N_{\alpha'}$ that satisfies the properties in Definition \ref{def:equivLPO}. Now, consider the LPO $\beta'$ obtained from $\beta$ by replacing every $x \in N_\alpha$ with $f(x)$ and every $x \in N_\beta \setminus N_\alpha$ with a node that does not occur in $N_{\alpha'}$. 
Clearly, this induces an extension of $f$ that is a bijection between $N_\beta$ and $N_{\beta'}$ that satisfies the properties in Definition \ref{def:equivLPO}; hence, $\beta' \in \Beta$ and, trivially, $\alpha' \sqsubseteq_\lpo \beta'$.
\end{proof}

\subsection{Truncation}

As mentioned in Appendix \ref{app:lpo-trun}, we extend the truncation operation to pomsets in the expected way: $\trun \Alpha n \triangleq \{\trun \alpha n \mid \alpha \in \Alpha\}$.

\begin{lemma}[Monotonicity of Truncation]\label{lem:trun-pom-mono}
If $\Alpha \lepom \Beta$, then $\forall n\in\mathbb N.\ \trun \Alpha n \lepom \trun \Beta n$.
\end{lemma}
\begin{proof}
By \Cref{lem:altLePom} we know that there exist $\alpha \in \Alpha$ and $\beta \in \Beta$ such that $\alpha \lelpo \beta$; by \Cref{lem:trunc-mono}, this entails that $\trun \alpha n \lelpo \trun \beta n$ and so, again by \Cref{lem:altLePom}, that $\trun \Alpha n \lepom \trun \Beta n$.
\end{proof}

\begin{lemma}\label{lem:pom-trun-limit}
If $\trun{\Alpha}n \lepom \Beta$ for all $n\in\mathbb N$, then $\Alpha \lepom\Beta$.
\end{lemma}
\begin{proof}({\it Adapted from \cite[Proposition 3.1.10]{bakker1990metric})}
Fix any $\alpha \in \Alpha$ and $\beta \in \Beta$. We will show that there exists $N \dwclosed N_\beta$ and a bijection $f \colon N_\alpha \to N$ such that $f(\alpha) \lelpo \beta$; therefore, by \Cref{lem:altLePom}, $\Alpha \lepom \Beta$.

We construct a tree of bijections as follows. Each node in the tree is a triple $(f, X, n)$ where $n\in \mathbb N$, $X \dwclosed N_\beta$, and $f \colon N_{\trun{\alpha}n} \to X$ is a bijection such that $f(\trun{\alpha}n) \lelpo \beta$. We add an edge from $(f, X, n)$ to $(g, Y, n+1)$ if $X \subseteq Y$ and $g(x) = f(x)$ for all $x\in N_{\trun{\alpha}n}$ (\ie $f$ is $g$ restricted to $N_{\trun{\alpha}n}$ or, equivalently, $g$ is an extension of $f$).
Let $x_\alpha$ and $x_\beta$ be the roots of $\alpha$ and $\beta$ respectively, and $f_0 \colon \{x_\alpha\} \to \{x_\beta\}$ be the obvious bijection. We will now show that the structure we described above defines an infinite tree with $(f_0, \{x_\beta\}, 0)$ as the root.

We first show that, for each $n\in \mathbb N$, there exists a possible node with $n$ as the third element; hence, there are infinitely many nodes (but we do not yet know that they can be all reached from the root). Fix any $n\in \mathbb N$. Since $\trun{\Alpha}n \lepom \Beta = [\beta]$, then there exists an $\alpha' \in \Alpha$ such that $\trun{\alpha'}n \lelpo \beta$. Since $\alpha \equiv \alpha'$, then there must be a bijection $g \colon N_\alpha \to N_{\alpha'}$. Now, let $X = \{ g(x) \mid x \in N_{\trun{\alpha}n} \}$ and $f \colon N_{\trun{\alpha}n} \to X$ be defined as $f(x) = g(x)$, which is clearly a bijection since $g$ is. Clearly
\[
  f(\trun{\alpha}n) = \trun{g(\alpha)}n = \trun{\alpha'}n \lelpo \beta
\]
and so $(f, X, n)$ is a node.

We now show that any node $(f, X, n)$ as described above can be reached from the root. The proof is by induction on $n$. If $n=0$, then the node must be $(f_0, \{x_\beta\}, 0)$, hence it is the root. If $(f, X, n+1)$ is a node, then $f \colon N_{\trun{\alpha}{n+1}} \to X$ is a bijection. Now, let $Y = \{ f(x) \mid x \in N_{\trun{\alpha}n} \}$. So, $g \colon N_{\trun{\alpha}n} \to Y$ defined as $g(x) = f(x)$ is also a bijection. We also know that $f(\trun{\alpha}{n+1}) \lelpo \beta$; since $g(\trun{\alpha}n) \lelpo f(\trun{\alpha}{n+1})$,  we have that $g(\trun{\alpha}n) \lelpo \beta$ by transitivity. Therefore, $(g, Y, n)$ is a valid node and there is an edge from $(g, Y, n)$ to $(f, X, n+1)$. By the induction hypothesis, $(g, Y, n)$ is reachable from the root, therefore so is $(f, X, n+1)$.

Now, let $k = |N_{\trun{\alpha}n}|$, which must be finite since only finitely many nodes can occur at each level. Therefore, there can only be finitely many elements in each level of the tree, since there are only only finitely many downward closed subsets of $N_\beta$ of size $k$; so, the tree must be finitely branching and, by K\"onig's Lemma, there exists an infinite path:
\[
  (f_n, X_n, n)_{n\in\mathbb N}
  \quad\text{where}\quad
  X_n \subseteq X_{n+1}
  \quad\text{and}\quad
  \forall x\in X_n.\ f_n(x) = f_{n+1}(x)
  \quad\text{for all}~ n\in \mathbb N
\]
Now let $X = \bigcup_{n\in\mathbb N} X_n$ and let $f\colon N_\alpha \to X$ be defined as $f(x) = f_{\lev_\alpha(x)}(x)$. We now show that $f(\alpha) \lelpo \beta$. First, clearly $N_{f(\alpha)} = X \dwclosed N_\beta$ since it is the union of downward closed sets. Now, for the order, we have:
\begin{align*}
  \mathord{<_{f(\alpha)}}
  &= \bigcup_{n\in\mathbb N} \mathord{<_{f_n( \trun{\alpha}n )}} 
  \\
  &= \bigcup_{n\in\mathbb N} \left( \mathord{<_{\beta}}  \cap (N_{f_n(\trun{\alpha}n)} \times N_{f_n(\trun{\alpha}n)})\right) 
  \\
  &= \mathord{<_{\beta}}  \cap \bigcup_{n\in\mathbb N} \left(N_{f_n(\trun{\alpha}n)} \times N_{f_n(\trun{\alpha}n)}\right)  
  \\
  &= \mathord{<_{\beta}}  \cap (N_{f(\alpha)} \times N_{f(\alpha)})
\end{align*}
Now, for each $x \in N_{f(\alpha)}$, let $n = \lev_\alpha(x)$. We have that:
\begin{align*}
  \lambda_{f(\alpha)}(x)
  &= \lambda_{\alpha}\left(  f^{-1}(x) \right)
  \\
  &= \lambda_\alpha\left( f^{-1}_{n}(x) \right)
  \\
  &= \lambda_\alpha\left( f^{-1}_{n+1}(x) \right)
  \\
  &= \lambda_{\trun{\alpha}{n+1}}\left( f^{-1}_{n+1}(x) \right)
  \\
  &= \lambda_{f_{n+1}(\trun{\alpha}{n+1})}(x)
  \\
  &\le \lambda_\beta(x)
\end{align*}
By an analogous argument, $\varphi_{f(\alpha)}(x) = \varphi_\beta(x)$. Now, by \Cref{lem:succBot}, we know that $\succ_{f(\alpha)}(x) \subseteq \succ_\beta(x) \setminus \succplus{\Bot_{f(\alpha)}}_\beta$, it just remains to show the reverse inclusion. Take any $y \in \succ_{\beta}(x)$ such that $y \notin \succplus{\Bot_{f(\alpha)}}_\beta$. Let $m = \lev_\beta(y)$ and note that $y \notin \Bot_{f_m(\trun{\alpha}m)}$ since $f(\alpha)$ and $f_m(\trun{\alpha}m)$ are identical up to level $m$. Since $f_{m}(\trun{\alpha}{m}) \lelpo \beta$, then $y \in \succ_{f_m(\trun{\alpha}m)}(x)$, and therefore $y \in \succ_{f(\alpha)}(x)$.
\end{proof}

\subsection{The Pomset Dcpo Structure}

\begin{lemma}
\label{lem:directedPom}
Let $D = \{\alpha_i \mid i \in I\} \subseteq \lpo(L)$ be a directed set and let $\DD = \{ [\alpha_i] \mid  i \in I \} \subseteq \pom(L)$; then, $\sup \DD$ exists and $\sup \DD = [\sup D]$.
\end{lemma}
\begin{proof}
By \Cref{lem:lpo-sup}, we know that $D$ admits a sup, let us call it $\alpha$ and let $\Alpha = [\alpha]$; we have to prove that $\Alpha = \sup\DD$, \ie
\begin{enumerate}
\item for all $i \in I$ it holds that $[\alpha_i] \lepom [\alpha]$, and
\item for every $\Beta$, if $[\alpha_i] \lepom \Beta$ for all $i \in I$, then $\Alpha \lepom \Beta$. 
\end{enumerate}
The first property is trivial, thanks to \Cref{lem:altLePom}(2).

For the second property, take some $\Beta$ such that $[\alpha_i] \lepom \Beta$ for all $i\in I$ and
let $A_n = \{ \beta\in\Beta \mid \trun{\alpha}n \lelpo \beta \}$. 

We first prove that $A_n \neq\emptyset$ for all $n\in\mathbb N$.
Let $N = \{ x\in N_\alpha \mid \lev_\alpha(x) \le n \}$, which by definition is a finite set. By the definition of supremum for lpos, for each $x\in N$ there exists an $i_x\in I$ such that $x \in N_{\alpha_{i_x}}$ and $\lambda_{\alpha_{i_x}}(x) = \lambda_\alpha(x)$. Since $\{ \alpha_{i_x} \mid x\in N \}$ is a finite subset of $D$, which is directed, there must be a $\alpha_j \in D$ such that $\alpha_{i_x}\lelpo\alpha_j$ for all $x\in N$. By construction, $\trun{\alpha_j}n = \trun{\alpha}n$. Since $\alpha_j\in D$, we also know that $[\alpha_j] \lepom \Beta$, so there must be a $\beta \in \Beta$ such that $\alpha_j \lelpo \beta$. This gives us $\trun{\alpha}n = \trun{\alpha_j}n \lelpo \alpha_j \lelpo \beta$, so $\beta\in A_n$ and therefore $A_n \neq \emptyset$.

Now, for each $n\in \mathbb N$, since $A_n \neq\emptyset$, we know that there exists a $\beta_n \in \Beta$ such $\trun{\alpha}n \lelpo \beta_n$; therefore, by \Cref{lem:altLePom}, we get that $\trun{\Alpha}n = [\trun{\alpha}n] \lepom [\beta_n] = \Beta$ so by \Cref{lem:pom-trun-limit}, we get that $\Alpha \lepom \Beta$.
\end{proof}


\begin{lemma}\label{lem:pom-lpo-chain}
For every transfinite chain of pomsets $(\Alpha_\delta)_{\delta < \zeta}$ (such that $\Alpha_\delta \lepom \Alpha_{\delta'}$ if $\delta < \delta'$), there exists a corresponding chain of \lpof s $(\alpha_\delta)_{\delta <\zeta}$ such that $\alpha_\delta \in \Alpha_\delta$, for all $\delta < \zeta$.
\end{lemma}
\begin{proof}
We construct the chain $(\alpha_\delta)_{\delta <\zeta}$ by transfinite induction as follows.
First, fix an arbitrary $\alpha_0 \in \Alpha_0$.
Second, for any ordinal $\delta$ such that $\alpha_\delta$ is fixed, we know that there exists an $\alpha_{\delta+1} \in \Alpha_{\delta+1}$ such that $\alpha_\delta \lelpo \alpha_{\delta+1}$; so, fix this element.
Finally, for any limit ordinal $\xi$, consider $\{ \alpha_\delta \mid \delta < \xi \}$ that is a chain (thus, also a directed set); we have to choose an $\alpha_\xi \in \Alpha_\xi$ that is bigger (w.r.t. $\lelpo$) than any $\alpha_\delta$. By \Cref{lem:directedPom}, we know that $\sup_{\delta < \xi} \Alpha_\delta = [ \sup_{\delta<\xi} \alpha_\delta ]$. By definition, $\Alpha_\xi$ is an upper bound of $\{ \Alpha_\delta \mid \delta < \xi \}$, therefore $\sup_{\delta < \xi} \Alpha_\delta \lepom \Alpha_\xi$; hence, there exists an $\alpha_\xi \in \Alpha_\xi$ such that $\sup_{\delta<\xi} \alpha_\delta \lelpo \alpha_\xi$, which also means that $\alpha_\delta \lelpo \alpha_\xi$ for all $\delta<\xi$.
\end{proof}

\begin{restatable}{lemma}{pomdcpo} 
\label{lem:pomdcpo}
$\tuple{\pom(L), \lepom}$ is a pointed \dcpo.
\end{restatable}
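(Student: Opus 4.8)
The plan is to leverage \Cref{lem:dcpoLPO} (the \dcpo\ structure on $\lpo(L)$) and transport it to the quotient $\pom(L) = \lpo(L)/{\equiv}$, checking separately that the quotient order $\lepom$ is well-behaved, that suprema of directed sets exist, and that there is a bottom element. First I would verify that $\lepom$ is a partial order on $\pom(L)$. Reflexivity and transitivity are routine (using \Cref{lem:lpo-poset} on representatives and composing the chosen representative-level inequalities through a common upper bound obtained from directedness of singletons—or more directly, just chaining $\sqsubseteq_\lpo$ witnesses). Antisymmetry requires a small argument: if $\Alpha \lepom \Beta$ and $\Beta \lepom \Alpha$, pick $\alpha \in \Alpha$; get $\beta \in \Beta$ with $\alpha \sqsubseteq_\lpo \beta$, then $\alpha' \in \Alpha$ with $\beta \sqsubseteq_\lpo \alpha'$, so $\alpha \sqsubseteq_\lpo \alpha'$; since $\alpha \equiv \alpha'$ and $\sqsubseteq_\lpo$ restricted to an isomorphism class must be ``tight'' (an \lpof\ that is $\sqsubseteq_\lpo$-below an isomorphic copy of itself can only differ by renaming, because by \Cref{lem:lelpo-missing} the node set cannot strictly grow within an isomorphism class), we conclude $\alpha \equiv \beta$, hence $\Alpha = \Beta$. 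I would isolate this ``tightness within an isomorphism class'' as the key lemma feeding antisymmetry.

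Next, for the bottom element: take $\bot_{\pom}$ to be the isomorphism class of the single-node \lpof\ $\tuple{\{x\}, \emptyset, \lambda(x) = \bot, \varphi(x) = \tru}$. I would check this is a valid \lpof\ (trivially single-rooted, finitely preceded, each level finite, the $\bot$-label node has no successors, and $\varphi(x) = \tru$ is satisfiable with no free variables), and then show $\bot_{\pom} \lepom \Alpha$ for every pomset $\Alpha$: given any representative $\alpha$, its root $r = \min_\alpha$ either is labelled $\bot$—in which case $\alpha$ itself is (isomorphic to) $\bot_{\pom}$—or, if $\lambda_\alpha(r)\ne\bot$, I claim the single-$\bot$-node \lpof\ is $\sqsubseteq_\lpo$-below $\alpha$ only when... actually here I must be careful: \Cref{def:lelpo} requires $\varphi_\alpha = \varphi_\beta$ on shared nodes and $\lambda$ to only go up, so $\bot_{\pom} \sqsubseteq_\lpo \alpha$ needs the root of $\alpha$ to carry formula $\tru$ (which holds, since the root has empty $\predplus{\cdot}$, forcing $\free(\varphi) = \emptyset$ and satisfiability, so $\varphi(r) \Leftrightarrow \tru$—though syntactic equality may fail, so I would either work up to logical equivalence of formulae or note that the definition of $\equiv$/$\sqsubseteq_\lpo$ should be read modulo $\Leftrightarrow$; I'd flag this and use the representative whose root formula is literally $\tru$). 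Then $\Bot_{\bot_{\pom}} = \{r\}$, and condition (3c) demands $\succ_{\bot_{\pom}}(r) = \succ_\alpha(r) \setminus \succplus{\{r\}}_\alpha = \emptyset$, which holds. So $\bot_{\pom} \sqsubseteq_\lpo \alpha$, giving $\bot_{\pom}\lepom\Alpha$.

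For directed completeness, let $\mathcal D \subseteq \pom(L)$ be directed. The strategy is to choose representatives coherently: build a directed set $D \subseteq \lpo(L)$ with $\{[\delta] : \delta \in D\} = \mathcal D$ such that the $\sqsubseteq_\lpo$ relations among members of $D$ realize the $\lepom$ relations in $\mathcal D$, then take $\alpha = \sup D$ from \Cref{lem:lpo-sup} and set $\sup \mathcal D = [\alpha]$. The coherent choice of representatives is the main obstacle: given $\Alpha \lepom \Beta$ in $\mathcal D$, a witnessing $\alpha \sqsubseteq_\lpo \beta$ exists for each $\alpha \in \Alpha$, but to assemble a genuine directed subset of $\lpo(L)$ (not just a diagram) I need the representatives to agree on overlaps. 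I would do this by a standard inductive/transfinite construction: well-order $\mathcal D$ (or, since pomsets here are countable and the relevant directed sets arising from the semantics are chains, first handle the countable-chain case which suffices for least fixed points of loops, then remark on the general case), and at each step pick a representative $\delta$ of the next pomset together with an \lpof\ that $\sqsubseteq_\lpo$-extends all previously chosen representatives below it—possible by directedness of $\mathcal D$ and the fact that $\sqsubseteq_\lpo$-extensions can be glued (two \lpof s above a common one, being determined on the shared downward-closed node set by conditions (2)–(4) of \Cref{def:lelpo}, have a join, analogous to \Cref{lem:lpo-sup} applied to a two-element set together with the lower bound). Once $D$ is in place, \Cref{lem:lpo-sup} gives $\sup D = \alpha$; I then verify $[\alpha]$ is an upper bound of $\mathcal D$ (each $\delta \in D$ has $\delta \sqsubseteq_\lpo \alpha$, so $[\delta] \lepom [\alpha]$) and is least among upper bounds (if $[\gamma]$ bounds $\mathcal D$, pick a representative $\gamma$ that $\sqsubseteq_\lpo$-dominates all of $D$ by the same gluing argument—or argue directly that $\alpha \sqsubseteq_\lpo$ some isomorph of $\gamma$ by matching node sets level by level using \Cref{lem:lelpo-missing}). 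I expect the bookkeeping of coherent representatives to be the real work; everything else is a direct transport of \Cref{lem:dcpoLPO} through the quotient.
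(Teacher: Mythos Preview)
Your overall strategy—lift to coherent $\lpo$ representatives, apply \Cref{lem:lpo-sup}, and project back—is the paper's strategy too. But there is a genuine gap in the least-upper-bound verification, and the paper closes it with machinery you do not anticipate.

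The gap: given an upper bound $\Beta$ of $\mathcal D$, you write ``pick a representative $\gamma$ that $\sqsubseteq_\lpo$-dominates all of $D$ by the same gluing argument''. From $[\alpha_i] \lepom \Beta$ you get, for each $i$ \emph{separately}, some $\beta_i \in \Beta$ with $\alpha_i \lelpo \beta_i$; producing a \emph{single} $\beta \in \Beta$ with $\sup D \lelpo \beta$ is the whole difficulty, and a join of the $\beta_i$ need not lie in the isomorphism class $\Beta$. The same issue bites your inductive construction of $D$ for an arbitrary directed set: whenever the new pomset must sit above several incomparable, already-chosen representatives, you again need one representative dominating them all—so your ``gluing'' step is circular. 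The paper's fix has two parts. First, it reduces to transfinite \emph{chains} via \cite[Proposition~2.1.15]{abramsky1995domain} (chain-completeness implies directed-completeness), so coherent representative selection becomes a straightforward transfinite induction with a single predecessor at each successor step (\Cref{lem:pom-lpo-chain}). Second—and this is the technical core—the least-upper-bound check is carried out through truncation: one proves (\Cref{lem:pom-trun-limit}, via K\"onig's Lemma on a finitely-branching tree of level-$n$ partial bijections into a fixed $\beta\in\Beta$) that if $\trun{\Alpha}{n} \lepom \Beta$ for all $n$ then $\Alpha \lepom \Beta$. Your ``matching node sets level by level'' intuition points exactly here, but assembling the per-level matchings into one coherent infinite bijection is precisely the K\"onig step, and without it the argument does not close.
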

\begin{proof}
By \cite[Proposition 2.1.15]{abramsky1995domain}, 
$\tuple{\pom(L), \lepom}$ is a dcpo if any transfinite chain $(\Alpha_\delta)_{\delta<\zeta}$ has a supremum. By \Cref{lem:pom-lpo-chain}, there is a corresponding chain of lpos $(\alpha_\delta)_{\delta < \zeta}$ such that $\alpha_\delta \in \Alpha_\delta$ for each $\delta < \zeta$. Therefore, by \Cref{lem:directedPom}, we get that $\sup_{\delta < \zeta} \Alpha_\delta = [ \sup_{\delta<\zeta} \alpha_\delta]$, so the chain has a supremum.

Finally, pointedness follows from the fact that $\singleton{\bot} \lepom \Alpha$, for every $\Alpha$, where we let 
$\singleton{\bot}$ to denote the isomorphism class of the singleton LPOs whose node is labelled with $\bot$.
\end{proof}

\section{Approximation and Extension}
\label{sec:approx}

We start by recalling the notions of approximation and compactness taken from \cite{abramsky1995domain}.

\begin{definition}[Approximation and Compactness]
Given a \dcpo\ $\tuple{D, \le}$, the {\em approximation order} $\mathord{\ll}\subseteq D \times D$ is defined as follows: $d_1 \ll d_2$ iff, for any directed set $D' \subseteq D$ such that $d_2 \le \sup D'$, then $d_1 \le d$, for some $ d \in D'$. 

An element $d \in D$ is \emph{compact} if it approximates itself ($d \ll d$).
We denote the set of compact elements of $D$ as $\mathsf K(D) \triangleq \{ d \in D \mid d \ll d \}$.
\end{definition}

The approximation order is sometimes called the \emph{way-below} relation or the \emph{order of definite refinement} \cite{smyth1986finite}. Intuitively, $d_1 \ll d_2$ means that $d_1$ is a ``simpler representation'' of $d_2$. 
We now want to show that such an order coincides with the notion of finite approximations (see \Cref{lem:fin-approx}); this requires some preliminary results for \lpof s.

\begin{lemma}\label{lem:fin-approx-directed-lpo}
For any $\alpha \in \lpo(L)$, the set $\finapprox\alpha$ is directed.
\end{lemma}
\begin{proof}
Take any $\beta,\gamma \in \finapprox\alpha$, so $\beta,\gamma\in \lpo_\fin(L)$ and $\beta,\gamma \lelpo \alpha$.
Let $n$ be the maximum level of any node in $\beta$ or $\gamma$, and $\alpha' = \trun{\alpha}{n+1}$. Clearly, $\alpha' \in \finapprox{\alpha}$ by \Cref{lem:node-in-approx}.

We conclude the proof by showing that $\alpha'$ is an upper bound for $\beta$ and $\gamma$; we will only show $\beta \lelpo \alpha'$, as the case for $\gamma$ is identical. 
Clearly $N_\beta \subseteq N_{\alpha'}$ since $\beta$ contains a subset of the nodes from $\alpha$ at level at most $n$ and $\alpha'$ contains all nodes from $\alpha$ up to level $n+1$. We will now show that it is downward closed. Take any $x \in N_\beta$ and $y <_{\alpha'} x$; this means that $y <_\alpha x$. Being $N_\beta \dwclosed N_\alpha$, we have that $y \in N_\beta$.

Next, since $\alpha' \lelpo\alpha$ and $N_\beta \subseteq N_{\alpha'}$, we have that:
\begin{align*}
  \mathord{<_{\alpha'}} \cap (N_\beta \times N_\beta)
  &= (\mathord{<_{\alpha}} \cap (N_{\alpha'} \times N_{\alpha'})) \cap (N_\beta \times N_\beta)
  \\
  &= \mathord{<_{\alpha}} \cap ((N_{\alpha'} \cap N_{\beta}) \times (N_{\alpha'} \cap N_\beta))
\\
  &= \mathord{<_{\alpha}} \cap (N_\beta \times N_\beta)
\\
  &= \mathord{<_\beta}
\end{align*}
Finally, let $x \in N_\beta$. By construction, $\varphi_\beta(x) = \varphi_\alpha(x) = \varphi_{\alpha'}(x)$ and $\lambda_\beta(x) \le \lambda_\alpha(x) = \lambda_{\alpha'}(x)$, since $\lev_\alpha(x) \le n$.
For the last requirement, by \Cref{lem:succBot}(3) we have that $\succ_\beta(x) \subseteq \succ_{\alpha'}(x) \setminus \succplus{\Bot_\beta}_{\alpha'}$; we need to prove the reverse inclusion. 
Let $y \in \succ_{\alpha'}(x)$ and $y \not\in \succplus{\Bot_\beta}_{\alpha'}$.
 We have two cases:
\begin{itemize}
\item $y \in N_\beta$: in this case, $y \in \succ_{\beta}(x)$ by \Cref{lem:succBot}(1).
\item $y \not\in N_\beta$: Because $y$ is finitely proceeded in $\alpha'$, there exist $y_1,\ldots,y_n$ such that $y_1 = \min_{\alpha'}$, $y_n = y$ and $y_{i+1} \in \succ_{\alpha'}(y_i)$. Because of \Cref{lem:succBot}(2), $y_{i+1} \in \succ_{\alpha}(y_i)$, for all $i$.
Being $y_1=\min_\beta$, there exists an $i$ ($1 <i < n$) such that $y_i \in N_\beta$ and $y_{i+1} \not\in N_\beta$. Since $\beta \lelpo \alpha$, we have that $\succ_{\beta}(y_i) \subseteq \succ_\alpha(y_i) \setminus \succplus{\Bot_{\beta}}_\alpha$; so, $y_{i+1} \in \succplus{\Bot_{\beta}}_\alpha$ and, consequently, $y \in \succplus{\Bot_{\beta}}_\alpha$. This latter fact, together with $y \in N_{\alpha'}$, implies that $y \in \succplus{\Bot_{\beta}}_{\alpha'}$: contradiction.
\qedhere
\end{itemize}
\end{proof}

\begin{lemma}
\label{lem:fin-approx-directed}
For any $\Alpha \in \pom(L)$, the set $\finapprox\Alpha$ is directed.
\end{lemma}
\begin{proof}
Take any $\Beta,\Beta' \in \finapprox\Alpha$, so $\Beta,\Beta' \in \pom_\fin(L)$ and $\Beta \lepom\Alpha$ and $\Beta'\lepom\Alpha$. Now take any $\alpha \in \Alpha$, by \Cref{lem:altLePom} we know that there is a $\beta\in\Beta$ and $\beta'\in \Beta'$ such that $\beta \lelpo\alpha$ and $\beta' \lelpo\alpha$. Therefore, $\beta,\beta' \in \finapprox{\alpha}$, so by \Cref{lem:fin-approx-directed-lpo}, there exists a $\gamma \in \finapprox{\alpha}$ such that $\beta \lelpo \gamma$ and $\beta' \lelpo\gamma$. Since $\gamma \in \finapprox\alpha$, then $[\gamma] \in \finapprox{\Alpha}$, and clearly $\Beta \lepom [\gamma]$ and $\Beta' \lepom[\gamma]$.
\end{proof}


\begin{lemma}\label{lem:fin-approx-lpo}
For any $\alpha \in \lpo(L)$, $\alpha = \sup \finapprox\alpha$.
\end{lemma}
\begin{proof}
By \Cref{lem:fin-approx-directed-lpo}, $\finapprox\alpha$ is a directed set, therefore its supremum is given by the construction in \Cref{lem:lpo-sup}. We now show that this gives us exactly $\alpha$. First, we show this for the nodes. We know that $N_{\sup \finapprox\alpha}  = \bigcup_{\beta \in \finapprox\alpha} N_\beta$.
Clearly, $\bigcup_{\beta \in \finapprox\alpha} N_\beta \subseteq N_\alpha$ since each $N_\beta \subseteq N_\alpha$ by definition. 
For the reverse inclusion, take any $x \in N_\alpha$ and let $n = \lev_\alpha(x)$. Obviously, $x \in N_{\trun\alpha{n}}$, and by \Cref{lem:node-in-approx}, we get that $\trun{\alpha}n \in \finapprox\alpha$. Therefore, $x\in \bigcup_{\beta \in \finapprox\alpha} N_\beta$, so $N_\alpha \subseteq \bigcup_{\beta \in \finapprox\alpha} N_\beta$ and thus $\bigcup_{\beta \in \finapprox\alpha} N_\beta = N_\alpha$

By definition, $\beta \lelpo \alpha$, for each $\beta\in\finapprox\alpha$; so clearly $\alpha$ is an upper bound for $\finapprox\alpha$ and therefore $\sup \finapprox\alpha \lelpo \alpha$ (since the supremum is the least upper bound). Hence, ${<_{\sup \finapprox\alpha}} \subseteq {<_\alpha}$; we show the reverse inclusion as follows. Take any $x <_\alpha y$ and let $\gamma = \trun{\alpha}{\lev_\alpha(y)}$. Clearly $y \in N_\gamma$ and by \Cref{lem:node-in-approx}, $\gamma \in \finapprox{\alpha}$. Since $N_\gamma \dwclosed N_\alpha$, then $x \in N_\gamma$ too, which means that $x <_\gamma y$. This gives us:
\begin{align*}
  (x,y)
  \quad \in \quad {<_\gamma}
  \quad \subseteq\quad \bigcup_{\beta\in \finapprox\alpha} {<_\beta}
  \quad =\quad   {<_{\sup \finapprox\alpha}}
\end{align*}
Finally, for any $x \in N_\alpha$, we have:
\begin{itemize}
\item $\varphi_{\sup \finapprox\alpha}(x) = \varphi_\alpha(x)$: Take any $\beta \in \finapprox\alpha$, so $\beta \lelpo \sup \finapprox\alpha$ and $\beta \lelpo \alpha$, therefore $\varphi_{\finapprox\alpha} = \varphi_\beta = \varphi_\alpha$.
\item
\begin{align*}
  \lambda_{\sup \finapprox\alpha}(x)
  &= \sup_{\beta\in\finapprox\alpha : x\in N_\beta} \lambda_\beta(x)
  \\
  &= \sup \{
    \lambda_\beta(x)
    \mid
    \beta \in \lpo_\fin(L), \beta\lelpo \alpha, x \in N_\beta
  \}
\\
  &\stackrel 1 = \sup \{
    \ell
    \mid
    \ell \le \lambda_\alpha(x)
  \}
\\
  &\stackrel 2 = \lambda_\alpha(x)
\end{align*}
where (1) holds by the definition of $\lelpo$, that implies that the set above must include all the labels that are less than $\lambda_\alpha(x)$, and (2) holds since the set contains $\lambda_\alpha(x)$ itself.
\qedhere
\end{itemize}
\end{proof}

%

\finapproxx*
\begin{proof}
Fix any $\alpha \in \Alpha$ and consider the set $\finapprox\alpha$: by \Cref{lem:fin-approx-directed-lpo} it is directed and by \Cref{lem:fin-approx-lpo} its sup is $\alpha$. Then, by \Cref{lem:directedPom}, $\sup\finapprox\Alpha$ exists and it is $[\alpha]$, i.e. $\Alpha$.
\end{proof}

\begin{lemma}\label{lem:lelpo-finite}
If $\alpha \in \lpo_\fin(L)$, then there are only finitely many $\beta$ such that $\beta \lelpo \alpha$.
\end{lemma}
\begin{proof}
Any $\beta\lelpo \alpha$ is such that $N_\beta \subseteq N_\alpha$; in other words, $N_\beta \in \mathcal{P}(N_\alpha)$. If $|N_\alpha|=n$. then
 $|\mathcal{P}(N_\alpha)| = 2^n$ (actually, there are fewer possibilities, since $N_\beta$ is downward closed, but $2^n$ is a sufficient upper bound).
The order and formula of $\beta$ are deterministically determined by $N_\beta$, since $\mathord{<_\beta} = \mathord{<_{\alpha}} \cap (N_\beta \times N_\beta)$ and $\varphi_\beta(x) = \varphi_{\alpha}(x)$.
Let $k = \max_{x\in N_\alpha} |\predplus{\lambda_\alpha(x)}_L|$, which must be a finite integer since it is the maximum of a finite set of integers. So, there are at most $2^n$ subsets of $N_\alpha$ and for each of those subsets, there are at most $k$ choices for the labels of each node. Hence, there are at most $2^{n}\cdot n \cdot k$ possible $\beta \lelpo \alpha$.
\end{proof}

%

\begin{lemma}\label{lem:trun-directed-finite}
If $\DD \subseteq \pom(L)$ is a directed set, then $\trun\DD{n}$ is directed and finite, for any $n\in\mathbb N$.
\end{lemma}
\begin{proof}
We first show that $\trun\DD{n}$ is directed. Take any $\Beta_1,\Beta_2 \in \trun\DD{n}$. This means that there is $\GGamma_1,\GGamma_2 \in \DD$ such that $\Beta_i = \trun{\GGamma_i}n$ for each $i\in\{1,2\}$, and since $\DD$ is directed, then there is a $\GGamma\in\DD$ such that each $\GGamma_i \lepom \GGamma$. By monotonicity of truncation, we have that $\Beta_i = \trun{\GGamma_i}n \lepom \trun{\GGamma}n$, and clearly $\trun{\GGamma}n \in \trun\DD{n}$, therefore $\trun\DD{n}$ is directed.

Because of  \Cref{lem:trun-pom-mono}, we know that $\Beta \lepom \trun\Alpha{n}$ for all $\Beta \in \trun\DD{n}$.
Fix any $\alpha \in \trun{\Alpha}n$, so clearly $\alpha$ is finite. By \Cref{lem:altLePom}, we know that, for every $\Beta \in \trun\DD{n}$, there is a $\beta \in \Beta$ such that $\beta \lelpo \alpha$. By \Cref{lem:lelpo-finite}, there can only be finitely many such elements. Thus we have that $\{ \beta \mid \beta \lelpo \alpha \}$ is finite, and also clearly $\trun\DD{n} \subseteq \{ [\beta] \mid \beta \lelpo \alpha \}$; therefore, $\trun\DD{n}$ is also finite.
\end{proof}

\begin{lemma}\label{lem:trun-pom-cont}
If $\DD$ is a directed set of pomsets, then $\sup \trun \DD n = \trun {\sup \DD} n$ for all $n\in\mathbb N$.
\end{lemma}
\begin{proof}
We show that $\trun{-}n$ is chain continuous; indeed, by \cite[Corollary 3]{markowsky1976chain-complete}, any chain continuous function on a \dcpo\ is also Scott continuous, and thus $\sup \trun{\DD}n = \trun{\sup\DD}n$.

Take any pomset chain $(\Alpha_\delta)_{\delta < \zeta}$; by \Cref{lem:pom-lpo-chain}, there is a corresponding chain of \lpof s $(\alpha_\delta)_{\delta < \zeta}$ such that $\alpha_\delta \in \Alpha_\delta$ for all $\delta < \zeta$. 
Since truncation is monotonic (\Cref{lem:trun-pom-mono}), then $(\trun{\Alpha_\delta}n)_{\delta < \zeta}$ is also a chain and, by \Cref{lem:trunc-continuous}, we know that $\sup_{\delta <\zeta} \trun{\alpha_\delta}n = \trun{\sup_{\delta<\zeta} \alpha_\delta}n$. Clearly $\trun{\alpha_\delta}n \in \trun{\Alpha_\delta}n$ for all $\delta<\zeta$, therefore by \Cref{lem:directedPom}, we get both $\sup_{\delta<\zeta}\Alpha_\delta = [\sup_{\delta <\zeta} \alpha_\delta]$ and $\sup_{\delta<\zeta}\trun{\Alpha_\delta}n = [\sup_{\delta <\zeta} \trun{\alpha_\delta}n]$. Putting all these facts together, we get:
\[\textstyle
  \sup_{\delta<\zeta} \trun{\Alpha_\delta}n
  = [\sup_{\delta <\zeta} \trun{\alpha_\delta}n]
  = [\trun{\sup_{\delta<\zeta} \alpha_\delta}n]
  = \trun{ [\sup_{\delta<\zeta} \alpha_\delta] }n
  = \trun{ \sup_{\delta<\zeta}\Alpha_\delta }n
  \qedhere
\]
\end{proof}

\begin{lemma}[Compactness]
\label{lem:compactness}
$\mathsf{K}(\pom(L)) = \pom_\fin(L)$.
\end{lemma}
\begin{proof}
We first show that any compact pomset is finite, \ie $\mathsf{K}(\pom(L)) \subseteq \pom_\fin(L)$. 
Take any $\Alpha \in \mathsf{K}(\pom(L))$, so for any directed set $A$, if $\Alpha \lelpo\sup A$ then $\Alpha \lelpo\GGamma$ for some $\GGamma \in A$. Now, let $A = \finapprox\Alpha$, which is directed by \Cref{lem:fin-approx-directed}. By \Cref{lem:fin-approx}, we also know that $\Alpha = \sup A$, so $\Alpha \lepom \sup A$. Since $\Alpha$ is compact, there must be some $\GGamma \in A$ such that $\Alpha \lepom\GGamma$. Since $\GGamma \in A$, then $\GGamma \in \pom_\fin(L)$; therefore, since $\Alpha\lelpo \GGamma$, then $\Alpha \in \pom_\fin(L)$ too.

We now show that any finite pomset is compact, \ie $\pom_\fin(L) \subseteq \mathsf{K}(\pom(L))$. Taking any $\Alpha \in \pom_\fin(L)$, we must show that $\Alpha \ll\Alpha$. That is, for any directed set $A$, if $\Alpha \lepom \sup A$, then there exists $\GGamma \in A$ such that $\Alpha \lepom \GGamma$.

To this aim, let $A$ be a directed set such that $\Alpha \lepom \sup A$. Since $\Alpha$ is finite, let $\alpha\in\Alpha$ be an arbitrary representative lpo, then $n = \max_{x\in N_\alpha} \lev(x) + 1$ must be finite. Moreover, $\trun\Alpha{n} = \Alpha$, since $n$ is one level above the maximum level of $\Alpha$ and therefore the truncation only converts labels to $\bot$ above the maximum level of $\Alpha$ and removes nodes even above that. Therefore, by montonicity (\Cref{lem:trun-pom-mono}) and Scott continuity (\Cref{lem:trun-pom-cont}) of truncation, we have:
\[
  \Alpha = \trun\Alpha{n} \lepom \trun{\sup A}n = \sup \{ \trun\Beta{n} \mid {\Beta\in A} \}
\]
By \Cref{lem:trun-directed-finite}, we know that $\{ \trun\Beta{n} \mid {\Beta\in A} \}$ is a finite directed set, and therefore it must contain its own supremum. That is, there exists $\GGamma \in \{ \trun\Beta{n} \mid {\Beta\in A} \}$ such that $\GGamma = \sup \{ \trun\Beta{n} \mid {\Beta\in A} \}$. This means that there exists a $\GGamma' \in A$ such that $\GGamma = \trun{\GGamma'}n$, and so clearly $\GGamma \lepom \GGamma'$, and therefore by transitivity, $\Alpha \lepom\GGamma'$.
\end{proof}

\begin{corollary}[Approximation]
\label{cor:lefin-approx}
$\Alpha \ll \Beta$ iff $\Alpha \in \finapprox\Beta$.
\end{corollary}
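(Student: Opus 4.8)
The plan is to derive this purely from the compactness result (\Cref{lem:compactness}), the fact that every pomset is the supremum of its finite approximations (\Cref{lem:fin-approx}), and the directedness of $\finapprox\Beta$ (\Cref{lem:fin-approx-directed}), using only two elementary properties of the way-below relation: that $\Alpha \ll \Beta$ implies $\Alpha \lepom \Beta$ (instantiate the definition at the singleton directed set $\{\Beta\}$, whose supremum is $\Beta$), and that $\ll$ is absorbed by $\lepom$ on the right (if $\GGamma \ll \Alpha'$ and $\Alpha' \lepom \Beta$, then $\GGamma \ll \Beta$, since any directed $D$ with $\Beta \lepom \sup D$ also has $\Alpha' \lepom \sup D$, so some $d\in D$ satisfies $\GGamma \lepom d$).

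For the right-to-left direction, assume $\Alpha \in \finapprox\Beta$, so $\Alpha$ is finite and $\Alpha \lepom \Beta$. By \Cref{lem:compactness}, $\Alpha$ is compact, i.e.\ $\Alpha \ll \Alpha$; applying the right-absorption property with $\GGamma = \Alpha' = \Alpha$ yields $\Alpha \ll \Beta$, as desired.

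For the left-to-right direction, assume $\Alpha \ll \Beta$. Then $\Alpha \lepom \Beta$ by the singleton instantiation above. It remains to show $\Alpha$ is finite. By \Cref{lem:fin-approx-directed}, $\finapprox\Beta$ is directed, and by \Cref{lem:fin-approx}, $\sup\finapprox\Beta = \Beta$, hence $\Beta \lepom \sup\finapprox\Beta$. Since $\Alpha \ll \Beta$, there is some $\GGamma \in \finapprox\Beta$ with $\Alpha \lepom \GGamma$. Picking representatives $\alpha \lelpo \gamma$, \Cref{def:lelpo}(\ref{def:lelpo:N}) gives $N_\alpha \subseteq N_\gamma$, and $N_\gamma$ is finite since $\GGamma \in \pom_\fin(L)$; hence $\Alpha$ has finitely many nodes, so $\Alpha \in \pom_\fin(L)$. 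Together with $\Alpha \lepom \Beta$, this gives $\Alpha \in \finapprox\Beta$.

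There is essentially no hard step here: the corollary just packages \Cref{lem:compactness,lem:fin-approx,lem:fin-approx-directed} with the two textbook facts about $\ll$. The only point worth a second look is that the notion of directed set in the definition of $\ll$ admits singletons, so that $\Alpha \ll \Beta \Rightarrow \Alpha \lepom \Beta$ genuinely holds; this is standard and requires no extra hypothesis.
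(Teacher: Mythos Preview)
Your proof is correct and follows essentially the same approach as the paper's: both directions hinge on \Cref{lem:compactness}, \Cref{lem:fin-approx}, and \Cref{lem:fin-approx-directed}, together with the standard absorption property of $\ll$ (which the paper cites as \cite[Proposition 2.2.2(2)]{abramsky1995domain}). The only cosmetic difference is that in the forward direction the paper concludes by invoking that $\finapprox\Beta$ is downward closed, whereas you unpack this explicitly by showing $N_\alpha \subseteq N_\gamma$ forces finiteness and separately deriving $\Alpha \lepom \Beta$ from the singleton directed set; these are the same argument at different levels of detail.
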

\begin{proof}
Suppose that $\Alpha \ll \Beta$. Since $\finapprox\Beta$ is a directed set and $\sup \finapprox\Beta = \Beta$ (\Cref{lem:fin-approx-directed,lem:fin-approx}), then there must be some $\GGamma \in \finapprox\Beta$ such that $\Alpha \lelpo\GGamma$. Since $\finapprox\Beta$ is downward closed by definition, then $\Alpha \in \finapprox\Beta$ too.

Now suppose that $\Alpha \in \finapprox\Beta$. By \Cref{lem:compactness}, $\Alpha$ is compact, so $\Alpha\ll\Alpha$. By the definition of $\finapprox\Beta$, $\Alpha \lelpo\Beta$ too. Therefore, by \cite[Proposition 2.2.2(2)]{abramsky1995domain}, we get that $\Alpha \ll \Beta$.
\end{proof}

For any \dcpo s $\tuple{X, \le_X}$ and $\tuple{Y, \le_Y}$, the product poset $\tuple{X\times Y, \le}$ where $(x,y) \le (x', y')$ iff $x\le_X x'$ and $y\le_Y y'$ is also a \dcpo \cite[Proposition 3.2.2]{abramsky1995domain}. We will use this fact in the next lemma. We write $\vec x \in X_1 \times \cdots \times X_n$ to denote elements of Cartesian products. Let $\vec\Alpha = (\Alpha_1, \ldots, \Alpha_n)$ and $\vec\Beta = ( \Beta_1, \ldots, \Beta_n)$. We write $\vec\Alpha \ll \vec\Beta$ to mean that $\Alpha_i \ll \Beta_i$ for all $1 \le i \le n$.

\begin{lemma}
\label{lem:fin-approx2}
For any directed set $D \subseteq \pom(X)^n$:
\[
\left\{ \vec\Beta\mid \exists\vec\Alpha \in D. \vec\Beta \ll \vec\Alpha \right\}
  =
\left\{ \vec\Beta\mid \vec\Beta \ll \sup D \right\}
\]
\end{lemma}
\begin{proof}

We show the set inclusion in both directions. First, take any element $\vec\Beta = ( \Beta_1, \ldots, \Beta_n)$ from the first set, so we know that $\Beta_i \ll \Alpha_i$ for all $1 \le i \le n$ and some $\vec\Alpha = (\Alpha_1, \ldots, \Alpha_n)  \in D$. Clearly $\vec\Alpha \lepom \sup D$, so $\vec\Beta \ll \sup D$ \cite[Proposition 2.2.2(2)]{abramsky1995domain} and therefore the forward inclusion holds.

Now take any element $\vec\Beta = ( \Beta_1, \ldots, \Beta_n)$ from the second set and let $D_1, \ldots, D_n$ be the projections of $D$.
Since the supremum on the product order is coordinatewise \cite[Proposition 2.2.2(2)]{abramsky1995domain}, then $\sup D = (\sup D_1, \ldots, \sup D_n)$.
So, $\Beta_i \ll \sup D_i$ for all $1 \le i\le n$. Therefore, by the definition of $\ll$, there must be some $\Alpha_i \in D_i$ such that $\Beta_i \lepom \Alpha_i$. Now, for each $i$, let $\vec{\Alpha}_i \in D$ be the tuple whose $i-$th projection is $\Alpha_i$.
Since $D$ is directed, then there is some $\vec\Alpha \in D$, which is an upper bound of all the $\vec{\Alpha}_i$s.
From $\Beta_i \ll\sup D_i$ and \Cref{cor:lefin-approx}, we know that $\Beta_i$ is finite, therefore $\Beta_i \in \finapprox{\Alpha_i}$. By \Cref{cor:lefin-approx} again, $\Beta_i \ll \Alpha_i$, and so $\vec\Beta \ll \Alpha$ and the reverse inclusion holds too.
\end{proof}

\extension*
\begin{proof}
By \Cref{cor:lefin-approx}, $\{ (\Alpha_1',\ldots,\Alpha_n')  \mid \forall i.\,\Alpha_i' \ll \Alpha_i \} = \finapprox{\Alpha_1} \times \cdots \times \finapprox{\Alpha_n}$, that is a directed set by \Cref{lem:fin-approx-directed}. Since $f$ is monotone, then $\{ f(\Alpha_1',\ldots,\Alpha_n') \mid \forall i.\,\Alpha_i' \ll \Alpha_i \}$ is also a directed set, and so clearly the supremum exists. 

We now establish that $f^\ast$ is Scott continuous. Let $D \subseteq \pom(L)^n$ be a directed set and:
\begin{align*}
\sup_{\vec\Alpha \in D} f^\ast(\vec\Alpha)
&= \sup_{\vec\Alpha\in D}\sup_{\vec\Beta \ll \vec\Alpha} f(\vec\Beta)
\\
&= \sup \left\{ f(\vec\Beta) \mid \exists \vec\Alpha \in D.\, \vec\Beta \ll \vec\Alpha \right\}
&&\text{By \cite[Proposition 2.1.4(3)]{abramsky1995domain}.}
\\
&= \sup \left\{ f(\vec\Beta) \mid \vec\Beta \ll \sup D \right\}
&&\text{By \Cref{lem:fin-approx2}.}
\\
&= f^\ast(\sup D)
\end{align*}
\end{proof}

\section{Operations}

\subsection{Sequential Composition}

\begin{lemma}[Antitonicity of $\stuck$]\label{lem:stuck-antitone}
If $\alpha \lelpo\beta$, then $\stuck_\beta \Rightarrow \stuck_\alpha$.
\end{lemma}
\begin{proof}
For any $z\in \Bot_\beta$, either $z \in \Bot_\alpha$ or $z \notin N_\alpha$. In the first case, $\varphi_\beta(z) = \varphi_\alpha(z)$, being $\alpha \lelpo \beta$; in the second case, by \Cref{lem:lelpo-missing}, there is some $w \in \Bot_\alpha$ such that $w <_\beta z$, which implies that $\varphi_\beta(z) \Rightarrow\varphi_\alpha(w)$. This means that every term in $\bigvee_{z \in \Bot_\beta} \varphi_\beta(z)$ implies some term in $\bigvee_{y\in \Bot_\alpha} \varphi_\alpha(y)$, therefore $\bigvee_{z \in \Bot_\beta} \varphi_\beta(z) \Rightarrow \bigvee_{y\in \Bot_\alpha} \varphi_\alpha(y)$.
\end{proof}

\begin{lemma}[Monotonicity of $\extens$]\label{lem:extens-mono}
If $\alpha \lelpo \beta$, then $\extens_\alpha = \{ x \in \extens_\beta \mid \varphi_\beta(x) \not\Rightarrow \stuck_\alpha\}$.
\end{lemma}
\begin{proof}
We show the inclusion in both directions.

Take any $x\in\extens_\alpha$, this means that $x \in N_\alpha$ and $\varphi_\alpha(x) \not\Rightarrow \stuck_\alpha$; so, there is a valuation $v$ such that $v \vDash \varphi_\alpha(x)$ and $v \not\vDash \stuck_\alpha$. By \Cref{lem:stuck-antitone}, $\stuck_\beta \Rightarrow \stuck_\alpha$, therefore $v$ cannot satisfy $\stuck_\beta$ either. Since $\varphi_\alpha(x) = \varphi_\beta(x)$ (because $\alpha \lelpo \beta$), we have that 
$\varphi_\beta(x) \not\Rightarrow \stuck_\alpha$ and that
$v \vDash \varphi_\beta(x)$ and $v \not\vDash \stuck_\beta$; therefore, $\varphi_\beta(x) \not\Rightarrow \stuck_\beta$ and so $x\in \extens_\beta$.

For the reverse inclusion, take any $x \in \extens_\beta$ such that $\varphi_{\beta}(x) \not\Rightarrow \stuck_\alpha$. By \Cref{lem:lelpo-missing}, we know that either $x\in N_\alpha$ or $x \in \succplus{\Bot_\alpha}_\beta$. In the former case, we know that $\varphi_{\alpha}(x) = \varphi_\beta(x) \not\Rightarrow \stuck_\alpha$, therefore $x \in \extens_\alpha$. The latter case is a contradiction, since $x \in \succplus{\Bot_\alpha}_\beta$ means that there is some $y \in \Bot_\alpha$ such that $y <_\beta x$, which entails that $\varphi_\beta(x) \Rightarrow \varphi_\beta(y)$; however, $\varphi_\beta(y) = \varphi_\alpha(y) \Rightarrow \stuck_\alpha$ and, by combining these facts, we get that $\varphi_\beta(x) \Rightarrow \stuck_\alpha$.
\end{proof}

\begin{restatable}[Monotonicity of $\br$]{lemma}{brmono}\label{lem:br-mono}
If $\alpha \lelpo \beta$, then $\br_\alpha = \{ \psi \in \br_\beta \mid \psi \Rightarrow \lnot \stuck_\alpha \}$.
\end{restatable}
\begin{proof}

We show the inclusion in both directions.

For the forward inclusion, take $\psi \in \br_\alpha$, i.e. $\psi = \varphi_\alpha(S)$, for some $S \subseteq_\fin \extens_\alpha$ maximally satisfiable and that does not imply $\stuck_\alpha$; we want to show that $\psi = \varphi_\beta(S) \in \br_\beta$.
By \Cref{lem:extens-mono}, we know that $\extens_\alpha \subseteq \extens_\beta$, therefore $S \subseteq_\fin \extens_\beta$. Further, since $\varphi_\alpha(x) = \varphi_\beta(x)$ for all $x\in N_\alpha$, then $\varphi_\beta(S) = \varphi_\alpha(S) \Rightarrow \lnot \stuck_\alpha$; by using the contrapositive of \Cref{lem:stuck-antitone}, we get that $\varphi_\beta(S) \Rightarrow \lnot \stuck_\beta$.
Last, take any $T$ such that $S \subset T \subseteq \extens_\beta$. If $T \subseteq \extens_\alpha$, we already know that $\varphi_\alpha(T)$ cannot be satisfiable; so suppose 
that there is some node $x \in T$ such that $x \notin \extens_\alpha$, meaning that either $x \notin N_\alpha$ or $\varphi_\alpha(x) \Rightarrow \stuck_\alpha$.
In the first case, by \Cref{lem:lelpo-missing} $x \in \succplus{\Bot_\alpha}_\beta$, so there is a $z \in \Bot_\alpha$ such that $z <_\beta x$, which also means that $\varphi_\beta(x) \Rightarrow \varphi_\beta(z) \Rightarrow \stuck_\alpha$. So, in either case, we have $\varphi_\beta(x) \Rightarrow \stuck_\alpha$, while, since $\varphi_\alpha(S) \in\br_\alpha$, we know that $\varphi_\alpha(S) \Rightarrow \lnot\stuck_\alpha$. These give us:
\begin{align*}
  \varphi_\beta(T)
  &\Rightarrow \varphi_\beta(S) \land \varphi_\beta(x)
  \\
  &= \varphi_\alpha(S) \land \varphi_\beta(x)
  \\
  &\Rightarrow \lnot\stuck_\alpha \land \stuck_\alpha  
  \\
  &\Rightarrow \fls
\end{align*}
So, $\varphi_\beta(T)$ is not satisfiable, and therefore we have shown that $\varphi_\beta(S) \in \br_\beta$ (and, clearly, $\varphi_\beta(S) \Rightarrow\lnot\stuck_\alpha$).

For the reverse inclusion, take any $\psi\in \br_\beta$ such that $\psi \Rightarrow \lnot \stuck_\alpha$. We know that $\psi = \varphi_\beta(S)$ where $S \subseteq \extens_\beta$. By \Cref{lem:extens-mono}, 
for each $x \in S$, either $x\in\extens_\alpha$ or $\varphi_\beta(x) \Rightarrow \stuck_\alpha$. But the latter case is not possible, otherwise we would be able to contradict $\varphi_\beta(S) = \psi \Rightarrow \lnot\stuck_\alpha$. Therefore, $S \subseteq \extens_\alpha$. Now take any $T$ such that $S \subset T \subseteq_\fin \extens_\alpha$. Since $\psi \in \br_\beta$, we know that $\varphi_\beta(T)$ is not satisfiable and, since $\varphi_\alpha(T) = \varphi_\beta(T)$, then $\varphi_\alpha(T)$ is not satisfiable either. Therefore, $\psi\in \br_\alpha$.
\end{proof}

\begin{lemma}[Monotonicity of $\fatsemi$ for \lpof s]\label{lem:mon-seq-lpo}
For all $\alpha,\alpha',\beta,\beta' \in \lpo_\fin(L)$ and $f \in \copylpo_{\alpha',\beta'}$ such that $\alpha \lelpo \alpha'$ and $\beta\lelpo\beta'$, there exists $g\in\copylpo_{\alpha,\beta}$ such that $\alpha \fatsemi_g\beta \lelpo \alpha'\fatsemi_f\beta'$.
\end{lemma}
\begin{proof}
Let $\beta'_\psi = f(\psi)$ for each $\psi \in \br_{\alpha'}$. Since $\beta' \equiv f(\psi)$ for all $\psi$, then for each $\psi \in \br_{\alpha'}$ there exists a bijection $f_\psi \colon N_{\beta'} \to N_{\beta'_\psi}$. Since $\beta \lelpo \beta'$, then $N_\beta \subseteq N_{\beta'}$; so, we can restrict $f_\psi$ to $N_\beta$ and obtain an lpo-isomorphism $g_\psi$ (that clearly satisfies $g_\psi(x) = f_\psi(x)$ for all $x \in N_\beta$). Now, let $\beta_\psi$ be an isomorphic copy of $\beta$ generated by the bijection $g_\psi$. Clearly, $N_{\beta_\psi} \cap N_\alpha = \emptyset$, since $N_\alpha \subseteq N_{\alpha'}$ and $N_{\beta_\psi} \subseteq N_{\beta'_\psi}$; similarly, all the $\beta_\psi$s have pairwise disjoint nodes. By \Cref{lem:br-mono}, $\br_\alpha \subseteq \br_{\alpha'}$; so, we have such a $\beta_\psi$ for all $\psi \in \br_\alpha$, and we can define $g \in \copylpo_{\alpha,\beta}$ as $g(\psi) = \beta_\psi$ for all $\psi \in \br_\alpha$.
Also, $\beta_\psi = g_\psi(\beta) \lelpo f_\psi(\beta') = \beta'_\psi$ for all $\psi \in \br_\alpha$, since $\beta \lelpo \beta'$ and $g_\psi \subseteq f_\psi$.

We start with proving that $N_{\alpha\fatsemi_g\beta} \dwclosed N_{\alpha'\fatsemi_f \beta'}$. Take any $x \in N_{\alpha\fatsemi_g\beta}$; so, either $x \in N_\alpha (\subseteq N_{\alpha'})$ or $x \in N_{\beta_\psi} (\subseteq N_{\beta'_\psi})$, for some $\psi \in \br_\alpha$. Clearly $x \in N_{\alpha'\fatsemi_f \beta'}$, so the subset inclusion holds; we now show that it is downward closed. Take any $y <_{\alpha' \fatsemi_f \beta'} x$.
If $x \in N_\alpha$, then clearly $y <_{\alpha'} x$, since $\fatsemi$ does not add any causality into elements of $N_{\alpha'}$, and we therefore conclude that $y \in N_\alpha \subseteq N_{\alpha \fatsemi_g\beta}$, since $N_\alpha \dwclosed N_{\alpha'}$.
If instead $x \in N_{\beta_\psi}$ for some $\psi\in \br_\alpha$, then we know by \Cref{lem:br-mono} that $\psi \in \br_{\alpha'}$ and therefore $x \in N_{\beta'_\psi} \subseteq N_{\alpha'\fatsemi_f \beta'}$. Since $y <_{\alpha' \fatsemi_f \beta'} x$, then either $y <_{\beta'_\psi} x$, or $\psi \Rightarrow \varphi_{\alpha'}(y)$. In the former case, then clearly $y \in N_{\beta_\psi} \subseteq N_{\alpha\fatsemi_g\beta}$ since $N_{\beta_\psi} \dwclosed N_{\beta'_\psi}$. In the latter case, suppose for the sake of contradiction that $y \notin N_\alpha$; then, by \Cref{lem:lelpo-missing} there is some $z\in\Bot_\alpha$ such that $z <_{\alpha'} y$, which also means that $\varphi_{\alpha'}(y) \Rightarrow \varphi_{\alpha'}(z)$. Since $\alpha \lelpo \alpha'$ and $z\in\Bot_\alpha$, we have that $\psi \Rightarrow \varphi_{\alpha'}(y) \Rightarrow \varphi_{\alpha'}(z) \Leftrightarrow \varphi_\alpha(z) \Rightarrow \stuck_\alpha$; this contradicts $\psi \Rightarrow \lnot\stuck_\alpha$, that holds since $\psi \in \br_\alpha$. 

Next, we show the condition on the order:
\begin{align*}
  & \mathord{<_{\alpha'\fatsemi_f\beta'}} \cap (N_{\alpha\fatsemi_g\beta} \times N_{\alpha\fatsemi_g\beta})
  \\
   &\qquad = \left( \mathord{<_{\alpha'}}  \cup
   \bigcup_{\psi \in \br_{\alpha'}} \left( \mathord{<_{\beta'_{\psi}}}
      \cup (\{ x \mid \psi \Rightarrow \varphi_{\alpha'}(x) \} \times N_{\beta'_\psi}) \right) \right)
       \cap (N_{\alpha\fatsemi_g\beta} \times N_{\alpha\fatsemi_g\beta})
   \\
   &\qquad = \left(\mathord{<_{\alpha'}} \cap (N_{\alpha\fatsemi_g\beta} \times N_{\alpha\fatsemi_g\beta})\right)
       \\
       &\qquad \cup \bigcup_{\psi \in \br_{\alpha'}}
         \left( \mathord{<_{\beta'_{\psi}}} \cap (N_{\alpha\fatsemi_g\beta} \times N_{\alpha\fatsemi_g\beta}) \right)
          \cup \left( (\{ x \mid \psi \Rightarrow \varphi_{\alpha'}(x) \} \times N_{\beta'_\psi}) \cap (N_{\alpha\fatsemi_g\beta} \times N_{\alpha\fatsemi_g\beta}) \right) \qquad (\dagger)
\end{align*}
Now, $N_\alpha$ is a subset of $N_{\alpha'}$ and it is disjoint from all the $N_{\beta_\psi}$; moreover, $\alpha \lelpo \alpha'$. So,
\[
\mathord{<_{\alpha'}} \cap (N_{\alpha\fatsemi_g\beta} \times N_{\alpha\fatsemi_g\beta}) = \mathord{<_{\alpha'}} \cap (N_{\alpha} \times N_{\alpha}) = \mathord{<_{\alpha}}
\]
Then, by \Cref{lem:br-mono} $\br_{\alpha} \subseteq \br_{\alpha'}$, and, for all $\psi \in \br_{\alpha'} \setminus \br_{\alpha}$, the nodes of $\beta'_\psi$ do not appear in $N_{\alpha \fatsemi_g \beta}$; hence, we can reduce the bounds of the union in ($\dagger$) and obtain:
\begin{align*}
& \bigcup_{\psi \in \br_{\alpha'}}
         \left( \mathord{<_{\beta'_{\psi}}} \cap (N_{\alpha\fatsemi_g\beta} \times N_{\alpha\fatsemi_g\beta}) \right)
          \cup \left( (\{ x \mid \psi \Rightarrow \varphi_{\alpha'}(x) \} \times N_{\beta'_\psi}) \cap (N_{\alpha\fatsemi_g\beta} \times N_{\alpha\fatsemi_g\beta}) \right)  
          \\
   &\qquad = \bigcup_{\psi \in \br_{\alpha}}
         \left( \mathord{<_{\beta'_{\psi}}} \cap (N_{\alpha\fatsemi_g\beta} \times N_{\alpha\fatsemi_g\beta}) \right)
          \cup \left( (\{ x \mid \psi \Rightarrow \varphi_{\alpha'}(x) \} \times N_{\beta'_\psi}) \cap (N_{\alpha\fatsemi_g\beta} \times N_{\alpha\fatsemi_g\beta}) \right)
   \\
   &\qquad = \bigcup_{\psi \in \br_{\alpha}}
         \left( \mathord{<_{\beta'_{\psi}}} \cap (N_{\beta_\psi} \times N_{\beta_\psi}) \right)
          \cup  (\{ x \mid \psi \Rightarrow \varphi_{\alpha'}(x) \} \times N_{\beta_\psi})
   \\
   & \qquad = \bigcup_{\psi \in \br_{\alpha}}
         \left(\mathord{<_{\beta_{\psi}}}
          \cup  (\{ x \mid \psi \Rightarrow \varphi_{\alpha}(x) \} \times N_{\beta_\psi})\right)
\end{align*}
where the last two equalities hold because $\beta_\psi \lelpo \beta'_\psi$ and because the nodes of $\beta_\psi'$ are disjoint from those of $\alpha$ and of any other isomorphic copy of $\beta$.
Thus, ($\dagger$) is equal to $\mathord{<_{\alpha}} \cup \bigcup_{\psi \in \br_{\alpha}} \left ( \mathord{<_{\beta_{\psi}}} \cup  (\{ x \mid \psi \Rightarrow \varphi_{\alpha}(x) \} \times N_{\beta_\psi})\right )$ that, by definition, is $\mathord{<_{\alpha\fatsemi\beta}}$.

Now, take any $x \in N_{\alpha \fatsemi_g \beta}$. 
Since $\alpha \lelpo \alpha'$ and $\beta_\psi \lelpo \beta_{\psi}'$, we have that $\lambda_\alpha(x) \le \lambda_{\alpha'}(x)$, if $x \in N_\alpha$, and $\lambda_{\beta_\psi}(x) \le \lambda_{\beta'_\psi}(x)$, if $x \in N_{\beta_\psi}$. This gives us:
\begin{align*}
  \lambda_{\alpha\fatsemi_g \beta}(x)
  &= \left\{
    \begin{array}{ll}
      \lambda_\alpha(x) & \text{if} ~ x \in N_\alpha
      \\
      \lambda_{\beta_\psi}(x) & \text{if} ~ x \in N_{\beta_\psi}
    \end{array}
  \right\}\ 
  \le\ \left\{
    \begin{array}{ll}
      \lambda_{\alpha'}(x) & \text{if} ~ x \in N_\alpha
      \\
      \lambda_{\beta'_\psi}(x) & \text{if} ~ x \in N_{\beta_\psi}
    \end{array}
  \right\}
  = \lambda_{\alpha'\fatsemi_f \beta'}(x)
\end{align*}
Similarly, we have that $\varphi_\alpha(x) = \varphi_{\alpha'}(x)$, if $x \in N_\alpha$, and $\varphi_{\beta_\psi}(x) = \varphi_{\beta'_\psi}(x)$, if $x \in N_{\beta_psi}$; so
\begin{align*}
  \varphi_{\alpha\fatsemi_g \beta}(x)
  &= \left\{
    \begin{array}{ll}
      \varphi_\alpha(x) & \text{if} ~ x \in N_\alpha
      \\
      \varphi_{\beta_\psi}(x) \land \psi & \text{if} ~ x \in N_{\beta_\psi}
    \end{array}
  \right\}\ 
  = \ \left\{
    \begin{array}{ll}
      \varphi_{\alpha'}(x) & \text{if} ~ x \in N_\alpha
      \\
      \varphi_{\beta'_\psi}(x) \land\psi & \text{if} ~ x \in N_{\beta_\psi}
    \end{array}
  \right\}
  = \varphi_{\alpha'\fatsemi_f \beta'}(x)
\end{align*}
For successors, we know by \Cref{lem:succBot}(3) that $\succ_{\alpha\fatsemi_g\beta}(x) \subseteq \succ_{\alpha'\fatsemi_f\beta'}(x) \setminus \succplus{\Bot_{\alpha\fatsemi_g\beta}}_{\alpha'\fatsemi_f\beta'}$, we now must show the reverse inclusion.
Take any $y\in \succ_{\alpha'\fatsemi_f\beta'}(x)$ such that $y \notin \succplus{\Bot_{\alpha\fatsemi_g\beta}}_{\alpha'\fatsemi_f\beta'}$. Given the way that $<_{\alpha'\fatsemi_f\beta'}$ is constructed, there are three cases to consider:

\begin{enumerate}

\item $y \in \succ_{\alpha'}(x)$. Since $\alpha \lelpo\alpha'$, we know that $\succ_\alpha(x) = \succ_{\alpha'}(x)\setminus \succplus{\Bot_\alpha}_{\alpha'}$. However, $y \notin \succplus{\Bot_{\alpha}}_{\alpha'}$ (since $\succplus{\Bot_{\alpha}}_{\alpha'} \subseteq \succplus{\Bot_{\alpha\fatsemi_g\beta}}_{\alpha'\fatsemi_f\beta'}$); therefore, it must be that $y \in \succ_\alpha(x)$, and so $y \in \succ_{\alpha\fatsemi_g\beta}(x)$.

\medskip
\item $y \in \succ_{\beta'_\psi}(x)$ for some $\psi \in \br_\alpha$. This means that $x,y \in N_{\beta'_\psi}$. Let $x' = f_\psi^{-1}(x)$ and $y' = f_\psi^{-1}(y)$, so $x', y' \in N_{\beta'}$.
Since $\beta' \equiv \beta'_\psi$, we know that $y' \in \succ_{\beta'}(x')$. Since $\beta \lelpo \beta'$, then $\succ_\beta(x') = \succ_{\beta'}(x') \setminus \succplus{\Bot_{\beta}}_{\beta'}$. However, since $y \notin \succplus{\Bot_{\alpha\fatsemi\beta}}_{\alpha'\fatsemi\beta'}$, then it must be that $y' \notin \succplus{\Bot_\beta}_{\beta'}$; so we must have that $y' \in \succ_\beta(x')$. Since $x',y' \in N_\beta$, then $g_\psi(x') = f_\psi(x') = x$ and $g_\psi(y') = f_\psi(y') = y$; so, $y \in \succ_{\beta_\psi}(x)$ and therefore $y \in \succ_{\alpha\fatsemi_g\beta}(x)$.

\medskip
\item $x \in N_{\alpha}$ and $y \in N_{\beta'_\psi}$ for some $\psi \in \br_{\alpha'}$.
By \Cref{lem:br-mono}, either $\psi \in \br_\alpha$ or $\psi \not\Rightarrow \lnot\stuck_\alpha$.

For the first case, $x <_{\alpha'\fatsemi_f\beta'} y$, $x \in N_{\alpha'}$ and $y \in N_{\beta'_\psi}$ imply that $\psi \Rightarrow \varphi_{\alpha'}(x) = \varphi_\alpha(x)$. Furthermore, $y$ must be the root of $\beta'_\psi$, otherwise it could not be a successor of $x \in N_{\alpha'}$. Since $\beta'_\psi$ is single-rooted and $\beta_\psi \lelpo \beta'_\psi$, then $y$ must also be the root of $\beta_\psi$; so, $x <_{\alpha \fatsemi_g \beta} y$ and, by \Cref{lem:succBot}(1), $y \in \succ_{\alpha\fatsemi_g\beta}(x)$.

To conclude, we show that the second case is impossible. Suppose that $\psi \not\Rightarrow\lnot \stuck_\alpha$.
From $\psi \in \br_{\alpha'}$ we also know that $\psi \Rightarrow \lnot\stuck_{\alpha'}$. Therefore, there is some $v$ such that $v \vDash \psi$, $v \vDash \stuck_\alpha$ and $v \not\vDash \stuck_{\alpha'}$. From $v\vDash\stuck_\alpha$, we get that $v \vDash \varphi_\alpha(z)$, for some $z\in\Bot_\alpha$; so $v\vDash \varphi_{\alpha'}(z)$ too (since $\varphi_{\alpha}(z) = \varphi_{\alpha'}(z)$).
Since $v\vDash \varphi_{\alpha'}(z)$ and $v\not\vDash\stuck_{\alpha'}$, then $\varphi_{\alpha'}(z) \not\Rightarrow\stuck_{\alpha'}$; therefore $z \in \extens_{\alpha'}$.
Moreover, since $v \vDash \psi$ and $v \vDash \varphi_{\alpha'}(z)$, we know that $\psi \land \varphi_{\alpha'}(z)$ is satisfiable; since $S$ is maximal, then $z\in S$. Therefore, $\psi \Rightarrow \varphi_{\alpha'}(z)$, and so $z <_{\alpha'\fatsemi_f\beta'} y$ (by the definition of $<_{\alpha'\fatsemi_f\beta'}$), in contradiction with $y \notin \succplus{\Bot_{\alpha\fatsemi_g\beta}}_{\alpha'\fatsemi_f\beta'}$. 
\qedhere
\end{enumerate}
\end{proof}

\begin{restatable}[Monotonicity of $\fatsemi$]{lemma}{monSeq}
\label{lem:monSeq}
If $\Alpha \lepom \Alpha'$ and $\Beta \lepom \Beta'$, then $\Alpha \fatsemi \Beta \lepom \Alpha' \fatsemi \Beta'$
\end{restatable}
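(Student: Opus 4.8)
The plan is to reduce monotonicity at the level of pomsets to a statement about \lpof s, and then verify the six conditions of \Cref{def:lelpo} directly. Concretely, suppose $\Alpha \lepom \Alpha'$ and $\Beta \lepom \Beta'$, and fix any $\gamma \in \Alpha \fatsemi \Beta$; we must produce some $\gamma' \in \Alpha' \fatsemi \Beta'$ with $\gamma \lelpo \gamma'$. By definition, $\gamma = \alpha \fatsemi_f \beta$ for some $\alpha \in \Alpha$, $\beta \in \Beta$, $f \in \copylpo_{\alpha,\beta}$. Using $\Alpha \lepom \Alpha'$ we get $\alpha' \in \Alpha'$ with $\alpha \lelpo \alpha'$, and similarly $\beta' \in \Beta'$ with $\beta \lelpo \beta'$; after renaming we may assume the fresh nodes are chosen compatibly so that $f$ extends to some $f' \in \copylpo_{\alpha',\beta'}$ that agrees with $f$ on the branches of $\alpha$ (possible because $\beta \lelpo \beta'$ means each copy $\beta_\psi$ sits downward-closed inside $\beta'_\psi$). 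I would then set $\gamma' = \alpha' \fatsemi_{f'} \beta'$ and claim $\gamma \lelpo \gamma'$.

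The heart of the argument is \Cref{lem:br-mono}: since $\alpha \lelpo \alpha'$, we have $\br_\alpha = \{\psi \in \br_{\alpha'} \mid \psi \Rightarrow \lnot\stuck_\alpha\}$, so the branches of $\alpha$ are exactly the branches of $\alpha'$ that avoid the newly-stuck region. This is what makes $N_\gamma \dwclosed N_{\gamma'}$ (condition~\ref{def:lelpo:N}): a node of $\gamma'$ outside $\gamma$ is either a node of $\alpha' \setminus \alpha$ — which by \Cref{lem:lelpo-missing} lies in $\succplus{\Bot_\alpha}_{\alpha'}$, hence downward-reachable only from $\Bot$ nodes — or it lies in a copy $\beta'_\psi$, either for a branch $\psi \in \br_{\alpha'} \setminus \br_\alpha$ (entirely new, and its predecessors in $\gamma'$ include $\alpha'$-nodes with $\psi \Rightarrow \varphi_{\alpha'}(x)$, so downward closure is fine) or inside $\beta'_\psi \setminus \beta_\psi$ for a surviving $\psi$, which is handled by $\beta \lelpo \beta'$. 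Conditions~\ref{def:lelpo:<} and (3b) follow by unwinding the definitions of $<$ and $\varphi$ in $\fatsemi_f$ and using the corresponding equalities for $\alpha \lelpo \alpha'$ and $\beta \lelpo \beta'$, together with the fact that the extra conjunct $\psi$ attached to $\beta_\psi$-nodes is the same branch formula in both composites. Condition~\ref{def:lelpo:lambda} ($\lambda_\gamma(x) \le \lambda_{\gamma'}(x)$) is immediate since labels are inherited unchanged from $\alpha,\alpha'$ or $\beta,\beta'$.

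The step I expect to be the main obstacle is condition~\ref{def:lelpo:succ}, $\succ_\gamma(x) = \succ_{\gamma'}(x) \setminus \succplus{\Bot_\gamma}_{\gamma'}$. One has to carefully track how immediate successors change: for $x \in N_\alpha$, its successors in $\gamma$ are its $<_\alpha$-successors together with the minimal nodes of those $\beta_\psi$ with $x$ minimal among $\alpha$-nodes satisfying $\psi$; in $\gamma'$ the analogous set is computed with respect to $\alpha'$ and the larger branch set $\br_{\alpha'}$, and the freshly-stuck branches together with the $\succplus{\Bot_\alpha}_{\alpha'}$ nodes are exactly what gets subtracted off by $\succplus{\Bot_\gamma}_{\gamma'}$. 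Verifying that $\Bot_\gamma = \Bot_\alpha \cup \bigcup_\psi \Bot_{\beta_\psi}$ and that the new nodes of $\gamma'$ all lie above some node of this set — which reduces to the two given facts via \Cref{lem:lelpo-missing} — is the delicate bookkeeping. Once condition~\ref{def:lelpo:succ} is in place, the remaining requirements are routine, and a symmetric check shows no spurious edges are introduced between old and new parts beyond those mediated by $\Bot$ nodes, completing $\gamma \lelpo \gamma'$ and hence $\Alpha \fatsemi \Beta \lepom \Alpha' \fatsemi \Beta'$.
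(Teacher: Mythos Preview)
Your approach is correct and closely parallels the paper's, with one notable difference in orientation. You start from an arbitrary $\gamma \in \Alpha \fatsemi \Beta$ and build a larger $\gamma' \in \Alpha' \fatsemi \Beta'$, working in the defining direction of $\lepom$. The paper instead invokes the equivalent characterization $\Alpha \lepom \Beta \iff \forall \beta \in \Beta.\,\exists \alpha \in \Alpha.\ \alpha \lelpo \beta$ (\Cref{lem:altLePom}(3)): it fixes $\gamma' = \alpha' \fatsemi_f \beta' \in \Alpha' \fatsemi \Beta'$, finds $\alpha \lelpo \alpha'$ and $\beta \lelpo \beta'$, and then \emph{restricts} $f$ to some $g \in \copylpo_{\alpha,\beta}$ (this is the dedicated \lpof-level \Cref{lem:mon-seq-lpo}). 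Going backward buys a cleaner construction of the copy function: since $\br_\alpha \subseteq \br_{\alpha'}$, one simply drops the extra branches, and for each surviving $\psi$ one carves out a copy of $\beta$ inside the already-given copy $f(\psi) \equiv \beta'$; all disjointness constraints are then automatic. Your forward direction works as well, but the phrase ``after renaming we may assume the fresh nodes are chosen compatibly'' hides nontrivial bookkeeping (one must choose $N_{\alpha'}\setminus N_\alpha$, the extensions $N_{\beta'_\psi}\setminus N_{\beta_\psi}$ for old branches, and fresh copies for the new branches in $\br_{\alpha'}\setminus\br_\alpha$, all pairwise disjoint and disjoint from $N_{\alpha'}$). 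Both routes rightly single out \Cref{lem:br-mono} as the key structural fact and the successor condition~(\ref{def:lelpo:succ}) as the main case analysis; the paper's proof of that case (third case of \Cref{lem:mon-seq-lpo}) confirms your intuition that the delicate point is when $x\in N_\alpha$ acquires a new successor in some $\beta'_\psi$ with $\psi\in\br_{\alpha'}\setminus\br_\alpha$.
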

\begin{proof}
Take any $\gamma \in \Alpha'\fatsemi\Beta'$; by construction, $\gamma = \alpha' \fatsemi_f \beta'$, for some $\alpha' \in \Alpha'$, $\beta'\in\Beta'$, and $f \in \copylpo_{\alpha',\beta'}$. By \Cref{lem:altLePom}(3), there exist $\alpha \in \Alpha$ and $\beta\in\Beta$ such that $\alpha\lelpo\alpha'$ and $\beta\lelpo\beta'$. So, by \Cref{lem:mon-seq-lpo} there exists a $g \in \copylpo_{\alpha,\beta}$ such that $\alpha\fatsemi_g \beta \lelpo \alpha'\fatsemi_f \beta'$. So, by \Cref{lem:altLePom}(3) $\Alpha\fatsemi\Beta \lepom \Alpha'\fatsemi\Beta'$.
\end{proof}

\begin{corollary}
\label{cor:seqCont}
$\fatsemi$ is Scott continuous.
\end{corollary}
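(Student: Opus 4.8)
The plan is to observe that \Cref{cor:seqCont} is an immediate instance of the Extension Lemma (\Cref{lem:extension}), exactly as \Cref{cor:guardCont} was for $\guard$. Concretely, I would take the \dcpo\ $\tuple{T,\le}$ of the Extension Lemma to be $\tuple{\pom(L), \lepom}$, which is a pointed \dcpo\ by \Cref{lem:pomdcpo}, and take $f \colon \pom_\fin(L)^2 \to \pom(L)$ to be sequential composition restricted to finite pomsets, $f(\Alpha, \Beta) \triangleq \Alpha \fatsemi \Beta$. Two small sanity checks precede the invocation: first, that $f$ really is a function into $\pom(L)$, i.e.\ that $\{ \alpha \fatsemi_f \beta \mid \alpha\in\Alpha,\ \beta\in\Beta,\ f\in\copylpo_{\alpha,\beta} \}$ is a single isomorphism class (any two choices of representatives and of copy functions yield isomorphic \lpof s); and second, that $f(\Alpha,\Beta)$ is finite whenever $\Alpha,\Beta$ are, which holds because $\br_\alpha$ is finite for finite $\alpha$, so only finitely many copies of $\beta$ are adjoined.

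With $f$ fixed, the hypothesis of \Cref{lem:extension} is precisely \Cref{lem:monSeq}: $f$ is monotone. The Extension Lemma then yields that $f^\ast \colon \pom(L)^2 \to \pom(L)$, given by
\[
  f^\ast(\Alpha, \Beta) = \sup_{\Alpha' \ll \Alpha}\ \sup_{\Beta' \ll \Beta}\ \Alpha' \fatsemi \Beta',
\]
is well defined and Scott continuous. But the right-hand side is, verbatim, the definition of $\Alpha \fatsemi \Beta$ for infinite pomsets; and on finite arguments $f^\ast$ coincides with $f$, since for finite $\Alpha$ we have $\Alpha \ll \Alpha$ (by \Cref{lem:fin-approx}) and monotonicity makes $\Alpha \fatsemi \Beta$ the greatest element appearing in the double supremum. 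Hence $\fatsemi$ and $f^\ast$ agree on all of $\pom(L)$, so $\fatsemi$ is Scott continuous, and the proof body can be the one-liner ``Immediate from \Cref{lem:monSeq,lem:extension}.''

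The genuine content is therefore entirely discharged by the lemmas we cite, and that is where I expect the difficulty to sit. Monotonicity of finite sequential composition (\Cref{lem:monSeq}) requires analysing how $\alpha \fatsemi_f \beta$ transforms when $\alpha$ is enlarged by expanding $\bot$-nodes into larger \lpof s: one must show the set of branches evolves as dictated by \Cref{lem:br-mono}, that the copies $\beta_\psi$ attached along branches of the smaller \lpof\ survive (and are themselves possibly enlarged) in the larger one, and then verify each clause of \Cref{def:lelpo}. I expect clause~(3c), which relates immediate successors across the two \lpof s modulo a subtraction by $\succplus{\Bot_\alpha}$, to be the delicate one, since adjoining the $\beta_\psi$'s creates new predecessor/successor relationships that have to be reconciled with that term. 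Given those lemmas, however, \Cref{cor:seqCont} follows with no further work.
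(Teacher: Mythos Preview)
Your proposal is correct and matches the paper's proof exactly: the paper's argument is the one-liner ``Immediate from \Cref{lem:monSeq,lem:extension},'' and your elaboration correctly identifies that the infinite-pomset definition of $\fatsemi$ is literally $f^\ast$ for $f$ the finite-pomset sequential composition, so Scott continuity is handed to you by the Extension Lemma once monotonicity is in place. One minor citation quibble: the fact that a finite $\Alpha$ satisfies $\Alpha \ll \Alpha$ is compactness, which in the paper is established separately (not \Cref{lem:fin-approx} itself), but this does not affect the substance of your argument.
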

\begin{proof}
Immediate from \Cref{lem:extension,lem:monSeq}.
\end{proof}

\subsection{Guarded Branching}

\begin{lemma}[Monotonicity of $\guard$ for \lpof s]
\label{lem:monGuardLPO}
For every $\ell \in L$, $x \in \Nodes$ and $\alpha, \alpha', \beta, \beta' \in \lpo_\fin(L)$ such that $\alpha \lelpo \alpha'$, $\beta \lelpo \beta'$, $N_{\alpha' }\cap N_{\beta'} = \emptyset$ and $x\not\in N_{\alpha' }\cup N_{\beta'}$, it holds that $\guard(x,\ell,\alpha,\beta) \lelpo \guard(x,\ell,\alpha',\beta')$.
\end{lemma}
\begin{proof}
To lighten notations, let us fix $\ell$, $\alpha, \alpha', \beta, \beta'$ and $x$, 
and denote $\gamma \triangleq \guard(x,\ell,\alpha,\beta)$ and $\gamma' \triangleq \guard(x,\ell,\alpha',\beta')$.

First, being $N_\alpha \subseteq N_\beta$ and $N_{\alpha'} \subseteq N_{\beta'}$, we also have that
$N_{\gamma} = \{x\} \cup N_\alpha \cup N_\beta \subseteq \{x\} \cup N_{\alpha'} \cup N_{\beta'} = N_{\gamma'}$;
we have to prove that $N_{\gamma}$ is downward closed.
Let $y \in N_{\gamma}$ and $z <_{\gamma'} y$; clearly, $y \neq x$, because $\pred_{\gamma'}(x) = \emptyset$.
Assume that  $y \in N_\alpha$ ($\subseteq N_{\alpha'}$); the case for $y \in N_\beta$ is identical.
If $z = x$, we trivially conclude. If $z \in N_{\alpha'}$, we have $z <_{\alpha'} y$, which leads to $z \in N_\alpha$ (because $N_\alpha \dwclosed N_{\alpha'}$); hence, $z \in N_\gamma$.

Second, 
\[
\begin{array}{lcl}
\multicolumn{3}{l}{<_{\gamma'} \cap\ (N_{\gamma} \times N_{\gamma})}
\vspace*{.15cm}
\\
\quad & = &
(<_{\alpha'} \cup <_{\beta'} \cup\ (\{x\} \times (N_{\alpha'} \cup N_{\beta'})))\ \cap\ (N_{\gamma} \times N_{\gamma})
\vspace*{.15cm}
\\
\quad & = &
(<_{\alpha'} \cap\ ( (\{x\} \cup N_\alpha \cup N_\beta) \times (\{x\} \cup N_\alpha \cup N_\beta)))
\\
&& \cup\ (<_{\beta'} \cap\ ( (\{x\} \cup N_\alpha \cup N_\beta) \times (\{x\} \cup N_\alpha \cup N_\beta)))
\\
&& \cup\ 
((\{x\} \times (N_{\alpha'} \cup N_{\beta'})) \cap ( (\{x\} \cup N_\alpha \cup N_\beta) \times (\{x\} \cup N_\alpha \cup N_\beta)))
\vspace*{.15cm}
\\
\quad & = &
(<_{\alpha'} \cap\ ( N_\alpha \times N_\alpha))
\ \cup\ (<_{\beta'} \cap\ ( N_\beta \times N_\beta))
\ \cup\ (\{x\} \times (N_{\alpha} \cup N_{\beta}))
\vspace*{.15cm}
\\
&=&
<_\alpha \cup <_\beta \cup \ (\{x\} \times (N_\alpha \cup N_\beta))
\ \ =\ \ <_{\gamma}
\end{array}
\]

Fix $y \in N_{\gamma}$; we have three subcases to consider:
\begin{enumerate}
\item $y=x$: by construction, $\lambda_{\gamma}(y) = \lambda_{\gamma'}(y) = \ell$ and $\varphi_\gamma(y) = \varphi_{\gamma'} = \tru$.
\item $y \in N_\alpha$: then, $\lambda_{\gamma}(y) = \lambda_\alpha(y) \leq \lambda_{\alpha'}(y) = \lambda_{\gamma'}(y)$ and  
$\varphi_{\gamma}(y) = \varphi_\alpha(y) \wedge x = \varphi_{\alpha'}(y) \wedge x = \lambda_{\gamma'}(y)$.
\item $y \in N_\beta$: this case is similar to previous one, with $\neg x$ in place of $x$ for the $\varphi$ case.
\end{enumerate}
To conclude, we need to prove that $\succ_\gamma(y) \supseteq \succ_{\gamma'}(y) \setminus \succplus{\Bot_\gamma}_{\gamma'}$ (since the other inclusion holds for \Cref{lem:succBot}). Thus, take $z \in \succ_{\gamma'}(y)$; we have three cases to consider:
\begin{enumerate}
\item $y=x$: by construction, $z \in \{\min_{\alpha'},\min_{\beta'}\}$ and so we immediately conclude $z \in \succ_{\gamma}(y)$, being $\min_\alpha = \min_{\alpha'}$ and $\min_\beta = \min_{\beta'}$.

\item $y \in N_{\alpha}$: by construction, $z \in \succ_{\alpha'}(y)$.
Since $\alpha \lelpo \alpha'$, we know that $\succ_\alpha(y) = \succ_{\alpha'}(y) \setminus \succplus{\Bot_\alpha}_{\alpha'}$.
So, either $z \in \succ_{\alpha}(y) = \succ_{\gamma}(y)$ or $z \in \succplus{\Bot_\alpha}_{\alpha'} \subseteq \succplus{\Bot_\gamma}_{\gamma'}$.

\item $y \in N_{\beta}$: this case is similar to previous one.
\qedhere
\end{enumerate}
\end{proof}

\begin{restatable}[Monotonicity of $\guard$]{lemma}{monGuard}
\label{lem:monGuard}
If $\Alpha \lepom \Alpha'$ and $\Beta \lepom \Beta'$, then 
$\guard(\ell, \Alpha, \Beta) \lepom \guard(\ell, \Alpha',\Beta')$.
\end{restatable}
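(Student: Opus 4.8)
The plan is to unfold the definition of $\lepom$ on both sides and reduce the claim to a statement about \lpof s. Fix $\gamma \in \guard(\ell,\Alpha,\Beta)$; by definition $\gamma = \guard(x,\ell,\alpha,\beta)$ for some $\alpha \in \Alpha$, $\beta \in \Beta$ with $N_\alpha \cap N_\beta = \emptyset$ and $x \notin N_\alpha \cup N_\beta$. Since $\Alpha \lepom \Alpha'$ and $\Beta \lepom \Beta'$, there exist $\alpha' \in \Alpha'$ and $\beta' \in \Beta'$ with $\alpha \lelpo \alpha'$ and $\beta \lelpo \beta'$. A minor bookkeeping step is needed here: we may choose representatives $\alpha'$, $\beta'$ whose node sets are disjoint from each other and do not contain $x$ (we can always rename nodes by \Cref{def:equivLPO} without leaving the isomorphism classes $\Alpha'$, $\Beta'$, and without disturbing $\alpha \lelpo \alpha'$ — the definition of $\lelpo$ is not stable under renaming of only one side, so really one argues at the level of the isomorphism classes, or equivalently picks a fresh-node copy of $\alpha'$ together with the matching copy of $\alpha$ inside $\Alpha$). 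Having done this, it suffices to prove the pointwise statement:
\[
  \alpha \lelpo \alpha' \ \wedge\ \beta \lelpo \beta'
  \ \Longrightarrow\
  \guard(x,\ell,\alpha,\beta) \lelpo \guard(x,\ell,\alpha',\beta'),
\]
since then $\guard(x,\ell,\alpha',\beta') \in \guard(\ell,\Alpha',\Beta')$ is the required witness.

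To establish this pointwise implication I would simply verify the five clauses of \Cref{def:lelpo} for $\gamma = \guard(x,\ell,\alpha,\beta)$ and $\gamma' = \guard(x,\ell,\alpha',\beta')$. Clause~(\ref{def:lelpo:N}): $N_\gamma = \{x\} \cup N_\alpha \cup N_\beta$ and $N_{\gamma'} = \{x\} \cup N_{\alpha'} \cup N_{\beta'}$; since $N_\alpha \dwclosed N_{\alpha'}$, $N_\beta \dwclosed N_{\beta'}$, the new elements of $N_{\gamma'}$ lie strictly above nodes of $N_{\alpha'} \cup N_{\beta'}$ (never below $x$), so downward closure is preserved. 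Clause~(\ref{def:lelpo:<}): the order of $\gamma$ is $\mathord{<_\alpha} \cup \mathord{<_\beta} \cup (\{x\}\times(N_\alpha\cup N_\beta))$, and restricting $<_{\gamma'}$ to $N_\gamma \times N_\gamma$ recovers this, using $\mathord{<_\alpha} = \mathord{<_{\alpha'}}\cap(N_\alpha\times N_\alpha)$ and likewise for $\beta$, plus the fact that $x < y$ in $\gamma'$ for every $y \in N_{\alpha'}\cup N_{\beta'} \supseteq N_\alpha\cup N_\beta$. Clause~(3a) on labels: $\lambda_\gamma(x) = \ell = \lambda_{\gamma'}(x)$, and on the $\alpha$- and $\beta$-parts it reduces to $\lambda_\alpha(y) \le \lambda_{\alpha'}(y)$, $\lambda_\beta(y)\le\lambda_{\beta'}(y)$. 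Clause~(3b) on formulae: $\varphi_\gamma(x)=\tru=\varphi_{\gamma'}(x)$; for $y\in N_\alpha$, $\varphi_\gamma(y)=\varphi_\alpha(y)\wedge x = \varphi_{\alpha'}(y)\wedge x = \varphi_{\gamma'}(y)$ by $\varphi_\alpha(y)=\varphi_{\alpha'}(y)$, and symmetrically with $\lnot x$ on the $\beta$-part. Clause~(\ref{def:lelpo:succ}) on immediate successors: here one uses that $x$ is the unique root of $\gamma$ and $\gamma'$ and that $\Bot_\gamma = \Bot_\alpha \cup \Bot_\beta$ (the root $x$ carries label $\ell$, which we may assume is not $\bot$, or handle the degenerate case separately); then $\succ_\gamma(x) = \min_\alpha \cup \min_\beta = \succ_{\gamma'}(x)$ (unchanged, since new nodes are never minimal in $\alpha'$ or $\beta'$ by downward closure), and $\Bot_\gamma \subseteq N_\alpha \cup N_\beta$ so $\succplus{\Bot_\gamma}_{\gamma'}$ meets neither $x$ nor the opposite component; hence the identity $\succ_\gamma(y)=\succ_{\gamma'}(y)\setminus\succplus{\Bot_\gamma}_{\gamma'}$ for $y\in N_\alpha$ follows from the corresponding identity for $\alpha \lelpo \alpha'$, and similarly on the $\beta$-side.

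I expect the only genuine subtlety — and hence the main obstacle — to be the fresh-name management in the first paragraph: $\lelpo$ is defined between concrete \lpof s rather than pomsets, so one must be careful that the representatives $\alpha,\alpha',\beta,\beta'$ can be chosen simultaneously to satisfy both the disjointness/freshness constraints required by $\guard$ and the relations $\alpha\lelpo\alpha'$, $\beta\lelpo\beta'$. The clean way is: given $\gamma = \guard(x,\ell,\alpha,\beta) \in \guard(\ell,\Alpha,\Beta)$, first obtain abstract witnesses from $\Alpha\lepom\Alpha'$ and $\Beta\lepom\Beta'$, then apply an isomorphism (\Cref{def:equivLPO}) to transport the pair $(\alpha,\alpha')$ and the pair $(\beta,\beta')$ onto node sets that are pairwise disjoint and avoid $x$ — an isomorphism applied to both members of an $\lelpo$-pair preserves $\lelpo$, so nothing is lost. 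Everything after that is the routine case analysis sketched above, and the symmetry between the $\alpha$-part and the $\beta$-part means essentially one argument is written once.
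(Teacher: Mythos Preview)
Your proof is correct and follows the same overall architecture as the paper: reduce to an \lpof-level monotonicity statement for $\guard(x,\ell,-,-)$ and verify the clauses of \Cref{def:lelpo} componentwise. Your verification of those clauses matches the paper's \Cref{lem:monGuardLPO} essentially point for point.

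The one difference worth noting is the direction in which you lift to pomsets. You start from $\gamma \in \guard(\ell,\Alpha,\Beta)$ and seek $\gamma' \in \guard(\ell,\Alpha',\Beta')$ above it, using the defining clause of $\lepom$. This forces you to manufacture representatives $\alpha',\beta'$ with disjoint node sets avoiding $x$, which you correctly identify as ``the main obstacle'' and handle by renaming both members of each $\lelpo$-pair simultaneously. The paper instead uses the equivalent characterization in \Cref{lem:altLePom}(3): start from $\gamma' = \guard(x,\ell,\alpha',\beta') \in \guard(\ell,\Alpha',\Beta')$, where $N_{\alpha'}\cap N_{\beta'}=\emptyset$ and $x\notin N_{\alpha'}\cup N_{\beta'}$ are already given, and then pull back to $\alpha\in\Alpha$, $\beta\in\Beta$ with $\alpha\lelpo\alpha'$, $\beta\lelpo\beta'$. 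Since $N_\alpha\subseteq N_{\alpha'}$ and $N_\beta\subseteq N_{\beta'}$, the disjointness and freshness constraints come for free, and the bookkeeping you flagged evaporates. So your ``main obstacle'' is an artifact of the direction you chose rather than a genuine difficulty in the proof.
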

\begin{proof}
Take any $\gamma \in \guard(\ell,\Alpha',\Beta')$; by construction, $\gamma = \guard(x,\ell,\alpha', \beta')$, for some $\alpha' \in \Alpha'$ and $\beta'\in\Beta'$ such that $N_{\alpha'} \cap N_{\beta'} = \emptyset$ and $x \notin N_{\alpha'}\cup N_{\beta'}$. By \Cref{lem:altLePom}(3), there exist $\alpha \in \Alpha$ and $\beta\in\Beta$ such that $\alpha\lelpo\alpha'$ and $\beta\lelpo\beta'$. So, by \Cref{lem:monGuardLPO} $\guard(x,\ell,\alpha, \beta) \lelpo \guard(x,\ell,\alpha', \beta')$. So, by \Cref{lem:altLePom}(3) $\guard(\ell,\Alpha,\Beta) \lepom \guard(\ell,\Alpha',\Beta')$.
\end{proof}

\begin{corollary}
\label{cor:guardCont}
$\guard$ is Scott continuous.
\end{corollary}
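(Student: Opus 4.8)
The plan is to obtain \Cref{cor:guardCont} as a direct instance of the Extension Lemma (\Cref{lem:extension}). The key observation is that the value of $\guard$ on arbitrary (possibly infinite) pomsets was \emph{defined} by
\[
  \guard(\ell,\Alpha, \Beta) = \sup_{\Alpha' \ll \Alpha}\ \sup_{\Beta' \ll \Beta}\ \guard(\ell,\Alpha', \Beta'),
\]
which is exactly the shape of the extension $f^\ast$ from \Cref{lem:extension}, instantiated with $n = 2$, codomain $T = \pom(L)$, and $f = \guard(\ell, -, -)$ the guarding operation on finite pomsets for a fixed label $\ell \in L$.

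Concretely, I would first check that the hypotheses of \Cref{lem:extension} hold: the codomain $\tuple{\pom(L), \lepom}$ is a \dcpo\ by \Cref{lem:pomdcpo}, so it is an admissible target; and the map $(\Alpha, \Beta) \mapsto \guard(\ell, \Alpha, \Beta)$ on $\pom_\fin(L)^2$ is monotone by \Cref{lem:monGuard}. I would then apply \Cref{lem:extension} with these data, which yields that $\guard(\ell, -, -)^\ast \colon \pom(L)^2 \to \pom(L)$ is well-defined and Scott continuous. Finally, since $\guard(\ell, -, -)^\ast$ is, by the displayed equation, precisely the operation $\guard$ on all pomsets, this gives Scott continuity of $\guard$, as claimed.

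I do not expect a genuine obstacle here: all of the substantive content --- the disjoint-node bookkeeping, the extra conjunct added to the formulae of the two branches, and the verification that $\lelpo$ is preserved --- has already been discharged in the proof of \Cref{lem:monGuard}, together with the fixed-point machinery in \Cref{lem:extension}. The one point deserving a line of care is the overloading of the name $\guard$ for both the finite operation and its extension: one should note that a \emph{finite} pomset $\Alpha$ satisfies $\Alpha \ll \Alpha$ (by reflexivity of $\lepom$), so that by monotonicity the double supremum above is attained at $(\Alpha, \Beta)$ and the extension restricts to the original operation on $\pom_\fin(L)$ --- hence no ambiguity is introduced.
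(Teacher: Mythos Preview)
Your proposal is correct and follows exactly the paper's approach: the paper's proof is the one-liner ``Immediate from \Cref{lem:monGuard,lem:extension},'' which is precisely the instantiation of the Extension Lemma you describe. Your additional remark that the extension agrees with the finite operation (via $\Alpha \ll \Alpha$ for finite $\Alpha$) is a helpful sanity check but not needed for the corollary itself.
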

\begin{proof}
Immediate from \Cref{lem:monGuard,lem:extension}.
\end{proof}


\section{Denotational Semantics}

\subsection{Linearization}

\begin{lemma}\label{lem:next-mono}
If $\alpha \lelpo \beta$ and $S \cap \Bot_\alpha = \emptyset$, then $\mathsf{next}(\alpha, \psi, S) = \mathsf{next}(\beta, \psi, S)$.
\end{lemma}
\begin{proof}
We show the set inclusion in both directions. Take any $y \in \mathsf{next}(\alpha, \psi, S)$, so $y \in N_\alpha$, $y 
\notin S$, $\predplus{y}_\alpha \subseteq S$, and $\psi \Rightarrow \varphi_\alpha(y)$. Since $\alpha \lelpo \beta$, we get that $y \in N_\beta$, $\predplus{y}_\alpha = \predplus{y}_\beta$, and $\varphi_\alpha(y) = \varphi_\beta(y)$, so $y \in \mathsf{next}(\beta, \psi, S)$.

For the reverse inclusion, take any $y \in \mathsf{next}(\beta, \psi, S)$, so $y \in N_\beta$, $y 
\notin S$, $\predplus{y}_\beta \subseteq S$, and $\psi \Rightarrow \varphi_\beta(y)$. If $y \in N_\alpha$, then the proof follows from a similar argument as above. If not, we will show that there is a contradiction. Suppose that $y \notin N_\alpha$, so by \Cref{lem:lelpo-missing}, we get that $y \in \succplus{\Bot_\alpha}_\beta$. So there is a $z\in \Bot_\alpha$ such that $z <_\beta y$. Since $\predplus{y}_\beta \subseteq S$, then $z \in S$. However, this is a contradiction since we assumed that $S \cap \Bot_\alpha = \emptyset$.
\end{proof}

For any set $X$, the pointwise extension $\tuple{X \to D(\st), \sqsubseteq^\bullet}$ is also a pointed \dcpo, where $f \sqsubseteq^\bullet g$ iff $f(x) \sqsubseteq g(x)$ for all $x\in X$. We use this fact below.

\begin{lemma}[Monotonicity of $\linlpo$]
\label{lem:linlpo-mono}
For any $\alpha,\beta \in \lpo_\fin(L)$, $\psi \in \Form$, $S \subseteq N_\alpha$, and $s \in \mathcal S$ such that $\alpha \lelpo \beta$ and $S \cap \Bot_\alpha = \emptyset$:
\[
  \linlpo(\alpha, \psi, S) \sqsubseteq^\bullet \linlpo(\beta, \psi, S)
\]
\end{lemma}
\begin{proof}
  Let $T_{S,\psi} = \{ x \in N_\alpha \mid x \notin S, \sat(\psi \land \varphi_\alpha(x)) \}$ be the set of nodes still to schedule. The proof is by induction on the size of the set $T_{S,\psi}$.
  
  If $T_{S,\psi}$ is empty, then $x \in S$ or $\psi \land \varphi_\alpha(x)$ is unsatisfiable for all $x \in N_\alpha$. In either case, clearly $x \notin \mathsf{next}(\alpha,\psi,S)$, therefore $\mathsf{next}(\alpha,\psi,S) = \emptyset$. Moreover, by \Cref{lem:next-mono}, $\mathsf{next}(\beta,\psi,S) =\emptyset$ too and so $\linlpo(\alpha,\psi,S)(s) = \linlpo(\beta,\psi,S)(s) = \eta(s)$.
  
  Now suppose the claim holds for all sets smaller than $T_{S,\psi}$, and let $f$ be defined as follows:
  \[
    f(\gamma,\psi,S,x)(s) = \left\{
      \begin{array}{ll}
        \linlpo(\gamma,\psi, S \cup\{ x \})^\dagger(\de{\lambda_\gamma(x)}_\act(s)) & \text{if} ~ \lambda_\gamma(x) \in \act
        \\
        \linlpo(\gamma,\psi \land \sem{x = \de{\lambda_\gamma(x)}(s)}, S \cup\{x\})(s) & \text{if}~ \lambda_\gamma(x) \in \test
        \\
        \bot & \text{if}~ \lambda_\gamma(x) = \bot
        \\
        \linlpo(\gamma,\psi, S\cup\{x\})(s) & \text{if} ~ \lambda_\gamma(x) = \fork
      \end{array}
    \right.
  \]
  So, $\linlpo(\gamma,\psi,S)(s) = \bignd_{x \in \mathsf{next}(\gamma,\psi,S)} f(\gamma,\psi,S,x)(s)$. We start by showing that $f(\alpha,\psi,S,x)(s) \sqsubseteq f(\beta,\psi,S,x)(s)$. First, we remark that $S\cap\Bot_\alpha = \emptyset$ and so, if $x \notin\Bot_\alpha$, then $(S\cup\{x\})\cap \Bot_\alpha = \emptyset$, which will allow us to use the induction hypothesis in cases (1), (2), and (4) below. We proceed by case analysis on $\lambda_\alpha(x)$.
  \begin{enumerate}
     \item If $\lambda_\alpha(x) \in \act$, then $\lambda_\beta(x) \in \act$ too, and $\de{\lambda_\alpha(x)}_\act(s) \sqsubseteq \de{\lambda_\beta(x)}_\act(s)$. Note that $T_{S\cup \{x\},\psi} \subset T_{S,\psi}$, so by the induction hypothesis, we can conclude that:
\[
  \linlpo(\alpha, \psi, S \cup \{x\}) \sqsubseteq^\bullet \linlpo(\beta, \psi, S\cup \{x\})
\]
Therefore, by monotonicity of Kleisli extension, we get that:
\begin{align*}
  f(\alpha, \psi,S,x)(s)
  &= \linlpo(\alpha, \psi, S\cup\{x\})^\dagger(\de{\lambda_\alpha(x)}_\act(s))
  \\
  &\sqsubseteq \linlpo(\beta, \psi, S\cup\{x\})^\dagger(\de{\lambda_\beta(x)}_\act(s))
  \\
  &= f(\beta, \psi,S,x)(s)
\end{align*}

\item If $\lambda_\alpha(x) \in \test$, then again $\lambda_\beta(x) \in \test$ as well and $\lambda_\alpha(x) = \lambda_\beta(x)$ since the order over tests is flat. Since $T_{S\cup\{x\},\psi\land \sem{x = \de{b}_\test(s)}} \subset T_{S,\psi}$, then we can use the induction hypothesis to conclude that:
\begin{align*}
  f(\alpha, \psi,S,x)(s)
  &= \linlpo(\alpha, \psi\land \sem{x = \de{\lambda_\alpha(x)}_\test(s)}, S\cup\{x\})(s)
  \\
  &\sqsubseteq \linlpo(\beta, \psi\land \sem{x = \de{\lambda_\beta(x)}_\test(s)}, S\cup\{x\})(s)
  \\
  &= f(\beta, \psi,S,x)(s)
\end{align*}

\item If $\lambda_\alpha(x) = \bot$, then we have:
\[
  f(\alpha, \psi,S,x)(s)
  \quad=\quad \bot
  \quad\sqsubseteq\quad f(\beta, \psi,S,x)(s)
\]

\item If $\lambda_\alpha(x) = \fork$, then $\lambda_\beta(x) = \fork$, and so since  $T_{S\cup \{x\},\psi} \subset T_{S,\psi}$, the induction hypothesis gives us the following:
\[
  f(\alpha, \psi,S,x)(x)
  \;\;=\;\; \linlpo(\alpha, \psi, S\cup\{x\})(x)
  \;\;\sqsubseteq\;\; \linlpo(\beta, \psi, S\cup\{x\})(x)
  \;\;=\;\; f(\beta, \psi,S,x)(x)
\]
\end{enumerate}

We now complete the proof of the main claim as follows:
\begin{align*}
  \linlpo(\alpha,\psi, S)(s)
  &= \bignd_{x \in \mathsf{next}(\alpha,\psi,S)} f(\alpha,\psi,S,x)(s)
  \intertext{Since $S\cap\Bot_\alpha = \emptyset$, by \Cref{lem:next-mono} we know that $\mathsf{next}(\alpha, \psi,S) = \mathsf{next}(\beta,\psi,S)$.}
  &= \bignd_{x \in \mathsf{next}(\beta,\psi,S)} f(\alpha,\psi,S,x)(s)
  \intertext{Since $\nd$ is monotone, and given the claim we just proved about $f$:}
  &\sqsubseteq \bignd_{x \in \mathsf{next}(\beta,\psi,S)} f(\beta,\psi,S,x)(s)
  \\
  &= \linlpo(\beta,\psi, S)(s)
\qedhere
\end{align*}
\end{proof}

\begin{restatable}[Monotonicity of $\linfin$]{lemma}{monLin}
\label{lem:monLin}
If $\Alpha \lelpo \Beta$, then $\lin_\fin(\Alpha) \sqsubseteq^\bullet \lin_\fin(\Beta)$.
\end{restatable}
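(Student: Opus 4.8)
The plan is to assume $\alpha \lelpo \beta$ (working with representative \lpof s rather than pomsets, which is harmless by the remark preceding the lemma) and prove the stronger statement that for every state $s$, every processed set $S \dwclosed N_\alpha$, and every path condition $\psi$ consistent with the test outcomes recorded in $S$, we have $\linlpo(\alpha, \psi, S)(s) \sqsubseteq \linlpo(\beta, \psi, S)(s)$. The original claim is then the special case $S = \emptyset$, $\psi = \tru$. Since $\linlpo$ is defined by recursion on $N_\alpha \setminus S$ (which is finite, as $\alpha$ is a finite \lpof), I would proceed by induction on $|N_\alpha \setminus S|$.

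For the induction, the key observation is a comparison of the scheduling frontiers: $\next(\alpha, \psi, S) \subseteq \next(\beta, \psi, S)$ for $S \dwclosed N_\alpha$. Indeed, if $x \in N_\alpha \setminus S$ has $\predplus{x}_\alpha \subseteq S$ and $\psi \Rightarrow \varphi_\alpha(x)$, then by \Cref{def:lelpo}(2) $\predplus{x}_\beta = \predplus{x}_\alpha \subseteq S$ (predecessors are preserved going up the order, since $\alpha$'s order is the restriction of $\beta$'s and $N_\alpha$ is downward closed), and by \Cref{def:lelpo}(3b) $\varphi_\beta(x) = \varphi_\alpha(x)$, so $x \in \next(\beta,\psi,S)$. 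Moreover I claim the extra nodes in $\next(\beta,\psi,S) \setminus \next(\alpha,\psi,S)$ are precisely successors of $\Bot_\alpha$ nodes in $\beta$, using \Cref{lem:lelpo-missing} which gives $N_\beta \setminus N_\alpha = \succplus{\Bot_\alpha}_\beta$; but any ready node in $N_\beta\setminus N_\alpha$ has all its $\beta$-predecessors in $S\subseteq N_\alpha$, contradicting that it lies above a $\bot$-node of $\alpha$ unless that $\bot$-node is itself already in $S$. So in fact, under the natural invariant that $S$ contains no node of $\Bot_\alpha$ that has successors scheduled, $\next(\alpha,\psi,S) = \next(\beta,\psi,S)$ on the relevant nodes, with the only genuine discrepancy being: a node $x \in \next(\alpha,\psi,S) \cap \Bot_\alpha$ contributes $\bot_D$ to the $\bignd$ for $\alpha$, while the corresponding node in $\beta$ (now labelled $\lambda_\beta(x) \neq \bot$, with successors) contributes something $\sqsubseteq$-above $\bot_D$. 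For all other nodes $x$, we case on $\lambda_\alpha(x) = \lambda_\beta(x)$ (which holds for $x \notin \Bot_\alpha$ up to the label order, by \Cref{def:lelpo}(3a)) and apply the induction hypothesis to the recursive call on $S \cup \{x\}$ — using monotonicity and Scott continuity of $(-)^\dagger$ in the action case, monotonicity of $\de{-}_\act$ in $a \le_\act a'$, and the fact that the path condition is updated identically in the test case since $\de{\lambda_\alpha(x)}_\test = \de{\lambda_\beta(x)}_\test$. Finally, monotonicity of the nondeterminism operator $\nd$ lets us combine the pointwise-$\sqsubseteq$ comparisons of the summands into $\bignd_{x \in \next(\alpha,\psi,S)} (\cdots)_\alpha \sqsubseteq \bignd_{x \in \next(\beta,\psi,S)} (\cdots)_\beta$, where on the left any $\bot_D$ summand can be absorbed.

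The main obstacle I anticipate is bookkeeping around the $\bot$-nodes: in $\beta$ a former $\bot$-node of $\alpha$ may have been expanded into an arbitrary substructure, so $\next(\beta, \cdot, \cdot)$ can genuinely outgrow $\next(\alpha,\cdot,\cdot)$ once such a node is processed, and the two linearizations diverge in structure from that point on. Handling this cleanly requires the right induction invariant — essentially that $S$ is downward closed in $N_\alpha$ and that the path condition $\psi$ has not yet "committed" to exploring beyond a $\bot$-frontier — so that whenever the frontiers differ it is exactly because $\alpha$ is about to return $\bot_D$ (the least element), making the inequality immediate at that step and letting the induction hypothesis handle everything strictly below the $\bot$-frontier. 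A secondary technical point is that $\linlpo$ outputs a function $\st \to D(\st)$, so all the reasoning is under the pointwise order $\sqsubseteq^\bullet$ and must be carried out uniformly in $s$; this is routine but should be stated explicitly. Once these invariants are in place, each inductive step is a finite case analysis over the four label kinds plus a monotonicity-of-$\nd$ argument, with no hard computation.
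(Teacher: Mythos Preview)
Your approach is essentially the paper's: reduce to representative \lpof s, establish $\next(\alpha,\psi,S) = \next(\beta,\psi,S)$ under a suitable invariant (your contradiction argument via \Cref{lem:lelpo-missing} is exactly the paper's helper lemma), then induct on the size of the remaining-to-schedule set with a case split on labels, using $\bot_D \sqsubseteq \text{anything}$ for the $\bot$ case and monotonicity of $(-)^\dagger$, $\de{-}_\act$, and $\nd$ elsewhere. The one sharpening: the paper's invariant is simply $S \cap \Bot_\alpha = \emptyset$ --- preserved because the $\bot$ case returns $\bot_D$ with no recursive call, so $\bot$-nodes never enter $S$ --- which is cleaner and sufficient, whereas your phrasings ``$S \dwclosed N_\alpha$'' and ``$S$ contains no node of $\Bot_\alpha$ that has successors scheduled'' are either too weak or unnecessarily convoluted.
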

\begin{proof}
Fix any $\alpha\in \Alpha$, then we know there exists a $\beta\in\Beta$ such that $\alpha\lelpo\beta$:
\begin{align*}
  \linfin([\alpha])
  &= \linlpo(\alpha, \tru, \emptyset)
  \\
  & \sqsubseteq^\bullet \linlpo(\beta, \tru, \emptyset)
  &&\text{By \Cref{lem:linlpo-mono}}
  \\
  &= \linfin([\beta])
&& \qedhere
\end{align*}
\end{proof}

\begin{corollary}
$\lin \colon \pom \to (\st \to D(\st))$ is Scott continuous.
\end{corollary}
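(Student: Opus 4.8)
The plan is to obtain the result as a direct instance of the Extension Lemma (\Cref{lem:extension}). Recall that $\lin$ is defined precisely as $\linfin^\ast$, the extension of the finite linearization map $\linfin \colon \pom_\fin(\lab) \to (\st \to D(\st))$. To invoke \Cref{lem:extension} with $n = 1$, label set $L = \lab$, and target \dcpo\ $T = \st \to D(\st)$, I need two ingredients: that $\tuple{T, \sqsubseteq^\bullet}$ is a \dcpo, and that $\linfin$ is monotone.

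For the first ingredient, recall the standing assumptions on the computational domain: $\tuple{D(\st), \sqsubseteq}$ is a pointed \dcpo. As observed just before the definition of $\linlpo$, the pointwise lift $\tuple{\st \to D(\st), \sqsubseteq^\bullet}$ is then also a pointed \dcpo, with suprema of directed sets computed pointwise. For the second ingredient, monotonicity of $\linfin$ with respect to $\lepom$ is exactly the content of \Cref{lem:monLin}.

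With these two facts in hand, \Cref{lem:extension} immediately yields that $\linfin^\ast = \lin$ is well-defined (in particular, for each $\Alpha \in \pom$ the family $\{\linfin(\Alpha') \mid \Alpha' \ll \Alpha\}$ is directed, so its supremum exists in $T$) and Scott continuous. I do not expect any genuine obstacle here: the substantive work has already been carried out in \Cref{lem:monLin} (monotonicity of finite linearization), in \Cref{lem:extension} (the general extension machinery, which handles well-definedness, monotonicity, and preservation of directed suprema), and in the standing hypotheses on $D$ (the \dcpo\ structure on $D(\st)$, which lifts pointwise to state transformers). The only thing to check is that the hypotheses of \Cref{lem:extension} are met, which the two paragraphs above do.
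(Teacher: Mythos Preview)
Your proposal is correct and matches the paper's proof exactly: the paper's own argument is the one-liner ``Immediate from \Cref{lem:extension,lem:monLin},'' and you have simply unpacked what those two references contribute (the \dcpo\ structure on $\st \to D(\st)$ and monotonicity of $\linfin$) before applying the Extension Lemma.
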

\begin{proof}
Immediate from \Cref{lem:extension,lem:monLin}.
\end{proof}

\subsection{Properties of Linearization: Skip and Actions}

\begin{restatable}{lemma}{linSkip}\label{lem:lin-skip}
\textnormal{$\lin(\de{\skp}) = \eta$} and $\lin(\de{a}) = \de{a}_\act$.
\end{restatable}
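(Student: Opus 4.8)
The plan is to prove both equalities directly by unfolding the definitions of $\lin$, $\linfin$, and $\linlpo$, handling the two claims separately.

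\textbf{The $\skp$ case.} Recall $\de{\skp} = \singleton\fork$, whose representative \lpof\ $\singleton\fork_x = \tuple{\{x\}, \emptyset, [x\mapsto\fork], [x\mapsto\tru]}$ is already finite, so $\lin(\de{\skp}) = \linfin([\singleton\fork_x]) = \linlpo(\singleton\fork_x, \tru, \emptyset)$. Applied to a state $s$: we compute $\next(\singleton\fork_x, \tru, \emptyset) = \{x\}$ since $\predplus{x} = \emptyset \subseteq \emptyset$ and $\tru \Rightarrow \varphi(x) = \tru$. Because $\lambda(x) = \fork$, the $\fork$-branch of the definition fires, giving $\linlpo(\singleton\fork_x, \tru, \{x\})(s)$. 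Now $\next(\singleton\fork_x, \tru, \{x\}) = \emptyset$ (the only node is already in $S$), so this returns $\eta(s)$. Hence $\lin(\de{\skp})(s) = \eta(s)$ for all $s$, i.e.\ $\lin(\de{\skp}) = \eta$.

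\textbf{The singleton action case.} Similarly, $\de{a} = \singleton a$ has finite representative $\singleton a_x = \tuple{\{x\}, \emptyset, [x\mapsto a], [x\mapsto\tru]}$, so $\lin(\de{a}) = \linlpo(\singleton a_x, \tru, \emptyset)$. Applied to $s$: again $\next(\singleton a_x, \tru, \emptyset) = \{x\}$, and since $\lambda(x) = a \in \act$ the action branch fires, yielding $\linlpo(\singleton a_x, \tru, \{x\})^\dagger(\de{a}_\act(s))$. As before $\next(\singleton a_x, \tru, \{x\}) = \emptyset$, so $\linlpo(\singleton a_x, \tru, \{x\})$ is the constant-ish map $s' \mapsto \eta(s')$, which is exactly $\eta$. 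Therefore the expression is $\eta^\dagger(\de{a}_\act(s)) = \mathsf{id}(\de{a}_\act(s)) = \de{a}_\act(s)$, using the monad law $\eta^\dagger = \mathsf{id}$. Since this holds for every $s$, we conclude $\lin(\de{a}) = \de{a}_\act$.

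I do not expect any real obstacle here; the only points requiring a word of care are (i) justifying that $\lin$ agrees with $\linfin$ on these pomsets, which is immediate because both $\singleton\fork_x$ and $\singleton a_x$ are finite (so they are their own suprema of finite approximations, by \Cref{lem:fin-approx}, and $\linfin$ is monotone so the extension's supremum is attained at the top), and (ii) recognizing that after processing the unique node the recursive call returns $\eta$ as a function, so that the Kleisli extension $\eta^\dagger = \mathsf{id}$ can be invoked in the action case. Both are routine.
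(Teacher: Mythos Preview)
Your proof is correct and follows essentially the same approach as the paper's: unfold $\lin$ to $\linlpo$ on the singleton representative, compute $\next$, and step through the recursion. You are slightly more thorough in that you spell out the action case (which the paper omits as ``nearly identical'') and explicitly justify why $\lin = \linfin$ on finite pomsets, whereas the paper simply asserts this.
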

\begin{proof}
We show only the case for $\skp$, as the case for actions is nearly identical.
\begin{align*}
  \lin(\de{\skp})(s)
  &= \lin(\singleton{\fork})(s)
  \intertext{Since a singleton pomset is already finite, $\lin = \linfin$. Fixing an arbitrary $x\in\Nodes$, we have:}
  &= \linfin([\singleton{\fork}_x])(s)
  \\
  &= \linlpo(\singleton{\fork}_x, \tru, \emptyset)(s)
  \intertext{Obviously, $\next(\singleton{\fork}_x, \tru, \emptyset) = \{x\}$, so we get:}
  &= \linlpo(\singleton{\fork}_x, \tru, \{x\})(s)
  \\
  &= \eta(s)
\qedhere
\end{align*}
\end{proof}

\subsection{Properties of Linearization: Sequential Composition}

We first introduce a few definitions:
\begin{align*}
  \next^\ast(\alpha, \psi, S) &= \{ x \in N_\alpha \setminus S \mid \sat(\psi \land \varphi_\alpha(x)) \}
  \\
  \mathsf{tests}_\alpha(S) &= \{ x \in S \mid \lambda_\alpha(x) \in \test \}
  \\
  \forms(X) &= \set{ \bigwedge_{x \in X} \sem{x = b_x} ~\middle|~ \forall x\in X.\ b_x \in \mathbb B }
  \\
  \valid_\alpha(S) &= \{ \psi \mid \psi \in \forms(\tests_\alpha(S)), \forall x\in S.\ \psi \Rightarrow \varphi_\alpha(x) \}
\end{align*}
The set $\next^*(\alpha, \psi, S)$ contains all of the nodes in $\alpha$ that are still able to be scheduled, given that $S$ has already been processed and $\psi$ is the path condition. The set $\tests_\alpha(S)$ contains all the test nodes from $S$. The set $\forms(X)$ contains all the formulae over the variables $X$, which are conjunctions of literals, where a literal is either a variable $x$ or its negation $\lnot x$. Finally, $\valid_\alpha(S)$ contains all the formulae over $\tests_\alpha(S)$ that imply the formulae of all $x\in S$.
As an abuse of notation, we will write $\psi \in S$ to mean that there exists a $\psi' \in S$ such that $\psi\Leftrightarrow \psi'$.

\begin{lemma}\label{lem:nextstar-empty}
  If $\alpha$ is binary branching, $S \subseteq \nBot_\alpha$, $\next^*(\alpha,\psi,S) = \emptyset$ and $\psi \in \valid_\alpha(S)$, then $\psi \in \br_\alpha$ and $\next(\alpha\fatsemi_f\beta) = \min_{f(\psi)}$.
\end{lemma}
\begin{proof}
Since $\next^*(\alpha,\psi,S) = \emptyset$, we know that for all $x \in N_\alpha$, either $x\in S$ or $\psi \land\varphi_\alpha(x)$ is unsatisfiable. In particular, since $S \subseteq \nBot_\alpha$, then $\psi \land \varphi_\alpha(x)$ is unsatisfiable for all $x\in\Bot_\alpha$. We will now establish that $\psi\in\br_\alpha$ by proving the following claims:
\begin{enumerate}

\item $\psi \Rightarrow \lnot\stuck_\alpha \Leftrightarrow \lnot \bigvee_{x\in \Bot_\alpha} \varphi_\alpha(x) \Leftrightarrow \bigwedge_{x\in\Bot_\alpha} \lnot \varphi_\alpha(x)$. It will suffice to show that $\psi \Rightarrow \lnot \varphi_\alpha(x)$ for all $x \in\Bot_\alpha$. Take any $v\vDash\psi$. We know that $\psi\land\varphi_\alpha(x)$ is unsatisfiable since $x\in\Bot_\alpha$, therefore $v\vDash \lnot\varphi_\alpha(x)$, so we are done.

\item $\psi \Leftrightarrow \bigwedge_{x\in S} \varphi_\alpha(x)$. From $\psi \in \valid_\alpha(S)$, we immediately get that $\psi \Rightarrow \bigwedge_{x\in S} \varphi_\alpha(x)$. We now also show the reverse. Since $\psi \in \valid_\alpha(S)$, then $\psi \in \forms(\tests_\alpha(S))$. We now show that for every $y \in \tests_\alpha(S)$, either $\bigwedge_{x\in S} \varphi_\alpha(x) \Rightarrow y$ or $\bigwedge_{x\in S} \varphi_\alpha(x) \Rightarrow \lnot y$. Take any $y\in\tests_\alpha(S)$. By the binary branching property, $\succ_\alpha(y) = \{ z_1, z_2\}$ such that $\varphi_\alpha(z_1) \Leftrightarrow \varphi_\alpha(y) \land y$ and $\varphi_\alpha(z_2) \Leftrightarrow \varphi_\alpha(y) \land \lnot y$. We also know that since $z_1,z_2\in N_\alpha$, then for each $i\in\{1,2\}$ either $z_i \in S$ or $\psi\land\varphi_\alpha(z_i)$ is unsatisfiable. We already know that $\psi \Rightarrow \varphi_\alpha(y)$ since $y\in S$. We also know that $\psi \Rightarrow y$ or $\psi \Rightarrow\lnot y$ since $y\in\tests_\alpha(S)$ and $\psi \in\valid_\alpha(S)$. Without loss of generality, suppose that $\psi \Rightarrow y$, then clearly $\psi\land \varphi_\alpha(z_1) \Leftrightarrow \psi \land \varphi_\alpha(y) \land y$ is satisfiable, therefore $z_1 \in S$. This clearly means that $\bigwedge_{x\in S} \varphi_\alpha(x) \Rightarrow \varphi_\alpha(z_1) \Rightarrow y$. If instead we had that $\psi \Rightarrow \lnot y$, then we would have $\bigwedge_{x\in S} \varphi_\alpha(x) \Rightarrow \lnot y$.

Therefore we have shown that $\bigwedge_{x\in S} \varphi_\alpha(x)$ assigns a truth value to each $y\in \free(\psi)$, and so $\bigwedge_{x\in S} \varphi_\alpha(x) \Rightarrow \psi$.

\item We now show that $S\subseteq\extens_\alpha$, that is $\varphi_\alpha(x) \not\Rightarrow \stuck_\alpha$ for all $x\in S$.
For the sake of contradiction, suppose that $\varphi_\alpha(x) \Rightarrow \stuck_\alpha$ for some $x\in S$. Since $\psi\in\valid_\alpha(S)$, we also have that $\psi \Rightarrow \varphi_\alpha(x)$, which gives us:
\[
  \psi
  \Rightarrow \varphi_\alpha(x)
  \Rightarrow \stuck_\alpha
  \Leftrightarrow \bigvee_{y\in\Bot_\alpha} \varphi_\alpha(y)
\]
But this contradicts the fact that $\psi\land\varphi_\alpha(y)$ is unsatisfiable for all $y\in\Bot_\alpha$, therefore it cannot be true that $\varphi_\alpha(x) \Rightarrow \stuck_\alpha$, and instead we have $\varphi_\alpha(x) \not\Rightarrow \stuck_\alpha$.

\item $\psi \land \varphi_\alpha(x)$ is unsatisfiable for any $x \in \extens_\alpha \setminus S$. This is immediate, since $\psi \land \varphi_\alpha(x)$ is unsatisfiable for any $x \notin S$.

\end{enumerate}
So, we have now shown that $\psi \in \br_\alpha$. Next, we show that $\next(\alpha\fatsemi_f\beta, \psi, S) = \min_{f(\psi)}$ by showing the set inclusion in both directions.

Take any $x \in \next(\alpha\fatsemi_f\beta, \psi, S)$, so $x\in N_{\alpha\fatsemi_f\beta}$, $x \notin S$, $\predplus{x}_{\alpha\fatsemi_f\beta} \subseteq S$, and $\psi\Rightarrow \varphi_{\alpha\fatsemi_f\beta}(x)$. Clearly $x \notin N_\alpha$, otherwise one of $x\notin S$ or $\psi\Rightarrow \varphi_{\alpha\fatsemi_f\beta}(x) = \varphi_\alpha(x)$ would be violated. Now, take any $y <_{\alpha\fatsemi_f\beta} x$, we know from $\predplus{x}_{\alpha\fatsemi_f\beta} \subseteq S$ that $y \in S \subseteq N_\alpha$. So, since $\psi \Rightarrow \varphi_{\alpha\fatsemi_f\beta}(x)$, then $x\in N_{f(\psi)}$. Furthermore, $\{x\} = \min_{f(\psi)}$  since we already established that all predecessors of $x$ are in $\alpha$, and so $x$ has no predecessors in $\beta_{f(\psi)}$.

We now show the reverse direction. Take any $\{x\} = \min_{f(\psi)}$. Clearly $x \in N_{\alpha\fatsemi_f\beta}$ and $x\notin S$. By construction, $\varphi_{\alpha\fatsemi_f\beta}(x) = \varphi_{f(\psi)}(x) \land \psi \Leftrightarrow \tru \land\psi$ ($\varphi_{f(\psi)}(x) \Leftrightarrow \tru$ since $x$ is minimal and therefore no branches have occurred), so clearly $\psi \Rightarrow \varphi_{\alpha\fatsemi_f\beta}(x)$. Now take any $y <_{\alpha\fatsemi_f\beta} x$, since $\{x\} = \min_{f(\psi)}$, then $y \in N_\alpha$, so $\psi \Rightarrow \varphi_\alpha(y)$. We already know that for all $y\in N_\alpha$ either $y \in S$, or $\psi\land \varphi_\alpha(y)$ is unsatisfiable, but clearly the latter case is impossible, so $y\in S$. Therefore $\predplus{x}_{\alpha\fatsemi_f\beta} \subseteq S$, and we have shown all the conditions to guarantee that $x \in \next(\alpha\fatsemi_f\beta, \psi, S)$. 
\end{proof}

\begin{lemma}\label{lem:nextstar-nonempty}
  If $\alpha$ is binary branching, $S\subseteq \nBot_\alpha$, $\next^\ast(\alpha, \psi, S) \neq\emptyset$, and $\psi \in \valid_\alpha(S)$, then $\next(\alpha\fatsemi_f\beta,\psi, S) = \next(\alpha, \psi, S) \neq\emptyset$
\end{lemma}
\begin{proof}
  Since $\next^\ast(\alpha, \psi, S) \neq\emptyset$, then there exists $x\in \next^\ast(\alpha, \psi, S)$, meaning that $x \in N_\alpha \setminus S$ and $\psi\land\varphi_\alpha(x)$ is satisfiable. Take a minimal such $x$, so that there is no $y \in \next^\ast(\alpha, \psi, S)$ such that $y <_\alpha x$.
  
  We now show that $\predplus{x}_\alpha \subseteq S$. Take any $y <_\alpha x$. We know that $y \notin \next^*(\alpha,\psi,S)$, so either $y \in S$ or $\psi \land\varphi_\alpha(y)$ is unsatisfiable. However, since $y <_\alpha x$, then $\varphi_\alpha(x) \Rightarrow \varphi_\alpha(y)$, so $\psi\land\varphi_\alpha(y)$ must be satisfiable, meaning that the only option is that $y\in S$.
  
  Now, since $\alpha$ is binary branching, there are two options. The first option is that $x \in \succ_\alpha(y)$ where $y$ is a test node and $\varphi_\alpha(x) \Leftrightarrow \varphi_\alpha(y) \land y$ or $\varphi_\alpha(x) \Leftrightarrow \varphi_\alpha(y) \land \lnot y$. Since $\psi \in \valid_\alpha(S)$ and $\psi\land\varphi_\alpha(x)$ is satisfiable, then it must be that $\psi \Rightarrow \varphi_\alpha(x)$. If instead $x$ is not the successor of any test nodes, then $\varphi_\alpha(x) \Leftrightarrow \bigwedge_{y\in\pred_\alpha(x)} \varphi_\alpha(y)$, so again since $\psi\land\varphi_\alpha(x)$ is satisfiable, then it must be that $\psi \Rightarrow \varphi_\alpha(x)$. In both cases, we have that $x \in \next(\alpha, \psi, S)$.
  
  Finally, we show that $\next(\alpha, \psi, S) = \next(\alpha\fatsemi_f\beta,\psi, S)$ by showing the set inclusion in both directions. Take any $y \in \next(\alpha, \psi, S)$, so $y \in N_\alpha \setminus S$ such that $\predplus{y}_\alpha \subseteq S$ and $\psi \Rightarrow \varphi_\alpha(y)$. Since $N_\alpha \dwclosed N_{\alpha\fatsemi_f\beta}$, and $\varphi_\alpha(y) = \varphi_{\alpha\fatsemi_f\beta}(y)$, this immediately gives us that $y \in N_{\alpha\fatsemi_f\beta} \setminus S$, $\predplus{y}_{\alpha\fatsemi_f\beta} \subseteq S$, and $\psi \Rightarrow \varphi_{\alpha\fatsemi_f\beta}(x)$ therefore $y \in \next(\alpha\fatsemi_f\beta,\psi, S)$.
  
   Now, for the reverse inclusion, take any $y \in \next(\alpha\fatsemi_f\beta,\psi, S)$, so $y \in N_{\alpha\fatsemi_f\beta} \setminus S$ such that $\predplus{y}_{\alpha\fatsemi_f\beta} \subseteq S$ and $\psi \Rightarrow \varphi_{\alpha\fatsemi_f\beta}(x)$. From previously, we know that $x \in N_\alpha \setminus S$ and $\psi \land\varphi_\alpha(x)$ is satisfiable. Now, suppose for the sake of contradiction that $y \notin N_\alpha$, so it must be that $y \in N_{f(\psi')}$ for some $\psi' \in \br_\alpha$. This implies that $x <_{\alpha\fatsemi_f\beta} y$, as we will now show. First, $\psi \Rightarrow \varphi_{\alpha\fatsemi_f\beta}(y) = \varphi_{f(\psi')}(y) \land \psi '\Rightarrow \psi'$, so $\psi' \land \varphi_\alpha(x)$ is satisfiable. If $x \notin \extens_\alpha$, then $\varphi_\alpha(x) \Rightarrow\stuck_\alpha$, but we know that $\psi' \Rightarrow \lnot\stuck_\alpha$, which contradicts that $\psi'\land\varphi_\alpha(x)$ is satisfiable. Therefore, $x \in \extens_\alpha$, and since $\psi'$ is maximal, then it must be the case that $\psi' \Rightarrow \varphi_\alpha(x)$ already, which implies that $x <_{\alpha\fatsemi_f\beta} y$. However, since $x \notin S$, then $\predplus{y}_{\alpha\fatsemi_f\beta} \not\subseteq S$, which is a contradiction. Therefore, it must be the case that $y\in N_\alpha$. We therefore have that $y \in N_\alpha\setminus S$, $\predplus{y}_\alpha = \predplus{y}_{\alpha\fatsemi_f\beta} \subseteq S$, and $\psi\Rightarrow \varphi_{\alpha\fatsemi_f\beta}(y) = \varphi_\alpha(y)$; therefore, $y \in \next(\alpha, \psi, S)$.
\end{proof}

\begin{lemma}\label{lem:lin-fin-seq}
For any binary branching $\alpha,\beta\in\lpo_\fin(L)$, $S \subseteq \nBot_\alpha$, and $\psi\in \valid_\alpha(S)$:
\[
  \linlpo(\alpha \fatsemi \beta, \psi, S) = \linlpo(\beta, \tru, \emptyset)^\dagger \circ \linlpo(\alpha, \psi, S)
\]
\end{lemma}
\begin{proof}
The proof is by induction on the size of the set $\next^\ast(\alpha, \psi, S)$.
\medskip

In the base case, $\next^\ast(\alpha, \psi, S)$ is empty, so by \Cref{lem:nextstar-empty}, we know that $\psi \in \br_\alpha$ and $\next(\alpha\fatsemi_f\beta, \psi, S) = \min_{f(\psi)}$. Since each lpo is single-rooted, then $\min_{f(\psi)} = \{x\}$ for some $x \in N_{f(\psi)}$. This gives us:
\begin{align*}
  &\linlpo(\alpha\fatsemi\beta, \psi,S)(s)
  \\
  &= \bignd_{x\in\mathsf{next}(\alpha\fatsemi\beta, \psi,S)} \left\{
      \begin{array}{ll}
        \linlpo(\alpha\fatsemi\beta,\psi, S \cup\{ x \})^\dagger(\de{\lambda_{\alpha\fatsemi\beta}(x)}_\act(s)) & \text{if} ~ \lambda_{\alpha\fatsemi\beta}(x) \in \act
        \\
        \linlpo(\alpha\fatsemi\beta,\psi \land \sem{x = \de{\lambda_{\alpha\fatsemi\beta}(x)}(s)}, S \cup\{x\})(s) & \text{if}~ \lambda_{\alpha\fatsemi\beta}(x) \in \test
        \\
        \bot & \text{if}~ \lambda_{\alpha\fatsemi\beta}(x) = \bot
        \\
        \linlpo(\alpha\fatsemi\beta,\psi, S\cup\{x\})(s) & \text{if} ~ \lambda_{\alpha\fatsemi\beta}(x) = \fork
      \end{array}
    \right.
    \\
  &= \left\{
      \begin{array}{ll}
        \linlpo(\alpha\fatsemi\beta,\psi, S \cup\{ x \})^\dagger(\de{\lambda_{\alpha\fatsemi\beta}(x)}_\act(s)) & \text{if} ~ \lambda_{\alpha\fatsemi\beta}(x) \in \act
        \\
        \linlpo(\alpha\fatsemi\beta,\psi \land \sem{x = \de{\lambda_{\alpha\fatsemi\beta}(x)}(s)}, S \cup\{x\})(s) & \text{if}~ \lambda_{\alpha\fatsemi\beta}(x) \in \test
        \\
        \bot & \text{if}~ \lambda_{\alpha\fatsemi\beta}(x) = \bot
        \\
        \linlpo(\alpha\fatsemi\beta,\psi, S\cup\{x\})(s) & \text{if} ~ \lambda_{\alpha\fatsemi\beta}(x) = \fork
      \end{array}
    \right.
    \\
  &= \left\{
      \begin{array}{ll}
        \linlpo(\beta_\psi,\tru, \{ x \})^\dagger(\de{\lambda_{\beta_\psi}(x)}_\act(s)) & \text{if} ~ \lambda_{\beta_\psi}(x) \in \act
        \\
        \linlpo(\beta_\psi,\sem{x = \de{\lambda_{\beta_\psi}(x)}(s)}, \{x\})(s) & \text{if}~ \lambda_{\beta_\psi}(x) \in \test
        \\
        \bot & \text{if}~ \lambda_{\beta_\psi}(x) = \bot
        \\
        \linlpo(\beta_\psi,\tru, \{x\})(s) & \text{if} ~ \lambda_{\beta_\psi}(x) = \fork
      \end{array}
    \right.
    \\
  &= \bignd_{x\in\next(\beta, \tru,\emptyset)} \left\{
      \begin{array}{ll}
        \linlpo(\beta,\tru, \{ x \})^\dagger(\de{\lambda_\beta(x)}_\act(s)) & \text{if} ~ \lambda_{\beta}(x) \in \act
        \\
        \linlpo(\beta,\sem{x = \de{\lambda_\beta(x)}(s)}, \{x\})(s) & \text{if}~ \lambda_{\beta}(x) \in \test
        \\
        \bot & \text{if}~ \lambda_{\beta}(x) = \bot
        \\
        \linlpo(\beta,\tru, \{x\})(s) & \text{if} ~ \lambda_{\beta}(x) = \fork
      \end{array}
    \right.
  \\
  &= \linlpo(\beta,\tru,\emptyset)(s)   
  \\
  &= \linlpo(\beta,\tru,\emptyset)^\dagger(\eta(s))
  \\
  &= \linlpo(\beta,\tru,\emptyset)^\dagger(\linlpo(\alpha, \psi, S)(s))
\end{align*}
\medskip

Now suppose $\next^\ast(\alpha, \psi, S) \neq \emptyset$. By \Cref{lem:nextstar-nonempty}, we know that $\next(\alpha\fatsemi_f\beta, \psi, S)=\next(\alpha, \psi, S)\neq\emptyset$. Given this, we have:
\begin{align*}
  &\linlpo(\alpha\fatsemi\beta, \psi,S)(s)
  \\
  &= \bignd_{x\in\mathsf{next}(\alpha\fatsemi\beta, \psi,S)} \left\{
      \begin{array}{ll}
        \linlpo(\alpha\fatsemi\beta,\psi, S \cup\{ x \})^\dagger(\de{\lambda_{\alpha\fatsemi\beta}(x)}_\act(s)) & \text{if} ~ \lambda_{\alpha\fatsemi\beta}(x) \in \act
        \\
        \linlpo(\alpha\fatsemi\beta,\psi \land \sem{x = \de{\lambda_{\alpha\fatsemi\beta}(x)}(s)}, S \cup\{x\})(s) & \text{if}~ \lambda_{\alpha\fatsemi\beta}(x) \in \test
        \\
        \bot & \text{if}~ \lambda_{\alpha\fatsemi\beta}(x) = \bot
        \\
        \linlpo(\alpha\fatsemi\beta,\psi, S\cup\{x\})(s) & \text{if} ~ \lambda_{\alpha\fatsemi\beta}(x) = \fork
      \end{array}
    \right.
  \\
  &= \bignd_{x\in\mathsf{next}(\alpha, \psi,S)} \left\{
      \begin{array}{ll}
        \linlpo(\alpha\fatsemi\beta,\psi, S \cup\{ x \})^\dagger(\de{\lambda_{\alpha}(x)}_\act(s)) & \text{if} ~ \lambda_{\alpha}(x) \in \act
        \\
        \linlpo(\alpha\fatsemi\beta,\psi \land \sem{x = \de{\lambda_{\alpha}(x)}(s)}, S \cup\{x\})(s) & \text{if}~ \lambda_{\alpha}(x) \in \test
        \\
        \bot & \text{if}~ \lambda_{\alpha}(x) = \bot
        \\
        \linlpo(\alpha\fatsemi\beta,\psi, S\cup\{x\})(s) & \text{if} ~ \lambda_{\alpha}(x) = \fork
      \end{array}
    \right.
    \intertext{Clearly $\next^\ast(\alpha, \psi, S\cup\{x\}) \subset \next^\ast(\alpha, \psi, S)$ and $\psi \in\valid_\alpha(S\cup\{x\})$, if $x$ is not a test node and therefore $\valid_\alpha(S\cup\{x\}) = \valid_\alpha(S)$. So, by the induction hypothesis and monad laws, we get:}
  &= \bignd_{x\in\mathsf{next}(\alpha, \psi,S)} \left\{
      \begin{array}{ll}
        \linlpo(\beta, \tru, \emptyset)^\dagger(\linlpo(\alpha,\psi, S \cup\{ x \})^\dagger(\de{\lambda_{\alpha}(x)}_\act(s))) & \text{if} ~ \lambda_{\alpha}(x) \in \act
        \\
        \linlpo(\alpha\fatsemi\beta,\psi \land \sem{x = \de{\lambda_{\alpha}(x)}(s)}, S \cup\{x\})(s) & \text{if}~ \lambda_{\alpha}(x) \in \test 
        \\
        \bot & \text{if}~ \lambda_{\alpha}(x) = \bot
        \\
        \linlpo(\beta, \tru, \emptyset)^\dagger(\linlpo(\alpha,\psi, S\cup\{x\})(s)) & \text{if} ~ \lambda_{\alpha}(x) = \fork
      \end{array}
    \right.
    \intertext{We also have that $\next^\ast(\alpha, S\cup\{x\}, \psi\land \sem{x = \de{\lambda_{\alpha}(x)}_\test(s)}) \subset \next^\ast(\alpha, S,\psi)$. In addition, $\psi\land \sem{x = \de{\lambda_{\alpha}(x)}_\test(s)} \in \valid_\alpha(S\cup\{x\})$. So, by the induction hypothesis:
}
  &= \bignd_{x\in\mathsf{next}(\alpha, \psi,S)} \left\{
      \begin{array}{ll}
        \linlpo(\beta, \tru, \emptyset)^\dagger(\linlpo(\alpha,\psi, S \cup\{ x \})^\dagger(\de{\lambda_{\alpha}(x)}_\act(s))) & \text{if} ~ \lambda_{\alpha}(x) \in \act
        \\
        \linlpo(\beta, \tru, \emptyset)^\dagger(\linlpo(\alpha,\psi \land \sem{x = \de{\lambda_{\alpha}(x)}(s)}, S \cup\{x\})(s)) & \text{if}~ \lambda_{\alpha}(x) \in \test
        \\
        \bot & \text{if}~ \lambda_{\alpha}(x) = \bot
        \\
        \linlpo(\beta, \tru, \emptyset)^\dagger(\linlpo(\alpha,\psi, S\cup\{x\})(s)) & \text{if} ~ \lambda_{\alpha}(x) = \fork
      \end{array}
    \right.
    \intertext{Finally since $f^\dagger(\bot) = \bot$ for any $f$, we get:}
  &= \bignd_{x\in\mathsf{next}(\alpha, \psi,S)} \left\{
      \begin{array}{ll}
        \linlpo(\beta, \tru, \emptyset)^\dagger(\linlpo(\alpha,\psi, S \cup\{ x \})^\dagger(\de{\lambda_{\alpha}(x)}_\act(s))) & \text{if} ~ \lambda_{\alpha}(x) \in \act
        \\
        \linlpo(\beta, \tru, \emptyset)^\dagger(\linlpo(\alpha,\psi \land \sem{x = \de{\lambda_{\alpha}(x)}(s)}, S \cup\{x\})(s)) & \text{if}~ \lambda_{\alpha}(x) \in \test
        \\
        \linlpo(\beta, \tru, \emptyset)^\dagger(\bot) & \text{if}~ \lambda_{\alpha}(x) = \bot
        \\
        \linlpo(\beta, \tru, \emptyset)^\dagger(\linlpo(\alpha,\psi, S\cup\{x\})(s)) & \text{if} ~ \lambda_{\alpha}(x) = \fork
      \end{array}
    \right.
    \intertext{Since $f^\dagger(X \nd Y) = f^\dagger(X) \nd f^\dagger(Y)$:}
  &= \linlpo(\beta, \tru, \emptyset)^\dagger\left(\bignd_{x\in\mathsf{next}(\alpha, \psi,S)} \left\{
      \begin{array}{ll}
        \linlpo(\alpha,\psi, S \cup\{ x \})^\dagger(\de{\lambda_{\alpha}(x)}_\act(s)) & \text{if} ~ \lambda_{\alpha}(x) \in \act
        \\
        \linlpo(\alpha,\psi \land \sem{x = \de{\lambda_{\alpha}(x)}(s)}, S \cup\{x\})(s) & \text{if}~ \lambda_{\alpha}(x) \in \test
        \\
        \bot & \text{if}~ \lambda_{\alpha}(x) = \bot
        \\
        \linlpo(\alpha,\psi, S\cup\{x\})(s) & \text{if} ~ \lambda_{\alpha}(x) = \fork
      \end{array}
    \right.\right)
  \\
  &= \linlpo(\beta, \tru, \emptyset)^\dagger\left( \linlpo(\alpha, \psi, S)(s) \right)
\qedhere
\end{align*}
\end{proof}

\begin{restatable}
{lemma}{linSeq}
\label{lem:lin-seq}
$
  \lin(\de{C_1; C_2}) = \lin(\de{C_2})^\dagger \circ \lin(\de{C_1})
$.
\end{restatable}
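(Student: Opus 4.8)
The plan is to collapse the statement to finite pomsets using Scott continuity, and then to prove the finite case by an induction that follows the recursion of $\linlpo$.

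First I would use $\de{C_1;C_2} = \de{C_1}\fatsemi\de{C_2}$ together with Scott continuity of $\lin$ (\Cref{lem:monLin}), of $\fatsemi$ (\Cref{cor:seqCont}), and of the Kleisli extension, plus \Cref{lem:fin-approx}, to write both sides as suprema over directed families of finite approximations:
\[
  \lin(\de{C_1}\fatsemi\de{C_2}) = \sup_{\Alpha \ll \de{C_1}}\sup_{\Beta\ll\de{C_2}} \linfin(\Alpha\fatsemi\Beta)
  \quad\text{and}\quad
  \lin(\de{C_2})^\dagger\circ\lin(\de{C_1}) = \sup_{\Alpha \ll \de{C_1}}\sup_{\Beta\ll\de{C_2}} \linfin(\Beta)^\dagger\circ\linfin(\Alpha)
\]
(here $\Alpha\fatsemi\Beta$ is again finite, and linearization on finite pomsets agrees with $\linfin$). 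So it suffices to prove $\linfin(\Alpha\fatsemi\Beta) = \linfin(\Beta)^\dagger\circ\linfin(\Alpha)$ for all finite $\Alpha,\Beta\in\pom_\fin$ with the binary branching property.

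For fixed finite $\Alpha,\Beta$ I would pick representatives $\alpha\in\Alpha$, $\beta\in\Beta$ and a copying function $f\in\copylpo_{\alpha,\beta}$, so that $\Alpha\fatsemi\Beta = [\alpha\fatsemi_f\beta]$ and the goal becomes $\linlpo(\alpha\fatsemi_f\beta,\tru,\emptyset) = \linlpo(\beta,\tru,\emptyset)^\dagger\circ\linlpo(\alpha,\tru,\emptyset)$. I would prove the more general claim that for every configuration $(\psi,S)$ with $S\subseteq N_\alpha$ reachable along the recursion of $\linlpo(\alpha,\tru,\emptyset)$ (so $S$ is downward closed in $\alpha$, $S\cap\Bot_\alpha=\emptyset$, and $\psi$ is satisfiable and records the test outcomes along $S$),
\[
  \linlpo(\alpha\fatsemi_f\beta,\psi,S) = \linlpo(\beta,\tru,\emptyset)^\dagger\circ\linlpo(\alpha,\psi,S),
\]
by strong induction on $|N_\alpha\setminus S|$; the lemma is then the instance $(\psi,S)=(\tru,\emptyset)$. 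The crux is comparing $\next(\alpha\fatsemi_f\beta,\psi,S)$ with $\next(\alpha,\psi,S)$. As long as $\next(\alpha,\psi,S)\neq\emptyset$, the definition of $\fatsemi$ (a node of a copy $\beta_{\psi'}$ has, among its predecessors, every $\alpha$-node implied by $\psi'$) together with the binary branching property (\Cref{def:binary-branching}) forces $\next(\alpha\fatsemi_f\beta,\psi,S)=\next(\alpha,\psi,S)$; both sides schedule the same $\alpha$-node $x$, and I would match them by cases on $\lambda_\alpha(x)$, pushing $\linlpo(\beta,\tru,\emptyset)^\dagger$ inward using the third monad law $(g^\dagger\circ h)^\dagger = g^\dagger\circ h^\dagger$ (action nodes), additivity $k^\dagger(d\nd d')=k^\dagger(d)\nd k^\dagger(d')$ (to commute it past $\bignd$), strictness $\linlpo(\beta,\tru,\emptyset)^\dagger(\bot_D)=\bot_D$ ($\bot$-nodes), and the induction hypothesis at $S\cup\{x\}$.

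The remaining case $\next(\alpha,\psi,S)=\emptyset$ is where the main difficulty lies. Here $\linlpo(\alpha,\psi,S)(s)=\eta(s)$, so the right-hand side is $\linlpo(\beta,\tru,\emptyset)(s)$; moreover, since $S$ is downward closed and $S\cap\Bot_\alpha=\emptyset$, a minimality argument shows $S$ already contains every $\alpha$-node reachable under $\psi$, so $\psi$ encounters no $\bot$-node, and by \Cref{def:binary-branching} and the definition of $\br_\alpha$ it pins down a unique branch $\psi_0\in\br_\alpha$ with $\psi\Rightarrow\psi_0$; the only nodes of $\alpha\fatsemi_f\beta$ still schedulable are those of the single copy $\beta_{\psi_0}$. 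It then remains to show $\linlpo(\alpha\fatsemi_f\beta,\psi,S)(s)=\linlpo(\beta_{\psi_0},\tru,\emptyset)(s)$, which equals $\linlpo(\beta,\tru,\emptyset)(s)$ by isomorphism-invariance of $\linlpo$. This last identity is the real obstacle: inside $\alpha\fatsemi_f\beta$ the formulae of $\beta_{\psi_0}$-nodes carry the extra conjunct $\psi_0$ and the ambient path condition is $\psi$ rather than $\tru$, so one needs an auxiliary \emph{modularity lemma} — proved by its own induction on the recursion of $\linlpo$ — stating that $\linlpo$ is unaffected by conjuncts of the path condition over variables that cannot occur in the formula of any not-yet-scheduled node (here the $\alpha$-variables, which are disjoint from $N_{\beta_{\psi_0}}$, and $\psi_0$, which is already implied by $\psi$). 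Once that lemma is available, everything else is routine bookkeeping with the monad, $\nd$, and $\le_\act$ axioms.
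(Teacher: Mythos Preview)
Your proposal is correct and follows essentially the same approach as the paper. Both reduce to the finite case via Scott continuity of $\lin$, $\fatsemi$, and $(-)^\dagger$, and then prove the generalized identity $\linlpo(\alpha\fatsemi_f\beta,\psi,S) = \linlpo(\beta,\tru,\emptyset)^\dagger\circ\linlpo(\alpha,\psi,S)$ by induction along the recursion of $\linlpo$, splitting on whether $\alpha$ still has schedulable nodes. The paper inducts on $|\next^\ast(\alpha,\psi,S)|$ (the set of $\alpha$-nodes outside $S$ whose formula is still satisfiable with $\psi$) and packages your ``reachable configuration'' invariant as the explicit condition $S\subseteq\nBot_\alpha$ and $\psi\in\valid_\alpha(S)$; in the base case it proves directly that $\psi\in\br_\alpha$ and $\next(\alpha\fatsemi_f\beta,\psi,S)=\min_{f(\psi)}$, and then unfolds the recursion on the $\beta_\psi$ side inline rather than isolating your ``modularity lemma'' as a separate statement.
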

\begin{proof}

\begin{align*}
  \lin(\de{C_1 ; C_2})
  &= \lin(\de{C_1} \fatsemi \de{C_2})
  \intertext{By the definition of $\fatsemi$ for infinite pomsets:}
  &= \lin\left(\sup_{\Alpha \ll \de{C_1}} \sup_{\Beta \ll\de{C_2}} \Alpha\fatsemi\Beta \right)
  \intertext{Since $\lin$ is Scott continuous:}
  &= \sup_{\Alpha \ll \de{C_1}} \sup_{\Beta \ll\de{C_2}} \lin(\Alpha\fatsemi\Beta) 
  \\
  &= \sup_{\Alpha \ll \de{C_1}} \sup_{\Beta \ll\de{C_2}} \sup_{\GGamma \ll \Alpha\fatsemi\Beta} \linfin(\GGamma)
  \intertext{Since $\Alpha\fatsemi\Beta$ is finite, then $\Alpha \fatsemi\Beta \ll \Alpha\fatsemi\Beta$ and therefore since $\linfin$ is monotone, then $\sup_{\GGamma \ll \Alpha\fatsemi\Beta} \linfin(\GGamma) = \linfin(\Alpha\fatsemi\Beta)$.}
  &= \sup_{\Alpha \ll \de{C_1}} \sup_{\Beta \ll\de{C_2}} \linfin(\Alpha\fatsemi\Beta)
  \\
  &= \sup_{\Alpha \ll \de{C_1}} \sup_{\Beta \ll\de{C_2}} \linfin( \{ \alpha \fatsemi_f \beta \mid \alpha \in \Alpha, \beta\in\Beta, f\in\copylpo_{\alpha,\beta} \})
  \intertext{The set above is an equivalence class, so fixing an arbitrary $\alpha \in \Alpha$, $\beta\in\Beta$, and $f\in\copylpo_{\alpha,\beta}$, we can rewrite the expression as follows:}
  &= \sup_{\Alpha \ll \de{C_1}} \sup_{\Beta \ll\de{C_2}} \linfin([ \alpha \fatsemi_f \beta])
  \\
  &= \sup_{\Alpha \ll \de{C_1}} \sup_{\Beta \ll\de{C_2}} \linlpo(\alpha \fatsemi_f \beta, \tru, \emptyset)
  \intertext{By \Cref{lem:lin-fin-seq}:}
  &= \sup_{\Alpha \ll \de{C_1}} \sup_{\Beta \ll\de{C_2}} \linlpo(\beta, \tru, \emptyset)^\dagger \circ \linlpo(\alpha, \tru, \emptyset)
  \\
  &= \sup_{\Alpha \ll \de{C_1}} \sup_{\Beta \ll\de{C_2}} \linfin([\beta])^\dagger \circ \linfin([\alpha])
  \intertext{By continuity of Kleisli extension, and since $\Alpha = [\alpha]$ and $\Beta = [\beta]$:}
  &=  \left(\sup_{\Beta \ll\de{C_2}} \linfin(\Beta)\right)^\dagger \circ \sup_{\Alpha \ll \de{C_1}}\linfin(\Alpha)
  \\
  &= \lin(\de{C_2})^\dagger \circ \lin(\de{C_1})
\qedhere
\end{align*}
\end{proof}

\subsection{Properties of Linearization: Guarded Branching}

\begin{lemma}\label{lem:lin-lpo-guard}
For any $\alpha,\beta\in\lpo_\fin(\act)$, $b\in \test$, and $x \in\Nodes \setminus (N_\alpha \cup N_\beta)$:
\[
  \linlpo(\guard(x, b, \alpha, \beta), \tru, \emptyset)(s) = \left\{
    \begin{array}{ll}
      \linlpo(\alpha, \tru, \emptyset)(s) & \text{if}\ \de{b}_\test(s) = 1
      \\
      \linlpo(\beta, \tru, \emptyset)(s) & \text{if}\ \de{b}_\test(s) = 0
    \end{array}
  \right.
\]
\end{lemma}
\begin{proof}
\begin{align*}
  &\linlpo(\guard(x, b, \alpha, \beta), \tru, \emptyset)(s)
  \\
  &= {\bignd_{x\in\mathsf{next}(\alpha, \psi,S)}} \left\{
      \begin{array}{ll}
        \linlpo(\guard(x, b, \alpha, \beta),\tru, \{ x \})^\dagger(\de{\lambda_{\guard(x, b, \alpha, \beta)}(x)}_\act(s)) & \text{if} ~ \lambda_{\guard(x, b, \alpha, \beta)}(x) \in \act
        \\
        \linlpo(\guard(x, b, \alpha, \beta),\sem{x = \de{\lambda_{\guard(x, b, \alpha, \beta)}(x)}(s)}, \{x\})(s) & \text{if}~ \lambda_{\guard(x, b, \alpha, \beta)}(x) \in \test
        \\
        \bot & \text{if}~ \lambda_{\guard(x, b, \alpha, \beta)}(x) = \bot
        \\
        \linlpo(\guard(x, b, \alpha, \beta),\tru, \{x\})(s) & \text{if} ~ \lambda_{\guard(x, b, \alpha, \beta)}(x) = \fork
      \end{array}
    \right.
  \intertext{Since $x$ is the root of $\guard(x, b, \alpha, \beta)$, then $\mathsf{next}(\alpha, \psi,S) = \{x\}$ and $\lambda_{\guard(x, b, \alpha, \beta)}(x) \in \test$.}
  &= \linlpo(\guard(x, b, \alpha, \beta),\sem{x = \de{\lambda_{\guard(x, b, \alpha, \beta)}(x)}(s)}, \{x\})(s)
  \\
  &= \left\{
    \begin{array}{ll}
     \linlpo(\guard(x, b, \alpha, \beta),x, \{x\})(s) & \text{if}\ \de{b}_\test(s) = 1
      \\
      \linlpo(\guard(x, b, \alpha, \beta),\lnot x, \{x\})(s) & \text{if}\ \de{b}_\test(s) = 0
    \end{array}
  \right.
  \intertext{The formula $x$ prevents all nodes in $\beta$ from being scheduled and similarly $\lnot x$ prevents the $\alpha$ nodes from being scheduled, so we can remove the $\guard$ call to obtain the following:}
  &= \left\{
    \begin{array}{ll}
      \linlpo(\alpha, \tru, \emptyset)(s) & \text{if}\ \de{b}_\test(s) = \tru
      \\
      \linlpo(\beta, \tru, \emptyset)(s) & \text{if}\ \de{b}_\test(s) = \fls
    \end{array}
  \right.
\qedhere
\end{align*}
\end{proof}

\begin{restatable}{lemma}{linIf}\label{lem:lin-if}
\textnormal{
\begin{align*}
  \lin(\de{\iftf b{C_1}{C_2}})(s)
  &= \left\{
    \begin{array}{ll}
      \lin(\de{C_1})(s) & \text{if}~ \de{b}_\test(s) = 1
      \\
      \lin(\de{C_2})(s) & \text{if}~ \de{b}_\test(s) = 0
    \end{array}
  \right.
\end{align*}}
\end{restatable}
\begin{proof}
\begin{align*}
  &\lin{\de{\iftf b{C_1}{C_2}}}(s)
  \\
  &= \lin(\guard(b, \de{C_1}, \de{C_2}))(s)
  \\
  &= \sup_{\Alpha\ll\de{C_1}} \sup_{\Beta\ll\de{C_2}} \linfin(\guard(b, \Alpha, \Beta))(s)
  \\
  &= \sup_{\Alpha\ll\de{C_1}} \sup_{\Beta\ll\de{C_2}} \linfin(\{ \guard(x, b, \alpha, \beta) \mid x\in\Nodes, \alpha\in\Alpha, \beta\in\Beta \})(s)
  \\
  &= \sup_{\Alpha\ll\de{C_1}} \sup_{\Beta\ll\de{C_2}} \linfin([\guard(x, b, \alpha, \beta)])(s)
  \\
  &= \sup_{\Alpha\ll\de{C_1}} \sup_{\Beta\ll\de{C_2}} \linlpo(\guard(x, b, \alpha, \beta), \tru, \emptyset)(s)
  \\
  &= \sup_{\Alpha\ll\de{C_1}} \sup_{\Beta\ll\de{C_2}} \left\{
    \begin{array}{ll}
      \linlpo(\alpha, \tru, \emptyset)(s) & \text{if}\ \de{b}_\test(s) = 1
      \\
      \linlpo(\beta, \tru, \emptyset)(s) & \text{if}\ \de{b}_\test(s) = 0
    \end{array}
  \right.
  \qquad   \text{(By \Cref{lem:lin-lpo-guard})}
  \\
  &= \left\{
    \begin{array}{ll}
      \sup_{\Alpha\ll\de{C_1}}\linfin(\Alpha)(s) & \text{if}\ \de{b}_\test(s) = 1
      \\
      \sup_{\Beta\ll\de{C_2}}\linfin(\Beta)(s) & \text{if}\ \de{b}_\test(s) = 0
    \end{array}
  \right.
  \\
  &= \left\{
    \begin{array}{ll}
      \lin(\de{C_1})(s) & \text{if}\ \de{b}_\test(s) = 1
      \\
      \lin(\de{C_2})(s) & \text{if}\ \de{b}_\test(s) = 0
    \end{array}
  \right.
\qedhere
\end{align*}
\end{proof}

\subsection{Properties of Linearization: While Loops}

\begin{lemma}\label{lem:lin-while-iter}
Let $D(S)$ be the dcpo such that $\lin \colon \pom \to S \to D(S)$, and let $\bot_\pom \in \pom$ be the bottom of the pomset order, $\bot_D \in D(S)$ be the bottom of $D(S)$, and $\bot_D^\bullet \colon S\to D(S)$ be the bottom of the pointwise order on $D(S)$. Then
\[
    \lin\left(\Phi^n_{\tuple{C,b}}(\bot_\pom)\right) = \Psi^n_{\tuple{\lin(\de{C}), b}}(\bot_D^\bullet)
  \]
\end{lemma}
\begin{proof}
  The proof is by induction on $n$. If $n=0$, then we have:
  \[
    \lin\left(\Phi^0_{\tuple{C,b}}(\bot_\pom)\right)(s)
    = \lin( \bot_\pom )(s)
    = \bot_D
    = \bot_D^\bullet(s)
    = \Psi^0_{\tuple{\lin(\de{C}), b}}(\bot_D^\bullet)(s)
  \]
Now, suppose the claim holds for $n$, then we have:
\begin{align*}
  \lin\left(\Phi^{n+1}_{\tuple{C,b}}(\bot_\pom)\right)(s)
  &= \lin\left(\Phi_{\tuple{C,b}}\left(\Phi^{n}_{\tuple{C,b}}(\bot_\pom) \right)\right)(s)
  \\
  &= \lin\left( \guard\left( b, \de{C} \fatsemi \Phi^{n}_{\tuple{C,b}}(\bot_\pom), \de{\skp} \right) \right)(s)
  \intertext{By \Cref{lem:lin-lpo-guard}.}
  &= \left\{
    \begin{array}{ll}
      \lin\left( \de{C} \fatsemi \Phi^{n}_{\tuple{C,b}}(\bot_\pom) \right)(s) & \text{if}~ \de{b}_\test(s) = 1
      \\
      \lin(\de{\skp})(s) & \text{if}~ \de{b}_\test(s) = 0
    \end{array}
  \right.
  \intertext{By \Cref{lem:lin-skip,lem:lin-seq}.}
  &= \left\{
    \begin{array}{ll}
      \lin\left(\Phi^{n}_{\tuple{C,b}}(\bot_\pom)\right)^\dagger(\lin(\de{C})(s)) & \text{if}~ \de{b}_\test(s) = 1
      \\
      \eta(s) & \text{if}~ \de{b}_\test(s) = 0
    \end{array}
  \right.
  \intertext{By the induction hypothesis.}
  &= \left\{
    \begin{array}{ll}
      \left(\Psi^{n}_{\tuple{\lin{\de{C}},b}}(\bot_D^\bullet)\right)^\dagger(\lin(\de{C})(s)) & \text{if}~ \de{b}_\test(s) = 1
      \\
      \eta(s) & \text{if}~ \de{b}_\test(s) = 0
    \end{array}
  \right.
  \\
  &= \Psi_{\tuple{\lin{\de{C}},b}} \left( \Psi^{n}_{\tuple{\lin{\de{C}},b}}(\bot_D^\bullet)\right)(s)
  \\
  &= \Psi^{n+1}_{\tuple{\lin{\de{C}},b}}(\bot_D^\bullet)(s)
\qedhere
\end{align*}
\end{proof}

\begin{restatable}{lemma}{linWhile}\label{lem:lin-while}
\textnormal{
$\lin(\de{\whl bC}) = \mathsf{lfp}\left( \Psi_{\tuple{\lin(\de{C}),b}} \right)$}
where
\[
  \Psi_{\tuple{f, b}}(g)(s) \triangleq \left\{
    \begin{array}{ll}
      g^\dagger(f(s)) & \text{if}~ \de{b}_\test(\sigma) = 1
      \\
      \eta(s) & \text{if}~ \de{b}_\test(\sigma) = 0
    \end{array}
  \right.
\]
\end{restatable}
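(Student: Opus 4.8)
The plan is to show that $\lin(\de{\whl bC})$ satisfies the fixed point equation of $\Psi_{\tuple{\lin(\de C),b}}$ and then argue it is the \emph{least} such fixed point. For the first part, I would unfold $\de{\whl bC} = \mathsf{lfp}(\Phi_{\tuple{C,b}}) = \guard(b, \de C \fatsemi \de{\whl bC}, \de{\skp})$, using that $\de{\whl bC}$ is a fixed point of $\Phi_{\tuple{C,b}}$. Applying $\lin$ to both sides and using that $\lin$ distributes appropriately, the key computational ingredients are already available: \Cref{lem:lin-if} lets me split on $\de b_\test(s)$, reducing $\lin(\guard(b,-,-))(s)$ to either $\lin(\de C \fatsemi \de{\whl bC})(s)$ when $\de b_\test(s)=1$ or $\lin(\de{\skp})(s)$ when $\de b_\test(s) = 0$; then \Cref{lem:lin-seq} rewrites the former as $\lin(\de{\whl bC})^\dagger(\lin(\de C)(s))$, and \Cref{lem:lin-skip} rewrites the latter as $\eta(s)$. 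Comparing with the definition of $\Psi_{\tuple{f,b}}$ with $f = \lin(\de C)$ and $g = \lin(\de{\whl bC})$, this is exactly the statement that $\lin(\de{\whl bC})$ is a fixed point of $\Psi_{\tuple{\lin(\de C), b}}$.

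To conclude that it is the least fixed point, I would relate the two Kleene iteration chains. Since $\Phi_{\tuple{C,b}}$ and $\Psi_{\tuple{\lin(\de C),b}}$ are both Scott continuous on their respective pointed \dcpo s (the first by \Cref{cor:guardCont,cor:seqCont}, the second because it is built from Kleisli composition, which is Scott continuous and strict by assumption, together with a case split on a fixed Boolean value), Kleene's theorem gives $\mathsf{lfp}(\Phi_{\tuple{C,b}}) = \sup_n \Phi_{\tuple{C,b}}^n(\bot_\pom)$ and $\mathsf{lfp}(\Psi_{\tuple{\lin(\de C),b}}) = \sup_n \Psi_{\tuple{\lin(\de C),b}}^n(\bot)$. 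Because $\lin$ is Scott continuous, $\lin(\sup_n \Phi_{\tuple{C,b}}^n(\bot_\pom)) = \sup_n \lin(\Phi_{\tuple{C,b}}^n(\bot_\pom))$, so it suffices to prove by induction on $n$ that $\lin(\Phi_{\tuple{C,b}}^n(\bot_\pom)) = \Psi_{\tuple{\lin(\de C),b}}^n(\bot)$. The base case is $\lin(\bot_\pom) = \bot$, which follows since $\lin$ is strict (it maps the least pomset, a single $\bot$-labelled node, to $\bot_D$, because $\linlpo$ returns $\bot_D$ on a $\bot$ node). The inductive step is precisely the same computation as in the first paragraph, but applied one layer at a time: $\lin(\Phi_{\tuple{C,b}}(\Alpha)) = \Psi_{\tuple{\lin(\de C),b}}(\lin(\Alpha))$ for any $\Alpha$, which again uses \Cref{lem:lin-if,lem:lin-seq,lem:lin-skip}. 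Taking suprema then yields $\lin(\de{\whl bC}) = \mathsf{lfp}(\Psi_{\tuple{\lin(\de C),b}})$.

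I expect the main obstacle to be the bookkeeping in the equation $\lin(\Phi_{\tuple{C,b}}(\Alpha))(s) = \Psi_{\tuple{\lin(\de C),b}}(\lin(\Alpha))(s)$, specifically that \Cref{lem:lin-seq} is stated for $\lin(\de{C_1;C_2})$, i.e.\ for denotations of syntactic commands, whereas here I need the analogous identity $\lin(\de C \fatsemi \Alpha) = \lin(\Alpha)^\dagger \circ \lin(\de C)$ for an arbitrary pomset $\Alpha$, not necessarily of the form $\de{C_2}$. I would either invoke the more general pomset-level statement behind \Cref{lem:lin-seq} (whose proof does not use that the second argument is a command denotation), or, since each finite approximation of $\de{\whl bC}$ arises from finitely many unfoldings of genuine program fragments, push the argument through the finite approximations where the syntactic form is available and then pass to the supremum using continuity of $\lin$ and of $\fatsemi$. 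A secondary, purely technical point is checking Scott continuity of $\Psi_{\tuple{f,b}}$ carefully: the case split on $\de b_\test(s)$ is on a value that does not depend on $g$, so on each branch $\Psi_{\tuple{f,b}}(g)(s)$ is either $g^\dagger(f(s))$ (continuous in $g$ by continuity of $(-)^\dagger$) or the constant $\eta(s)$, and suprema in the pointwise-ordered function space $\st \to D(\st)$ are computed pointwise, so continuity follows.
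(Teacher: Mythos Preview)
Your approach is essentially the paper's: prove by induction on $n$ that $\lin(\Phi^n_{\tuple{C,b}}(\bot_\pom)) = \Psi^n_{\tuple{\lin(\de C),b}}(\bot)$ using \Cref{lem:lin-skip,lem:lin-seq} and the guard/if lemma, then conclude by Kleene's theorem on both sides together with Scott continuity of $\lin$. Your first paragraph (checking the fixed point equation at the limit) is redundant once you have the chain equality, and the paper omits it; your observation about \Cref{lem:lin-seq} being stated for syntactic commands while you need it for $\de C \fatsemi \Phi^n_{\tuple{C,b}}(\bot_\pom)$ is apt---the paper simply invokes it anyway, relying on the fact that the underlying \lpof-level result is proved for arbitrary binary-branching \lpof s, not just command denotations.
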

\begin{proof}
\begin{align*}
  \lin(\de{\whl bC})
  &= \lin\left( \mathsf{lfp} \left( \Phi_{\tuple{C, b}} \right) \right)
  \\
  &= \lin\left( \sup_{n\in\mathbb N}  \Phi^n_{\tuple{C, b}}(\bot_\pom) \right) && \text{(By the Kleene fixed point theorem)}
  \\
  &= \sup_{n\in\mathbb N}  \lin\left(\Phi^n_{\tuple{C, b}}(\bot_\pom) \right) && \text{(By Scott continuity of $\lin$)}
  \\
  &= \sup_{n\in\mathbb N}  \Psi^n_{\tuple{\lin(\de{C}), b}}(\bot_D^\bullet) && \text{(By \Cref{lem:lin-while-iter})}
  \\
  &= \mathsf{lfp}\left(\Psi_{\tuple{\lin(\de{C}), b}}\right) && \text{(By the Kleene fixed point theorem)}
\qedhere
\end{align*}
\end{proof}

\subsection{Main Lemma}

\linProps*
\begin{proof}
Follows immediately from \Cref{lem:lin-skip,lem:lin-seq,lem:lin-if,lem:lin-while}
\end{proof}

\section{Powerdomains}
\label{app:powdom}

In this section, we fix our domain of computation to be the Hoare powerdomain $\tuple{\mathcal P(S), \subseteq}$. In particular, this means that suprema are given by set union: $\sup_{i\in I} S_i \triangleq \bigcup_{i\in I} S_i$.

We begin by defining standard pomset language semantics in \Cref{fig:pomlang-semantics}.
Although the semantics is defined using pomsets with tests, $\test_\powdom = \emptyset$, so the formula of every node in $\Alpha$ is $\tru$. In addition, all pomsets are finite and $\bot$ nodes are never introduced, meaning that $\Alpha$ consists only of actions and fork nodes. We therefore get that the definition sequential composition is the standard one, \eg \cite{gischer1988equational,pratt1986modeling,meyer1989pomset,bakker1990metric}.

\begin{restatable}{lemma}{seqPowdom}\label{lem:seq-powdom}
If $\Bot_\alpha = \emptyset$, $\varphi_\alpha(x) = \tru$ for all $x\in N_\alpha$, and $f = [\tru \mapsto \beta]$, then:
\[
  N_{\alpha\fatsemi_f\beta} = N_\alpha \cup N_\beta
  \quad\text{and}\quad
  \mathord{<}_{\alpha\fatsemi_f\beta} = \mathord{<_\alpha} \cup \mathord{<_\beta} \cup (N_\alpha \times N_\beta)
\]
\end{restatable}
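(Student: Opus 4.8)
The plan is to unfold the definition of $\alpha\fatsemi_f\beta$ from \Cref{sec:seq}, observing that the three hypotheses make every auxiliary construction degenerate. First, since $\Bot_\alpha = \emptyset$, the stuck formula $\stuck_\alpha = \bigvee_{x\in\Bot_\alpha}\varphi_\alpha(x)$ is an empty disjunction, hence $\stuck_\alpha = \fls$; consequently $\lnot\stuck_\alpha$ is valid. Then, for every $x \in N_\alpha$ we have $\varphi_\alpha(x) = \tru \not\Rightarrow \fls = \stuck_\alpha$ (as $\tru$ is satisfiable), so $\extens_\alpha = N_\alpha$. By single-rootedness (\Cref{def:lpof}(2c)) $N_\alpha$, and hence $\extens_\alpha$, is nonempty.

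Next I would compute $\br_\alpha$. For any nonempty $S\subseteq\extens_\alpha = N_\alpha$ we have $\varphi_\alpha(S) = \bigwedge_{x\in S}\varphi_\alpha(x) \Leftrightarrow \tru$, so both the side condition $\varphi_\alpha(S)\Rightarrow\lnot\stuck_\alpha$ and satisfiability $\sat(\varphi_\alpha(S))$ hold automatically. Hence the maximality clause in the definition of $\br_\alpha$ forces $S = \extens_\alpha = N_\alpha$, giving $\br_\alpha = \{\varphi_\alpha(N_\alpha)\}$, a singleton whose unique formula is logically equivalent to $\tru$. Identifying it with $\tru$, as the statement does in writing $f = [\tru\mapsto\beta]$, this $f$ is a legitimate element of $\copylpo_{\alpha,\beta}$ precisely when $N_\alpha\cap N_\beta = \emptyset$ (the standing disjointness assumption), and then $\beta_\tru = f(\tru) = \beta$.

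Finally I would substitute $\br_\alpha = \{\tru\}$ and $\beta_\tru = \beta$ into the definition of $\alpha\fatsemi_f\beta$. The node set reads $N = N_\alpha \cup \bigcup_{\psi\in\br_\alpha} N_{\beta_\psi} = N_\alpha \cup N_\beta$. For the order, the big union has the single term $\mathord{<_{\beta_\tru}} \cup (\{x\in N_\alpha \mid \tru\Rightarrow\varphi_\alpha(x)\}\times N_{\beta_\tru})$; since $\varphi_\alpha(x) = \tru$ is valid, $\{x\in N_\alpha \mid \tru\Rightarrow\varphi_\alpha(x)\} = N_\alpha$, so $\mathord{<} = \mathord{<_\alpha}\cup\mathord{<_\beta}\cup(N_\alpha\times N_\beta)$, which are exactly the two identities claimed. (Well-formedness of the result as an \lpof\ is inherited from the general definition of $\fatsemi_f$ and is not needed for the statement.) There is no genuine difficulty here beyond careful bookkeeping; the only points that deserve a word are the empty-disjunction convention that yields $\stuck_\alpha = \fls$ and the appeal to single-rootedness to guarantee that $\br_\alpha$ is a nonempty singleton, so that the trivial copy function $[\tru\mapsto\beta]$ makes sense.
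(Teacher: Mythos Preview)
Your proposal is correct and follows essentially the same route as the paper: both compute $\stuck_\alpha=\fls$, deduce $\extens_\alpha=N_\alpha$ and $\br_\alpha=\{\tru\}$, and then simply unfold the definition of $\fatsemi_f$ to read off the node set and order. Your extra remarks on single-rootedness and on $f$ lying in $\copylpo_{\alpha,\beta}$ are sound but not needed for the claim; otherwise the arguments coincide.
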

\begin{proof}
First note that since $\Bot_\alpha = \emptyset$, then $\stuck_\alpha = \fls$ and so $\extens_\alpha = N_\alpha$. Therefore, $\br_\alpha = \{ \bigwedge_{x \in N_\alpha} \varphi_\alpha(x) \} = \{ \tru \}$. We now show the condition on nodes.
\[
  N_{\alpha\fatsemi_f\beta}
  = N_\alpha \cup \bigcup_{\psi\in\br_\alpha} N_{f(\psi)}
  = N_\alpha \cup N_{f(\tru)}
  = N_\alpha \cup N_\beta
\]
We now also show the condition on orders:
\begin{align*}
  \mathord{<_{\alpha\fatsemi_f\beta}}
  &= \mathord{<_{\alpha}} \cup \bigcup_{\psi\in\br_\alpha} \left( \mathord{<_{f(\psi)}} \cup \{ x \in N_\alpha \mid \psi \Rightarrow \varphi_\alpha(x) \} \times N_{f(\psi)} \right)
  \\
  &= \mathord{<_{\alpha}} \cup \mathord{<_{f(\tru)}} \cup \{ x \in N_\alpha \mid \tru \Rightarrow \tru \} \times N_{f(\tru)}
  \\
  &= \mathord{<_{\alpha}} \cup \mathord{<_{\beta}} \cup (N_\alpha \times N_{\beta})
\qedhere
\end{align*}
\end{proof}

Loops are interpreted as the least fixed point of $\Xi_{\tuple{C,b}}$, which is  Scott continuous in the subset $\subseteq$ order, since suprema (given by $\cup$) are associative and commutative. The $\skp$ appears in the false branch to correspond to $\Phi_{\tuple{C,b}}$.

\subsection{Definition of Translation}

To define $\tr$, we start with a translation on finite \lpof s, which additionally takes a formula $\psi$ to encode the resolution of tests in the current trace.
\[
    \tr_\lpo(\alpha,\psi) \triangleq \tuple{N_\psi, <_\psi, \lambda_\psi, [-\mapsto \tru]}
\]
where $[-\mapsto \tru]$ is the constant $\tru$ function and
\begin{align*}
  N_\psi &\triangleq \{ x \in N_\alpha \mid \psi \Rightarrow \varphi_\alpha(x) \}
  \\
  \mathord{<_\psi} &\triangleq \mathord{<_\alpha} \cap (N_\psi\times N_\psi)
\\
  \lambda_\psi(x) &\triangleq \left\{
    \begin{array}{ll}
      \lambda_\alpha(x) & \text{if}~ \lambda_\alpha(x) \notin \test
      \\
      \assume \lambda_\alpha(x) & \text{if}~ \lambda_\alpha(x) \in \test ~\text{and}~ \psi\Rightarrow x
      \\
      \assume{\lnot \lambda_\alpha(x)} & \text{if}~ \lambda_\alpha(x) \in \test ~\text{and}~ \psi\Rightarrow \lnot x
    \end{array}
  \right.
\end{align*}
The nodes of $\tr_\lpo(\alpha,\psi)$ are those nodes $x\in N_\alpha$ such that $\psi \Rightarrow \varphi_\alpha(x)$, indicating that $x$ agrees with the outcomes of the tests described by $\psi$. Test nodes with label $b$ are converted into either $\assume b$ or $\assume{\lnot b}$, depending on whether that node is true or false in $\psi$.
We define translation on finite pomsets by taking a union over all branches $\psi\in\br_\alpha$.
\[
  \tr_\fin([\alpha]) \triangleq \{ [\tr_\lpo(\alpha, \psi)] \mid \psi\in\br_\alpha \}
\]
Finite translation is monotonic (\Cref{lem:tr-fin-mono})---$\tr_\fin(\Alpha) \subseteq \tr_\fin(\Beta)$ if $\Alpha \lepom \Beta$---therefore we can extend to infinite pomsets with the extension lemma.
\[
  \tr(\Alpha) \triangleq \tr_\fin^*(\Alpha)
\]

\subsection{Properties of the Translation Function}

\begin{lemma}[Monotonicity of $\tr_\lpo$]\label{lem:tr-lpo-mono}
  If $\alpha \lelpo \alpha'$ and $\psi\in\br_\alpha$, then $\tr_\lpo(\alpha, \psi) = \tr_\lpo(\alpha', \psi)$
\end{lemma}
\begin{proof}
We first show that $N_{\tr_\lpo(\alpha, \psi)} = N_{\tr_\lpo(\alpha', \psi)}$, or in other words, that $\{ x \in N_\alpha \mid \psi \Rightarrow \varphi_\alpha(x) \} = \{ x \in N_{\alpha'} \mid \psi \Rightarrow \varphi_{\alpha'}(x) \}$ The forward inclusion is immediate since $N_\alpha \subseteq N_\alpha'$ and $\varphi_\alpha(x) = \varphi_{\alpha'}(x)$ for all $x \in N_\alpha$. We now show the reverse inclusion. Take any $x \in N_{\alpha'}$ such that $\psi\Rightarrow \varphi_{\alpha'}(x)$. If $x \notin N_\alpha$, then by \Cref{lem:lelpo-missing}, $x \in \succplus{\Bot_\alpha}_{\alpha'}$. This means that there is some $y\in \Bot_\alpha$ such that $y <_{\alpha'} x$, and so $\varphi_{\alpha'}(x) \Rightarrow \varphi_{\alpha'}(y) \Rightarrow \stuck_\alpha$. But this contradicts the fact that $\psi \Rightarrow \varphi_{\alpha'}(x)$ and $\psi \in \br_\alpha$. Therefore, it must be that $x \in N_\alpha$, and so the claim holds.

Now, for the order, we have:
\begin{align*}
  \mathord{<_{\tr_\lpo(\alpha, \psi)}}
  &= \mathord{<_\alpha} \cap (N_{\tr_\lpo(\alpha, \psi)} \times N_{\tr_\lpo(\alpha, \psi)})
  \\
  &= (\mathord{<_{\alpha'}} \cap (N_\alpha \times N_\alpha)) \cap (N_{\tr_\lpo(\alpha, \psi)} \times N_{\tr_\lpo(\alpha, \psi)})
  \intertext{Since $N_{\tr_\lpo(\alpha, \psi)} \subseteq N_\alpha$:}
  &= \mathord{<_{\alpha'}} \cap (N_{\tr_\lpo(\alpha, \psi)} \times N_{\tr_\lpo(\alpha, \psi)})
  \\
  &= \mathord{<_{\alpha'}} \cap (N_{\tr_\lpo(\alpha', \psi)} \times N_{\tr_\lpo(\alpha', \psi)})
  \\
  &= \mathord{<_{\tr_\lpo(\alpha', \psi)}}
\end{align*}
Finally, for any $x \in N_{\tr_\lpo(\alpha, \psi)}$, we have:
\begin{align*}
  \lambda_{\tr_\lpo(\alpha, \psi)}(x)
  &= \left\{
    \begin{array}{ll}
      \lambda_\alpha(x) & \text{if}~ \lambda_\alpha(x) \notin \test
      \\
      \assume \lambda_\alpha(x) & \text{if}~ \lambda_\alpha(x) \in \test ~\text{and}~ \psi\Rightarrow x 
      \\
      \assume{\lnot \lambda_\alpha(x)} & \text{if}~ \lambda_\alpha(x) \in \test ~\text{and}~ \psi\Rightarrow \lnot x
    \end{array}
  \right.
  \intertext{By virtue of the fact that $x \in N_{\tr_\lpo(\alpha, \psi)}$, we know that $\psi \Rightarrow \varphi_\alpha(x)$, and since $\psi\in\br_\alpha$, then $\psi \Rightarrow \lnot \stuck_\alpha$, therefore $\lambda_\alpha(x)$ cannot be $\bot$, and since $\lab$ is a flat cpo, then it must be that $\lambda_\alpha(x) = \lambda_{\alpha'}(x)$.}
  &= \left\{
    \begin{array}{ll}
      \lambda_{\alpha'}(x) & \text{if}~ \lambda_{\alpha'}(x) \notin \test
      \\
      \assume \lambda_{\alpha'}(x) & \text{if}~ \lambda_{\alpha'}(x) \in \test ~\text{and}~ \psi\Rightarrow x 
      \\
      \assume{\lnot \lambda_{\alpha'}(x)} & \text{if}~ \lambda_{\alpha'}(x) \in \test ~\text{and}~ \psi\Rightarrow \lnot x 
    \end{array}
  \right.
  \\
  &= \lambda_{\tr_\lpo(\alpha',\psi)}(x)
\end{align*}
And finally, clearly we have $\varphi_{\tr_\lpo(\alpha,\psi)}(x) = \varphi_{\tr_\lpo(\alpha',\psi)}(x) = \tru$.
\end{proof}

\begin{lemma}[Monotonicity of $\tr_\fin$]\label{lem:tr-fin-mono}
If $\Alpha \lepom \Alpha'$, then $\tr_\fin(\Alpha) \subseteq \tr_\fin(\Alpha')$.
\end{lemma}
\begin{proof}
Since $\Alpha \lepom \Alpha'$, then by \Cref{lem:altLePom} there exists an $\alpha\in\Alpha$ and $\alpha'\in\Alpha'$ such that $\alpha\lelpo\alpha'$. We now prove the claim as follows:
\begin{align*}
  \tr_\fin(\Alpha)
  &= \tr_\fin([\alpha])
  \\
  &= \{ [\beta] \mid \psi \in \br_\alpha, \beta\in\tr_\lpo(\alpha) \}
  \\
  &\subseteq \{ [\beta] \mid \psi \in \br_{\alpha'}, \beta\in\tr_\lpo(\alpha) \} && \text{By \Cref{lem:br-mono}.}
  \\
  &= \{ [\beta] \mid \psi \in \br_{\alpha'}, \beta\in\tr_\lpo(\alpha') \} && \text{By \Cref{lem:tr-lpo-mono}.}
  \\
  &= \tr_\fin([\alpha'])
  \\
  &= \tr_\fin(\Alpha')
&&\qedhere
\end{align*}
\end{proof}


\subsection{Translation and Pomset Operations}

\begin{lemma}\label{lem:tr-singleton}
For any $\ell \in \act \cup \{\fork\}$:
\[
   \tr(\singleton{\ell}) = \{ \singleton{\ell} \}
\]
\end{lemma}
\begin{proof}
Since $\singleton{\ell}$ is a finite pomset:
\begin{align*}
  \tr(\singleton{\ell})
  &= \tr_\fin(\singleton{\ell})
  \intertext{Fixing any arbitrary $x\in \Nodes$, we have that $\singleton{\ell} = [\singleton{\ell}_x]$.}
  &= \tr_\fin([\singleton{\ell}_x])
  \\
  &= \{ [\tr_\lpo( \singleton{\ell}_x, \psi ) ] \mid \psi \in \br_{\singleton{\ell}_x} \}
  \intertext{Since $\br_{\singleton{\ell}_x} = \{\tru\}$ and $\tr_\lpo( \singleton{\ell}_x, \tru ) = \singleton{\ell}_x$:}
  &= \{ [ \singleton{\ell}_x ] \}
  \\
  &= \{ \singleton{\ell} \}
\qedhere
\end{align*}
\end{proof}

For sequential composition of \lpof s, where $\br_\alpha= \{\tru\}$, we will write $\alpha\fatsemi\beta$ (omitting the copy function $f$), to mean $\alpha\fatsemi_f\beta$ where $f \triangleq [\tru \mapsto \beta]$, \ie $f$ does not perform any renaming, since only a single isomorphic copy of $\beta$ is needed.

\begin{lemma}\label{lem:tr-lpo-seq}
  For any $\alpha,\beta\in\lpo_\fin(\lab)$, $f\in\copylpo_{\alpha,\beta}$, $\psi \in \br_\alpha$, and $\psi' \in \br_{f(\psi)}$,
  \[
     \tr_\lpo(\alpha\fatsemi_f\beta, \psi\land\psi')
     =
    \tr_\lpo(\alpha, \psi) \fatsemi \tr_\lpo(f(\psi), \psi')
  \]
\end{lemma}
\begin{proof}
We show the equality component-wise. Some of the steps use \Cref{lem:seq-powdom}. First, for the nodes, we have:
\begin{align*}
  N_{\tr_\lpo(\alpha\fatsemi_f\beta, \psi\land\psi')}
  &= \{ x \in N_{\alpha\fatsemi_f\beta} \mid \psi\land\psi' \Rightarrow \varphi_{\alpha\fatsemi_f\beta}(x) \}
  \intertext{Since $\psi\in\br_\alpha$, then we will only take nodes from $f(\psi)$ and no other $\psi''\in \br_\alpha$}
  &= \{ x \in N_{\alpha} \mid \psi\land\psi' \Rightarrow \varphi_{\alpha}(x) \}
    \cup \{ x \in N_{f(\psi)} \mid \psi\land\psi' \Rightarrow \varphi_{f(\psi)}(x) \land \psi \}
  \\
  &= \{ x \in N_{\alpha} \mid \psi \Rightarrow \varphi_{\alpha}(x) \}
    \cup \{ x \in N_{f(\psi)} \mid \psi' \Rightarrow \varphi_{f(\psi)}(x) \}
  \\
  &= N_{\tr_\lpo(\alpha, \psi)} \cup N_{\tr_\lpo(f(\psi), \psi')}
  \\
  &= N_{\tr_\lpo(\alpha, \psi) \fatsemi \tr_\lpo(f(\psi), \psi')}
\end{align*}
Now, we show that the orders are the same:
\begin{align*}
  &\mathord{<_{\tr_\lpo(\alpha\fatsemi_f\beta, \psi\land\psi')}}
  \\
  &= \mathord{<_{\alpha\fatsemi_f\beta}} \cap ( N_{\tr_\lpo(\alpha\fatsemi_f\beta, \psi\land\psi')} \times N_{\tr_\lpo(\alpha\fatsemi_f\beta, \psi\land\psi')} )
  \\
  &= (\mathord{<_{\alpha}} \cap ( N_{\tr_\lpo(\alpha, \psi)} \times N_{\tr_\lpo(\alpha, \psi)} ) \cup
      (\mathord{<_{f(\psi)}} \cap (N_{\tr_\lpo(f(\psi), \psi')} \times N_{\tr_\lpo(f(\psi), \psi')}) \cup
      (N_{\tr_\lpo(\alpha, \psi)} \times N_{\tr_\lpo(f(\psi), \psi')})
  \\
  &= \mathord{<_{\tr_\lpo(\alpha, \psi) \fatsemi \tr_\lpo(f(\psi), \psi')}}
\end{align*}
Now, for any $x$, clearly we have:
\[
  \varphi_{\tr_\lpo(\alpha\fatsemi_f\beta, \psi\land\psi')}(x)
  = \varphi_{\tr_\lpo(\alpha, \psi) \fatsemi \tr_\lpo(f(\psi), \psi')}(x)
  = \tru
\]
If $x \in N_\alpha$, then:
\begin{align*}
  \lambda_{\tr_\lpo(\alpha\fatsemi_f\beta, \psi\land\psi')}(x)
  &= \left\{
    \begin{array}{ll}
      \lambda_{\alpha\fatsemi_f\beta}(x) & \text{if}~ \lambda_{\alpha\fatsemi_f\beta}(x) \notin \test
      \\
      \assume \lambda_{\alpha\fatsemi_f\beta}(x) & \text{if}~ \lambda_{\alpha\fatsemi_f\beta}(x) \in \test ~\text{and}~ \psi\land\psi'\Rightarrow x
      \\
      \assume{\lnot \lambda_{\alpha\fatsemi_f\beta}(x)} & \text{if}~ \lambda_{\alpha\fatsemi_f\beta}(x) \in \test ~\text{and}~ \psi\land\psi'\Rightarrow \lnot x
    \end{array}
  \right.
  \\
  &= \left\{
    \begin{array}{ll}
      \lambda_{\alpha}(x) & \text{if}~ \lambda_{\alpha}(x) \notin \test
      \\
      \assume \lambda_{\alpha}(x) & \text{if}~ \lambda_{\alpha}(x) \in \test ~\text{and}~ \psi\Rightarrow x
      \\
      \assume{\lnot \lambda_{\alpha}(x)} & \text{if}~ \lambda_{\alpha}(x) \in \test ~\text{and}~ \psi\Rightarrow \lnot x
    \end{array}
  \right.
  \\
  &= \lambda_{\tr_\lpo(\alpha, \psi)}(x)
  \\
  &= \lambda_{\tr_\lpo(\alpha, \psi) \fatsemi \tr_\lpo(\beta,\psi')}(x)
\end{align*}
The case for $x \in f(\psi)$ is nearly identical.
\end{proof}

\begin{lemma}\label{lem:tr-seq}
$
  \tr(\Alpha \fatsemi \Beta) = \{ \Alpha' \fatsemi \Beta' \mid \Alpha' \in \tr(\Alpha), \Beta'\in \tr(\Beta) \}
$.
\end{lemma}
\begin{proof}
\begin{align*}
  \tr(\Alpha \fatsemi \Beta)
  &= \tr( \sup_{\Alpha' \ll \Alpha} \sup_{\Beta' \ll \Beta} \Alpha'\fatsemi \Beta')
  \\
  &= \bigcup_{\Alpha' \ll \Alpha} \bigcup_{\Beta' \ll \Beta}\tr_\fin( \Alpha'\fatsemi \Beta')
  \\
  &= \bigcup_{\Alpha' \ll \Alpha} \bigcup_{\Beta' \ll \Beta}\tr_\fin(\{ \alpha \fatsemi_f \beta \mid \alpha \in \Alpha', \beta\in \Beta', f\in\copylpo_{\alpha,\beta}\})
  \intertext{Since the above set is an equivalence class, we can fix any $\alpha\in\Alpha'$, $\beta \in \Beta'$, and $f\in\copylpo_{\alpha,\beta}$.}
  &= \bigcup_{\Alpha' \ll \Alpha} \bigcup_{\Beta' \ll \Beta}\tr_\fin([\alpha \fatsemi_f \beta])
  \\
  &= \bigcup_{\Alpha' \ll \Alpha} \bigcup_{\Beta' \ll \Beta} \{ [\tr_\lpo(\alpha \fatsemi_f \beta, \psi)] \mid \psi\in \br_{\alpha \fatsemi_f \beta} \}
  \\
  &= \bigcup_{\Alpha' \ll \Alpha} \bigcup_{\Beta' \ll \Beta} \{ [\tr_\lpo(\alpha \fatsemi_f \beta, \psi\land\psi')] \mid \psi\in \br_{\alpha}, \psi'\in\br_{f(\psi)} \}
\\
  &= \bigcup_{\Alpha' \ll \Alpha} \bigcup_{\Beta' \ll \Beta} \{ [\tr_\lpo(\alpha, \psi) \fatsemi \tr_\lpo(f(\psi), \psi')] \mid \psi\in \br_{\alpha}, \psi'\in\br_{f(\psi)} \}
  & \text{(By \Cref{lem:tr-lpo-seq})}
  \\
  &= \bigcup_{\Alpha' \ll \Alpha} \bigcup_{\Beta'\ll\Beta} \tr_\fin(\Alpha') \fatsemi \tr_\fin(\Beta') 
  \\
  &= \left( \bigcup_{\Alpha' \ll \Alpha} \tr_\fin(\Alpha')\right) \fatsemi \left( \bigcup_{\Beta'\ll\Beta} \tr_\fin(\Beta') \right)
  \\
  &= \tr(\Alpha) \fatsemi \tr(\Beta)
&&\qedhere
\end{align*}
\end{proof}

\begin{lemma}\label{lem:tr-guard-lpo}
$
  \tr_\lpo(\guard(x, b, \alpha, \beta), \psi \land x) = \singleton{\assume b}_x \fatsemi \tr_\lpo(\alpha, \psi)
$
and
$
  \tr_\lpo(\guard(x, b, \alpha, \beta), \psi \land \lnot x) = \singleton{\assume{\lnot b}}_x \fatsemi \tr_\lpo(\beta, \psi)
$.
\end{lemma}
\begin{proof}
We prove the equality for the first case. The second is entirely symmetrical. We first show equality of the node sets:
\begin{align*}
  N_{\tr_\lpo(\guard(x, b, \alpha, \beta), \psi \land x)}
  &= \{ y \in N_{\guard(x, b, \alpha, \beta)} \mid \psi \land x \Rightarrow \varphi_{\guard(x, b, \alpha, \beta)}(y) \}
  \intertext{By the construction of guard, $x$ is true in all nodes in the $\alpha$ branch, and the root $x$ has formula $\tru$.}
  &= \{x\} \cup \{ y \in N_\alpha \mid \psi \Rightarrow \varphi_\alpha(y) \}
  \\
  &= N_{\singleton{\assume b}_x} \cup N_{\tr(\alpha, \psi)}
  \\
  &= N_{\singleton{\assume b}_x \fatsemi \tr(\alpha, \psi)}
\end{align*}
For the order, we have:
\begin{align*}
  &\mathord{<_{\tr_\lpo(\guard(x, b, \alpha, \beta), \psi \land x)}}
  \\
  &= \mathord{<_{\guard(x, b, \alpha, \beta)}} \cap (N_{\tr_\lpo(\guard(x, b, \alpha, \beta), \psi \land x)} \times N_{\tr_\lpo(\guard(x, b, \alpha, \beta), \psi \land x)})
  \\
  &= \left( \mathord{<_\alpha} \cup \mathord{<_\beta} \cup (\{x\} \times (N_\alpha \cup N_\beta) \right)  \cap (N_{\tr_\lpo(\guard(x, b, \alpha, \beta), \psi \land x)} \times N_{\tr_\lpo(\guard(x, b, \alpha, \beta), \psi \land x)})
  \intertext{Using what we just showed about the node sets:}
  &= \left( \mathord{<_\alpha} \cap (N_{\tr(\alpha, \psi)} \times N_{\tr(\alpha, \psi)} ) \right) \cup
      (\{x\} \times N_{\tr(\alpha, \psi)})
  \\
  &= \mathord{<_{\tr(\alpha, \psi)}}\cup (\{x\} \times N_{\tr(\alpha, \psi)})
  \\
  &= <_{\singleton{\assume b}_x \fatsemi \tr(\alpha, \psi)}
\end{align*}
For every $y$, clearly:
\[
  \varphi_{ \tr_\lpo(\guard(x, b, \alpha, \beta), \psi \land x)}(y)
  = \varphi_{\singleton{\assume b}_x \fatsemi \tr(\alpha, \psi)}(y)
  = \tru 
\]
If $y=x$, then we have:
\begin{align*}
  &\lambda_{\tr_\lpo(\guard(x, b, \alpha, \beta), \psi \land x)}(y)
  \\
  &= \left\{
    \begin{array}{ll}
      \lambda_{\guard(x, b, \alpha, \beta)}(y) & \text{if}~ \lambda_{\guard(x, b, \alpha, \beta)}(y) \notin \test
      \\
      \assume \lambda_{\guard(x, b, \alpha, \beta)}(y) & \text{if}~ \lambda_{\guard(x, b, \alpha, \beta)}(y) \in \test ~\text{and}~ \psi\land x\Rightarrow y
      \\
      \assume{\lnot \lambda_{\guard(x, b, \alpha, \beta)}(y)} & \text{if}~ \lambda_{\guard(x, b, \alpha, \beta)}(y) \in \test ~\text{and}~ \psi\land x\Rightarrow \lnot y
    \end{array}
  \right.
  \\
  &= \assume \lambda_{\guard(x, b, \alpha, \beta)}(x)
  \\
  &= \assume b
  \\
  &= \lambda_{\singleton{\assume b}_x}(x)
  \\
  &= \lambda_{\singleton{\assume b}_x \fatsemi \tr(\alpha,\psi)}(x)
\end{align*}
If not, then $y \in \tr(\alpha,\psi) \subseteq N_\alpha$:
\begin{align*}
  &\lambda_{\tr_\lpo(\guard(x, b, \alpha, \beta), \psi \land x)}(y)
  \\
  &= \left\{
    \begin{array}{ll}
      \lambda_{\guard(x, b, \alpha, \beta)}(y) & \text{if}~ \lambda_{\guard(x, b, \alpha, \beta)}(y) \notin \test
      \\
      \assume \lambda_{\guard(x, b, \alpha, \beta)}(y) & \text{if}~ \lambda_{\guard(x, b, \alpha, \beta)}(y) \in \test ~\text{and}~ \psi\land x\Rightarrow y
      \\
      \assume{\lnot \lambda_{\guard(x, b, \alpha, \beta)}(y)} & \text{if}~ \lambda_{\guard(x, b, \alpha, \beta)}(y) \in \test ~\text{and}~ \psi\land x\Rightarrow \lnot y
    \end{array}
  \right.
  \\
  &= \left\{
    \begin{array}{ll}
      \lambda_{\alpha}(y) & \text{if}~ \lambda_{\alpha}(y) \notin \test
      \\
      \assume \lambda_{\alpha}(y) & \text{if}~ \lambda_{\alpha}(y) \in \test ~\text{and}~ \psi\Rightarrow y
      \\
      \assume{\lnot \lambda_{\alpha}(y)} & \text{if}~ \lambda_{\alpha}(y) \in \test ~\text{and}~ \psi\Rightarrow \lnot y
    \end{array}
  \right.
  \\
  &= \lambda_{\tr(\alpha, \psi)}(y)
  \\
  &= \lambda_{\singleton{\assume b}_x \fatsemi \tr(\alpha,\psi)}(y)
\qedhere
\end{align*}
\end{proof}

\begin{lemma}\label{lem:tr-guard}
\[\tr(\guard(b, \Alpha, \Beta))
  =
  \{ \singleton{\assume b} \fatsemi \Alpha' \mid \Alpha' \in \tr(\Alpha) \} \cup \{ \singleton{\assume{\lnot b}} \fatsemi \Beta' \mid \Beta'\in \tr(\Beta) \}
\]
\end{lemma}
\begin{proof}
\begin{align*}
  &\tr(\guard(b, \Alpha, \Beta))
  \\
  &= \tr\left(\sup_{\Alpha' \ll \Alpha} \sup_{\Beta' \ll \Beta} \guard(b, \Alpha', \Beta')\right)
  \\
  &= \bigcup_{\Alpha' \ll \Alpha} \bigcup_{\Beta' \ll \Beta} \tr_\fin(\guard(b, \Alpha', \Beta'))
  \\
  &= \bigcup_{\Alpha' \ll \Alpha} \bigcup_{\Beta' \ll \Beta} \tr_\fin(\{ \guard(x, b, \alpha, \beta) \mid \alpha \in \Alpha', \beta\in\Beta', x\in \Nodes \})
  \intertext{Since the set above is an equivalence class, we can fix any $x$, $\alpha$, and $\beta$ and write:}
  &= \bigcup_{\Alpha' \ll \Alpha} \bigcup_{\Beta' \ll \Beta} \tr_\fin([\guard(x, b, \alpha, \beta)])
  \\
  &= \bigcup_{\Alpha' \ll \Alpha} \bigcup_{\Beta' \ll \Beta} \{ [\tr_\lpo(\guard(x, b, \alpha, \beta), \psi)] \mid \psi \in \br_{\guard(x, b, \alpha, \beta)} \}
  \\
  &= \bigcup_{\Alpha' \ll \Alpha} \bigcup_{\Beta' \ll \Beta}
      \{ [\tr_\lpo(\guard(x, b, \alpha, \beta), \psi \land x)] \mid \psi \in \br_{\alpha} \} \cup
      \{ [\tr_\lpo(\guard(x, b, \alpha, \beta), \psi \land \lnot x)] \mid \psi \in \br_{\beta} \}
  \intertext{By \Cref{lem:tr-guard-lpo}:}
  &= \bigcup_{\Alpha' \ll \Alpha} \bigcup_{\Beta' \ll \Beta}
      \{ [\singleton{\assume b}_x \fatsemi \tr_\lpo(\alpha, \psi)] \mid \psi \in \br_{\alpha} \} \cup
      \{ [\singleton{\assume{\lnot b}}_x \fatsemi \tr_\lpo(\beta, \psi)] \mid \psi \in \br_{\beta} \}
  \\
  &= \bigcup_{\Alpha' \ll \Alpha} \bigcup_{\Beta' \ll \Beta}
      \{ \singleton{\assume b} \fatsemi \Alpha'' \mid \Alpha'' \in \tr_\fin(\Alpha') \}  \cup
      \{ \singleton{\assume{\lnot b}} \fatsemi \Beta'' \mid \Beta'' \in \tr_\fin(\Beta') \}
  \\ 
  &= \{ \singleton{\assume b} \fatsemi \Alpha'' \mid \Alpha'' \in \tr(\Alpha) \}  \cup
      \{ \singleton{\assume{\lnot b}} \fatsemi \Beta'' \mid \Beta'' \in \tr(\Beta) \} 
\qedhere
\end{align*}
\end{proof}

\begin{lemma}\label{lem:tr-par-lpo}
For any $\alpha,\beta\in\lpo_\fin$, $\psi \in \br_\alpha$, and $\psi'\in\br_\beta$:
\[
  \tr_\lpo(\alpha\parallel_x \beta, \psi\land\psi')
  = \tr_\lpo(\alpha, \psi) \parallel_x  \tr_\lpo(\beta, \psi')
\]
\end{lemma}
\begin{proof}
We show the equality for each component. First, for the nodes, we have:
\begin{align*}
  N_{\tr_\lpo(\alpha\parallel_x \beta, \psi\land\psi')}
  &= \{ y \in N_{\alpha\parallel_x \beta} \mid \psi\land\psi' \Rightarrow \varphi_{\alpha\parallel_x \beta}(y) \}
  \\
  &= \{ x \} \cup \{ y \in \nFork_\alpha \mid \psi \Rightarrow \varphi_\alpha(y) \} \cup \{ z \in \nFork_\beta \mid \psi' \Rightarrow \varphi_\beta(z) \}
  \\
  &= \{ x \} \cup \nFork_{\tr_\lpo(\alpha, \psi)} \cup \nFork_{\tr_\lpo(\beta, \psi') }
  \\
  &= N_{\tr_\lpo(\alpha, \psi) \parallel_x  \tr_\lpo(\beta, \psi')}
\end{align*}
Now, for the orders:
\begin{align*}
  &\mathord{<_{\tr_\lpo(\alpha\parallel_x \beta, \psi\land\psi')}}
  \\
  &= \mathord{<_{\alpha\parallel_x \beta}} \cap \left( N_{\tr_\lpo(\alpha\parallel_x \beta, \psi\land\psi')} \times N_{\tr_\lpo(\alpha\parallel_x \beta, \psi\land\psi')} \right)
  \\
   &= \left( (\{x\} \times (\nFork_\alpha \cup \nFork_\beta))
        \cup\ (\mathord{<_\alpha} \cap (\nFork_\alpha \times \nFork_\alpha))
        \cup\  (\mathord{<_\beta} \cap (\nFork_\beta \times \nFork_\beta))
        \right)
    \\ &\qquad \cap \left( N_{\tr_\lpo(\alpha\parallel_x \beta, \psi\land\psi')} \times N_{\tr_\lpo(\alpha\parallel_x \beta, \psi\land\psi')} \right)
  \\
   &= (\{x\} \times (\nFork_{\tr_\lpo(\alpha,\psi)} \cup \nFork_{\tr_\lpo(\beta,\psi')} )) 
        \\&\quad \cup\ (\mathord{<_\alpha} \cap (\nFork_{\tr_\lpo(\alpha,\psi)} \times \nFork_{\tr_\lpo(\alpha,\psi)} ))
        \\&\quad \cup\  (\mathord{<_\beta} \cap (\nFork_{\tr_\lpo(\beta,\psi')} \times \nFork_{\tr_\lpo(\beta,\psi')} ) 
  \\
  &= \mathord{<_{\tr_\lpo(\alpha, \psi) \parallel_x  \tr_\lpo(\beta, \psi')}}
\end{align*}
Now take any $y$, clearly we have:
\[
  \varphi_{\tr_\lpo(\alpha\parallel_x \beta, \psi\land\psi')}(y)
  = \varphi_{\tr_\lpo(\alpha, \psi) \parallel_x  \tr_\lpo(\beta, \psi')}(y)
  = \tru
\]
Now, if $y = x$, then $\lambda_{\alpha\parallel_x \beta}(y) = \fork$, so:
\begin{align*}
  \lambda_{\tr_\lpo(\alpha\parallel_x \beta, \psi\land\psi')}(y)
  &= \left\{
    \begin{array}{ll}
      \lambda_{\alpha\parallel_x \beta}(y) & \text{if}~ \lambda_{\alpha\parallel_x \beta}(y) \notin \test
      \\
      \assume \lambda_{\alpha\parallel_x \beta}(y) & \text{if}~ \lambda_{\alpha\parallel_x \beta}(y) \in \test ~\text{and}~ \psi\land \psi'\Rightarrow x
      \\
      \assume{\lnot \lambda_{\alpha\parallel_x \beta}(y)} & \text{if}~ \lambda_{\alpha\parallel_x \beta}(y) \in \test ~\text{and}~ \psi\land\psi'\Rightarrow \lnot x
    \end{array}
  \right.
  \\
  &= \lambda_{\alpha\parallel_x \beta}(y)
  \\
  &= \fork
  \\
  &= \lambda_{\tr_\lpo(\alpha, \psi) \parallel_x  \tr_\lpo(\beta, \psi')}(y)
\end{align*}
Now, if $y \in \nFork_\alpha$, then:
\begin{align*}
  \lambda_{\tr_\lpo(\alpha\parallel_x \beta, \psi\land\psi')}(y)
  &= \left\{
    \begin{array}{ll}
      \lambda_{\alpha\parallel_x \beta}(y) & \text{if}~ \lambda_{\alpha\parallel_x \beta}(y) \notin \test
      \\
      \assume \lambda_{\alpha\parallel_x \beta}(y) & \text{if}~ \lambda_{\alpha\parallel_x \beta}(y) \in \test ~\text{and}~ \psi\land \psi'\Rightarrow x
      \\
      \assume{\lnot \lambda_{\alpha\parallel_x \beta}(y)} & \text{if}~ \lambda_{\alpha\parallel_x \beta}(y) \in \test ~\text{and}~ \psi\land\psi'\Rightarrow \lnot x
    \end{array}
  \right.
  \\
  &= \left\{
    \begin{array}{ll}
      \lambda_{\alpha}(y) & \text{if}~ \lambda_{\alpha}(y) \notin \test
      \\
      \assume \lambda_{\alpha}(y) & \text{if}~ \lambda_{\alpha}(y) \in \test ~\text{and}~ \psi\Rightarrow x
      \\
      \assume{\lnot \lambda_{\alpha}(y)} & \text{if}~ \lambda_{\alpha}(y) \in \test ~\text{and}~ \psi\Rightarrow \lnot x
    \end{array}
  \right.
  \\
  &= \lambda_{\tr(\alpha, \psi)}(y)
  \\
  &= \lambda_{\tr_\lpo(\alpha, \psi) \parallel_x  \tr_\lpo(\beta, \psi')}(y)
\end{align*}
The case where $y \in \nBot(\beta)$ is nearly identical.
\end{proof}

Since $\parallel$ is not monotone (see \Cref{rem:parNotMon}), we need a different way to approximate infinite pomsets composed in parallel, which we now describe. Let $\mathsf{root}(\GGamma)$ be the label of the root node of the pomset $\GGamma$ and
\[
\GGamma \ll_1 \GGamma'
\qquad \text{iff} \qquad
\GGamma \ll \GGamma'
\quad\text{and}\quad
\mathsf{root}(\GGamma) = \mathsf{root}(\GGamma') = \fork 
~\text{or}~
\mathsf{root}(\GGamma') \neq \fork
\]
\ie $\GGamma \ll_1\GGamma'$ is similar to the standard approximation order $\ll$ on pomsets, but it prevents a fork node at the root of $\GGamma'$ from being converted to $\bot$, which would cause the root of a parallel composition involving $\GGamma'$ vs $\GGamma$ to have a different number of successors.
Coming back to the example of \Cref{rem:parNotMon}, we have that $[\alpha] \lepom [\beta]$ and also $[\alpha] \ll [\beta]$, whereas $[\alpha] \not\ll_1 [\beta]$, since the root of $\beta$ is $\fork$ but not so is the root of $\alpha$.

We need two properties of $\ll_1$. First:
\begin{equation}
\label{eq:llOneOne}
\sup_{\GGamma \ll_1 \GGamma'} \GGamma = \sup_{\GGamma \ll \GGamma'} \GGamma 
\end{equation}
By \Cref{lem:fin-approx} and \Cref{cor:lefin-approx}, we have that $\sup_{\GGamma \ll \GGamma'} \GGamma  = \GGamma'$. If $\mathsf{root}(\GGamma') = \fork$, then $\{\GGamma \mid \GGamma \ll_1 \GGamma'\} = \{\GGamma \mid \GGamma \ll \GGamma'\} \setminus\{\bot_\pom\}$, which clearly has the same supremum, as we have only removed the bottom element from the set. If $\mathsf{root}(\GGamma') \neq \fork$, then the orders are equivalent.
The second property that we need is 
\begin{equation}
\label{eq:llOneTwo}
\finapprox{\Alpha \parallel \Beta} \setminus \{\bot_\pom\} = \{ \Alpha' \parallel \Beta' \mid \Alpha' \ll_1 \Alpha, \Beta' \ll_1 \Beta \}
\end{equation}
This can be easily seen by observing that $\bot_\pom$ is not included in the right set, since all parallel compositions have fork at the root therein. Moreover, the successors nodes (but not their labels) of the root are also fixed in both cases: on the left they are fixed because the root is fixed, and on the right they are fixed due to the $\ll_1$ order, since any $\fork$ node at the root would be removed by the parallel composition. Above level 2, both sets allow for any approximation of $\Alpha$ and $\Beta$.

\begin{lemma}\label{lem:tr-par}
$
  \tr(\Alpha \parallel \Beta)
  =
  \{ \Alpha' \parallel \Beta' \mid \Alpha' \in \tr(\Alpha), \Beta'\in \tr(\Beta) \}
$.
\end{lemma}
\begin{proof}
\begin{align*}
  \tr(\Alpha \parallel \Beta)
  &= \tr\left( \sup_{\GGamma \in \finapprox{\Alpha\parallel\Beta}} \GGamma \right) 
  \\
  &= \tr(\singleton\bot) \cup \tr\left( \sup_{\GGamma \in \finapprox{\Alpha\parallel\Beta} \setminus \{\bot_\pom\}} \GGamma \right) 
  \intertext{By \Cref{eq:llOneTwo}.}
  &= \emptyset \cup \tr\left( \sup_{\Alpha' \ll_1 \Alpha} \sup_{\Beta'\ll_1 \Beta} \Alpha' \parallel \Beta'  \right)
  \\
  &= \bigcup_{\Alpha' \ll_1 \Alpha}\bigcup_{\Beta'\ll_1 \Beta} \tr_\fin(\Alpha' \parallel \Beta')
  \\
  &= \bigcup_{\Alpha' \ll_1 \Alpha}\bigcup_{\Beta'\ll_1\Beta} \tr_\fin(\{ \alpha \parallel_x \beta \mid \alpha \in \Alpha', \beta\in \Beta', x\in\Nodes \})
  \\
  &= \bigcup_{\Alpha' \ll_1 \Alpha}\bigcup_{\Beta'\ll_1\Beta} \tr_\fin( [ \alpha \parallel_x \beta ] )
  \\
  &= \bigcup_{\Alpha' \ll_1 \Alpha}\bigcup_{\Beta'\ll_1\Beta} \{ [\tr_\lpo(\alpha \parallel_x \beta, \psi)] \mid \psi \in \br_{\alpha\parallel_x\beta} \}
  \intertext{By \Cref{lem:tr-par-lpo}.}
  &= \bigcup_{\Alpha' \ll_1 \Alpha}\bigcup_{\Beta'\ll_1\Beta} \{ [\tr_\lpo(\alpha \parallel_x \beta, \psi \land \psi')] \mid \psi \in \br_{\alpha}, \psi'\in \br_\beta  \}
  \\
  &= \bigcup_{\Alpha' \ll_1 \Alpha}\bigcup_{\Beta'\ll_1\Beta} \{ [
    \{ \alpha' \parallel_x \beta' \mid \alpha' \in \tr_\lpo(\alpha, \psi), \beta' \in \tr_\lpo(\beta, \psi') \}
  ] \mid \psi \in \br_{\alpha}, \psi'\in \br_\beta  \}
  \\
  &= \bigcup_{\Alpha' \ll_1 \Alpha}\bigcup_{\Beta'\ll_1\Beta}
    \{ \Alpha'' \parallel \Beta'' \mid
      \Alpha'' \in \{ [\tr_\lpo(\alpha, \psi)] \mid \psi \in \br_{\alpha} \},\
      \Beta'' \in \{ [\tr_\lpo(\beta, \psi')] \mid \psi' \in \br_{\beta} \}
    \}
  \\
  &= \bigcup_{\Alpha' \ll_1 \Alpha}\bigcup_{\Beta'\ll_1\Beta}
    \{ \Alpha'' \parallel \Beta'' \mid
      \Alpha'' \in \tr_\fin(\Alpha'),\ 
      \Beta'' \in \tr_\fin(\Beta')
    \}
  \intertext{By \Cref{eq:llOneOne}.}
  &= \{ \Alpha'' \parallel \Beta'' \mid \Alpha'' \in \tr(\Alpha),\ \Beta'' \in \tr(\Beta) \}
&&\qedhere
\end{align*}
\end{proof}

\begin{lemma}\label{lem:tr-while}
If $\tr(\de{C}) = \de{C}_\powdom$, for all $n\in\mathbb N$:
  \[
    \tr( \Phi_{\tuple{C, b}}^n(\bot_\pom) ) = \Xi_{\tuple{C,b}}^n(\emptyset)
  \]
where above we use $\bot_\pom = \singleton{\bot}$ to disambiguate.
\end{lemma}
\begin{proof}
The proof is by induction on $n$. If $n=0$, then we have:
\[
  \tr \left( \Phi_{\tuple{C, b}}^0(\bot_\pom) \right)
  = \tr\left(\bot_\pom\right)
  = \{ [\tr_\lpo(\singleton{\bot}_x, \psi)] \mid \psi \in \br_{\singleton{\bot}_x} \}
  = \emptyset
  = \Xi_{\tuple{C,b}}^0(\emptyset)
\]
Where the second to last step follows from the fact that $\br_{\singleton{\bot}_x} = \emptyset$. Now, suppose the claim holds for $n$, we have:
\begin{align*}
  &\tr( \Phi_{\tuple{C, b}}^{n+1}(\bot_\pom) )
  \\
  &=  \tr\left( \Phi_{\tuple{C, b}} \left(\Phi_{\tuple{C, b}}^{n}(\bot_\pom) \right) \right)
  \\
  &= \tr \left( \guard\left (b, \de{C} \fatsemi \Phi_{\tuple{C,b}}^{n}(\emptyset), \de{\skp} \right) \right)
  \intertext{By \Cref{lem:tr-guard}.}
  &= \left\{ \singleton{\assume b} \fatsemi \Alpha \mid  \Alpha \in \tr\left( \de{C} \fatsemi \Phi_{\tuple{C,b}}^{n}(\emptyset)  \right) \right\}
      \cup \left\{ \singleton{\assume b} \fatsemi \Beta \mid \Beta \in \tr(\de{\skp}) \right\}
  \intertext{By \Cref{lem:tr-singleton,lem:tr-seq}.}
  &= \left\{ \singleton{\assume b} \fatsemi \Alpha \fatsemi \Beta \mid  \Alpha \in \tr(\de{C}), \Beta\in \tr\left(\Phi_{\tuple{C,b}}^{n}(\emptyset)  \right) \right\}
      \cup \left\{ \singleton{\assume b} \fatsemi\de{\skp} \right\}
  \intertext{By assumption, and the induction hypothesis:}
  &= \left\{ \singleton{\assume b} \fatsemi \Alpha \fatsemi \Beta \mid  \Alpha \in \de{C}_\powdom, \Beta\in \Xi_{\tuple{C,b}}^{n}(\emptyset) \right\}
      \cup \left\{ \singleton{\assume b} \fatsemi\de{\skp} \right\}
  \\
  &= \Xi_{\tuple{C, b}}\left( \Xi_{\tuple{C,b}}^{n}(\emptyset) \right)
  \\
  &= \Xi_{\tuple{C,b}}^{n+1}(\emptyset)
\qedhere
\end{align*}
\end{proof}

\subsection{Translations and Program Semantics}

\begin{lemma}\label{lem:tr-sem}
For any program $C\in\cmd$, $\de{C}_\powdom = \tr(\de{C})$.
\end{lemma}
\begin{proof}
The proof is by induction on the structure of the program.
\begin{itemize}
\item $C = \skp$ or $C \in \act$. Follows immediately from \Cref{lem:tr-singleton}.

\item $C = C_1 ; C_2$.
\begin{align*}
  \de{C_1; C_2}_\powdom
  &= \{ \Alpha \fatsemi \Beta \mid \Alpha \in \de{C_1}_\powdom, \Beta\in\de{C_2}_\powdom \}
  \\
  &= \{ \Alpha \fatsemi \Beta \mid \Alpha \in \tr(\de{C_1}), \Beta\in \tr(\de{C_2}) \} && \text{(By the induction hypothesis.)}
  \\
  &= \tr(\de{C_1} \fatsemi \de{C_2}) && \text{(By \Cref{lem:tr-seq}.)}
  \\
  &= \tr(\de{C_1 ; C_2})
\end{align*}

\item $C = C_1 \parop C_2$.
\begin{align*}
  \de{C_1 \parop C_2}_\powdom
  &= \{ \Alpha \parallel \Beta \mid \Alpha \in \de{C_1}_\powdom, \Beta\in\de{C_2}_\powdom \}
  \\
  &= \{ \Alpha \parallel \Beta \mid \Alpha \in \tr(\de{C_1}), \Beta\in \tr(\de{C_2}) \} && \text{(By the induction hypothesis.)}
  \\
  &= \tr(\de{C_1} \parallel \de{C_2}) && \text{(By \Cref{lem:tr-par}.)}
  \\
  &= \tr(\de{C_1 \parop C_2})
\end{align*}

\item $C = \iftf b{C_1}{C_2}$.
\begin{align*}
  &\de{\iftf b{C_1}{C_2}}_\powdom
  \\
  &= \{ \singleton{\assume b} \fatsemi \Alpha \mid \Alpha \in \de{C_1}_\powdom \}
      \cup \{ \singleton{\assume{\lnot b}} \fatsemi \Beta \mid \Beta \in\de{C_2}_\powdom \}
  \intertext{By the induction hypothesis:}
  &= \{ \singleton{\assume b} \fatsemi \Alpha \mid \Alpha \in \tr(\de{C_1}) \}
      \cup \{ \singleton{\assume{\lnot b}} \fatsemi \Beta \mid \Beta \in \tr(\de{C_2}) \}
  \intertext{By \Cref{lem:tr-guard}.}
  &= \tr(\guard(b, \de{C_1}, \de{C_2}))
  \\
  &= \tr(\de{\iftf b{C_1}{C_2}})
\end{align*}

\item $C = \whl b{C'}$.
\begin{align*}
  &\de{\whl b{C'}}_\powdom
  \\
  &= \mathsf{lfp}\left( \Xi_{\tuple{C', b}} \right)
  \\
  &= \bigcup_{n\in\mathbb N} \Xi_{\tuple{C', b}}^n(\emptyset)
  && \text{(By the Kleene fixed point theorem.)}  
  \\
  &= \bigcup_{n\in\mathbb N} \tr\left( \Phi_{\tuple{C', b}}^n(\bot_\pom) \right)
  && \text{(By \Cref{lem:tr-while}, and the induction hypothesis.)}
  \\
  &= \tr\left( \sup_{n\in\mathbb N} \Phi_{\tuple{C', b}}^n(\bot_\pom) \right)
  && \text{(By Scott continuity of $\tr$.)}  
  \\
  &= \tr\left( \mathsf{lfp}\left( \Phi_{\tuple{C', b}}^n \right) \right)
  && \text{(By the Kleene fixed point theorem.)}  
  \\
  &= \tr\left( \de{\whl b{C'}} \right)
&&\qedhere
\end{align*}
\end{itemize}
\end{proof}

\subsection{Translations and Linearization}

\begin{lemma}\label{lem:tr-next}
For any binary branching $\alpha\in\lpo_\fin(\lab)$, $S \subseteq \nBot_\alpha$, and $\psi\in\valid_\alpha(S)$, then:
\[
  \bigcup_{\psi'\in\br_\alpha : \psi'\Rightarrow \psi} \next(\tr_\lpo(\alpha, \psi'), \tru, S)
  = \next(\alpha, \psi, S) \setminus \{ x \in N_\alpha \mid \varphi_\alpha(x) \Rightarrow \stuck_\alpha \}
\]
\end{lemma}
\begin{proof}
We show the inclusion in both directions.

Take any $x \in \bigcup_{\psi'\in\br_\alpha : \psi'\Rightarrow \psi} \next(\tr_\lpo(\alpha, \psi'), \tru, S)$, so there is a $\psi'\in\br_\alpha$ such that $\psi'\Rightarrow \psi$ and $x \in N_{\tr_\lpo(\alpha, \psi')} \setminus S$ such that $\predplus{x}_{\tr_\lpo(\alpha, \psi')} \subseteq S$. From $x \in N_{\tr_\lpo(\alpha, \psi')} \setminus S$, we get that $x \in N_\alpha\setminus S$ and $\psi' \Rightarrow \varphi_\alpha(x)$. We now show that $\predplus{x}_\alpha \subseteq S$. Take any $y <_\alpha x$, so we know that $\psi' \Rightarrow \varphi_\alpha(x) \Rightarrow \varphi_\alpha(y)$, which means that $y\in N_{\tr_\lpo(\alpha, \psi')}$, and therefore $y <_{\tr_\lpo(\alpha, \psi')} x$, so it must be that $y\in S$.
From $\psi' \Rightarrow \psi$ and $\psi' \Rightarrow \varphi_\alpha(x)$, we know that $\varphi_\alpha(x)\land\psi$ is satisfiable. Since $\psi \in \valid_\alpha(S)$, then $\psi$ must assign a truth value to every test in $S$, and since we just showed that $\predplus{x}_\alpha \subseteq S$, then $\free(\varphi_\alpha(x)) \subseteq S$, and so we must have that $\psi'\Rightarrow \varphi_\alpha(x)$. Therefore, we have established all the conditions of $x\in \next(\alpha, \psi, S)$. Clearly also $\varphi_\alpha(x) \not\Rightarrow \stuck_\alpha$ since $\psi' \Rightarrow \varphi_\alpha(x)$ and $\psi' \Rightarrow \lnot\stuck_\alpha$.

Now take any $x\in \next(\alpha, \psi, S)\setminus \{ x \in N_\alpha \mid \varphi_\alpha(x) \Rightarrow \stuck_\alpha \}$, so $x\in N_\alpha\setminus S$, $\psi\Rightarrow \varphi_\alpha(x)$, $\predplus{x}_\alpha \subseteq S$, and $\varphi_\alpha(x) \not\Rightarrow \stuck_\alpha$. Since $\varphi_\alpha(x) \not\Rightarrow \stuck_\alpha$, then $x \in \extens_\alpha$, and so there is clearly a $\psi'\in\br_\alpha$ such that $\psi' \Rightarrow \varphi_\alpha(x)$. This means that $x \in N_{\tr_\lpo(\alpha, \psi')} \setminus S$. Since by definition $\varphi_{\tr_\lpo(\alpha, \psi')}(x) = \tru$, then clearly $\tru \Rightarrow \varphi_{\tr_\lpo(\alpha, \psi')}(x)$. Finally, we will show that $\predplus{x}_{\tr_\lpo(\alpha, \psi')} \subseteq S$. Take any $y <_{\tr_\lpo(\alpha, \psi')} x$, this means that $y <_\alpha x$, so clearly $y\in S$. Therefore, we have that $x \in \next(\tr_\lpo(\alpha, \psi'), \tru, S) \subseteq \bigcup_{\psi'\in\br_\alpha : \psi'\Rightarrow \psi} \next(\tr_\lpo(\alpha, \psi'), \tru, S)$. 
\end{proof}

\begin{lemma}\label{lem:lin-lpo-stuck}
If $S\subseteq \nBot_\alpha$, $\psi \in \valid_\alpha(S)$ and $\psi \Rightarrow \stuck_\alpha$, then $\lin_\lpo(\alpha, \psi, S) = \emptyset$.
\end{lemma}
\begin{proof}
The proof is by induction on the size of the set $\next^*(\alpha, \psi, S)$. If $\next^*(\alpha, \psi, S) = \emptyset$, then by \Cref{lem:nextstar-empty} we know that $\psi \in \br_\alpha$, but this is impossible since $\psi \Rightarrow \stuck_\alpha$. Therefore, it cannot be the case that $\next^*(\alpha, \psi, S) = \emptyset$.

Now suppose that $\next^*(\alpha, \psi, S) \neq \emptyset$. By \Cref{lem:nextstar-nonempty}, we know that $\next(\alpha, \psi, S) \neq \emptyset$. So, we have:
\begin{align*}
  \linlpo(\alpha, \psi, S)
  &= \bigcup_{x \in \next(\alpha, \psi, S)} \left\{
      \begin{array}{ll}
        \linlpo(\alpha,\psi, S \cup\{ x \})^\dagger(\de{\lambda_\alpha(x)}_\act(s)) & \text{if} ~ \lambda_\alpha(x) \in\act
        \\
        \linlpo(\alpha,\psi \land \sem{x = {\de{\lambda_\alpha(x)}}_\test(s)}, S \cup\{x\})(s) & \text{if}~ \lambda_\alpha(x) \in\test
        \\
        \emptyset & \text{if}~ \lambda_\alpha(x) = \bot
        \\
        \linlpo(\alpha,\psi, S\cup\{x\})(s) & \text{if} ~ \lambda_\alpha(x) = \fork
      \end{array}
    \right.
    \intertext{By the induction hypothesis in the first, second, and fourth cases, we get:}
  &= \bigcup_{x \in \next(\alpha, \psi, S)} \left\{
      \begin{array}{ll}
        \emptyset & \text{if} ~ \lambda_\alpha(x) \in\act
        \\
        \emptyset & \text{if}~ \lambda_\alpha(x) \in\test
        \\
        \emptyset & \text{if}~ \lambda_\alpha(x) = \bot
        \\
        \emptyset & \text{if} ~ \lambda_\alpha(x) = \fork
      \end{array}
    \right.
  \\
  &= \bigcup_{x \in \next(\alpha, \psi, S)} \emptyset
  = \emptyset
\qedhere
\end{align*}
\end{proof}

\begin{lemma}\label{lem:tr-lpo-lin} For any binary branching $\alpha \in \lpo_\fin(\lab)$, $S \subseteq \nBot_\alpha$, and $\psi \in \valid_\alpha(S)$:
\[
  \bigcup_{\psi' \in\br_\alpha : \psi' \Rightarrow \psi} \lin_\lpo(\tr_\lpo(\alpha, \psi'), \tru, S)(s)
  =
  \lin_\lpo(\alpha, \psi, S)(s)
\]
\end{lemma}
\begin{proof}
  The proof is by induction on the size of the set $\next^*(\alpha, \psi, S)$. If $\next^*(\alpha, \psi, S) = \emptyset$, then clearly $\next(\alpha, \psi, S) = \emptyset$ and by \Cref{lem:nextstar-empty}, we know that $\psi \in \br_\alpha$. That means that the union on the left side is only over a single term, generated from $\psi$. Now, $\tr_\lpo(\alpha, \psi)$ contains all those nodes $x$ such that $\psi \Rightarrow \varphi_\alpha(x)$, but we know from $\next^*(\alpha, \psi, S) = \emptyset$ that all such nodes are already in $S$, therefore by \Cref{lem:tr-next}, $\next(\tr_\lpo(\alpha, \psi), \tru, S) = \emptyset$ too. This gives us
\[
  \bigcup_{\psi' \in\br_\alpha : \psi' \Rightarrow \psi} \lin_\lpo(\tr_\lpo(\alpha, \psi'), \tru, S)(s)
  = \lin_\lpo(\tr_\lpo(\alpha, \psi), \tru, S)(s)
  = \{s\}
  = \lin_\lpo(\alpha, \psi, S)(s)
\]
Now, assume that $\next^*(\alpha, \psi, S) \neq \emptyset$, so by \Cref{lem:nextstar-nonempty}, then $\next(\alpha, \psi, S)\neq\emptyset$.

Suppose that $\varphi_\alpha(x) \Rightarrow \stuck_\alpha$ for some $x\in\next(\alpha, \psi, S)$. This means that $\psi \Rightarrow \varphi_\alpha(x) \Rightarrow  \stuck_\alpha$, so clearly there can be no $\psi \in \br_\alpha$ such that $\psi' \Rightarrow \psi$, so $\bigcup_{\psi' \in\br_\alpha : \psi' \Rightarrow \psi} \lin_\lpo(\tr_\lpo(\alpha, \psi'), \tru, S)(s) = \emptyset$. In addition, by \Cref{lem:lin-lpo-stuck}, $\linlpo(\alpha, \psi, S)(s) = \emptyset$ too.

So, we are left with the case where $\varphi_\alpha(x) \not\Rightarrow\stuck_\alpha$ for all $x\in \next(\alpha, \psi, S)$. Let $\alpha_{\psi'} = \tr_\lpo(\alpha, \psi')$. We now have:
\begin{align*}
  &\bigcup_{\psi'\in\br_\alpha : \psi'\Rightarrow \psi} \linlpo(\alpha_{\psi'}, \tru, S) 
  \\
  &= \bigcup_{\psi'\in\br_\alpha : \psi'\Rightarrow \psi} \bigcup_{x \in \next(\alpha_{\psi'}, \tru, S)} \left\{
  \footnotesize\arraycolsep=2pt
      \begin{array}{ll}
        \linlpo(\alpha_{\psi'},\tru, S \cup\{ x \})^\dagger(\de{\lambda_{\alpha_{\psi'}}(x)}_\act(s)) & \text{if} ~ \lambda_{\alpha_{\psi'}}(x) \in\act_\powdom
        \\
        \linlpo(\alpha_{\psi'},\tru \land \sem{x = {\de{\lambda_{\alpha_{\psi'}}(x)}}_\test(s)}, S \cup\{x\})(s) & \text{if}~ \lambda_{\alpha_{\psi'}}(x) \in\test_\powdom
        \\
        \emptyset & \text{if}~ \lambda_{\alpha_{\psi'}}(x) = \bot
        \\
        \linlpo(\alpha_{\psi'},\tru, S\cup\{x\})(s) & \text{if} ~ \lambda_{\alpha_{\psi'}}(x) = \fork
      \end{array}
    \right.
  \intertext{Since $\alpha_{\psi'}$ has no test or $\bot$ nodes, we can remove those two cases. In addition, we expand the case for actions into cases for whether or not the action is assume.}
  &= \bigcup_{\psi'\in\br_\alpha : \psi'\Rightarrow \psi} \bigcup_{x \in \next(\alpha_{\psi'}, \tru, S)} \left\{
  \footnotesize\arraycolsep=2pt
      \begin{array}{ll}
        \linlpo(\alpha_{\psi'},\tru, S \cup\{ x \})^\dagger(\de{\lambda_{\alpha}(x)}_\act(s)) & \text{if} ~ \lambda_{\alpha}(x) \in\act
        \\
        \linlpo(\alpha_{\psi'},\tru, S \cup\{ x \})^\dagger(\de{\assume{\lambda_\alpha(x)}}_\act(s)) & \text{if} ~ \lambda_{\alpha}(x) \in \test ~\text{and}~ \psi' \Rightarrow x
        \\
        \linlpo(\alpha_{\psi'},\tru, S \cup\{ x \})^\dagger(\de{\assume{\lnot \lambda_\alpha(x)}}_\act(s)) & \text{if} ~ \lambda_{\alpha}(x) \in \test ~\text{and}~ \psi' \Rightarrow \lnot x
        \\
        \linlpo(\alpha_{\psi'},\tru, S\cup\{x\})(s) & \text{if} ~ \lambda_{\alpha}(x) = \fork
      \end{array}
    \right.
  \intertext{By \Cref{lem:tr-next}, and the fact that $\psi \Rightarrow \varphi_\alpha(x)$ for all $x\in \next(\alpha, \psi, S)$, we can rearrange the unions and move the inner one inside of the conditional.}
  &= \bigcup_{x \in \next(\alpha, \psi, S)} \left\{
  \footnotesize\arraycolsep=2pt
      \begin{array}{ll}
        \bigcup_{\psi'\in\br_\alpha : \psi'\Rightarrow \psi}\linlpo(\alpha_{\psi'},\tru, S \cup\{ x \})^\dagger(\de{\lambda_{\alpha}(x)}_\act(s)) & \text{if} ~ \lambda_{\alpha}(x) \in\act
        \\
        \bigcup_{\psi'\in\br_\alpha : \psi'\Rightarrow \psi \land x}\linlpo(\alpha_{\psi'},\tru, S \cup\{ x \})^\dagger(\de{\lambda_\alpha(x)}_\test(s)) & \text{if} ~ \lambda_{\alpha}(x) \in \test
        \\
        \quad \cup \bigcup_{\psi'\in\br_\alpha : \psi'\Rightarrow \psi \land \lnot x}\linlpo(\alpha_{\psi'},\tru, S \cup\{ x \})^\dagger(\de{\lnot \lambda_\alpha(x)}_\test(s))
        \\
        \bigcup_{\psi'\in\br_\alpha : \psi'\Rightarrow \psi}\linlpo(\alpha_{\psi'},\tru, S\cup\{x\})(s) & \text{if} ~ \lambda_{\alpha}(x) = \fork
      \end{array}
    \right.
  \intertext{By the induction hypothesis.}
  &= \bigcup_{x \in \next(\alpha, \psi, S)} \left\{
  \footnotesize\arraycolsep=2pt
      \begin{array}{ll}
        \linlpo(\alpha, \psi, S \cup\{ x \})^\dagger(\de{\lambda_{\alpha}(x)}_\act(s)) & \text{if} ~ \lambda_{\alpha}(x) \in\act
        \\
        \linlpo(\alpha,\psi \land x, S \cup\{ x \})^\dagger(\de{\lambda_\alpha(x)}_\test(s)) & \text{if} ~ \lambda_{\alpha}(x) \in \test
        \\
        \quad \cup \linlpo(\alpha,\psi\land \lnot x, S \cup\{ x \})^\dagger(\de{\lnot \lambda_\alpha(x)}_\test(s))
        \\
        \linlpo(\alpha,\psi, S\cup\{x\})(s) & \text{if} ~ \lambda_{\alpha}(x) = \fork
      \end{array}
    \right.
  \intertext{Exactly one of $\de{\lambda_\alpha(x)}_\test(s)$ and $\de{\lnot\lambda_\alpha(x)}_\test(s)$ will evaluate to $\{s\}$ and the other will evaluate to $\emptyset$, so we can consolidate the terms.}
    &= \bigcup_{x \in \next(\alpha, \psi, S)} \left\{
  \footnotesize\arraycolsep=2pt
      \begin{array}{ll}
        \linlpo(\alpha, \psi, S \cup\{ x \})^\dagger(\de{\lambda_{\alpha}(x)}_\act(s)) & \text{if} ~ \lambda_{\alpha}(x) \in\act
        \\
        \linlpo(\alpha,\psi \land \sem{x = \de{\lambda_\alpha(x)}_\test(s)}, S \cup\{ x \})(s) & \text{if} ~ \lambda_{\alpha}(x) \in \test
        \\
        \linlpo(\alpha,\psi, S\cup\{x\})(s) & \text{if} ~ \lambda_{\alpha}(x) = \fork
      \end{array}
    \right.
  \\
  &= \linlpo(\alpha, \psi, S)
\qedhere
\end{align*}
\end{proof}

\begin{lemma}\label{lem:tr-lin}
For any $\Alpha\in\pom$ and $s\in S$, $\lin(\Alpha)(s) = \lin_\powdom(\tr(\Alpha))(s)$.
\end{lemma}
\begin{proof}
\begin{align*}
  \lin(\Alpha)(s)
  &= \bigcup_{\Alpha' \ll \Alpha} \lin_\fin(\Alpha')(s)
  \\
  &= \bigcup_{[\alpha] \ll \Alpha} \lin_\lpo(\alpha, \tru, \emptyset)(s)
  \\
  &= \bigcup_{[\alpha] \ll \Alpha} \bigcup_{\psi \in\br_\alpha} \lin_\lpo(\tr_\lpo(\alpha, \psi), \tru, \emptyset)(s)
  &   \text{By \Cref{lem:tr-lpo-lin}}
  \\
  &=  \bigcup_{\Alpha' \ll \Alpha} \bigcup_{\Beta \in\tr_\fin(\Alpha')} \lin_\fin(\Beta)(s)
  \\
  &=  \bigcup_{\Beta \in \bigcup_{\Alpha' \ll \Alpha} \tr_\fin(\Alpha')} \lin_\fin(\Beta)(s)
  \\
  &=  \bigcup_{\Beta \in \tr(\Alpha)} \lin_\fin(\Beta)(s)
  \\
  &= \lin_\powdom(\tr(\Alpha))
&&\qedhere
\end{align*}
\end{proof}

\subsection{Main Theorem}

\powDomChar*
\begin{proof}
  It suffices to show that $\de{-}_\powdom = \tr \circ \de{-}$ and $\lin_\powdom \circ \tr = \lin$, which follows immediately from \Cref{lem:tr-sem,lem:tr-lin}. We then have:
  \begin{align*}
    \lin_\powdom \circ \de{-}_\powdom
    &= \lin_\powdom \circ (\tr \circ \de{-})
    = (\lin_\powdom \circ \tr) \circ \de{-}
    = \lin \circ \de{-}
  \end{align*}
\end{proof}
\fi

\end{document}